%% file: main.tex
\documentclass[numsec,webpdf,modern,medium,namedate]{oup-authoring-template}

\onecolumn

\graphicspath{{Fig/}}

\theoremstyle{thmstyleone}%
\newtheorem{theorem}{Theorem}
\newtheorem{proposition}[theorem]{Proposition}%
\theoremstyle{thmstyletwo}%
\newtheorem{example}{Example}%
\newtheorem{remark}{Remark}%
\theoremstyle{thmstylethree}%

\usepackage[doublespacing]{setspace}
\makeatletter
\def\@ptsize{2}
\def\normalsize{%
  \@setfontsize\normalsize{12}{14.39996pt}%
  \abovedisplayskip 11.99945pt plus 2.4005pt minus 7.19998pt
  \abovedisplayshortskip 0.0pt plus 3.59999pt
  \belowdisplayskip 11.99945pt plus 2.4005pt minus 7.19998pt
  \belowdisplayshortskip 7.19998pt plus 3.59999pt minus 3.59999pt
  \let\@listi\@listI
}%
\def\small{%
  \@setfontsize\small{10.95007}{13.03061pt}%
  \abovedisplayskip 10.85837pt plus 2.17223pt minus 6.5153pt
  \abovedisplayshortskip 0.0pt plus 3.25764pt
  \belowdisplayskip 10.85837pt plus 2.17223pt minus 6.5153pt
  \belowdisplayshortskip 6.5153pt plus 3.25764pt minus 3.25764pt
  \def\@listi{%
    \leftmargin\leftmargini
    \topsep4.79947pt plus 2.4005pt minus 2.4005pt
    \parsep2.4005pt plus 1.19948pt minus 1.19948pt
    \itemsep\parsep
  }%
}%
\def\footnotesize{%
  \@setfontsize\footnotesize{10.00012}{11.90016pt}%
  \abovedisplayskip 9.91637pt plus 1.98378pt minus 5.95007pt
  \abovedisplayshortskip 0.0pt plus 2.97504pt
  \belowdisplayskip 9.91637pt plus 1.98378pt minus 5.95007pt
  \belowdisplayshortskip 5.95007pt plus 2.97504pt minus 2.97504pt
  \def\@listi{%
    \leftmargin\leftmargini
    \topsep3.59999pt plus 1.79999pt minus 1.79999pt
    \parsep1.79999pt plus 1.0pt minus 1.0pt
    \itemsep\parsep
  }%
}%
\def\scriptsize{\@setfontsize\scriptsize{8.00006}{9.52008pt}}%
\def\tiny{\@setfontsize\tiny{6}{7.14001pt}}%
\def\large{\@setfontsize\large{14.39996}{17.13599pt}}%
\def\Large{\@setfontsize\Large{17.28003}{20.56326pt}}%
\def\LARGE{\@setfontsize\LARGE{20.73596}{24.67584pt}}%
\def\huge{\@setfontsize\huge{24.88806}{29.61685pt}}%
\def\Huge{\@setfontsize\Huge{29.85608}{35.5288pt}}%
\normalsize
\def\@listi{%
  \leftmargin\leftmargini
  \topsep9.60048pt plus 2.4005pt minus 4.79947pt
  \parsep4.79947pt plus 2.4005pt minus 1.19948pt
  \itemsep\parsep
}%
\let\@listI\@listi
\def\@listii{%
  \leftmargin\leftmarginii
  \labelwidth\leftmarginii
  \advance\labelwidth-\labelsep
  \topsep4.79947pt plus 2.4005pt minus 1.19948pt
  \parsep2.4005pt plus 1.19948pt minus 1.19948pt
  \itemsep\parsep
}%
\def\@listiii{%
  \leftmargin\leftmarginiii
  \labelwidth\leftmarginiii
  \advance\labelwidth-\labelsep
  \topsep2.4005pt plus 1.19948pt minus 1.19948pt
  \parsep0.0pt
  \partopsep0.0pt minus 1.0pt
  \itemsep\topsep
}%
\def\@listiv{%
  \leftmargin\leftmarginiv
  \labelwidth\leftmarginiv
  \advance\labelwidth-\labelsep
}%
\def\@listv{%
  \leftmargin\leftmarginv
  \labelwidth\leftmarginv
  \advance\labelwidth-\labelsep
}%
\def\@listvi{%
  \leftmargin\leftmarginvi
  \labelwidth\leftmarginvi
  \advance\labelwidth-\labelsep
}%
\@listi
\makeatother

\usepackage{algorithm}
\usepackage{algpseudocode}
\newtheorem{lemma}[theorem]{Lemma}
\usepackage{booktabs}
\newtheorem{corollary}[theorem]{Corollary}
\usepackage{hyperref}

\makeatletter
\AtBeginDocument{%
  \let\JRSSBoriginaloutputpage\@outputpage
  \def\@outputpage{%
    \begingroup
    \def\baselinestretch{1}\selectfont
    \JRSSBoriginaloutputpage
    \endgroup}}
\makeatother
\providecommand{\R}{\mathbb R}
\providecommand{\tr}{\operatorname{tr}}
\providecommand{\diag}{\operatorname{diag}}
\providecommand{\Diag}{\operatorname{Diag}}
\providecommand{\op}{\operatorname{op}}

\providecommand{\cD}{\mathcal D}
\providecommand{\cM}{\mathcal M}
\providecommand{\cW}{\mathcal W}
\providecommand{\cR}{\mathcal R}
\providecommand{\norm}[1]{\left\lVert #1\right\rVert}
\providecommand{\ip}[2]{\left\langle #1,#2\right\rangle}
\providecommand{\Zh}{\widehat Z}
\providecommand{\Mh}{\widehat M_x}
\providecommand{\Seh}{\widehat\Sigma_e}
\providecommand{\Syh}{\widehat\Sigma_y}
\providecommand{\Ph}{\widehat P}
\providecommand{\Sh}{\widehat{\mathscr S}}
\providecommand{\Gh}{\widehat G_H}

\begin{document}

\journaltitle{Journals of the Royal Statistical Society}
\DOI{DOI HERE}
\copyrightyear{XXXX}
\pubyear{XXXX}
\access{Advance Access Publication Date: Day Month Year}
\appnotes{Original article}

\firstpage{1}


\title[Short Article Title]{Identification problem and quasi-maximum likelihood estimation for matrix-variate CP-factor models}

\author[1]{Hanzi Ye}

\author[2]{Chun Yip Yau}

\authormark{H.YE AND C.Y.YAU}

\address[1,2]{\orgdiv{Department of Statistics and Data Science}, \orgname{The Chinese University of Hong Kong}, \orgaddress{\street{Shatin, N.T.}, 
\postcode{999077}, \state{Hong Kong SAR}, \country{China}}}

\corresp[$\ast$]{Chun Yip Yau, The Chinese University of Hong Kong, Hong Kong SAR, China. \href{mailto:cyyau@sta.cuhk.edu.hk}{cyyau@sta.cuhk.edu.hk}}

\received{Date}{0}{Year}
\revised{Date}{0}{Year}
\accepted{Date}{0}{Year}



\abstract{Matrix-valued time series, arising in diverse fields such as economics, neuroscience, and recommender systems, have become increasingly prominent in modern data analysis. Among various modeling frameworks, the matrix-variate CP-factor model represents an important and widely applicable class for capturing low-rank structures in matrix time series. In this paper, we provide a unified treatment for the identification problem of CP factor models using both analytic and algebraic tools. 
In particular, we characterize the parameter space as the union of two subspaces, one identifiable and the other non-identifiable. We show that existing estimation methods only apply to an open proper subset of the identifiable subspace. In contrast, we propose a quasi-maximum likelihood estimation (QMLE) procedure for CP factor models, which allows consistent estimation on the whole identifiable subspace. Moreover, we show that the estimated loading matrices by QMLE achieve a faster convergence rate compared to existing approaches. A simulation study and a real application are conducted to demonstrate the finite-sample performance of the proposed method.}

\keywords{CP-decomposition, Quasi-maximum likelihood estimation, Matrix time series, Identifiable condition}


\maketitle

\section{Introduction}\label{intro}
Many modern time series are naturally matrix- or tensor-valued, where each time point corresponds to a multi-way array rather than a scalar or vector observation. Such data arises when measurements are recorded across multiple cross-sectional units, spatial locations, or other structured dimensions simultaneously. Examples include traffic flow matrices across origin-destination pairs, spatio-temporal environmental measurements on spatial grids, and demand arrays indexed by regions and product categories.

These data are inherently high-dimensional: at each time index, one observes an entire matrix or tensor. Modeling such high-dimensional data is intrinsically challenging. A common and effective approach is to impose a low-dimensional latent factor structure, where the underlying signal is driven by a small number $d$ of common factors. Modeling matrix- or tensor-valued data via a factor structure offers a decisive benefit: it substantially reduces the effective dimension of the problem, allowing us to capture the dominant patterns of variation with only a few latent factors, while still preserving the original matrix/tensor structure of the observations. To fix the notation, we state the matrix factor model proposed by \cite{WANG2019231}: the observation matrix $Y_t$ at each time point $t$ can be represented as \begin{equation}\label{factormodel}Y_t=AF_tB^{\top}+e_t\,, \quad F_t\in \mathbb{R}^{d_1\times d_2},\quad t\in\mathbb{Z}_+\,,\end{equation}
where $F_t$ is a $d_1\times d_2$-matrix time series representing unknown time-varying latent factors, $e_t$ is a $p\times q$ matrix white noise, and $A=(a^1,\cdots, a^{d_1})$ and $B=(b^1,\cdots, b^{d_2})$ are, respectively, $p\times d_1$ and $q\times d_2$ unknown parameters. Although the matrix factor model in \eqref{factormodel} allows flexible modeling of matrix-valued time series, the number (i.e., $d_1\times d_2$) of latent factors can still be large for moderate values of $d_1$ and $d_2$. 
To reduce model complexity and enhance interpretability, \cite{CHANGHEYANGYAO2021} proposed the CP factor model of order $d$: 
\begin{eqnarray}\label{eq:cpfactmodel}
    Y_t=AX_tB^{\top}+e_t &=& \operatorname{Mat}_{p\times q}\{(B\odot A)x_t\}+e_t\,,\\
    &=&\sum_{\ell=1}^{d}\operatorname{Mat}_{p\times q}\{b^\ell\otimes a^{\ell}\cdot x_t^{\ell}\} +e_t\,,\nonumber
    \end{eqnarray} 
where $\ A=(a^1,\cdots,a^d)$ and $B=(b^1,\cdots,b^d)$ are unknown loading matrices, 
$x_t=(x_t^{1},\ldots,x_{t}^{d})\in\mathbb{R}^d$, the factor $X_t=\operatorname{diag}(x_t)$ is a latent vector time series, $e_t \in \mathbb{R}^{p\times q}$ is a white noise matrix, $\odot $ is the Khatri–Rao product defined as $B\odot A=(b^1\otimes a^1,\dots,b^d\otimes a^d)$ (see \cite{khatriraoprod}), and the operator $\operatorname{Mat}$ transforms a vector into a matrix: $(v_1^{\top},\cdots,v_q^{\top})^{\top}\in\mathbb{R}^{pq}\to (v_1,\cdots, v_q)\in\mathbb{R}^{p\times q}, \quad v_i\in \mathbb{R}^p .$

The name of Model \eqref{eq:cpfactmodel} originates from the fact that, by stacking $\{Y_t\}_{t=1,\ldots,T}$ into a three-way tensor $\mathcal{Y}$, the CP-factor model \eqref{eq:cpfactmodel} becomes a 
CANDECOMP/PARAFAC (CP) decomposition of $\mathcal{Y}=\sum_{\ell=1}^d a^\ell\otimes b^{\ell}\otimes x^\ell+e$ in \cite{KRUSKAL197795} and \cite{Lathauwer06}, where $x^\ell=(x_1^\ell,\ldots,x_T^\ell)^\top$. 
This classical framework provides a rich source of methodological references. 

The identification problem is a long-standing challenge in the CP decomposition literature, and CP-factor models inherit this difficulty as a special case where the component \((x^{\ell})\) is a latent time series. 
To illustrate the identification issue, 
a basic observation is that the model is invariant to column permutations and scaling: for any non-zero constants $c_{\ell}, d_{\ell}$ and permutation 
$\{\sigma_\ell\}$ of $\{1,\ldots,d\}$, we always have \(\sum _{\ell=1}^da^{\ell}\otimes b^{\ell}\otimes x^\ell=\sum_{\ell=1}^{d}c_{\sigma_\ell}a^{\sigma_\ell}\otimes d_{\sigma_\ell}b^{\sigma_\ell}\otimes \frac{1}{c_{\sigma_\ell}d_{\sigma_\ell}}x^{\sigma_\ell}\). 
 Accordingly, we call a CP decomposition (and analogously the CP-factor model) \emph{identifiable} if its parameters are unique modulo the permutation and scaling ambiguities. We argue in Section 2 that, 
all the parameters \((A,B)\) for which the CP-factor model is identifiable form a set \(\mathcal {K}_d\): \begin{equation}\label{identifiablespacewithoutquotient}
{\mathcal{K}}_d
=
\left\{
(A,B)\;\middle|\;
{
\begin{array}{l}
\big(\operatorname{Mat}(b^{\ell}\otimes a^{\ell}),\ell\in[d]\big)\text{ is the only rank-1 basis up to scaling}\\\text{of the linear subspace }\operatorname{span}\{\operatorname{Mat}(b^{\ell}\otimes a^{\ell}),\ell\in[d]\}\subset \mathbb{R}^{p\times q}
\end{array}}
\right\}\,.
\end{equation} 
We call $\mathcal{K}_d$, or more precisely $\mathcal{K}_d(p,q,d)$, an identifiable space of dimensions $(p,q,d)$.
In general, given a parameter \((A,B)\in \mathbb{R}^{p\times d}\times \mathbb{R}^{q\times d}\), deciding whether it belongs to an identifiable space such as \({\mathcal K}_d\) can be challenging. This motivates a line of research for verifiable sufficient conditions for identifiability.  

A classical, easily checkable sufficient condition is due to \cite{KRUSKAL197795}. In the CP decomposition setting, if the Kruskal ranks (denoted by $\operatorname{krank}(\cdot)$, which stands for the largest integer $k$ such that every subset of $k$ columns is linearly independent) satisfy \[\operatorname{krank}(A)+\operatorname{krank}(B)+\operatorname{krank}(X)\geq 2d+2\,,\quad X=(x^{1},\ldots,x^d)\,,\] then the decomposition is unique modulo column permutations and scaling. Applying this to the factor model time series setting, we have that the CP-factor model is identifiable (up to column permutations and scaling) if $(A,B)$ lies in the parameter subspace: 
\(
{\mathcal{K}}_{\operatorname{krank}}
\coloneqq \Big\{(A,B)\in\mathbb{R}^{p\times d}\times \mathbb{R}^{q\times d}:
\operatorname{krank}(A)+\operatorname{krank}(B)\ge d+2\Big\}\,.
\)
More recently, \cite{CHANGDUHUANGYAO2024} 
develop the ideas in \cite{Lathauwer06} and introduce a matrix $\Omega(B\odot A)$. They enlarge the parameter subspace of identifiable models from ${\mathcal{K}}_{\operatorname{krank}}$ to ${\mathcal{K}}_\Omega\coloneqq\{(A,B):\dim\ker \Omega(B\odot A)=d\}$. In summary, ${\mathcal{K}}_{\operatorname{krank}}$ and ${\mathcal{K}}_\Omega$ both provide efficiently checkable (and sufficient) conditions for identifiability, but they may not be necessary conditions.

One of our main contributions is to derive a complete, verifiable characterization of identifiability, and to study the topology of the parameter space $\mathcal{K}_d$. Specifically, we establish a necessary and sufficient condition for identifiability by applying methods in algebraic geometry. We also show that the identifiable space 
${\mathcal{K}}_{\Omega}$ in 
\cite{CHANGDUHUANGYAO2024}
is an open, proper subset of $\mathcal{K}_d$ (with respect to the relative Euclidean topology). 
To determine whether a given $(A,B)$ lies in $\mathcal{K}_d$, we construct a collection of quadratic forms $\{\Psi_{m,n;k,\ell}(\lambda)\}$ that map the vector $\lambda=(\lambda_1,\ldots,\lambda_d)^\top$ to the $(m,n;k,l)$-indexed $2\times 2$ minors of the matrix $\sum_{\ell\in[d]}\lambda_\ell \operatorname{Mat}(b^{\ell}\otimes a^{\ell})$. By adopting an algebraic-geometry perspective, the identifiable condition $(A,B)\in {\mathcal{K}}_d$ is shown to be equivalent to that the zero locus of equation system $\{\Psi_{m,n;k,\ell}(\lambda)=0\}$ is the real coordinate axes. This reformulation reduces checking identifiability to checking whether the ideal generated by \(\{\Psi_{m,n;k,\ell}\}\) defines the union of the coordinate axes,
which can be certified using Gröbner bases in standard computer algebra systems such as \cite{M2,Singular,CoCoA}.

Another main contribution of this paper is to develop a quasi-maximum likelihood estimation (QMLE) approach for estimating the loading matrices. Existing estimation methods, including the eigen-based method (CP-unified) in \cite{CHANGHEYANGYAO2021,CHANGDUHUANGYAO2024}, the Higher-Order Projection method (HOPE) in \cite{HANYANGZHANGCHEN2024} and the Iterative Simultaneous Orthogonalization (ISO) method in \cite{CHEN2026106167}, are predominantly based on spectral or eigenvalue-based procedures and are applicable only under additional structural restrictions on the loading matrices. In particular, the iterative procedures underlying HOPE and ISO apply only to a strict subset of ${\mathcal K}_{\mathrm{krank}}$, since their projection steps require the loading matrices to be of full column rank. The restrictions imposed by CP-unified are substantially weaker, allowing parameters in ${\mathcal K}_{\Omega}$. In contrast, our QMLE imposes no additional restrictions on the column ranks of $A$ and $B$ and is applicable over the entire parameter space ${\mathcal K}_d$. By extending projection techniques to the theoretical analysis of QMLE, we establish a dimension-free parametric convergence rate of order $1/T$ for mean squared error even when the loading matrices are rank-deficient. Furthermore, we establish a sharper rate of order $\frac{p+q}{pqT}$ for loading matrices under additional distribution assumptions without constraints on the rank of loading matrices. Our numerical studies also show that QMLE yields small estimation errors and remains stable as the dimensions increase, demonstrating the practical effectiveness of the likelihood-based approach.

In parallel to our work, QMLE has recently been studied in the context of matrix factor models \eqref{factormodel} by \cite{Xu03042025}. Since no additional structure is imposed on \(F_t\), 
the identification of $(A,B,F_t)$ can be achieved by imposing simple affine constraints, such as requiring the leading square loading matrices to be identity. The resulting parameter space of identifiable models is an open smooth manifold. This structure closely resembles that of the vector factor model studied by \cite{BAILI2012aos}, and hence permits a similar QMLE framework. However, the CP factor model is fundamentally different: \((A,B)\) may be rank deficient, and some of them can never be sent to the identifiable parameter space $\mathcal{K}_d$ via affine transformations. In addition, $\mathcal{K}_d$ has a more intricate geometry, with degenerate configurations (see Propositions \ref{prop:splittingoffleadingtorankdeficient}, \ref{prop:geometric meaning of rank deficiency}) giving rise to boundary singularities. Our new QMLE framework, based on the projection method, does not rely on simple affine identification constraints and addresses these difficulties.  

The rest of the paper is organized as follows. Section~\ref{sec:idCPmodel} studies the identification problem for CP-factor models via algebraic and analytic methods. The quasi-likelihood estimation and its asymptotic properties are presented in Section \ref{sec:likelihoodfunctions&asympthm}.  Sections~\ref{sec:simulationdesign} and \ref{sec:numericalresult} provide simulation studies and a real financial data analysis to demonstrate the finite-sample performance of the proposed method, respectively. Technical details are provided in the supplementary materials.

\section{Identification problem for CP-factor models}\label{sec:idCPmodel}

In this section, we derive the identifiable space $\mathcal{K}_d$ and analyze its properties. 

\subsection{Identifiable condition}\label{subsec:preliminary}
This section develops the preliminary setup as in \cite{CHANGDUHUANGYAO2024}, but is reformulated in a geometric/set-theoretic language tailored to our purposes. 

First, besides the scaling and permutation conditions, note that $B\odot A$ should be of full rank $d$ to ensure the uniqueness of $d$. Otherwise, we have $b^d\otimes a^d=\sum_{\ell=1}^{d-1}\gamma_\ell\,(b^\ell\otimes a^\ell)$ for some $(\gamma_1,\ldots,\gamma_{d-1})\in\mathbb{R}^{d-1}$, and thus 
$${Y_t}= \sum_{\ell=1}^d\operatorname{Mat}_{p\times q}\{b^\ell\otimes a^\ell\cdot x_t^{\ell}\}+e_t=
\sum_{\ell=1}^{d-1}\operatorname{Mat}_{p\times q}\{b^\ell\otimes a^{\ell}\cdot (x_t^{\ell}+\gamma_{\ell}x_t^d)\}+e_t\,,$$
reduces to a CP factor model of order $d-1$.



To deduce the necessary conditions for identifiability, suppose that two pairs $(A,B)$ and $(A',B')$ can represent the same $\{{Y}_t\}$, i.e.,  $\sum_{\ell=1}^d\operatorname{Mat}_{p\times q}\{b^{\ell}\otimes a^{\ell}\cdot  x_t^{\ell}\}=\sum_{\ell=1}^d\operatorname{Mat}_{p\times q}\{b'^{\ell}\otimes  a'^{\ell}\cdot  x'^{\ell}_t\}$ for some $x_t$ and $x_t'$, $\forall t$. This implies
\begin{equation}\label{eq:unfoldingform}
(B\odot A)\,(x_1,x_2,\ldots,x_T)  = (B'\odot A')\,(x_1',\ldots,x_T')\,. 
\end{equation}
Denoting $X=(x_1,\ldots,x_T)^{\top}$, and multiplying both sides of \eqref{eq:unfoldingform} by the right inverse \(X^\dagger\in\mathbb{R}^{T\times d}\) of \(X^\top\), we have \begin{equation}\label{defofH}(B\odot A) = (B'\odot A')\,H,\end{equation} where \(H := X'^{\top}X^\dagger  \in \mathbb{R}^{d\times d}\). 
As $B\odot A$ has column rank $d$, the column spaces $\operatorname{col}(B\odot A)=\operatorname{span}\{b^{\ell}\otimes a^{\ell},\ell\in[d]\}$
and 
$\operatorname{col}(B'\odot A')=\operatorname{span}\{b'^{\ell}\otimes a'^{\ell},\ell\in[d]\}$ are the same $d$-dimensional space. Since $\operatorname{Mat}_{p\times q}(b^{\ell}\otimes a^{\ell})=a^{\ell}b^{\ell\top}$ is a bijective operation,
it follows that the linear matrix spaces $M(A,B)$ and $M(A',B')$ are the same, where
\begin{eqnarray}\label{MAB}
\mathrm{M}(A,B)=\operatorname{span}\{a^{\ell}b^{\ell\top},\ell\in[d]\}\subset \mathbb{R}^{p\times q}\,,
\end{eqnarray}
is the space spanned by rank-1 tensors $\{a^{\ell}b^{\ell\top}\}_{\ell\in[d]}$.
In conclusion, model non-identifiabilty occurs when 
there exist two pairs $(A,B)$, $(A',B')$, which cannot be obtained from each other by scaling and permutation, forming respectively 
two different rank-1 bases $\{a^{\ell}b^{\ell\top}\}_{\ell\in[d]}$ and
$\{a'^{\ell}b'^{\ell\top}\}_{\ell\in[d]}$ which span the same space $M(A,B)$. We summarize the results as follows.

\begin{proposition}\label{prop:CPidcond}
A CP-factor model determined by $(A,B)$ is identifiable (up to scaling and permutations) if and only if the rank-1 basis of the linear subspace $\mathrm{M}(A,B)\subset \mathbb{R}^{p\times q}$ is unique (up to scaling and permutations). In other words, each element in $\mathcal{K}_d$ defined in \eqref{identifiablespacewithoutquotient} is identifiable. 
\end{proposition}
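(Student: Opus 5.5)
The plan is to prove both directions of the equivalence. Throughout I assume that $B\odot A$ has full column rank $d$; as argued above, this is needed for the order $d$ to be well defined. I also assume the factor matrix $X=(x_1,\ldots,x_T)^\top$ has rank $d$, so that $X^\top$ admits the right inverse $X^\dagger$ used in \eqref{defofH}. With these conventions, identifiability means the following: whenever $(A',B',\{x_t'\})$ generates the same signal $\{AX_tB^\top\}$, the pair $(A',B')$ coincides with $(A,B)$ up to a column permutation and column scalings, and $x_t'$ is adjusted accordingly.

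For the direction ``unique rank-1 basis $\Rightarrow$ identifiable'', take an alternative representation $(A',B',x_t')$. By \eqref{eq:unfoldingform} and \eqref{defofH}, $B\odot A=(B'\odot A')H$. Since the left side has rank $d$, $H$ is invertible and $B'\odot A'$ has rank $d$. Hence $\operatorname{col}(B\odot A)=\operatorname{col}(B'\odot A')$. Applying the linear bijection $\operatorname{Mat}_{p\times q}$ gives $\mathrm{M}(A,B)=\mathrm{M}(A',B')$, and $\{a'^{\ell}b'^{\ell\top}\}_{\ell\in[d]}$ is a rank-1 basis of this space; each element is nonzero because the family is linearly independent.

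By uniqueness of the rank-1 basis, there are a permutation $\sigma$ and nonzero scalars $c_\ell$ with $a'^{\ell}b'^{\ell\top}=c_\ell\, a^{\sigma_\ell}b^{\sigma_\ell\top}$. I will then use the elementary fact that a nonzero rank-1 matrix $uv^\top$ determines $u$ and $v$ up to reciprocal scalars. This gives $a'^{\ell}=\alpha_\ell a^{\sigma_\ell}$ and $b'^{\ell}=\beta_\ell b^{\sigma_\ell}$ with $\alpha_\ell\beta_\ell=c_\ell$. Consequently $H$ is a scaled permutation matrix. From $(B\odot A)x_t=(B'\odot A')x_t'$ and the injectivity of $B'\odot A'$, we get $x_t'^{\,\ell}=c_\ell^{-1}x_t^{\sigma_\ell}$. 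This is exactly the permutation and scaling ambiguity.

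For the converse, suppose $\mathrm{M}(A,B)$ has a second rank-1 basis $\{u^\ell v^{\ell\top}\}$ that is not a permutation and rescaling of $\{a^\ell b^{\ell\top}\}$. Set $A'=(u^\ell)$ and $B'=(v^\ell)$. Both bases span the same space, so there is an invertible $G$ with $B\odot A=(B'\odot A')G$. Then $(A',B',x_t'=Gx_t)$ reproduces the same $Y_t$ with the same noise, but $(A',B')$ is not equivalent to $(A,B)$, so the model is not identifiable. Since \eqref{identifiablespacewithoutquotient} is literally the set where this uniqueness holds, the statement about $\mathcal K_d$ follows immediately.

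I expect the only delicate step to be the first direction: passing from the equality of spans to the conclusion that the alternative loadings themselves, rather than only the spans, agree up to scaling and permutation. Its two ingredients are the invertibility of $H$, which needs the rank of $X$ and of $B\odot A$, and the factorization uniqueness of rank-1 matrices. I will state the rank assumption on $X$ explicitly, since otherwise $X^\dagger$ does not exist.
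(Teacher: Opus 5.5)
Your proposal is correct and follows essentially the same route as the paper: you obtain $\mathrm{M}(A,B)=\mathrm{M}(A',B')$ from the base change $H$, use that a nonzero rank-one tensor determines its factors up to reciprocal scalars ($b'\otimes a'=b\otimes a\neq 0\Rightarrow (a',b')=(ca,c^{-1}b)$), and for the converse build a new representation from a second rank-one basis. You are more explicit than the paper in showing that $H$ must be a scaled permutation and in stating that $X$ must have rank $d$ for $X^\dagger$ to exist.
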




\subsection{Algebraic characterization of identifiable condition}\label{subsec:algidcond}

This subsection studies the space $\mathcal{K}_d$ through algebraic methods and provides an easily checkable condition to determine if a given CP-factor model determined by parameter $(A,B)$ is identifiable or not. 
For simplicity, for the rest of the paper, the terms identifiability and uniqueness are understood to be up to scaling and permutation.

Proposition~\ref{prop:CPidcond} shows that the identifiability of a CP-factor model $(A,B)$ is equivalent to the uniqueness of the rank-1 basis of $\mathrm{M}(A,B)$. Note that, for any $\lambda=(\lambda_1,\ldots,\lambda_d)\in\mathbb{R}^d$, if the matrix $m(\lambda)=\operatorname{Mat}_{p\times q}\big(B\odot A\cdot\lambda\big)=\sum_{\ell=1}^{d}\lambda_{\ell}a^{\ell}b^{\ell\top}\in \mathrm{M}(A,B)$ is of rank 1, then it can be completed together with elements from the given basis $\{a^{1}b^{1\top},\ldots a^db^{d\top}\}$ to form a rank-1 basis. Consequently, if the rank-1 basis of $\mathrm{M}(A,B)$ is unique, then $\mathrm{M}(A,B)$ contains no extra rank-1 direction beyond these $d$ basis directions, i.e., 
\begin{eqnarray}\label{rank.m.eq.1}
(A,B)\text{ is identifiable} \  \ \Longleftrightarrow  \ \ \operatorname{rank}\big(m(\lambda)\big)=1 \text{ if and only if }\lambda\in \cup_{\ell\in[d]}\mathbb{R}\cdot e_{\ell}\,,
\end{eqnarray}
where $\{e_{\ell},\ell\in[d]\}$ denotes the standard unit basis in $\mathbb{R}^{d}$. 

A well-known result in linear algebra is that, a matrix is rank-1 if and only if all its $2\times 2$-minors are zero. 
Hence, we define the  family of matrix $2\times 2$-minors of $m(\lambda)$ indexed by $(m,n;k,l)$, i.e., the rows $m,n\in[p]$ and columns $k,l\in[q]$, as 
\begin{equation}\label{2minorquadraticequation}
    \begin{aligned}
        \Psi_{m,n;k,l}(\lambda) &=\det \big((\sum_{\ell=1}^d\lambda_\ell a^{\ell}b^{\ell\top})_{\{m,n\}\times\{k,l\}}\big)\\
        &=\sum_{i< j\in[d]}2\lambda_i\lambda_j\psi_{m,n;k,l}( a^ib^{i\top}, a^jb^{j\top})\,, 
    \end{aligned}
\end{equation}where \begin{equation}\label{def:Psimap}\psi_{m,n;k,l}(a^ib^{i\top},a^jb^{j\top}):=\frac{1}{2}\det\begin{bmatrix}
            a^i_{m}b^i_{k}&a^j_{m}b^j_{l}\\a^i_{n}b^i_{k}&a^j_{n}b^j_{l}
        \end{bmatrix}+\frac{1}{2}\det\begin{bmatrix}
            a^j_{m}b^j_{k}&a^i_{m}b^i_{l}\\a^j_{n}b^j_{k}&a^i_{n}b^i_{l}
        \end{bmatrix}\,,
\end{equation}
is the 4-way tensor defined in \cite{Lathauwer06}. 
Observe from \eqref{2minorquadraticequation} that every $2\times 2$-minor $\Psi_{m,n;k,l}(\lambda)$ is a quadratic polynomial in $\lambda$. Thus, $m(\lambda)$ is of rank-1 if and only if $\lambda$ is the common root of all polynomials $\Psi_{m,n;k,l}(\lambda)$, $m,n\in [p], k,l\in[q]$.
Together with Proposition~\ref{prop:CPidcond} and \eqref{rank.m.eq.1}, we 
have the following results.

\begin{proposition}\label{prop:algcond}
    A pair $(A,B)$ is identifiable, i.e. $(A,B)\in {\mathcal{K}}_d$, if and only if the common zero set of the quadratic forms only contains the coordinate axes: 
    \begin{equation}\label{eq:commonzerosetisaxis}
        \cap_{(m,n;k,l)} \mathrm{Z}\big(\Psi_{m,n;k,l}(\lambda)\big)=\cup_{\ell\in[d]} \mathbb{R}\cdot e_{\ell}, 
    \end{equation}
    where $\mathrm{Z}(\cdot)$ denotes the zero set of a polynomial, and $\{e_{\ell},\ell\in[d]\}$ is the standard unit basis. 
\end{proposition}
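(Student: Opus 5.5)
The plan is to chain Proposition~\ref{prop:CPidcond} with the equivalence \eqref{rank.m.eq.1}, and then translate ``rank one'' into the vanishing of all $2\times 2$ minors $\Psi_{m,n;k,l}$. Throughout I assume $B\odot A$ has full column rank $d$, as argued at the start of Section~\ref{subsec:preliminary}. Under this assumption $\lambda\mapsto m(\lambda)$ is a linear isomorphism from $\mathbb{R}^d$ onto $\mathrm{M}(A,B)$. A matrix has rank at most one iff all its $2\times2$ minors vanish, and $m(\lambda)=0$ only at $\lambda=0$. Hence
\[
\cap_{(m,n;k,l)}\mathrm{Z}(\Psi_{m,n;k,l})=\{\lambda:\operatorname{rank} m(\lambda)\le 1\}=\{0\}\cup\{\lambda:\operatorname{rank} m(\lambda)=1\}.
\]
The axes $\mathbb{R}e_\ell$ always lie in this set, because $m(ce_\ell)=c\,a^\ell b^{\ell\top}$. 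So \eqref{eq:commonzerosetisaxis} says exactly that every rank-one element of $\mathrm{M}(A,B)$ is a multiple of some $a^\ell b^{\ell\top}$.

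Next I prove the two directions of \eqref{rank.m.eq.1} carefully, since the paper only sketches them.

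($\Leftarrow$) Suppose the only rank-one elements of $\mathrm{M}(A,B)$ are multiples of the $a^\ell b^{\ell\top}$. Let $\{a'^\ell b'^{\ell\top}\}_{\ell\in[d]}$ be any rank-one basis of $\mathrm{M}(A,B)$. Each of its elements lies on some axis direction, so we get a map $\sigma$ with $a'^\ell b'^{\ell\top}=c_\ell\, a^{\sigma(\ell)}b^{\sigma(\ell)\top}$. This map must be injective, since otherwise two basis vectors would be proportional. It is therefore a permutation, so the basis is unique up to scaling and permutation, and Proposition~\ref{prop:CPidcond} gives $(A,B)\in\mathcal{K}_d$.

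($\Rightarrow$) I argue by contraposition. Suppose $m(\lambda)$ has rank one with $\lambda$ off the axes, so at least two coordinates, say $\lambda_i$ and $\lambda_j$, are nonzero. Replace $a^ib^{i\top}$ by $m(\lambda)$; the coefficient of $a^ib^{i\top}$ in $m(\lambda)$ is $\lambda_i\neq0$, so by the exchange lemma the result is still a rank-one basis. This new basis contains $m(\lambda)$, which is not proportional to any $a^\ell b^{\ell\top}$ because $\lambda$ has two nonzero coordinates and the $a^\ell b^{\ell\top}$ are linearly independent. Hence it is a second rank-one basis that does not arise from the original by scaling and permutation, which contradicts $(A,B)\in\mathcal{K}_d$.

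Finally, I need the algebraic identity \eqref{2minorquadraticequation}, namely that each minor is the stated symmetric bilinear expansion. Expand $\det\big(\sum_\ell\lambda_\ell a^\ell b^{\ell\top}\big)_{\{m,n\}\times\{k,l\}}$ by multilinearity in the two columns. This gives the terms $\lambda_i\lambda_j\det[\,a^ib^i_k\ \ a^jb^j_l\,]$ restricted to rows $m,n$. The diagonal terms $i=j$ vanish because $a^ib^{i\top}$ has rank one. Pairing the $(i,j)$ and $(j,i)$ terms produces $2\lambda_i\lambda_j\psi_{m,n;k,l}$. The main (mild) obstacle is the ($\Rightarrow$) direction: the replacement must be done at a coordinate with $\lambda_i\ne0$, and one must confirm the new basis is genuinely inequivalent. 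The rest is bookkeeping.
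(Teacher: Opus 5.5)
Your proposal is correct and takes essentially the paper's route: Proposition~\ref{prop:CPidcond} reduces identifiability to the absence of off-axis rank-one elements $m(\lambda)$, and rank at most one is equivalent to the vanishing of all $2\times 2$ minors (the paper phrases this on the unit sphere, and its closing appeal to the real Nullstellensatz concerns \eqref{eq:sqrtideal}, not this statement). Your exchange-lemma treatment of both directions of \eqref{rank.m.eq.1} is more careful than the paper's sketch, and your full-column-rank assumption costs nothing when $d\ge 2$, since the zero-set condition then rules out any nonzero $\lambda$ with $m(\lambda)=0$.
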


In practice, to determine whether \eqref{eq:commonzerosetisaxis} holds, one can compute the Gr\"obner basis of \(\Psi:=\{\Psi_{m,n;k,l}(\lambda):  m,n\in[p],\,k,l\in[q]\}\) over the polynomial ring $\mathbb{R}[\lambda]$, which is a set of polynomials $\{g_1,\ldots,g_s\}$ in $\lambda$, $s\leq \frac{d(d-1)}{2}$, satisfying
\(
\bigcap_{m,n,k,l}\mathrm{Z}(\Psi_{m,n;k,l}(\lambda))
=
\bigcap_{i=1}^s \mathrm{Z}(g_i)\,, 
\)  
see, e.g., \cite{Buchberger1976CanonicalForms,AnIntrotoCompuAGCA}. 
Thus, it suffices to compute the common zero set of the $s$ polynomials in the 
Gr\"obner basis instead of the $(pq)^2$ polynomials in $\Psi$. Moreover, \(g_i\)'s are often much simpler than the original $\Psi_{m,n;k,l}(\lambda)$'s, making~\eqref{eq:commonzerosetisaxis} easier to verify theoretically.



\cite{CHANGDUHUANGYAO2024} show that  
${\mathcal{K}}_\Omega\coloneqq\{(A,B):\dim\ker \Omega(B\odot A)=d\}$ is identifiable, 
where $\Omega(B\odot A)
:= \Big[\,\operatorname{vec}\big(\psi( a^ib^{i\top},\ a^jb^{j\top})\big)\,\Big]_{1\le i\leq j\le d}
\in\mathbb{R}^{p^2q^2\times \frac{d(d+1)}{2}}$. 
The following corollary show that $\mathcal{K}_\Omega$ is smaller than $\mathcal{K}_d$, and thus our identification condition is more general.

\begin{corollary}\label{coro:relationtoyao24}
We have ${\mathcal{K}}_{\Omega}\subset \operatorname{int}({\mathcal{K}}_d)$.  
\end{corollary}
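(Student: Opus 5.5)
The plan has two parts. First, $\mathcal K_\Omega\subset\mathcal K_d$, which is essentially the result of \cite{CHANGDUHUANGYAO2024}; we rederive it from Proposition~\ref{prop:algcond}. Second, $\mathcal K_\Omega$ is open, which upgrades the inclusion to one into $\operatorname{int}(\mathcal K_d)$.

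\emph{Step 1: the quadratic forms in terms of $\Omega$.} Write $\mu(\lambda)=(\lambda_i\lambda_j)_{i\le j}\in\R^{d(d+1)/2}$ for the vector of degree-two monomials. Since $\psi(a^ib^{i\top},a^ib^{i\top})$ is the (zero) $2\times2$ minor of a rank-one matrix, the expansion \eqref{2minorquadraticequation} gives
\[
\big(\Psi_{m,n;k,l}(\lambda)\big)_{(m,n;k,l)}=\Omega(B\odot A)\,D\,\mu(\lambda)
\]
for a fixed invertible diagonal matrix $D$ with entries $2$ off the diagonal index pairs and $1$ on them. Hence $\lambda$ is a common zero of all $\Psi_{m,n;k,l}$ if and only if $D\mu(\lambda)\in\ker\Omega(B\odot A)$.

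The $d$ vectors $\mu(e_\ell)$, i.e. the unit vectors at the indices $(\ell,\ell)$, always lie in this kernel. So if $\dim\ker\Omega=d$, the kernel is exactly the span $\{w: w_{ij}=0 \text{ for } i<j\}$. Now take any common zero $\lambda$. It must satisfy $\lambda_i\lambda_j=0$ for all $i<j$, so at most one coordinate of $\lambda$ is nonzero, and $\lambda$ lies on a coordinate axis. Proposition~\ref{prop:algcond} then gives $(A,B)\in\mathcal K_d$. The converse inclusion of the axes in the zero set is automatic.

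\emph{Step 2: openness of $\mathcal K_\Omega$.} The entries of $\Omega(B\odot A)$ are polynomials in the entries of $(A,B)$, so $(A,B)\mapsto\Omega(B\odot A)$ is continuous. The $d$ diagonal columns $(i,i)$ of $\Omega$ vanish identically, so $\dim\ker\Omega\ge d$ everywhere. Therefore $\dim\ker\Omega=d$ is equivalent to the submatrix $\Omega_{\rm off}$ formed by the columns with $i<j$ having full column rank $d(d-1)/2$. Full column rank is characterised by some maximal minor being nonzero, which is an open condition because rank is lower semicontinuous. Hence $\mathcal K_\Omega$ is open in $\R^{p\times d}\times\R^{q\times d}$.

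Combining the steps, $\mathcal K_\Omega$ is an open set contained in $\mathcal K_d$, so $\mathcal K_\Omega\subset\operatorname{int}(\mathcal K_d)$.

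The only delicate point is the bookkeeping in Step 1: one must check that the diagonal columns of $\Omega$ are genuinely zero, and that the factor $2$ and the symmetrisation in \eqref{def:Psimap} match the column indexing of $\Omega$. This matters because otherwise the kernel might not contain the $d$ axis directions, and the count ``$=d$'' would not force the kernel to be exactly their span. Any implicit full-rank requirement on $B\odot A$ in $\mathcal K_d$ should follow as well: $\dim\ker\Omega=d$ excludes extra rank-one directions, and a dependent rank-one basis would produce extra rank-one elements. If needed, this can be checked directly.
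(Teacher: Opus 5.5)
Your proof is correct and is essentially the paper's argument. The paper likewise shows that full column rank of the off-diagonal block $\Omega_{i<j}(B\odot A)$ forces the common zero set of the $\Psi_{m,n;k,l}$ to be the coordinate axes, and then concludes from the openness of this full-rank condition; it phrases the first step as the quadratic forms spanning the same space as $\{\lambda_i\lambda_j\}_{i<j}$, so that $I(B\odot A)$ equals that monomial ideal, which is the row-space counterpart of your kernel argument. Your observation that the zero-set conclusion already forces $B\odot A$ to have full column rank covers a point the paper simply assumes at the start of its proof.
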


The following example provides an explicit example of 
$(A,B)$ which belongs to $\mathcal{K}_d$ but not 
$\mathcal{K}_{\Omega}$. 

\begin{example}\label{examplenumeric}
    Consider $p=q=3\,,d=5$, with 
    \begin{align}\label{exampleforA,Binbdry}A=\begin{pmatrix} 
    1&1&0&1&0\\0&2&1&-1&0\\0&1&-1&2&1
    \end{pmatrix},\quad B=\begin{pmatrix}
    2&1&-1&-1&0\\0&-1&5&3&0\\0&-1&3&1&2
    \end{pmatrix}.\end{align}
It is readily checked that, $\operatorname{rank}(A)=\operatorname{rank}(B)=3,\ \operatorname{rank}\Omega(B\odot A)=9$, 
implying that 
$\dim\ker \Omega(B\odot A)=6 \neq 5$, i.e., 
$(A,B)\notin \mathcal{K}_{\Omega}$. On the other hand, by symbolic computation using a computer algebra system (e.g. in Python via SymPy \citep{MeurerEtAl2017SymPy}), the Gr\"{o}bner basis for $\Psi_{m,n;k,l}$'s is
$$\mathcal{G}(A,B):=\big[ \lambda_1\lambda_5^2, (\lambda_1\lambda_2-\lambda_1\lambda_5), (\lambda_1\lambda_3-\lambda_1\lambda_5),(\lambda_1\lambda_4-\lambda_1\lambda_5),\lambda_i\lambda_j\big|2\leq i<j\leq5\big]\,.$$
It is straightforward to verify that the common zeros are the coordinate axes. Thus, $(A,B)\in\mathcal{K}_d$. 
\end{example}


As an alternative to the Gr\"obner basis approach, one may resort to efficient numerical algorithms to verify \eqref{eq:commonzerosetisaxis}. 
In algebraic geometry, to determine whether two polynomial systems have the same common zero set, one often uses Hilbert's Nullstellensatz to reduce the problem to checking whether the associated radical ideals coincide. Here,  
the \emph{real radical} of a polynomial system is defined as 
\[
\sqrt[\mathbb{R}]{F}
\;\coloneqq\;
\left\{f\in \mathbb{R}[\lambda_1,\ldots,\lambda_d] \middle|\;
{
\begin{array}{l}
\text{for }\forall p_i\in\mathbb{R}[\lambda],i\in[s],\ \exists\, m,k\in\mathbb{N},\ \exists\, s_j\in \mathbb{R}[\lambda]\\\text{s.t.}\
f^{2m}+\sum_{j=1}^k s_j^2 =\sum_{i}p_i f_i, f_i\in F
\end{array}}
\right\}\,,
\]
and the Hilbert's Nullstellensatz theorem is stated as follows. 
\begin{theorem}[Real Hilbert's Nullstellensatz \citep{AnIntrotoCompuAGCA,atiyahmacCOMMUTATIVEALG}]\label{thm.H.N.Thm}
    Two polynomial systems $F=(f_1,\ldots,f_s),G=(g_1,\ldots,g_r)$ have the same common zeros in $\mathbb{R}$, i.e. $\mathrm{Z}(F)=\mathrm{Z}(G)$, if and only if the corresponding real radicals are the same: $\sqrt[\mathbb{R}]{F}=\sqrt[\mathbb{R}]{G}$. 
\end{theorem}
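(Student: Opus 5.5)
The statement is the real Nullstellensatz of Krivine and Dubois--Risler, so I will reduce both directions to the single identity
\[
\mathrm{I}\big(\mathrm{Z}(F)\big)=\sqrt[\mathbb{R}]{F},
\]
where $\mathrm{I}(S)$ denotes the ideal of polynomials vanishing on $S\subset\mathbb{R}^d$. Throughout I read $\sqrt[\mathbb{R}]{F}$ in its standard sense: $f$ belongs to it when $f^{2m}+\sum_j s_j^2$ lies in the ideal $\langle f_1,\ldots,f_s\rangle$ for some $m$ and some $s_j$. Given this identity, the theorem follows. If $\mathrm{Z}(F)=\mathrm{Z}(G)$, then applying $\mathrm{I}(\cdot)$ to both sides gives $\sqrt[\mathbb{R}]{F}=\sqrt[\mathbb{R}]{G}$. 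Conversely, I will show $\mathrm{Z}(\sqrt[\mathbb{R}]{F})=\mathrm{Z}(F)$, so equal radicals force equal zero sets.

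\textbf{Easy inclusion.} Take $f\in\sqrt[\mathbb{R}]{F}$, so $f^{2m}+\sum_j s_j^2=\sum_i p_i f_i$. Evaluating at any $x\in\mathrm{Z}(F)$, the right side vanishes. Hence $f(x)^{2m}+\sum_j s_j(x)^2=0$, and since every term is a real square, $f(x)=0$. This proves $\sqrt[\mathbb{R}]{F}\subset\mathrm{I}(\mathrm{Z}(F))$. Together with the trivial inclusion $F\subset\sqrt[\mathbb{R}]{F}$ (take $m$ minimal with no $s_j$), it also gives $\mathrm{Z}(\sqrt[\mathbb{R}]{F})=\mathrm{Z}(F)$, which is what the converse direction needs.

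\textbf{Hard inclusion.} Suppose $f$ vanishes on $\mathrm{Z}(F)$. I use the Rabinowitsch trick with a new variable $y$. The system $f_1=\cdots=f_s=0,\ yf-1=0$ then has no real solution. I next apply a real weak Nullstellensatz: if an ideal $J\subset\mathbb{R}[\lambda,y]$ has empty real zero set, then $-1$ is a sum of squares modulo $J$. This yields
\[
1+\sum_j \sigma_j^2=h\,(yf-1)+\sum_i q_i f_i .
\]
Substituting $y=1/f$ and multiplying through by $f^{2m}$, with $2m$ at least the maximal $y$-degree occurring, clears denominators. The result is $f^{2m}+\sum_j (f^{m}\sigma_j(\lambda,1/f))^2\in\langle F\rangle$, where each $f^m\sigma_j(\lambda,1/f)$ is a polynomial. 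Hence $f\in\sqrt[\mathbb{R}]{F}$.

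\textbf{Main obstacle.} The real weak Nullstellensatz is the crux, and I would argue it by contraposition. Suppose $-1$ is not a sum of squares modulo $J$. Then, by a Zorn argument, $J$ extends to a prime ideal $\mathfrak p$ such that $\mathrm{Frac}(\mathbb{R}[\lambda,y]/\mathfrak p)$ is formally real. Artin--Schreier supplies an ordering of this field, and the Artin--Lang homomorphism theorem, equivalently Tarski--Seidenberg transfer from the real closure down to $\mathbb{R}$, then gives an $\mathbb{R}$-point of $\mathrm{Z}(J)$, a contradiction. Since this is classical real algebraic geometry, I would state it with a citation to \cite{AnIntrotoCompuAGCA} (or Bochnak--Coste--Roy) rather than reprove Artin--Lang.
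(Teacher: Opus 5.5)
Your proof is correct. The paper gives no argument for this theorem; it simply imports it from the cited references. So your route differs only in that it actually derives the result: you reduce both directions to $\mathrm{I}(\mathrm{Z}(F))=\sqrt[\mathbb{R}]{F}$ and prove that identity by the standard Krivine--Dubois--Risler argument (Rabinowitsch trick plus the real weak Nullstellensatz), black-boxing essentially only the Artin--Lang step.

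What this buys is a clean separation of the two directions. The direction ``equal real radicals $\Rightarrow$ equal zero sets'' needs nothing beyond evaluating $f^{2m}+\sum_j s_j^2=\sum_i p_i f_i$ at a real zero. That elementary half is all the certification in Example~\ref{examplenumeric} uses: from $\Lambda\subset\sqrt[\mathbb{R}]{\mathcal{G}}$ one gets $\mathrm{Z}(\mathcal{G})\subset\mathrm{Z}(\Lambda)$ directly. The real-closed-field machinery is needed only when the paper goes from $(A,B)\in\mathcal{K}_d$ to a sums-of-squares certificate for the $\lambda_i\lambda_j$, as in Step~1 of the proof of Theorem~\ref{thm:mainthm2foridcond}.

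Your decision to read the real radical with existentially quantified $p_i$ is essential, not cosmetic. The paper's displayed definition says ``for $\forall p_i$''. Under that literal reading, taking all $p_i=0$ excludes every $f\neq 0$, and taking $p_i=-f_i$ for a nonzero $f_i$ excludes $f=0$. So $\sqrt[\mathbb{R}]{F}=\emptyset$ whenever some $f_i\neq 0$, and the theorem would fail (e.g.\ $F=(\lambda_1)$, $G=(\lambda_2)$).

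Two small bookkeeping points. For $F\subset\sqrt[\mathbb{R}]{F}$, use $m=1$ (since $f_i^2=f_i\cdot f_i$) rather than ``$m$ minimal''. Also dispose of the case $f=0$ before substituting $y=1/f$.
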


 It is easy to see that the zero set of the polynomial system $\Lambda\coloneqq\{\lambda_i\lambda_j:i<j\in[d]\}$ is 
$\cup\mathbb{R}\cdot e_{\ell}$. Together with Proposition \ref{prop:algcond} and Theorem \ref{thm.H.N.Thm}, we have that $(A,B)\in \mathcal{K}_d$ is equivalent to 
 \begin{equation}\label{eq:sqrtideal}
        \sqrt[\mathbb{R}]{\Psi}=\sqrt[\mathbb{R}]{\Lambda}=\left\{\sum_{i<j\in[d]}p_{i,j}\lambda_i\lambda_j:\forall p_{i,j}\in\mathbb{R}[\lambda]\right\}\,,
    \end{equation}
which can be readily verified by software tools such as Macaulay2 \cite{M2}, Singular \cite{Singular}, or CoCoA \cite{CoCoA}.


In summary, to determine whether a pair $(A,B)$ is identifiable, we compute all $2\times2$-minors of $m(\lambda):=\operatorname{Mat}\big(B\odot A\cdot \lambda\big)\in\mathbb{R}^{d_1\times d_2}$, denoted by \(\{\Psi_{m,n;k,\ell}(\lambda)\}\), and then verify \eqref{eq:sqrtideal} for $B\odot A$ via Macaulay2\cite{M2}. For illustration, in Example \ref{examplenumeric}, the radical of its polynomial system is $\sqrt[\mathbb{R}]{\Lambda}$ since $\lambda_1\lambda_5^2\cdot \lambda_1=(\lambda_1\lambda_5)^2$ implies $\lambda_1\lambda_5\in\sqrt[\mathbb{R}]{\mathcal{G}}$, and hence by definition, $\Lambda\subset \sqrt[\mathbb{R}]{\mathcal{G}}$, and thus $\sqrt[\mathbb{R}]{\mathcal{G}}=\sqrt[\mathbb{R}]\Lambda$. 

To further simplify the computation, a QR decomposition $(A_0A_r,B_0B_r)$ of $(A,B)$ may be conducted, where $A_0^{\top}A_0=I_{d_1}$ and $B_0^{\top}B_0=I_{d_2}$. Since the equations \begin{equation}\label{eq:idisstableunderQR}\begin{aligned}
    &\sum_{\ell\in[d]}\lambda_\ell a^{\ell}b^{\ell\top}=A_0(\sum_{\ell\in[d]}\lambda_{\ell}a_r^{\ell}b_r^{\ell\top})B_0^{\top}\,, \ \ \  A_0^{\top}\sum_{\ell\in[d]}\lambda_\ell a^{\ell}b^{\ell\top}B_0=\sum_{\ell\in[d]}\lambda_{\ell}a_r^{\ell}b_r^{\ell\top};
\end{aligned}\end{equation} give a one-to-one correspondence between $\mathrm{M}(A,B)$ and $\mathrm{M}(A_r,B_r)$, we have that $(A,B)\in{\mathcal{K}}_d(p,q,d)$ if and only if $(A_r,B_r)\in{\mathcal{K}}_d(d_1,d_2,d)$. This reduces the number of polynomials in checking \eqref{eq:sqrtideal}.

\subsection{\texorpdfstring{Interior and boundary of \({\mathcal{K}}_d\)}%
{Interior and boundary of Ktilde_d}}\label{subsec:intvsbdry}

This section characterizes the boundary of the space ${\mathcal{K}}_d$. 
As will be shown in Section 3.2, the asymptotic behavior of estimators differs depending on whether the true parameters lie on the boundary
or in the interior of ${\mathcal{K}}_d$.

Consider a point $p_0=(A_0,B_0)$ lying on the boundary of $\mathcal{K}_d$. 
By definition, for any sufficiently small $\epsilon>0$, there exists a direction $\delta=(\delta_A,\delta_B)\in\mathbb{R}^{p\times d}\times\mathbb{R}^{q\times d}$, $\|\delta_A\|_F=\|\delta_B\|_F=1$, such that the point $p_\delta(\epsilon)=(A_0+\epsilon\delta_A, B_0+\epsilon\delta_B)\notin\mathcal{K}_d$. Moreover, 
the following result shows that we can find an analytic family $\delta_t$ (with all entries real analytic functions) such that the whole path $\{p_{\delta_t}(t):0<t<\epsilon\}$ is outside $\mathcal{K}_d$.
\begin{lemma}\label{lemma:shortintervalexistoutside}
    For every point $p_0$ on the boundary of $\mathcal{K}_d$, there exist an analytic family of directions $\delta_t$ and a small positive number $\epsilon$ such that $p_{\delta_t}(t)=(A+t\delta_{A_t},B+t\delta_{B_t})\notin\mathcal{K}_d$ over $0<t<\epsilon$. 
\end{lemma}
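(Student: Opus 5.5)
The plan is to show that the complement $\mathcal{K}_d^c$, inside $\mathbb{R}^{p\times d}\times\mathbb{R}^{q\times d}$, is a \emph{semialgebraic} set. Once that is in place, the conclusion follows from the curve selection lemma of real algebraic geometry. Concretely, $(A,B)\notin\mathcal{K}_d$ holds if and only if one of two conditions holds. The first is that $B\odot A$ has rank less than $d$; this is a polynomial condition, namely the vanishing of all $d\times d$ minors. The second is that, by Proposition~\ref{prop:algcond}, there exists $\lambda\in\mathbb{R}^d$ with $\|\lambda\|=1$, not on a coordinate axis, such that $\Psi_{m,n;k,l}(\lambda)=0$ for all $(m,n;k,l)$. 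The condition that $\lambda$ is off the axes can be written as $\sum_{i<j}\lambda_i^2\lambda_j^2>0$.

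The second set is the image, under the coordinate projection $(A,B,\lambda)\mapsto(A,B)$, of a set defined by finitely many polynomial equalities and inequalities in $(A,B,\lambda)$. Each $\Psi_{m,n;k,l}$ is polynomial in $(A,B,\lambda)$ jointly, by \eqref{2minorquadraticequation}. By the Tarski--Seidenberg theorem, this projection is semialgebraic. Since finite unions of semialgebraic sets are semialgebraic, $\mathcal{K}_d^c$ is semialgebraic, and hence so is $\mathcal{K}_d$.

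Next, let $p_0$ be a boundary point of $\mathcal{K}_d$. Then $p_0\in\overline{\mathcal{K}_d^c}$, because every neighborhood of $p_0$ meets $\mathcal{K}_d^c$. I will apply the curve selection lemma (e.g.\ Bochnak--Coste--Roy, Thm.~2.5.5) to the semialgebraic set $S=\mathcal{K}_d^c$ and the point $p_0\in\overline S$. It gives a continuous semialgebraic curve $\gamma:[0,\epsilon)\to\mathbb{R}^{p\times d}\times\mathbb{R}^{q\times d}$ with $\gamma(0)=p_0$ and $\gamma((0,\epsilon))\subset S$. In fact the stronger Nash or analytic version applies, giving $\gamma$ real analytic on $[0,\epsilon)$. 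This is obtained by taking a Puiseux parametrization and reparametrizing $t\mapsto t^N$ to clear fractional exponents.

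The remaining step is to write $\gamma(t)=p_0+t\delta_t$. Because $\gamma$ is analytic with $\gamma(0)=p_0$, the difference $\gamma(t)-p_0$ has a convergent power series without constant term. Dividing by $t$ gives an analytic $\delta_t=(\delta_{A_t},\delta_{B_t})$ on $[0,\epsilon)$, which is the required analytic family. If unit-norm directions are also wanted, as in the paragraph preceding the lemma, we factor out the leading order: $\gamma(t)-p_0=t^k u(t)$ with $u(0)\neq0$ analytic. We then reparametrize, or absorb $t^{k-1}$ and the norms into $\delta_t$; this keeps analyticity on a small interval because $\|u\|$ does not vanish near $0$.

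I expect the main obstacle to be the first step: presenting $\mathcal{K}_d^c$ cleanly as semialgebraic. Two points need care. The first is that the ``off-axis'' condition must be expressed as a strict polynomial inequality. The second is the rank-deficient component of the complement, since the identifiability characterization presupposes $\operatorname{rank}(B\odot A)=d$. The rest is a direct invocation of the curve selection lemma.
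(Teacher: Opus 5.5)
Your proposal is correct and is essentially the paper's argument: the paper also uses Tarski--Seidenberg on the projection of $\{(A,B,\lambda):\Psi(\lambda)=0,\ \lambda\notin\cup_\ell\mathbb{R}e_\ell\}$ to get semialgebraicity, applies the Nash curve selection lemma at $p_0\in\overline{E}\setminus E$ with $E$ the complement of $\mathcal{K}_d$, and divides $\gamma(t)-p_0$ by $t$ to obtain the analytic family $\delta_t$. Your version is slightly more careful than the paper's, because you explicitly add the rank-deficient set $\{\operatorname{rank}(B\odot A)<d\}$ to the complement and encode the off-axis condition as the strict inequality $\sum_{i<j}\lambda_i^2\lambda_j^2>0$.
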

As $p_{\delta_t}(0)=p_0\in\mathcal{K}_d$ and $p_{\delta_t}(t)\notin \mathcal{K}_d$ for any $t>0$, we 
can characterize the boundary of $\mathcal{K}_d$ by studying 
the differences between $p_{\delta_t}(0)$ and $p_{\delta_t}(t)$ as $t\to 0$. 
To be specific, for the $\delta_t$ in Lemma \ref{lemma:shortintervalexistoutside}, define 
$\boldsymbol{\Psi}_t(\lambda)$ as the family of 
2$\times$2-minors $\{\Psi_{m,n;k,l}(\lambda)\}$ at $p_{\delta_t}(t)$; see \eqref{2minorquadraticequation}. Also,   
define $\mathrm{Z}_t$ as the zero set of 
$\boldsymbol{\Psi}_t(\lambda)$, i.e.,
\begin{eqnarray*}
Z_{t}=\left\{\lambda\in\mathbb{R}^{d}: 
\Psi_{m,n;k,l}(\lambda)=0 \mbox{ at } p_{\delta_t}(t) \mbox{ for all } m,n\in [p], k,l\in[q]\right\}\,.
\end{eqnarray*}

As $p_{\delta_t}(0)\in\mathcal{K}_d$, by Proposition~\ref{prop:algcond}, $\mathrm{Z}_0$ is the coordinate-axes \(\bigcup_{\ell}\mathbb R\cdot e_\ell\).  For $t>0$, note from \eqref{2minorquadraticequation} that 
$\boldsymbol{\Psi}_t(\lambda)$ is always a family of quadratic forms with diagonal part zero. Thus,  
$\mathrm{Z}_{t}$ still contains the coordinate axes. As 
$p_{\delta_t}(t)\notin \mathcal{K}_d$, Proposition~\ref{prop:algcond} implies that 
$\mathrm{Z}_{t}$ must contain some sets outside 
$\bigcup_{\ell}\mathbb R\cdot e_\ell$.
The following lemma describes the behavior of these additional zeros as $t\to 0$: 
\begin{lemma}\label{lemma:limitingbehaviorofzerolocusinboundary}
There exists an analytic family $\delta_t$ satisfying Lemma \ref{lemma:shortintervalexistoutside}, such that, for a sufficiently small number $\epsilon>0$ and a positive integer $m$, the zero set can be represented as 
$\mathrm{Z}_{t}=\mathrm{Z}_0\cup \bigcup^{m}_{j=1}\mathbb{R}\cdot v^{(j)}_t$ over $t\in(0,\epsilon)$, where each $v_{t}^{(j)}$ is a $d$-dimensional vector satisfying $\lim_{t\to 0}v^{(j)}_t=e_{\ell}$ for some $\ell\in[d]$. 

\end{lemma}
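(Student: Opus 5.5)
Our approach is to combine the semialgebraic structure of the extra zeros with the fact that $\mathrm{Z}_0$ is exactly the coordinate axes. We pass to the projective picture. Since every $\Psi_{m,n;k,l}$ is a quadratic form, $\mathrm{Z}_t$ is a cone. It is therefore determined by its intersection with the unit sphere $S^{d-1}$, and we denote this compact set by $W_t$. Starting from any analytic family $\delta_t$ given by Lemma~\ref{lemma:shortintervalexistoutside}, we consider the set
\[
\mathcal{W}=\{(t,\lambda)\in(0,\epsilon)\times S^{d-1}:\ \lambda\in W_t\setminus \textstyle\bigcup_\ell \mathbb{R} e_\ell\}.
\]
The coefficients of $\Psi$ at $p_{\delta_t}(t)$ are polynomial in $t$ and in the (analytic) entries of $\delta_t$. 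After shrinking $\epsilon$ and, if needed, replacing $\delta_t$ by a semialgebraic or Nash approximation (for instance a polynomial truncation of high order that still stays outside $\mathcal{K}_d$), $\mathcal{W}$ becomes semialgebraic, and it is nonempty over every $t$ by Proposition~\ref{prop:algcond}.

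\textbf{Step 1: the extra zeros accumulate only on the axes.} The set $\{(t,\lambda):\lambda\in W_t\}$ varies upper semicontinuously in $t$, because the minors depend continuously on $t$. Hence every limit point as $t\to0$ of a sequence of extra zeros lies in $W_0$. By Proposition~\ref{prop:algcond}, $W_0=\{\pm e_\ell\}$, so every accumulation point of $\mathcal{W}$ at $t=0$ is some $\pm e_\ell$.

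\textbf{Step 2: finiteness for small $t$, and analytic branches.} We apply Hardt's semialgebraic triviality theorem (equivalently, cell decomposition) to the projection $\mathcal{W}\to(0,\epsilon)$. After shrinking $\epsilon$, this makes the fibers homeomorphic for all $t$. If the fibers were positive-dimensional, we would need to rule this out. For that we will choose $\delta_t$ suitably among the available directions, namely a generic path, using the curve selection lemma applied to the semialgebraic set $\mathbb{R}^{p\times d}\times\mathbb{R}^{q\times d}\setminus\mathcal{K}_d$ near $p_0$ restricted to a stratum of minimal extra-zero dimension. Alternatively, we simply select one branch: by the curve selection lemma at the point $(0,e_\ell)\in\overline{\mathcal{W}}$, there is a Nash (hence analytic) curve $t\mapsto(t, v_t)$ with $v_t\in\mathcal{W}$ and $v_t\to e_\ell$. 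If the fiber of $\mathcal{W}$ is finite, Hardt triviality gives finitely many continuous (indeed Nash, after shrinking) sections $v^{(1)}_t,\dots,v^{(m)}_t$ that exhaust it. By Step 1, each section has a limit, and that limit is some $e_\ell$, possibly after replacing $v$ by $-v$. Since $\mathrm{Z}_t$ is a cone, this gives $\mathrm{Z}_t=\mathrm{Z}_0\cup\bigcup_j\mathbb{R}v^{(j)}_t$.

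\textbf{Main obstacle.} The main difficulty is showing that the extra zero set is a finite union of lines rather than a higher-dimensional subvariety, since a rank-1 basis ambiguity could in principle give a whole curve of rank-1 elements. We plan to handle this by exploiting the freedom in choosing $\delta_t$, and we would try two arguments.

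First, we can argue that near $e_\ell$ any positive-dimensional family of rank-1 matrices $\sum\lambda_i a^ib^{i\top}$ would, by upper semicontinuity, force positive-dimensional extra zeros arbitrarily close to $e_\ell$. Then we move $\delta_t$ to a generic direction inside the non-identifiable set, which has a stratification, and choose the stratum adjacent to $p_0$ of smallest fiber dimension.

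Second, we can observe that the extra rank-1 points near $e_\ell$ correspond to $a^\ell b^{\ell\top}+\mu$ with $\mu$ small in a rank-1 variety intersected with a $d$-dimensional linear space. Generic perturbations within the boundary stratum cut this intersection down to a zero-dimensional set by a dimension count.

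If neither works, the fallback is to state the conclusion using one branch ($m\ge1$) selected by curve selection, which already suffices for the boundary characterization.
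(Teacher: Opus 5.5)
Your skeleton matches the paper's: a semialgebraic outgoing arc from Lemma~\ref{lemma:shortintervalexistoutside}, homogeneity of $\mathrm{Z}_t$, Hardt triviality over $(0,\epsilon)$, and continuity in $(t,\lambda)$, which forces all extra zeros to accumulate only at $\pm e_\ell$. The arc built in that lemma is already Nash, so no truncation is needed, and a truncation need not stay outside $\mathcal{K}_d$ anyway.

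The genuine gap is the one you flag yourself. Nothing in Steps~1--2 shows that the fibres of $\mathcal{W}$ are finite. That finiteness is exactly what the lemma asserts beyond the mere existence of a bifurcating zero. Local triviality and the accumulation argument are both compatible with, for instance, a small closed curve on $S^{d-1}$ through $e_\ell$ that shrinks to $e_\ell$ as $t\to0$. A reduced equation on a two-dimensional $\ker J_\ell(p_0)$ of the form $\tau_1^2+\tau_2^2=t\,\tau_1$ would produce exactly this. In that case $\mathrm{Z}_t\setminus\mathrm{Z}_0$ contains a two-dimensional cone, and no representation $\mathrm{Z}_0\cup\bigcup_{j\le m}\mathbb{R}v^{(j)}_t$ exists.

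Your two remedies are not yet arguments.
\begin{itemize}
\item Choosing, among strata of the non-identifiable set adjacent to $p_0$, one of minimal extra-zero dimension does not show that this minimum is $0$. You would need non-identifiable points with only finitely many extra rank-one directions to accumulate at $p_0$, and that is precisely the missing claim.
\item The dimension count is about generic complex linear sections of the Segre variety. Here, however, $\mathrm{M}(A,B)$ is forced to contain the $d$ real points $a^\ell b^{\ell\top}$ and to lie outside $\mathcal{K}_d$. The expected dimension $d+p+q-2-pq$ of $\mathbb{P}(\mathrm{M})\cap\mathrm{Seg}$ need not be $\le 0$. In any case, a complex expected dimension does not bound the dimension of the real locus.
\end{itemize}

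Your fallback is correct: one branch $v_t\to e_\ell$ via curve selection at $(0,e_\ell)\in\overline{\mathcal{W}}$. To get $v$ analytic you should reparametrise the outgoing arc by the curve-selection parameter, since that parameter need not coincide with $t$. This single branch is all that Proposition~\ref{prop:splittingoffleadingtorankdeficient} actually uses. It is, however, strictly weaker than the decomposition the lemma states.

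For what it is worth, the paper's proof closes this same step only by asserting that each component, being locally constant in $t$, is one-dimensional. That inference is not valid either, because local triviality says nothing about fibre dimension. So you have correctly located the one nontrivial point, but neither argument supplies it.
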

 
Geometrically, Lemma \ref{lemma:limitingbehaviorofzerolocusinboundary} 
indicates that as we go outside $\mathcal{K}_d$ from the boundary point $p_0$ and keep track of the zero sets 
$\mathrm{Z}_{t}$, we can find at least one new branch of zeros, say   $\mathbb{R}\cdot v_t$, bifurcating from one of the coordinate axes $\mathbb{R}e_{l}$. 
For an illustration of a bifurcating branch of zeros of polynomials, consider the family of polynomials
\(\Psi_{t}=\big(x(y+z),z[y+tx],0\big)^{\top}\). Clearly,  
the zeros of $\Psi_{t}$ is $\mathrm{Z}_{t}=
\bigcup_{\ell=1}^{3}\mathbb R\cdot e_\ell
\cup \mathbb{R}\cdot v_{t}$, with $v_{t}=(1,-t,t)$ bifurcating away from the axis $e_1=v_{0}$ as $t>0$ increases.  
From the definition of $v_{t}$, we have $\lim_{t\to 0}v_{t}= e_{\ell}$ and $\Psi_t(c v_t)=0$ for any $c>0$ and  $t\in(0,\epsilon)$. Thus, we can take the {\sl normalized bifurcating direction}: $\omega_t=\frac1{v^\top_{t}e_{\ell}} v_{t}-e_{\ell}$ so that
\begin{eqnarray}\label{eq:omega.path}
\Psi_t(e_\ell+\omega_t)=0\,, \ \|\omega_t\|\neq 0\,, \ \lim_{t\to 0}\omega_t =0\,, \ 
\ {\rm and} \  \ \omega_t\perp e_\ell\,.
\end{eqnarray}
The sequence $\{e_{\ell}+\omega_t\}_{t>0}$ of zero directions forms a path perpendicularly approaching the zero locus $\mathbb{R}\cdot e_\ell\subset \mathrm{Z}_0$ from the outside of $\mathrm{Z}_0$. 
By expanding $\Psi_t$ around $t=0$ and evaluating it along the curve $e_\ell+w_t$, the following proposition, which follows readily from Lemmas 
\ref{lemma:shortintervalexistoutside} and \ref{lemma:limitingbehaviorofzerolocusinboundary}, 
derives a property about the boundary point $p_0=(A,B)$ in terms of the Jacobian matrix evaluated at $p_0$, 
\begin{eqnarray}\label{eq:submatrixfori}
J_{\ell}(p_0)=
\Big[\partial_{\lambda_{i}}\Psi_{0}(e_\ell)
\Big]_{i\in[d]\backslash{\{\ell\}}}
=
\Big[
\operatorname{vec}\big(\psi(a^\ell b^{\ell\top},a^ib^{i\top})\big)
\Big]_{i\in[d]\backslash{\{\ell\}}}\in\mathbb{R}^{p^2q^2\times (d-1)}
\,.
\end{eqnarray}


\begin{proposition}\label{prop:splittingoffleadingtorankdeficient}
    For any point $p_0$ on the boundary of $\mathcal{K}_d$, there exist some $\ell\in[d]$ and a normalized bifurcating direction $\omega_t$ such that $\lim_{t\to 0}\frac{\omega_t}{\|\omega_t\|}\in \ker J_{\ell}(p_0)$.
    
\end{proposition}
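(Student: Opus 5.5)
Let $\delta_t$, $\ell$ and $\omega_t$ be as in Lemmas \ref{lemma:shortintervalexistoutside} and \ref{lemma:limitingbehaviorofzerolocusinboundary} and \eqref{eq:omega.path}. Set $u_t=\omega_t/\|\omega_t\|$. Since $u_t$ lies on the unit sphere of $e_\ell^\perp$, we may pass to a subsequence $t_k\to 0$ along which $u_{t_k}\to u$, with $\|u\|=1$ and $u\perp e_\ell$. (Because $\omega_t$ comes from an analytic branch, a Puiseux-type argument should give a genuine limit along $t\to0$, but a subsequential limit suffices for the existence claim.) The goal is to show $J_\ell(p_0)u=0$. The strategy is to Taylor-expand the identity $\Psi_t(e_\ell+\omega_t)=0$ jointly in $t$ and $\omega_t$, and then compare the orders of the two small quantities $t$ and $\|\omega_t\|$.

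\textbf{Step 1 (expansion in $\lambda$).} Each component of $\Psi_t$ is a quadratic form with zero diagonal, so $\Psi_t(e_\ell)=0$ for every $t$. Exact expansion then gives
\[
0=\Psi_t(e_\ell+\omega_t)=D\Psi_t(e_\ell)\,\omega_t+\Psi_t(\omega_t)=J_\ell(p_{\delta_t}(t))\,\omega_t+\Psi_t(\omega_t),
\]
using $\omega_t\perp e_\ell$ and the identification \eqref{eq:submatrixfori} of $\partial_{\lambda_i}\Psi$ at $e_\ell$. Dividing by $\|\omega_t\|$ yields $J_\ell(p_{\delta_t}(t))u_t=-\|\omega_t\|\,\Psi_t(u_t)$. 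The right-hand side is $O(\|\omega_t\|)\to 0$, since the coefficients of $\Psi_t$ are bounded for small $t$ and $\|u_t\|=1$.

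\textbf{Step 2 (expansion in $t$).} The entries of $J_\ell$ are polynomial in $(A,B)$, and $p_{\delta_t}(t)$ is analytic in $t$ with $p_{\delta_0}(0)=p_0$. Hence $J_\ell(p_{\delta_t}(t))=J_\ell(p_0)+O(t)$. Substituting into Step 1 gives
\[
J_\ell(p_0)u_t=O(t)+O(\|\omega_t\|).
\]
Letting $t=t_k\to0$, and using $\omega_t\to0$ from \eqref{eq:omega.path}, both error terms vanish. Continuity then gives $J_\ell(p_0)u=0$, which is the claim.

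Step 2 is where I expect the only real difficulty. The concern is that the $O(t)$ error is only negligible because it is not divided by anything small. Had we divided the full equation by $\|\omega_t\|$ before using $J_\ell(p_0)$, we would get a term $O(t/\|\omega_t\|)$, which need not vanish. The ordering above avoids this: the identity is divided by $\|\omega_t\|$ only while $J_\ell$ is still evaluated at $p_{\delta_t}(t)$. After that division the left side is a bounded linear map applied to the unit vector $u_t$, and both remainders, $O(\|\omega_t\|)$ from the quadratic part and $O(t)$ from perturbing the Jacobian, vanish in the limit.

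The remaining step is to confirm, carefully, that the Jacobian identification in \eqref{eq:submatrixfori} holds at every $p_{\delta_t}(t)$, not just at $p_0$. This amounts to checking that $\partial_{\lambda_i}\Psi(e_\ell)=2\psi(a^\ell b^{\ell\top},a^ib^{i\top})$, using the symmetry of $\psi$. The factor $2$ is a harmless scalar that does not affect the kernel.
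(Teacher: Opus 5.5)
Your proof is correct and is essentially the paper's own argument. Both evaluate $\Psi_t(e_\ell+\omega_t)=0$ and use the zero diagonal of the quadratic forms to show that the $t$-perturbation contributes only $O(t\|\omega_t\|)$: you do this through $J_\ell(p_{\delta_t}(t))-J_\ell(p_0)=O(t)$ after an exact expansion in $\lambda$, the paper through $\partial_t\Psi_0(e_\ell+\omega_t)=O(\|\omega_t\|)$. Dividing by $\|\omega_t\|$ then gives $J_\ell(p_0)\,\omega_t/\|\omega_t\|\to 0$, and your subsequence device handles the existence of the limit, which the paper's proof simply takes for granted.
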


Since $\lim_{t\to 0}\frac{\omega_t}{\|\omega_t\|}$ is non-degenerate,   
Proposition \ref{prop:splittingoffleadingtorankdeficient} implies that a necessary condition for $p_0$ to be a boundary point in $\mathcal{K}_d$ is that $\ker J_{\ell}(p_0)$ is non-empty, i.e., $J_{\ell}(p_0)$ is rank deficient for some $\ell\in[d]$.  
This condition is easily verifiable in view of the latter equality in \eqref{eq:submatrixfori}, which follows from straightforward algebra. 


Finally, the following proposition establishes the converse of Proposition \ref{prop:splittingoffleadingtorankdeficient}, i.e., the rank deficiency of $J_{\ell}(p)$ indicates $p$ is a boundary point. Therefore, a boundary point is completely characterized by the rank deficiency of $J_{\ell}(p)$. 
\begin{proposition}\label{prop:geometric meaning of rank deficiency}
    Suppose that $p\in \mathcal{K}_d$ and $J_\ell(p)$ is rank-deficient for some $\ell\in[d]$. Then, there exists a family of vectors $\delta(s_1,s_2)=(s_1ue_\ell^{\top},s_2ve_\ell^{\top})$ with $u\in\mathbb{R}^{p}, v\in\mathbb{R}^{q}$, $(s_1,s_2)\in\mathbb{R}^2,\,s_1+s_2\neq0$, 
    such that $p_{\delta(s_1,s_2)}(t)$ and $p_{\delta(-s_2,-s_1)}(t)$ represent the same CP factor model at each $t>0$, where $p_{\delta}(t)=p+t\delta$ for $\delta\in\mathbb{R}^{p\times d}\times\mathbb{R}^{q\times d}$.  
\end{proposition}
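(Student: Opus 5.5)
The plan is to build the two perturbations directly from a kernel vector of $J_\ell(p)$, and then check that they span the same matrix space. By Proposition~\ref{prop:CPidcond} and the discussion leading to \eqref{MAB}, two parameters represent the same CP factor model exactly when their spaces $\mathrm{M}(\cdot,\cdot)$ coincide. So the goal is to show that
$$\mathrm{M}\big(p_{\delta(s_1,s_2)}(t)\big)=\mathrm{M}\big(p_{\delta(-s_2,-s_1)}(t)\big)\quad\text{for every } t>0 .$$

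\textbf{Step 1: translate the kernel condition.} Take a nonzero $c=(c_i)_{i\neq\ell}\in\ker J_\ell(p)$ and set $N=\sum_{i\neq\ell}c_i a^ib^{i\top}$. Because $B\odot A$ has full column rank, $N\neq0$. By \eqref{eq:submatrixfori} and the bilinearity of $\psi$, the condition $J_\ell(p)c=0$ says $\psi_{m,n;k,l}(a^\ell b^{\ell\top},N)=0$ for all indices $(m,n;k,l)$. Expanding the minors as in \eqref{2minorquadraticequation}, this means every $2\times2$ minor of $a^\ell b^{\ell\top}+sN$ is $O(s^2)$, i.e.\ the linear term in $s$ vanishes.

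\textbf{Step 2: identify the tangent direction.} I will show that Step 1 forces $N=ub^{\ell\top}+a^\ell v^\top$ for some $u\in\R^p$, $v\in\R^q$. Rotate coordinates so that $a^\ell\propto e_1\in\R^p$ and $b^\ell\propto e_1\in\R^q$; this is harmless, since orthogonal changes of basis act on the minors as in \eqref{eq:idisstableunderQR}. In these coordinates the linear term of the minor on rows $\{1,n\}$ and columns $\{1,l\}$ is a nonzero multiple of $N_{nl}$ for $n,l\geq2$. So its vanishing forces $N$ to have zero lower-right block, which is exactly the claimed form. Equivalently, $(I-P_{a^\ell})N(I-P_{b^\ell})=0$, i.e.\ $N$ lies in the tangent space of the rank-$\le1$ variety at $a^\ell b^{\ell\top}$. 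I expect this to be the main technical step. It is a standard fact, but the bookkeeping with the symmetrized $\psi$ in \eqref{def:Psimap} has to be done carefully.

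\textbf{Step 3: compare the spans.} With $u,v$ from Step 2, use the direction $\delta(s_1,s_2)=(s_1ue_\ell^\top,s_2ve_\ell^\top)$. Only column $\ell$ changes, and its rank-one term becomes
$$(a^\ell+ts_1u)(b^\ell+ts_2v)^\top=a^\ell b^{\ell\top}+t\big(s_1ub^{\ell\top}+s_2a^\ell v^\top\big)+t^2s_1s_2uv^\top .$$
For $\delta(-s_2,-s_1)$ the same computation gives
$$a^\ell b^{\ell\top}-t\big(s_2ub^{\ell\top}+s_1a^\ell v^\top\big)+t^2s_1s_2uv^\top .$$
The $t^2$ terms agree, and the difference of the two $\ell$-th terms is
$$t(s_1+s_2)\big(ub^{\ell\top}+a^\ell v^\top\big)=t(s_1+s_2)N\in\operatorname{span}\{a^ib^{i\top}:i\neq\ell\}.$$
All other columns are unchanged. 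Hence each $\ell$-th term lies in the span of the other one together with $\{a^ib^{i\top}\}_{i\neq\ell}$, and the two spaces $\mathrm{M}(\cdot,\cdot)$ coincide for every $t$.

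\textbf{Step 4: genuineness of the two representations.} For small $t$ both perturbed Khatri--Rao matrices still have full column rank $d$, by continuity, so both are legitimate order-$d$ models. Because $s_1+s_2\neq0$, $t\neq0$ and $N\neq0$, the two $\ell$-th rank-one terms differ by a nonzero element of $\operatorname{span}\{a^ib^{i\top}:i\neq\ell\}$. By the linear independence of $B\odot A$ they are therefore not proportional to each other. This shows the two parameter points are distinct descriptions of the same model, rather than a relabelling by scaling and permutation.
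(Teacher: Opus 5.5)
Your proposal is correct and follows essentially the paper's own argument. Like the paper, you identify a nonzero kernel vector of $J_\ell(p)$ with a nonzero element of $\operatorname{span}\{a^ib^{i\top}:i\neq\ell\}$ of the form $ub^{\ell\top}+a^\ell v^\top$, and then use exactly your Step~3 identity that the two perturbed $\ell$-th rank-one terms differ by $t(s_1+s_2)$ times that element. The paper merely asserts the tangent-space fact you prove in Step~2; Cauchy--Binet applied to $2\times2$ minors is the cleanest justification for your change of coordinates.

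One small strengthening: the two perturbed Khatri--Rao matrices are related by right multiplication with the unipotent matrix $I-t(s_1+s_2)\tilde c\,e_\ell^\top$, where $\tilde c$ is your kernel vector padded with a zero in slot $\ell$. So the same-model conclusion holds for every $t>0$ without invoking full column rank.
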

In Proposition \ref{prop:geometric meaning of rank deficiency}, as $p_{\delta(s_1,s_2)}(t)$ and $p_{\delta(-s_2,-s_1)}(t)$ are not identifiable, they are outside $\mathcal{K}_d$. However, these two points can be arbitrarily close to $p\in\mathcal{K}_d$ as $t\to 0$. Thus, $p$ is on the boundary of $\mathcal{K}_d$. In conclusion, we can determine whether a point $p$ is in the boundary of $\mathcal{K}_d$ by checking if $J_{\ell}(p)$ is rank deficient  for some $\ell\in[d]$.

\section{Likelihood functions and asymptotic properties}\label{sec:likelihoodfunctions&asympthm}

This section proposes a quasi-maximum likelihood estimation (QMLE) procedure for the matrix CP-factor model, and investigates the asymptotic properties of the QMLE estimators. Similar to \cite{Xu03042025}, the QMLE procedure assumes the latent dimension $d$ is known. The estimation of $d$ has been considered in \cite{CHANGHEYANGYAO2021}; see also Section 5.

\subsection{Quasi likelihood functions and the estimation method}
By vectorizing the matrix-variate time series \( Y_t \in \mathbb{R}^{p \times q} \) in \eqref{eq:cpfactmodel} as 
$y_t={\rm vec}(Y_t)$, 
we represent the CP-factor model as a vector factor model, 
\begin{equation}\label{eq:vectmodelforCP}
y_t=(B\odot A )x_t+\operatorname{vec}(e_t)\,,
\end{equation}
where $A\in\mathbb{R}^{p\times d}$, $B\in\mathbb{R}^{q\times d}$. The covariance matrix of $y_t$ is given as
\[
\Sigma_y(\theta)=(B\odot A) M_x (B\odot A)^\top+\Sigma_e\,,
\]
where $M_x={\rm Var}(x_t)\in\mathrm{S}_d^+$ is a positive definite matrix, $\Sigma_e={\rm Var}({\rm vec} (e_t))\in\mathcal{D}_{pq}$ is assumed to be a diagonal matrix, and $\theta=(A,B,M_x,\Sigma_e)$ is the parameter of interest. 







By considering a working model of Gaussian distributed 
$\{x_t\}$ and $\{e_t\}$, and ignoring time dependence 
among $\{y_t\}$, we obtain the quasi likelihood function
\begin{equation}\label{eq:likelihoodforcpfact}
\begin{aligned}
\mathcal{L}_T(\theta)
&= -\frac{1}{2pq}\log|\Sigma_y(\theta)|
   -\frac{1}{2Tpq}\sum_{t=1}^T (y_t-\bar{y})^\top \Sigma_y^{-1}(\theta)(y_t-\bar{y}) \\
&= -\frac{1}{2pq}\log|\Sigma_y(\theta)|
   -\frac{1}{2pq}\operatorname{tr}\!\big(\widehat{M}_y\,\Sigma_y^{-1}(\theta)\big)\,,
\end{aligned}
\end{equation}
where $\widehat{M}_y=T^{-1}\sum_{t=1}^T (y_t-\bar{y})(y_t-\bar{y})^\top$. 
We scale the criterion by $pq$ for convenience, which does not affect the maximizer. The formulation of the quasi-likelihood in \eqref{eq:likelihoodforcpfact} is similar to that in \cite{BAILI2012aos} for vector time series, which offers substantial computational advantages over the full likelihood. The vectorization here is mainly a re-indexing of the entries. The matrix structure is maintained since the loading matrix is constrained as $B\odot A$ instead of a general $pq\times d$ dimensional matrix.

Since the quasi-likelihood is generally non-concave, its maximization requires a numerical optimization procedure. Moreover, \(M_x\) and \(\Sigma_e\) are required to be positive definite. Thus, we use the reparameterization \(M_x=LL^\top,\,\Sigma_e=\exp(S)\) where \(L\) is lower triangular with positive diagonal entries and
\(S=\operatorname{diag}(s_1,\ldots,s_{pq})\), and define
\[
    \widetilde{\mathcal L}_T(A,B,L,S)
    =
    \mathcal L_T(A,B,LL^\top,\exp(S))\,.
\]
We compute the quasi-maximum likelihood estimator   via maximizing $\widetilde{\mathcal{L}}_T$ as 
\begin{equation}\label{eq:def of estimator}
\widehat\theta
=
(\widehat A,\widehat B,\widehat M_x=\widehat{L}\widehat{L}^{\top},\widehat\Sigma_e=\exp\widehat{S})\,,\,(\widehat{A},\widehat{B},\widehat{L},\widehat{S})=\arg\max_{(A,B,L,S)}{\widetilde{\mathcal{L}}}_T\,, 
\end{equation}
using numerical methods such as the analytic-gradient L-BFGS-B algorithm \citep{BYRDLUNOCEDALZHU1995}. 
To alleviate the effect of non-concavity, we run the algorithm from multiple initial values and retain the solution with the largest quasi-likelihood. To further reduce the computational cost of high-dimensional matrix inversion \(\Sigma_y^{-1}\), we apply the Woodbury matrix identity \cite{HendersonSearle1981}:\begin{equation}\label{eq:inversematrix}
\Sigma_y^{-1}
=
\Sigma_e^{-1}
-
\Sigma_e^{-1}Z
(M_x^{-1}+Z^\top\Sigma_e^{-1}Z)^{-1}
Z^\top\Sigma_e^{-1}.
\end{equation} 



 We remark that, the (generalized) EM algorithm has long been the conventional choice for QMLE in high-dimensional factor models, largely because of its favorable performance in vector factor models \citep{Engle01121981,DozGiannoneReichlin12,BAILI2012aos}. In those settings, the M-step often admits closed-form updates, making EM computationally convenient and numerically stable. These advantages, however, do not carry over directly to the CP-factor model, since the Khatri--Rao constraint \(Z=B\odot A\) precludes a joint closed-form update of \(A\) and \(B\). More importantly, our simulation results show that EM converges particularly slowly when the parameters lie on or near the boundary of \(\mathcal{K}_d\). We therefore use L-BFGS-B throughout. (Its computational efficiency in large-scale bound-constrained optimization has been studied in  \cite{Zhu1997Algorithm7L,Morales2011remarkonalg778}). Additional comparisons with the generalized EM algorithm are reported in Section~\ref{sec:EMalg} and Figure~\ref{fig:comparison results} of the Supplementary Materials.

\subsection{Asymptotic properties of the likelihood estimators}\label{sec:asympproperty}

This section studies the asymptotic properties of the proposed QMLE for the matrix CP factor model. 
Let the true parameter value be 
$\theta^{*}=(A^{*},B^{*},M_x^{*},\Sigma_e^{*})$, and 
$Z^{*}=B^{*}\odot A^{*}$. Take the strong-mixing coefficient \( \alpha_x(h)
 =\sup_{A\in\sigma(x_t:t\leq0),\,B\in\sigma(x_t:t\geq h)}
 \left|P(A\cap B)-P(A)P(B)\right|\). We impose the following assumptions. 

\Assumption\label{assump:full-ranklatentfactor}
    Assume that the latent factor $\{x_t\}$ is a stationary stochastic process satisfying $\mathbb{E}(x_t)=0$ and $\mathbb{E}(x_tx_t^{\top})={M}_x$, where ${M}_x$ is invertible and $\widehat{M}_x=\frac{1}{T}\sum_tx_tx_t^{\top}$ satisfies
    $\lim_{t\to\infty}\widehat{M}_x={M}_x$. 
    There exists a constant $\delta>0$ such that the following conditions hold. We require \(  E\|{x_t}\|^{4+\delta}<\infty\) and the $\alpha$-mixing condition: \(\sum_{h=1}^{\infty}\alpha_x(h)^{\delta/(4+\delta)}<\infty\).

\EndAssumption

\Assumption\label{assump:wnbdedcov}
    Denote $e_{i;t}^{j}$ as the $(i,j)$-th entry of the 
    $p\times q$ white noise matrix $e_t$. 
    Assume that $\{e_{i;t}^{j},i,j,t\}$ are independent for all $i,j,t$ and $\{e_t\}$ is identically distributed with mean $\mathbb{E}(e_t) = 0$ and variance $\mathbb{E}[\operatorname{vec}(e_t)\operatorname{vec}(e_t)^{\top}]  = \operatorname{diag}((\sigma_{1}^{1})^2,\cdots, (\sigma_{i}^{j})^2, \cdots, (\sigma_{p}^{q})^2)$. 
    Also, $C_e^{-2}<(\sigma_{i}^{j})^2<C_e^2$ and $\sup_{i,j,t}E|e_{i;t}^j|^4\le C_e^4$ for every $i,j$ and some constant $C_e>0$.  
\EndAssumption
\Assumption\label{assump:boundedsetting}
    Assume that the rows of true loadings $A^*$ and $B^*$ are universally bounded, i.e., there exists a constant $C>0$, such that $\|a_i^*\|<C$ and $\|b_j^*\|<C$ hold for all $i,j$. Assume that the limit $\lim_{T\to \infty}\frac{1}{pq}Z^{*\top}\Sigma_{e}^*Z^*$ exists and is positive definite. In addition, the parameter $\theta^*$ belongs to a bounded set $\Theta$, and the $\ell_2$-norm of the columns of $A,B$ are $\sqrt{p}$ and $\sqrt{q}$, respectively. 
\EndAssumption

Assumptions \ref{assump:full-ranklatentfactor}-\ref{assump:boundedsetting} are commonly required in the existing literature \citep{CHANGDUHUANGYAO2024,HANYANGZHANGCHEN2024}. The full-rank condition in Assumption \ref{assump:full-ranklatentfactor} for the latent factor $X_t$ is necessary for model identification. The $\ell_2$-norm restriction in Assumption \ref{assump:boundedsetting} removes the column-wise rescaling ambiguity. The $4+\delta$-moment condition and $\alpha$-mixing condition are required for the purpose of $1/\sqrt{T}$-convergence rate of $\widehat{M}_x-M_x$.

The theoretical results in this paper does not impose restrictions on the relative growth rate of \(p, q\) with respect to \(T\), i.e., they can be fixed or divergent as \(T\to\infty\). In the high-dimensional setting, the dimension \(p q=p_T q_T\) is allowed to increase with \(T\), and the observed process should be viewed as a triangular array indexed by the sample size: for each \(T\), we observe a time series \(\{Y_{T,t}:1\leq t\leq T\}\), where each \(Y_{T,t}\) is a \(p_T\times q_T\) matrix. The corresponding loading matrices also form a sequence of high-dimensional matrices (or $p_T,q_T$-indexed triangular arrays): \(\mathcal{A}=(A_T:T\in\mathbb{Z}_+) ,\, A_T\in \mathbb{R}^{p_T \times d}\) (and similar for $\mathcal{B}$ and $B_T$), where \(d\) denotes the fixed number of latent factors. Then all asymptotic properties can be interpreted along these triangular arrays ($\{Y_{T,t}\},\mathcal{A},\mathcal{B}$). 

In particular, for such a sequence of high-dimensional loading matrices, it is often useful to consider the QR decomposition: \(A_T = A_{T,0} A_{T,r}\), where $A_{T,r}$ is an upper triangular matrix of fixed (or bounded by $d\times d$) dimension. Therefore, in the high-dimensional asymptotic analysis, the bounded condition on the identifiability can be equivalently expressed as boundedness conditions on the low-dimensional matrix.

\Assumption[Boundedness condition on ${\mathcal{K}}_d$]\label{assump:bdedidcond}
Assume that the point $(\frac{1}{\sqrt{p}}A_{r},\frac{1}{\sqrt{q}}B_{r})$ lies in a fixed compact subset of the interior $\operatorname{int}\mathcal{K}_d(d_1,d_2,d)$ or the boundary part and their minimal nonzero singular values are bounded below. 
\EndAssumption

When $(A,B)\in\operatorname{int}\mathcal{K}_d$, by equation \eqref{eq:idisstableunderQR}, Assumption~\ref{assump:bdedidcond} is equivalent to the requirement that the minimal singular value of the corresponding Jacobian matrices $J_{\ell}\,,\ell\in[d]$ has a positive lower bound. This condition is consistent with the boundedness assumption (of the eigenvalues of $\Omega$) imposed in \cite{CHANGDUHUANGYAO2024} because $J_\ell$ consists of particular columns of $\Omega(B\odot A)$ and the boundedness of the eigenvalues of $\Omega$ (away from zero) implies that of each $J_\ell$.

For the CP factor model \eqref{eq:cpfactmodel} with $(A,B)\in\mathcal{K}_d$ and its log-likelihood function $\mathcal{L}_T$ \eqref{eq:likelihoodforcpfact}, the following results give the consistency and convergence rates of the QMLE. 

\begin{theorem}\label{thmforcpconvrate}
    Assume the true parameters $(A^*,B^*)$ are identifiable. Under Assumptions \ref{assump:full-ranklatentfactor}--\ref{assump:bdedidcond} and $(A^*,B^*)\in\mathcal{K}_d$, as $T\to \infty$, we have
     \begin{align}
    \frac{1}{pq}\left\| \operatorname{diag}(\widehat{\Sigma}_{e} \Sigma_{e}^{*-1} - I_{pq}) \right\|^2 = o_p(1),&\quad
    \|\widehat{M}_x - M_x^*\|^2 = o_p(1), \\
    \frac{1}{p}\|\widehat{A} - A^*\|^2 = o_p(1), \quad &\frac{1}{q}\|\widehat{B} - B^*\|^2 = o_p(1)\,.\label{eq:A.hat.B.hat}
    \end{align} 
    Furthermore, in the interior case, i.e., $(A^*,B^*)\in \operatorname{int}\mathcal{K}_d$, we have
    \begin{align}
    \frac{1}{pq}\left\| \operatorname{diag}(\widehat{\Sigma}_{e} \Sigma_{e}^{*-1} - I_{pq}) \right\|^2 = O_p(T^{-1}), &\quad
    \|\widehat{M}_x - M_x^*\|^2 = O_p(T^{-1}), \\
    \frac{1}{p}\|\widehat{A} - A^*\|^2 = O_p(T^{-1}), \quad &\frac{1}{q}\|\widehat{B} - B^*\|^2 = O_p(T^{-1}).
    \end{align}

\end{theorem}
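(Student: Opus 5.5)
The plan follows the Bai--Li (2012) template for QMLE in approximate factor models, with the affine identification constraint replaced by a projection argument. The argument has three stages: (i) consistency of the fitted common component and of $\widehat\Sigma_e$ via a Kullback--Leibler type inequality; (ii) transfer from the fitted column space $\operatorname{col}(\widehat Z)$, $\widehat Z=\widehat B\odot\widehat A$, to the individual loadings using identifiability; (iii) in the interior case, a local expansion whose nondegeneracy is supplied by the Jacobians $J_\ell$.

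\emph{Step (i).} Write $\mathcal L_T(\theta)-\mathcal L_T(\theta^*)$ as a deterministic part plus a stochastic part. The deterministic part is
\[
-\tfrac1{2pq}\bigl\{\log|\Sigma_y(\theta)\Sigma_y(\theta^*)^{-1}|+\operatorname{tr}(\Sigma_y(\theta^*)\Sigma_y(\theta)^{-1})-pq\bigr\}\le0 .
\]
The stochastic part is $\frac1{2pq}\operatorname{tr}\{(\widehat M_y-\Sigma_y(\theta^*))\Sigma_y^{-1}(\theta)\}$.
\begin{itemize}
\item Using the Woodbury formula \eqref{eq:inversematrix}, decompose $\widehat M_y-\Sigma_y^*$ into $Z^*(\widehat M_x-M_x^*)Z^{*\top}$, cross terms $Z^*\frac1T\sum x_t\operatorname{vec}(e_t)^\top$, and $\frac1T\sum\operatorname{vec}(e_t)\operatorname{vec}(e_t)^\top-\Sigma_e^*$.
\item Under Assumptions \ref{assump:full-ranklatentfactor}--\ref{assump:boundedsetting} (mixing CLT for $\widehat M_x$, independence and fourth moments of $e$), show that the stochastic part is $O_p(T^{-1/2})$ uniformly over the bounded set $\Theta$.
\item Since $\widehat\theta$ maximizes, the deterministic part at $\widehat\theta$ is $o_p(1)$. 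Expanding it yields $\frac1{pq}\|\operatorname{diag}(\widehat\Sigma_e\Sigma_e^{*-1}-I)\|^2=o_p(1)$.
\item It also yields $\frac1{pq}\|\widehat Z\widehat M_x\widehat Z^\top-Z^*M_x^*Z^{*\top}\|_F^2$-type closeness under the $\Sigma_e$-weighted metric, because $\frac1{pq}Z^{*\top}Z^*$ has a positive definite limit.
\item Conclude that the projection $P_{\widehat Z}$ onto $\operatorname{col}(\widehat Z)$ converges to $P_{Z^*}$ in the scaled sense. This is the ``projection method'' the introduction refers to.
\end{itemize}

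\emph{Step (ii).} Reduce through QR, as in \eqref{eq:idisstableunderQR}, to the bounded-dimensional pair $(\frac1{\sqrt p}A_r,\frac1{\sqrt q}B_r)$. Convergence of projections then means the matrix space $\mathrm M(\widehat A,\widehat B)$ converges to $\mathrm M(A^*,B^*)$ in the Grassmannian. Each column $\widehat a^\ell\widehat b^{\ell\top}$ is a rank-one element of $\mathrm M(\widehat A,\widehat B)$, and the columns are normalized by Assumption~\ref{assump:boundedsetting}.
\begin{itemize}
\item By compactness, any limit point of $(\widehat a^\ell,\widehat b^\ell)$ gives a rank-one basis of $\mathrm M(A^*,B^*)$. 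Full rank of the limit is guaranteed because the minimal singular values are bounded below (Assumption~\ref{assump:bdedidcond}).
\item By Proposition~\ref{prop:CPidcond} this basis coincides with $\{a^{*\ell}b^{*\ell\top}\}$ up to permutation and sign. Hence $\frac1p\|\widehat A-A^*\|^2=o_p(1)$ and similarly for $B$, after fixing labels.
\item Then $\widehat M_x-M_x^*=o_p(1)$ follows by solving $\widehat Z^{+}(\cdot)\widehat Z^{+\top}$ from the common-component convergence.
\end{itemize}
This argument uses only $(A^*,B^*)\in\mathcal K_d$, so it also covers the boundary case.

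\emph{Step (iii), the interior rate.} Linearize the rank-one constraint around $\mathrm M(A^*,B^*)$. Write $\widehat a^\ell\widehat b^{\ell\top}=\sum_i\lambda_i a^{*i}b^{*i\top}+r^\ell$ with $r^\ell\perp\mathrm M$. Since all $2\times2$ minors vanish, $\Psi(e_\ell+\omega)+(\text{terms linear in } r^\ell)=0$. Its linear part in $\omega$ is exactly $J_\ell(p_0)\omega$ from \eqref{eq:submatrixfori}.
\begin{itemize}
\item In the interior, the minimal singular value of every $J_\ell$ is bounded below (the remark after Assumption~\ref{assump:bdedidcond}). This gives $\|\omega\|\lesssim\|r^\ell\|+\|\omega\|^2$.
\item Hence the loading error is Lipschitz in the subspace error: $\frac1p\|\widehat A-A^*\|^2\lesssim\|P_{\widehat Z}-P_{Z^*}\|^2$, in scaled norms.
\item Separately, sharpen step (i) to $O_p(T^{-1})$ for the subspace and variance errors. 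Use the first-order conditions of $\mathcal L_T$ in $\Sigma_e$ and $M_x$, as in Bai--Li, and bound the score by $O_p(T^{-1/2})$ via the moment and mixing conditions.
\item Combining these gives the four $O_p(T^{-1})$ bounds.
\end{itemize}

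I expect the main obstacle to be the uniform $O_p(T^{-1/2})$ control of the stochastic part when $pq$ diverges with no rate restriction. The $\frac1{pq}$ scaling must absorb the noise-covariance fluctuation $\frac1{pq}\operatorname{tr}\{(\widehat\Sigma_{ee}-\Sigma_e^*)\Sigma_y^{-1}\}$ without producing dimension-dependent terms. My first attempt is to exploit the diagonal-plus-low-rank structure of $\Sigma_y^{-1}$: the diagonal part contributes an average of $pq$ independent mean-zero terms, and the rank-$d$ part is controlled through $\frac1{pq}Z^\top\Sigma_e^{-1}Z$.
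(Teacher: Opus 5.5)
Your consistency architecture matches the paper's Steps 1--4. You use the split $\mathcal L_T=\bar{\mathcal L}+R(\theta)$ with a uniform $O_p(T^{-1/2})$ bound on $R(\theta)$ from the Woodbury diagonal-plus-low-rank decomposition, so the obstacle you single out is handled essentially as you suggest. You then transfer from $\operatorname{col}(\widehat Z)$ to the loadings through identifiability. Your compactness argument is a valid, simpler substitute for the paper's {\L}ojasiewicz-type inequality in Step~4. It needs two adjustments: project $\widehat A,\widehat B$ onto the true column spaces, as the paper does with $\widehat Z_r$, and pass to joint subsequences of the reduced true and estimated parameters. Your $J_\ell$-linearization is exactly the $w=1$ case used in Step~5 to get $\|\mathcal Z\|^2\lesssim\frac1{pq}\operatorname{tr}(R^\top R)$.

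\textbf{First gap: $\widehat M_x$ and conditioning of $\widehat Z$.} When $pq\to\infty$, the $\frac1{pq}$-scaled KL bound controls only two things: the average variance error, and the residual $R=\widehat Z(I_d-\mathcal Z)^\top-Z^*$ of $Z^*$ off $\operatorname{col}(\widehat Z)$. It is blind to $M_x$: an $O(1)$ error in $\widehat M_x$ moves $\bar{\mathcal L}$ by only $O((pq)^{-1})$. So ``common-component convergence'', and with it $\widehat M_x\to M_x^*$, does not follow from step (i). Step (i) also does not by itself stop the normalized columns of $\widehat Z$ from becoming nearly collinear, so your limit points need not form a basis. On the boundary, for instance, two columns can both approach $a^{*\ell}b^{*\ell\top}$ while their difference supplies the missing direction from $\ker J_\ell$, as in Proposition~\ref{prop:geometric meaning of rank deficiency}. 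Assumption~\ref{assump:bdedidcond} concerns the true parameter and cannot supply this. The paper obtains both facts from the $M_x$ score equation, $\widehat M_x=(I_d-\mathcal Z)^\top M_x^*(I_d-\mathcal Z)+o_p(1)$. With boundedness of $\Theta$ this gives $I_d-\mathcal Z=O_p(1)$, and once $\mathcal Z\to0$ it gives $\widehat M_x\to M_x^*$.

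\textbf{Main gap: the interior rate.} The rate hinges on $x_T:=\frac1{pq}\operatorname{tr}(R^\top\widehat\Sigma_e^{-1}R)=O_p(T^{-1})$, and your step (iii) supplies no mechanism for it. The bound from step (i) gives only $x_T=O_p(T^{-1/2})$. The Bai--Li route does not transfer either. There, every loading row is a free parameter, so the loading first-order condition pins down the error row by row. Here the loading score exists only along the $(p+q)d$-dimensional Khatri--Rao tangent $\dot Z=\widehat B\odot\dot A+\dot B\odot\widehat A$; this is the obstruction noted in the remark after Theorem~\ref{thm:sharper rate}. The $\Sigma_e$ and $M_x$ score equations you invoke are nuisance equations and carry no information about $R$.

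The missing idea is the paper's profiled-score argument (Section~\ref{subsec:Est of R}):
\begin{enumerate}
\item Write $Z^*-\widehat Z=\dot Z+Q_Z$ and form $K_L=\dot Z\widehat M_x\widehat Z^\top+\widehat Z\widehat M_x\dot Z^\top$.
\item Using the inner product $\operatorname{tr}(\widehat PK_1\widehat PK_2)$ with $\widehat P=\widehat\Sigma_y^{-1}$, project $K_L$ off all diagonal directions and all $\widehat ZU\widehat Z^\top$ directions. The result $V_T$ has zero score by Lemma~\ref{lem:score-equations}.
\item Prove coercivity, $\operatorname{tr}(\widehat PV_T\widehat PV_T)\asymp pq\,x_T$. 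This uses that diagonal projections of loading directions are negligible, because the rows of $\widehat P\widehat Z$ are $O_p((pq)^{-1})$, and that the $M_x$ Gram matrix is well conditioned.
\item Bound the stochastic pairing by $O_p(pq\sqrt{x_T/T})$, via $\|(\widehat M_y-\Sigma_y^*)\widehat P\widehat Z\widehat M_x\|_F=O_p(\sqrt{pq/T})$.
\item Bound the remainder by $o_p(pq\,x_T)$, using $R^\top\widehat\Sigma_e^{-1}\widehat Z=0$.
\end{enumerate}
Together these yield $x_T\lesssim\sqrt{x_T/T}$, hence $x_T=O_p(T^{-1})$. Only afterwards are the $\Sigma_e$ and $M_x$ score equations used, to transfer the rate to $\widehat\Sigma_e$ and $\widehat M_x$.
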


When the true parameters $(A^{*},B^{*})$ are in the interior of $\mathcal{K}_d$, the estimator $(\widehat{A}_r,\widehat{B}_r)$ will asymptotically lie in the interior of $\mathcal{K}_d$, and thus is identifiable for sufficiently large $T$. However, for the boundary case, $(\widehat{A},\widehat{B})$ can be outside $\mathcal{K}_d$, and thus may not be uniquely identified (stated in Proposition~\ref{prop:geometric meaning of rank deficiency}). 
Nevertheless, \eqref{eq:A.hat.B.hat} guarantees that 
all possible $(\widehat{A},\widehat{B})$ are close to the true value. 
See Example \ref{example: convseqnonidentifiable} in the Supplementary Materials. 

Under additional assumptions on $x_t$ and $e_t$, the QMLE method produces a sharper rate.  
\begin{theorem}\label{thm:sharper rate}
    Assume $x_t$ is a jointly sub-Gaussian process, $e_t\overset{\mathrm{i.i.d}}{\sim}N(0,\Sigma_e^*)$, and $(A^*,B^*)\in\operatorname{int}\mathcal{K}_d$. Under Assumptions \ref{assump:full-ranklatentfactor}--\ref{assump:bdedidcond}, as $p,q,T\to \infty$ with $\log (pq)=o(T)$, we have 
    \begin{equation}
        \frac{1}{pq}\|\widehat{B}\odot \widehat A-B^*\odot A^*\|_F^2=O_p\left(\frac{p+q}{pqT}\right).
    \end{equation}
\end{theorem}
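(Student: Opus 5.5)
The plan is to work with the \emph{profiled} quasi-likelihood and compare it with its population counterpart, in the spirit of \cite{BAILI2012aos} and \cite{Xu03042025}. The structure of $\Sigma_y$ is exploited through the Woodbury identity \eqref{eq:inversematrix}.

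\textbf{Reduction to the signal part.} Write $Z=B\odot A$ and $\widehat Z=\widehat B\odot\widehat A$. Let $\Gamma=Z^{*}M_x^{*}Z^{*\top}$. By Theorem~\ref{thmforcpconvrate}, in the interior case $\widehat\theta$ is consistent and $(\widehat A_r,\widehat B_r)$ eventually lies in a compact subset of $\operatorname{int}\mathcal K_d$. We may therefore work on a neighbourhood where the Jacobians $J_\ell$ have singular values bounded below (Assumption~\ref{assump:bdedidcond}). On this set the map $(A,B)\mapsto B\odot A$ is locally bi-Lipschitz modulo scaling and permutation, after the column normalisation of Assumption~\ref{assump:boundedsetting}. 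It therefore suffices to bound the error of the fitted signal subspace.

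\textbf{First-order condition.} Evaluate the first-order condition of $\mathcal L_T$ in $Z$ at $\widehat\theta$, and use the Woodbury form to write it in terms of $\widehat\Sigma_e^{-1}\widehat Z$. This yields an identity of the schematic form
\[
(\widehat Z-Z^*H)\,\widehat G = \Sigma_e^{*-1/2}\Big(\tfrac1T\sum_t \operatorname{vec}(e_t)x_t^\top\Big)R_1+\Big(\tfrac1T\sum_t\operatorname{vec}(e_t)\operatorname{vec}(e_t)^\top-\Sigma_e^*\Big)\widehat\Sigma_e^{-1}\widehat Z R_2+(\text{lower order}),
\]
with $H$ a $d\times d$ rotation and $\widehat G$, $R_1$, $R_2$ bounded $d\times d$ matrices, $\widehat G$ invertible with high probability. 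Because the admissible directions $\widehat Z-Z^*H$ are constrained to the tangent space of the Khatri--Rao variety, I project this identity onto that tangent space. The tangent space is spanned by $\{b^\ell\otimes\Delta_a,\ \Delta_b\otimes a^\ell\}$, of dimension $O(p+q)$ rather than $pq$. The projected score is then a sum over $\ell$ of terms like $(b^{\ell\top}\otimes I_p)\frac1T\sum_t\operatorname{vec}(e_t)x_t^\top$, i.e.\ $\frac1T\sum_t e_tb^\ell x_t^\top\in\mathbb R^{p\times d}$, and the analogous $q\times d$ term.

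\textbf{Noise bounds.} I bound these projected noise terms using Gaussianity of $e_t$ and sub-Gaussianity of $x_t$. Conditional on $\{x_t\}$, each entry of $\frac1T\sum_t e_tb^\ell x_t^{\top}$ is Gaussian with variance $O(q/T)$, by the bounded rows in Assumption~\ref{assump:boundedsetting}. Its squared Frobenius norm is therefore $O_p(pq/T)$, and similarly $O_p(pq/T)$ for the $B$-part. After the normalisation by $pq$, together with the Hessian lower bound of order $pq$ (from the positive-definite limit of $\frac1{pq}Z^{*\top}\Sigma_e^{*-1}Z^*$), this gives $\|\widehat Z-Z^*H\|_F^2=O_p\big(\frac{p+q}{T}\big)$. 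Dividing by $pq$ yields the claimed rate.

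The quadratic noise term $\frac1T\sum_t(\operatorname{vec}(e_t)\operatorname{vec}(e_t)^\top-\Sigma_e^*)$ must be handled with Hanson--Wright and a union bound over the $O(p+q)$-dimensional tangent directions; this is where $\log(pq)=o(T)$ enters. The error in $\widehat\Sigma_e$ enters only diagonally. Its effect is controlled by the entrywise consistency $\max_{ij}|\widehat\sigma_{ij}^2/\sigma_{ij}^{*2}-1|=o_p(1)$, which I would prove first via sub-exponential concentration of each $\frac1T\sum_t(e_{i;t}^j)^2$ plus a union bound. Finally, scaling and permutation freedom is absorbed into $H$, which by identifiability in the interior is asymptotically a signed permutation-diagonal matrix. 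Since $\widehat B\odot\widehat A$ equals $B^*\odot A^*$ up to this indeterminacy, fixed by the norm constraint and sign convention, the bound transfers to $\widehat B\odot\widehat A-B^*\odot A^*$.

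\textbf{Main obstacle.} The hardest step will be the projection step. One must show that the nonlinear Khatri--Rao constraint lets the error be controlled only through the $(p+q)d$ tangent directions, uniformly over a neighbourhood, so that curvature remainders are of order $\|\widehat Z-Z^*H\|^2$ and do not reintroduce the $pq$ ambient dimension. I would first try a second-order Taylor expansion of $(A,B)\mapsto B\odot A$, whose second derivative is bilinear ($\Delta_b\otimes\Delta_a$). The remainder is then bounded by $\|\Delta_A\|\|\Delta_B\|$, which by the consistency in Theorem~\ref{thmforcpconvrate} is $o_p$ of the linear term.
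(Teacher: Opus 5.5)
Your dimension-reduction idea matches the paper's. To first order the loading error lies in the Khatri--Rao tangent space, so the cross noise $g_{m,ij}=\frac1T\sum_t x_{t,m}e_{ij,t}$ enters only through bilinear forms $a^{\top}\Gamma_m\Delta b$ and $\Delta a^{\top}\Gamma_m b$ of $p\times q$ matrices $\Gamma_m$. The paper bounds these (suitably weighted) matrices by $\|\Gamma_m\|_{\mathrm{op}}=O_p((\sqrt p+\sqrt q)/\sqrt T)$; the operator norm, rather than your fixed-direction Frobenius bound, is what makes this uniform over the data-dependent directions $\widehat b^\ell$ and $\Delta b^\ell$.

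The genuine gap is your treatment of $\widehat\Sigma_e$. In the loading scores the cross noise is weighted by $\widehat w_{ij}=\widehat\sigma_{ij}^{-2}$, and $\widehat w_{ij}$ depends on the same sequence $(e_{ij,t})_{t\le T}$. It does so through the sample variance and through loading-error terms such as $r_{ij}^{\top}g_{ij}$ in the diagonal score equation. So the entries $\widehat w_{ij}g_{m,ij}$ are neither independent nor conditionally centred, and your conditional-Gaussian and Hanson--Wright bounds no longer apply once $\widehat\Sigma_e$ is plugged in.

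Entrywise consistency does not repair this, because it loses the cancellation. Suppose $\max_{ij}|\widehat w_{ij}-w^*_{ij}|\le\epsilon$. The only generic bound on $\sum_j(\widehat w_{ij}-w^*_{ij})b_{j\ell}g_{m,ij}$ is then $\epsilon q/\sqrt T$, against $\sqrt{q/T}$ for the main term. In trace form it is $\epsilon\sqrt{Nx_T}\sqrt{N/T}$ against the target $\sqrt{(p+q)Nx_T/T}$, where $N=pq$ and $x_T=\frac1N\operatorname{tr}(R^{\top}\widehat\Sigma_e^{-1}R)$. This is of the target order only if $\epsilon=O(\min(p,q)^{-1/2})$. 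Even the best possible $\epsilon\asymp\sqrt{\log(pq)/T}$ fails that when $\min(p,q)\log(pq)\gg T$, which the theorem allows.

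Moreover, $\widehat\sigma_{ij}^2$ is the QMLE, not the unobservable sample variance of $e_{ij,t}$. Concentration of $\frac1T\sum_t(e^j_{i;t})^2$ plus a union bound therefore does not give even its entrywise consistency; that needs uniform row-wise control of $\widehat a_i-a_i^*$ and $\widehat b_j-b_j^*$.

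The paper's remedy is the missing idea, in three steps:
\begin{itemize}
\item Profile out the diagonal and $M_x$ directions, giving the residual direction $V_T$.
\item Replace $\widehat W$ by $\widetilde W=(\Sigma_e^*+\operatorname{diag}(E_T))^{-1}$. Its entries are even functions of $e_s$, so $\widetilde w_s g_{m,s}$ stays conditionally centred and independent across $s$.
\item Control $\widehat W-\widetilde W=-\widetilde W\delta\widehat W$ through a Frobenius bound on $\delta=\mathcal E_e-\operatorname{diag}(E_T)$, obtained from the explicit diagonal score equation and row-wise loading localization.
\end{itemize}
There, $\log(pq)=o(T)$ enters through union bounds over the $pq$ coordinates, which also keep $\widetilde w_s$ bounded. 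It does not enter through a union bound over tangent directions, where a net would cost $e^{C(p+q)}$.

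Second, the interior assumption never enters your argument quantitatively. Positive definiteness of $\lim\frac1{pq}Z^{*\top}\Sigma_e^{*-1}Z^*$ holds at boundary points of $\mathcal K_d$ as well. So does local bi-Lipschitzness of $(A,B)\mapsto B\odot A$ modulo scaling. At such points the paper has only the weaker transfer $\|\mathcal Z\|^4\lesssim x_T$ and claims no rate, so these facts cannot be what yields the sharp rate.

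The likelihood is invariant under $Z\mapsto ZU$, $M_x\mapsto U^{-1}M_xU^{-\top}$. Hence the score equations control only the part of $\widehat Z-Z^*$ orthogonal to $\operatorname{col}(\widehat Z)$, which is your $\widehat Z-Z^*H$. To pass to $\widehat B\odot\widehat A-B^*\odot A^*$ you need the deviation of $H$ from a signed permutation-diagonal matrix (the paper's $\mathcal Z$) to be of the same order, $\|\mathcal Z\|^2\lesssim x_T$. That is the {\L}ojasiewicz inequality with exponent $w=1$. Equivalently, it is a uniform angle between $\operatorname{span}\{z_k:k\neq\ell\}$ and $b^\ell\otimes\mathbb R^p+\mathbb R^q\otimes a^\ell$, supplied by the lower bound on the singular values of $J_\ell$. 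Knowing only that $H$ is asymptotically signed permutation-diagonal transfers no rate.
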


\begin{remark}
    For QMLE in the CP factor model, the convergence analysis cannot directly follow \cite{BAILI2012aos} due to the nonlinear coupling between the loading matrices $A$ and $B$. In particular, the loading parameters $A,B$, $M_x$ and $\Sigma_e$ in the gradient equations (see \eqref{gradcondforCP-A}-\eqref{gradcondforCP-e} in the supplementary material) are no longer as cleanly separated as in equations \cite{BAILI2012aos}[(2.7)--(2.9)], while the Khatri--Rao structure also introduces non-affine identifiability constraints. To address these difficulties, we develop a projection-based argument that removes high-dimensional nuisance components and isolates the quantities of interest, thereby disentangling the interactions among these parameters.
\end{remark}
\begin{remark}
The simultaneous diagonalization method of \cite{CHANGDUHUANGYAO2024} requires the true parameter $(A^*,B^*)$ to lie in $\mathcal{K}_\Omega$ and achieves the rate $O_p\{s\log(pq)/T\}$ under the sparsity condition $s\log(pq)/T=o(1)$. In contrast, our quasi-likelihood approach applies to all identifiable models, namely any $(A^*,B^*)\in\mathcal{K}_d$, where $\mathcal{K}_d$ strictly contains $\mathcal{K}_\Omega$. Even on $\mathcal{K}_\Omega$, Theorem~\ref{thmforcpconvrate} gives the faster rate $O_p(1/T)$, independent of dimension growth. The QMLE is also applicable in fixed dimensions, unlike the eigen-based simultaneous diagonalization method. In sub-Gaussian settings, the ISO method \cite{HANYANGZHANGCHEN2024,CHEN2026106167} achieves a sharper convergence rate of $\frac{1}{pq}\|\widehat{Z}-Z^*\|_F^2=O_p\left(\frac{p+q}{pqT}\right)$ when $(A^*,B^*)$ is a pair of full-rank loading matrices. Theorem \ref{thm:sharper rate} shows that the QMLE method still achieves the same convergence rate even in the rank-deficient case. 
\end{remark}

\section{Simulation study}\label{sec:simulationdesign}

In this section, we conduct two sets of simulation studies to investigate the performance of the QMLE for CP factor models at interior parameter points and on the boundary. We also describe the numerical designs underlying these experiments.

\subsection{Numerical Design}

We generate data from the CP-factor model \eqref{eq:cpfactmodel} with parameters $(A,B,x_t,e_t)$ for pre-specified $p,q,d,d_1,d_2$ as follows. To generate loading matrices $(A,B)$, we first simulate $\tilde A \in \mathbb{R}^{p \times d}$ and $\tilde B \in \mathbb{R}^{q \times d}$ 
with entries independently drawn from $N[0,3]$. 
\begin{enumerate}
\item[i)] To generate $(A,B)$ in the interior of $\mathcal{K}_d$, 
we conduct singular value decompositions 
\(
\tilde A = U_A \Lambda_A V_A^{\top}\,, \,\,
\tilde B = U_B \Lambda_B V_B^{\top}\,,
\) and extract the first $d_1$ and $d_2$ left singular vectors from $U_A$ and $U_B$, respectively, as 
$U_A^{(d_1)}$ and $U_B^{(d_2)}$. 
Then, define
\(
A' = U_A^{(d_1)}U_A^{(d_1)\top} \tilde{A}\,, \,\,
B' = U_B^{(d_2)}U_B^{(d_2)\top} \tilde{B}
\). 
Finally, normalize the columns of 
$A',B'$:   
\(a^j = \sqrt{p}\,
\frac{a'^{j}}{\|a'^{j}\|_2}\,,\,\,
b^{j} = \sqrt{q}\,
\frac{b'^{j}}{\|b'^{j}\|_2}
\) to form the $j$-th column of $A,B$. 
Then,
\(\operatorname{rank}(A) = d_1,\,\operatorname{rank}(B) = d_2\). According to \cite{CHANGDUHUANGYAO2024}, we take $d_1=d_2=d-1,d\geq3$ to ensure that the simulated $(A,B)$ lies in $\mathcal{K}_\Omega\subset \operatorname{int}\mathcal{K}_d$.  
To compare the sharper convergence rate between QMLE and ISO methods, we also generate full-rank $(A,B)$ in a similar way.  

\item[ii)] 
To generate $(A,B)$ on the boundary of $\mathcal{K}_d$, 
for simplicity, we follow Example \ref{examplenumeric} and consider $d=5$. 
We conduct QR decompositions 
\(\tilde A = Q_A R_A\,, \,\,\tilde B = Q_B R_B,\,\, Q_A\in\mathbb{R}^{p\times d},\ Q_B\in\mathbb{R}^{q\times d}\). 
Next, take the point $(A,B)$ given in \eqref{exampleforA,Binbdry}, and normalize their column vectors to be unit vectors: \begin{align*}
   A_r=A\cdot \operatorname{diag}\left(1, {6}^{-\frac{1}{2}},2^{-\frac{1}{2}},6^{-\frac{1}{2}},{1} \right),\, B_r=B\cdot \operatorname{diag}\left(2,3^{-\frac{1}{2}},35^{-\frac{1}{2}},11^{-\frac{1}{2}},2\right)\,.  
\end{align*} 
Finally, set 
\(
A = \sqrt{p}Q_A A_r,
\quad
B = \sqrt{q}\, Q_B B_r.
\)
The direct computation of $J_1$ shows that its rank is $3$ with the direction of the kernel $\delta=(1,1,1,1)$. By Proposition \ref{prop:geometric meaning of rank deficiency}, the point $(A,B)$ lies on the boundary $\mathcal{K}_d\setminus\operatorname{int}\mathcal{K}_d$. 
\end{enumerate}

The latent vector process $\{x_t\}$ is generated from a stationary VARMA(1,1) model of dimension $d$ with independent normally distributed innovations. Specifically, we randomly generate the autoregressive and moving-average coefficient matrices with a spectral radius strictly smaller than $0.95$ to ensure stationarity. The innovation covariance matrix is diagonal with entries randomly chosen from $\{1,2\}$. After discarding the first 100 burn-in observations, we retain $T$ observations and rescale the factors by a constant multiplier.

The noise matrices $e_t$ is generated from $\operatorname{vec}(e_t) \sim N(0, \Sigma_{e})$, where 
\(\Sigma_{e} = \operatorname{diag}(\sigma_1^2, \dots, \sigma_{pq}^2)\), and each variance $\sigma_i^2$ is independently drawn from the uniform distribution on $[1,2]$. 
The noise sequences are independent across $t$. 


\subsection{Estimation procedure.}

We apply the gradient-based L-BFGS-B algorithm, which is implemented in \texttt{stats::optim()} in R, to obtain a more accurately converged local optimum in the simulation study.

Since the loading matrices \(A\) and \(B\) are identifiable only up to a permutation of components and sign changes, after optimization, we align the raw estimators \(\widehat A\) and \(\widehat B\) to the true ones as follows.
Denote \(A=[a_1,\ldots,a_{d}]\in\mathbb{R}^{p\times d}\) and \(\widehat A=[\hat a_1,\ldots,\hat a_{d}]\in\mathbb{R}^{p\times d}\). We choose a permutation \(\pi_A\) of \(\{1,\cdots,d\}\) that maximizes the total absolute column-wise agreement, \(\pi_A
\;=\;
\arg\max_{\pi\in\mathcal{S}_{d}}
\sum_{k=1}^{d}\big|\langle \hat a_k,\ a_{\pi(k)}\rangle\big|
\), where \(\mathcal{S}_{d}\) is the set of permutations and \(\langle\cdot,\cdot\rangle\) denotes the Euclidean inner product. Given \(\pi_A\), we set the sign for each matched pair by
\(
s_{A,k}=\mathrm{sign}\,\!\big(\langle \hat a_k,\ a_{\pi_A(k)}\rangle\big)\in\{-1,+1\},
\)
and define the aligned estimator
\[
\widehat A^{\,\mathrm{al}}
=
\big[s_{A,1}\hat a_1,\ \ldots,\ s_{A,d}\hat a_{d}\big]\,P_{\pi_A},
\]
where \(P_{\pi_A}\) is the permutation matrix corresponding to \(\pi_A\).
We perform the same alignment for \(B\) by computing \(\pi_B\), \(\{s_{B,k}\}\), and \(\widehat B^{\,\mathrm{al}}\).

\subsection{Simulation results}

This subsection reports simulation results generated according to the design described in Section~\ref{sec:simulationdesign}. We compare the finite-sample performance of the proposed QMLE procedure with the CP-unified method (the eigen-based approach proposed by \cite{CHANGDUHUANGYAO2024}) and the Iteratively Simultaneous Orthogonalization method proposed by \cite{CHEN2026106167}. To summarize the accuracy, we report the root-mean-squared-error between the estimators and the true values: 
\begin{equation}
\begin{aligned}
\mathrm{RMSE}(\widehat{A}^{\,\mathrm{al}})&=\sqrt{\frac{1}{pd}\|\widehat{A}^{\,\mathrm{al}}-A^*\|^2}\,,\quad A\,,\widehat{A}^{\,\mathrm{al}}\in \mathbb{R}^{p\times d}\,,
\end{aligned}
\end{equation}
and $\mathrm{RMSE}(\widehat B^{\,\mathrm{al}})$ is defined analogously.

\begin{table}[!t]
\centering
\renewcommand{\arraystretch}{1.05}
\setlength{\tabcolsep}{4pt}

\captionsetup{
  justification=raggedright,
  singlelinecheck=false,
  position=top,
  skip=6pt
}

\caption{Performance of QMLE, CPunified, and ISO in the
\textbf{interior} of $\tilde{\mathcal{K}}_d$.}
\label{tab:freq-interior}

\par

\begin{adjustbox}{max width=\linewidth}
\begin{threeparttable}
\begin{tabular}{ll*{3}{c}c}
\toprule
& & \multicolumn{3}{c}{
Mean $(\operatorname{RMSE}(\hat A^{\mathrm{al}}),
       \operatorname{RMSE}(\hat B^{\mathrm{al}}))$}
& Failed proportions\\
\cmidrule(lr){3-5}
Method & $(p,q)$ & $T=100$ & $T=200$ & $T=400$
& $(T=100,200,400)$ \\
\midrule

\multirow{6}{*}{QMLE}
& $(20,20)$
& $(4.32,4.39)$ & $(2.97,3.06)$ & $(2.12,2.14)$
& --- \\
& $(20,60)$
& $(2.43,4.19)$ & $(1.71,2.95)$ & $(1.21,2.07)$
& --- \\
& $(20,100)$
& $(1.89,4.16)$ & $(1.33,2.93)$ & $(0.93,2.06)$
& --- \\
& $(60,60)$
& $(2.38,2.41)$ & $(1.67,1.69)$ & $(1.18,1.19)$
& --- \\
& $(60,100)$
& $(1.85,2.39)$ & $(1.29,1.68)$ & $(0.91,1.18)$
& --- \\
& $(100,100)$
& $(1.86,1.85)$ & $(1.30,1.30)$ & $(0.91,0.91)$
& --- \\

\midrule

\multirow{6}{*}{CPunified}
& $(20,20)$
& $(9.76,9.79)$ & $(8.25,8.32)$ & $(6.72,6.74)$
& --- \\
& $(20,60)$
& $(5.67,9.75)$ & $(4.75,8.09)$ & $(3.81,6.62)$
& --- \\
& $(20,100)$
& $(4.37,9.58)$ & $(3.58,8.12)$ & $(2.95,6.60)$
& --- \\
& $(60,60)$
& $(5.58,5.61)$ & $(4.62,4.64)$ & $(3.79,3.79)$
& --- \\
& $(60,100)$
& $(4.29,5.54)$ & $(3.51,4.62)$ & $(2.98,3.84)$
& --- \\
& $(100,100)$
& $(4.27,4.26)$ & $(3.60,3.57)$ & $(2.93,2.88)$
& --- \\

\midrule

\multirow{6}{*}{ISO}
& $(20,20)$
& $(696.53,690.43)$ & $(693.42,697.66)$ & $(694.59,695.57)$
& $(0.63,0.63,0.63)$ \\
& $(20,60)$
& $(692.54,702.75)$ & $(690.88,676.84)$ & $(686.35,689.07)$
& $(0.63,0.55,0.65)$ \\
& $(20,100)$
& $(729.29,724.21)$ & $(740.50,759.75)$ & $(743.73,769.18)$
& $(0.67,0.74,0.75)$ \\
& $(60,60)$
& $(678.25,706.13)$ & $(686.12,686.26)$ & $(660.46,657.72)$
& $(0.64,0.58,0.59)$ \\
& $(60,100)$
& $(709.10,725.35)$ & $(695.81,701.74)$ & $(677.64,685.08)$
& $(0.71,0.67,0.57)$ \\
& $(100,100)$
& $(628.01,673.94)$ & $(690.67,711.11)$ & $(715.42,702.17)$
& $(0.7,0.68,0.63)$ \\

\bottomrule
\end{tabular}

\begin{tablenotes}[flushleft]\footnotesize
\item[] \textit{Notes:}
All RMSE values are multiplied by $1000$.
The last column gives failure proportions $(n_{\operatorname{failed}}/200)$ for $T=100,200,400$,
respectively, each out of 200 replications.
For ISO, failed replications are excluded from the reported means,
which are calculated over the remaining $200-n_{\mathrm{failed}}$
replications. 
\end{tablenotes}
\end{threeparttable}
\end{adjustbox}

\par
\end{table}

\begin{table}[!t]
\centering
\renewcommand{\arraystretch}{0.95}
\setlength{\tabcolsep}{4pt}

\captionsetup{
  justification=raggedright,
  singlelinecheck=false,
  position=top,
  skip=6pt
}
\caption{Performance of QMLE, CPunified, and ISO in the
\textbf{full-rank} case.}
\label{tab:freq-fullrank}
\par

\begin{adjustbox}{max width=\linewidth}
\begin{threeparttable}
\begin{tabular}{ll*{3}{c}}
\toprule
& & \multicolumn{3}{c}{
Mean $(\operatorname{RMSE}(\hat A^{\mathrm{al}}),
       \operatorname{RMSE}(\hat B^{\mathrm{al}}))$} \\
\cmidrule(lr){3-5}
Method & $(p,q)$ & $T=100$ & $T=200$ & $T=400$ \\
\midrule

\multirow{6}{*}{QMLE}
& $(20,20)$   & $(4.50,4.54)$ & $(3.11,3.16)$ & $(2.22,2.23)$ \\
& $(20,60)$   & $(2.56,4.57)$ & $(1.80,3.21)$ & $(1.27,2.26)$ \\
& $(20,100)$  & $(1.99,4.58)$ & $(1.40,3.21)$ & $(0.98,2.26)$ \\
& $(60,60)$   & $(2.60,2.62)$ & $(1.83,1.84)$ & $(1.29,1.29)$ \\
& $(60,100)$  & $(2.02,2.62)$ & $(1.42,1.84)$ & $(1.00,1.30)$ \\
& $(100,100)$ & $(2.04,2.02)$ & $(1.42,1.42)$ & $(1.00,1.00)$ \\

\midrule

\multirow{6}{*}{CPunified}
& $(20,20)$   & $(10.64,10.56)$ & $(8.93,8.94)$ & $(7.37,7.31)$ \\
& $(20,60)$   & $(6.06,10.84)$  & $(5.09,9.16)$ & $(4.20,7.55)$ \\
& $(20,100)$  & $(4.61,10.84)$  & $(3.89,9.11)$ & $(3.20,7.59)$ \\
& $(60,60)$   & $(6.12,6.18)$   & $(5.18,5.19)$ & $(4.31,4.30)$ \\
& $(60,100)$  & $(4.78,6.17)$   & $(3.99,5.21)$ & $(3.31,4.28)$ \\
& $(100,100)$ & $(4.78,4.79)$   & $(4.00,4.01)$ & $(3.31,3.33)$ \\

\midrule

\multirow{6}{*}{ISO}
& $(20,20)$   & $(5.04,5.08)$ & $(3.50,3.54)$ & $(2.51,2.50)$ \\
& $(20,60)$   & $(2.66,5.17)$ & $(1.87,3.63)$ & $(1.32,2.56)$ \\
& $(20,100)$  & $(2.03,5.18)$ & $(1.43,3.63)$ & $(1.00,2.56)$ \\
& $(60,60)$   & $(2.69,2.72)$ & $(1.89,1.90)$ & $(1.33,1.34)$ \\
& $(60,100)$  & $(2.06,2.72)$ & $(1.45,1.91)$ & $(1.02,1.34)$ \\
& $(100,100)$ & $(2.08,2.06)$ & $(1.45,1.45)$ & $(1.02,1.02)$ \\

\bottomrule
\end{tabular}

\begin{tablenotes}[flushleft]
\footnotesize
\item[] \textit{Notes:}
Entries in the three error columns are the arithmetic means of
replicate-specific RMSEs over 200 replications, multiplied by $1000$.
\end{tablenotes}

\end{threeparttable}
\end{adjustbox}
\par
\end{table}

\begin{table}[!t]
\renewcommand{\arraystretch}{0.95}
\caption{Performance of QMLE and CPunified on the \textbf{boundary} of \(\tilde{\mathcal{K}}_d\)\label{tab:freq-boundary}}
\tabcolsep=6pt
\centering
\begin{threeparttable}
\begin{adjustbox}{max width=0.8\textwidth}
\begin{tabular}{l l *{3}{c}}
\toprule
& & \multicolumn{3}{c}{mean value of \(\operatorname{RMSE}(\hat{A}^{\mathrm{al}}),\operatorname{RMSE}(\hat{B}^{\mathrm{al}})\)} \\
\cmidrule(lr){3-5}
& \((p,q)\) & \(T=100\) & \(T=200\) & \(T=400\) \\
\midrule
\multirow{6}{*}{QMLE}
& \((20,20)\) & \((40.83,56.43)\)   & \((24.65,35.40)\)   & \((10.23,16.48)\) \\
& \((20,60)\) & \((19.51,28.31)\)    & \((7.12,12.22)\)  & \((3.20,6.24)\) \\
& \((20,100)\) & \((12.23,19.01)\)   & \((4.20,7.98)\)     & \((2.49,5.25)\) \\
& \((60,60)\) & \((7.81,12.13)\)   & \((3.91,6.61)\)     & \((2.18,3.62)\) \\
& \((60,100)\) & \((4.08,7.02)\)   & \((2.55,4.46)\)   & \((1.66,2.85)\) \\
& \((100,100)\) & \((3.37,5.33)\)   & \((2.07,3.25)\)    & \((1.41,2.09)\) \\
\midrule
\multirow{6}{*}{CPunified}
& \((20,20)\) & \((160.57,319.99)\) & \((153.17,304.66)\) & \((157.55,327.12)\) \\
& \((20,60)\) & \((124.70,307.03)\) & \((153.69,313.58)\) & \((149.23,312.45)\) \\
& \((20,100)\) & \((149.30,304.27)\)& \((151.72,307.29)\) & \((154.66,310.75)\) \\
& \((60,60)\) & \((134.14,283.80)\) & \((134.29,290.46)\) & \((135.37,287.35)\) \\
& \((60,100)\) & \((142.02,295.89)\) & \((152.72,313.68)\)  & \((146.60,320.25)\) \\
& \((100,100)\) & \((157.01,315.13)\) & \((161.09,310.59)\) & \((160.38,318.71)\) \\
\bottomrule
\end{tabular}
\end{adjustbox}

\begin{tablenotes}\footnotesize
\item Note: All numbers are mean values of \(\operatorname{RMSE}\) based on 200 replications and multiplied by 1000.
\end{tablenotes}
\end{threeparttable}
\end{table}

Throughout Tables~\ref{tab:freq-interior}--\ref{tab:freq-boundary}, all simulations are implemented in \texttt{R}, and all reported values are multiplied by \(1000\). The simulation results for each $(p,q,T)$ are based on $200$ replications. 

A symmetry property is built into the data-generating process: the two loading matrices on the left and right play symmetric roles. Specifically, swapping \((p,q)\) is equivalent to swapping the two loading matrices (and their random seeds used to generate them), leaving the data-generating mechanism unchanged. Consequently, we only report configurations with \(p\le q\).

For both the interior (rank-deficient and full-rank cases) and boundary regimes, we consider \(p,q\in\{20,60,100\}\) with \(p\le q\), fix \(d=5\). In the interior case, we choose \((d_1,d_2)=(4,4)\) to highlight the behavior of QMLE in a rank-deficient setting. We examine sample sizes \(T\in\{100,200,400\}\). We report the results in Table~\ref{tab:freq-interior} (rank-deficient interior case), Table~\ref{tab:freq-fullrank} (full-rank case), and Table~\ref{tab:freq-boundary} (boundary case). Here, the failure count refers to replications in which the ISO iterations terminated due to a numerical safeguard because an estimated loading matrix became nearly singular. Specifically, termination was triggered when the ratio of its smallest to largest singular value was at most $10^{-12}$. Runs that merely reached the iteration limit without satisfying the stopping criterion are not counted as numerical failures.

In our experiments, interior points often stabilize after roughly 80--150 iterations, whereas boundary points frequently require at least about 250 iterations and can take around 500 iterations in small samples (especially when $p,q$ are small). Table \ref{tab:freq-boundary} suggests that the estimator remains practically well-behaved even near the boundary. For comparison, we apply the CP-unified method to the same simulated datasets. We emphasize that the CP-unified method is primarily designed for points inside \({\mathcal{K}}_{\Omega}\); for identifiable points outside this set, the implementation often issues non-convergence warnings on joint diagonalization. In the reported boundary case we fix \((d,d_1,d_2)=(5,4,4)\). We also experimented with various $d$'s and the displayed configuration is among the best-performing cases for CP-unified in our trials.

\section{Application to financial data analysis}\label{sec:numericalresult}

We consider two distinct firm characteristics: book-to-market equity (BE/ME), which reflects relative firm valuation, and market equity (ME), which reflects firm size. Then, stocks are classified separately into 10 groups according to each characteristic. Each combination of a BE/ME group and an ME group defines a portfolio, resulting in \(10 \times 10 = 100\) portfolios. Recording their monthly returns from January 1964 to December 2021 (\(T=696\)) yields a \(10 \times 10\) matrix time series: the ``100 Portfolios Formed on Size and Book-to-Market'' dataset in the Kenneth R. French Data Library \cite{FrenchDataLibrary}. To illustrate the proposed method, we use this matrix time series. To remove common market movements, we fit the CAPM \cite{FamaMacBeth1973} separately to each return series. 
The resulting residuals are collected into the market-adjusted return matrices \(R_t = \{r_{ij,t}\} \in \mathbb{R}^{10 \times 10},\, t=1,\ldots,T\). See Figure~\ref{fig:market_adjusted_100portfolios} in the Supplementary Materials for the time-series plots of $R_t$. The CAPM-filtered series 
appear stationary.




We formally assess stationarity entrywise using both ADF and KPSS tests across all \(pq=100\) entries. For the ADF test (null: unit root), we reject \(H_0\) for all entries; all Benjamini--Hochberg (BH--FDR) adjusted \(q\)-values are below \(0.05\). For the KPSS test (null: level stationarity), we fail to reject \(H_0\) for all entries. Since ADF and KPSS impose opposite null hypotheses, their unanimous agreement provides strong evidence that \(R_t\) is stationary.

\subsection{Factor dimension selection.}
 A commonly used approach in the literature to determine the latent factor dimension is the maximum eigenvalue-ratio criterion (which is also used in \cite{CHANGDUHUANGYAO2024}). As illustrated in Figure \ref{fig:parallel_analysis_95}, this rule most strongly supports $d=1$, with $d=2$ being the next most plausible candidate.  At the same time, if one adopts a more information-preserving perspective, we also apply the Bai-Ng information criterion (IC) \cite{BaiNg2002} to $\operatorname{vec}R_t$ with sample size $T=696$ and $N=100$. The Bai-Ng IC (in Figure \ref{fig:BaiNgIC}) yields: using the correlation (standardized) version, we obtain $d=3$, while the covariance (non-standardized) version favors $d=4$ (with the more permissive penalty placing the minimum around $d=5$). Taken together, these results support retaining a modest number of latent factors beyond the leading eigenvalue ratio, broadly in the range $d=3$--$5$. As an additional diagnostic, we apply parallel analysis (PA) \cite{Horn1965ParallelAnalysis} (red points in Figure \ref{fig:parallel_analysis_95}) to $R_t$, to assess which leading eigenvalues are clearly separated from a noise-based reference spectrum.
 
 We therefore treat this range as our main set of candidates and assess robustness accordingly.

\begin{figure}[t]
\centering
\includegraphics[width=0.8\linewidth,
  height=0.4\linewidth]{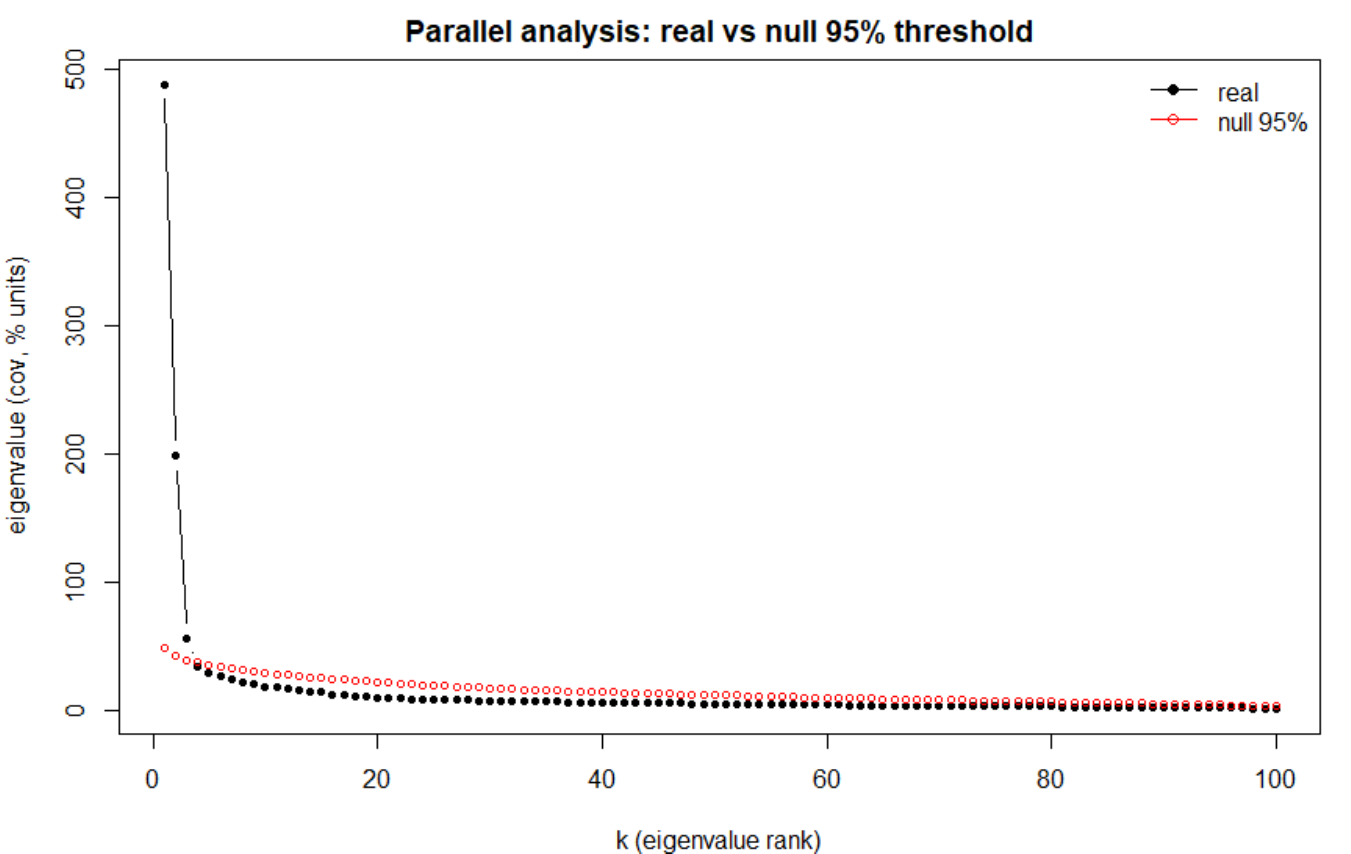}
\caption{Parallel analysis: empirical covariance eigenvalues (black) versus the 95\% null thresholds (red) obtained from permuting each column along the time dimension (with repetition $B=500$). }
\label{fig:parallel_analysis_95}
\end{figure}

\begin{figure}[t]
    \centering
    \includegraphics[width=\linewidth]{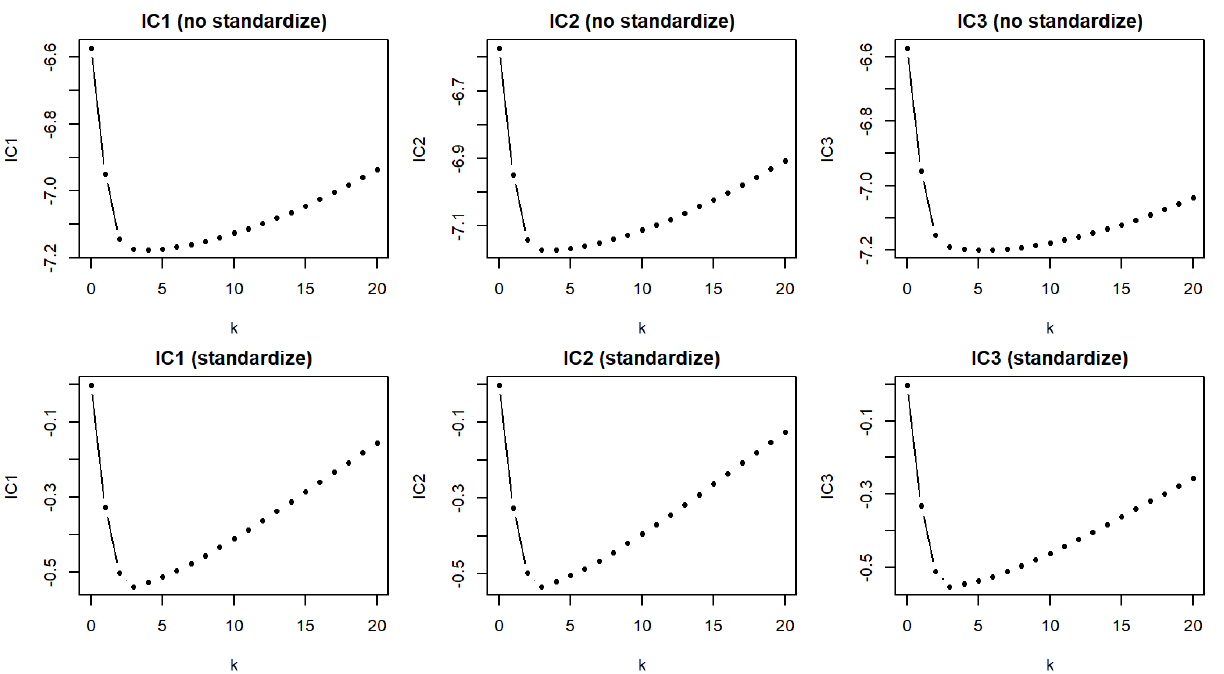}
    \caption{Bai-Ng information criteria.}
    \label{fig:BaiNgIC}
    \begin{tablenotes}\footnotesize
\item Note: $\mathrm{k}$ stands for the choice of latent dimension.
\end{tablenotes}
\end{figure}

\subsection{Latent factor extraction and forecasting evaluation.}
We fit a CP model to \(R_t\) and extract latent factors for each $d$ using both CP-unified and QMLE. We then model the dynamics of the estimated latent factor series via a VAR model, with the VAR order selected by AIC.

We evaluate out-of-sample predictive performance using a rolling-window scheme over the period 1964--2022. In each rolling iteration, we use a training window of length 456 observations and produce 1-step and 2-step ahead forecasts. Forecast accuracy is measured by the relative RMSE (rRMSE), defined by normalizing the entrywise prediction error using the entrywise variance estimated from the training window. Specifically, denote \(\widehat r_{ij,s+h|s}\) as the \(h\)-step ahead forecast for entry \((i,j)\) at time \(s\), and denote \(\widehat{\mathrm{Var}}_{s}(r_{ij,\cdot})\) as the sample variance of \(\{r_{ij,t}\}_{t=s-456+1}^{s}\), we compute
\[
\mathrm{rRMSE}_h(s)
=
\sqrt{
\frac{1}{pq}
\sum_{i=1}^{p}\sum_{j=1}^{q}
\frac{\big(r_{ij,s+h}-\widehat r_{ij,s+h|s}\big)^2}{\widehat{\mathrm{Var}}_{s}(r_{ij,\cdot})}
},
\qquad h\in\{1,2\}.
\]
We report the mean and standard deviation of the sample of \(\mathrm{rRMSE}_h(s)\) in the rolling windows for each method in Table~\ref{tab:rrmse_both}. 

\begin{table}[!t]
\caption{Out-of-sample rRMSE of market-adjusted returns (CAPM filtered), January 2002--December 2021.\label{tab:rrmse_both}}
\centering
\tabcolsep=6pt
\renewcommand{\arraystretch}{1.05}

\begin{threeparttable}
\begin{adjustbox}{max width=\textwidth}
\begin{tabular}{l cccc @{\qquad} l cccc}
\toprule
& \multicolumn{4}{c}{\textbf{(a) One-step ahead} (\(h=1\))} & & \multicolumn{4}{c}{\textbf{(b) Two-step ahead} (\(h=2\))} \\
\cmidrule(lr){2-5}\cmidrule(lr){7-10}
& \(d=2\) & \(d=3\) & \(d=4\) & \(d=5\) & & \(d=2\) & \(d=3\) & \(d=4\) & \(d=5\) \\
\midrule

\multicolumn{5}{l}{\textbf{Mean}} & \multicolumn{5}{l}{\textbf{Mean}} \\
\textsc{QMLE}      & 0.9201 & 0.9270 & 0.9198  & 0.9082  & \textsc{QMLE}      & 0.9185 & 0.9244 & 0.9152 & 0.9068 \\
\textsc{CPunified} & 0.9276 & 0.9339 & 0.94482 & 0.9579  & \textsc{CPunified} & 0.9260 & 0.9278 & 0.9434 & 0.9398 \\
\addlinespace[6pt]

\multicolumn{5}{l}{\textbf{SD}} & \multicolumn{5}{l}{\textbf{SD}} \\
\textsc{QMLE}      & 0.3785 & 0.3691 & 0.3794 & 0.3611 & \textsc{QMLE}      & 0.3780 & 0.3710 & 0.3792 & 0.3681 \\
\textsc{CPunified} & 0.3859 & 0.3917 & 0.3931 & 0.4306 & \textsc{CPunified} & 0.3851 & 0.3746 & 0.3852 & 0.3989 \\
\bottomrule
\end{tabular}
\end{adjustbox}

\begin{tablenotes}\footnotesize
\item Note: Values are out-of-sample rRMSE of market-adjusted returns (CAPM filtered) \\
for January 2002--December 2021.
\end{tablenotes}
\end{threeparttable}
\end{table}
\begin{figure}[t]
\centering
\includegraphics[width=0.8\linewidth]{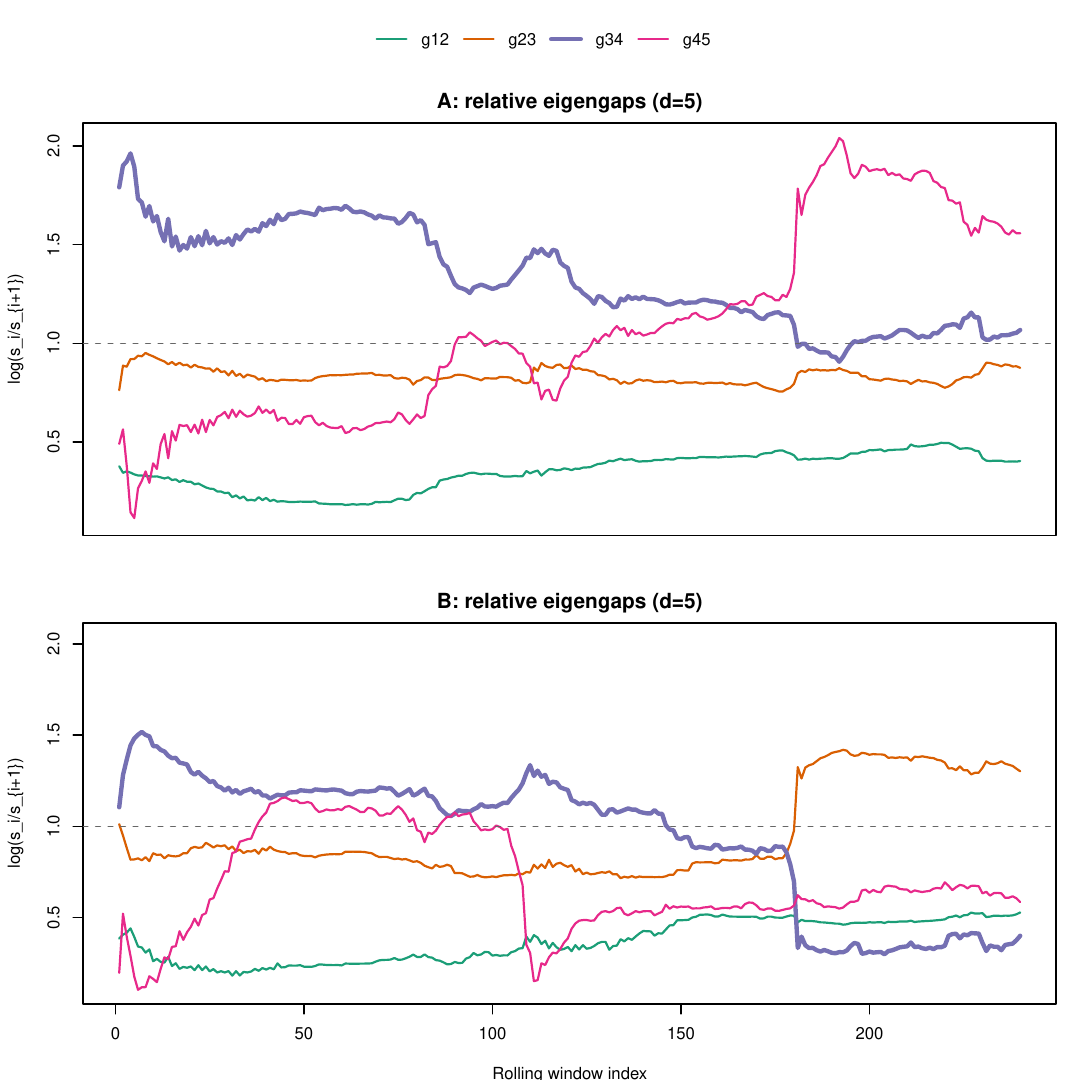}
\caption{Eigen-gaps for $d=5$: each row corresponds to the rolling window index $s\in[240]$.}
\label{fig:eigengaps}
\end{figure}
\begin{remark}[rank specification and numerical convergence for the CP-unified method]



For CP-unified, when the latent dimension is relatively large, the implementation also requires specifying the expected ranks \((d_1,d_2)\) of the loading matrices \(A\) and \(B\). We tried several rank-deficient choices and found that the out-of-sample rRMSEs were only marginally affected. However, for \(d=5\), the joint diagonalization step frequently produced non-convergence warnings, occurring in more than 50 out of 240 rolling windows. Thus, the CP-unified results for \(d=5\) should be interpreted with caution. Since QMLE does not require such rank inputs, we report CP-unified under the parsimonious symmetric setting \(d_1=d_2=d\) in Table~\ref{tab:rrmse_both}.
\end{remark}

\subsection{Further discussions on higher factor dimensions}
The rolling-window results show that \text{QMLE} consistently achieves smaller mean and SD of \(\mathrm{rRMSE}_h(s)\) than \text{CP-unified}. Moreover, the two methods exhibit different rank sensitivity: CP-unified performs best at relatively small ranks, particularly \(d=2\) or \(d=3\), whereas QMLE improves as \(d\) increases and attains its best performance at \(d=5\). Thus, the performance gap widens with \(d\), suggesting that QMLE benefits more from higher-factor-dimension specifications.

To understand the poorer stability of CP-unified at \(d=5\), we examine the QMLE estimates \(\{(\widehat A_s,\widehat B_s)\}_{s=1}^{240}\). The singular value profiles in Figure~\ref{fig:eigengaps} show a large eigengap between the third and fourth singular values for both \(\widehat A_s\) and \(\widehat B_s\), indicating that their effective ranks are often closer to \(3\) than \(5\). By the QR-based reduction in \eqref{eq:idisstableunderQR} and Section~\ref{subsec:algidcond}, this corresponds to studying \(A_r,B_r\in\mathbb{R}^{3\times 5}\). In this case, at most \(3\times 3=9\) non-redundant quadratic forms are available, fewer than the \(d(d-1)/2=10\) required when \(d=5\). Therefore, the sufficient identification condition for CP-unified fails in this effective rank-\(3\) regime, explaining its numerical instability.

Overall, whether under smaller latent dimensions or when we increase the latent dimension to enhance interpretability and to further reduce the 1 and 2-step-ahead rRMSE, QMLE exhibits consistently stronger robustness than CP-unified in our rolling-window experiments.

\section{Acknowledgments}

\bibliographystyle{abbrvnat}
\bibliography{reference}


\input{supplement}

\end{document}

%% file: supplement.tex
\clearpage
\raggedbottom
\setcounter{section}{0}
\setcounter{subsection}{0}
\setcounter{subsubsection}{0}
\setcounter{equation}{0}
\setcounter{figure}{0}
\setcounter{table}{0}
\setcounter{theorem}{0}
\setcounter{definition}{0}
\setcounter{remark}{0}
\setcounter{example}{0}
\setcounter{algorithm}{0}
\setcounter{footnote}{0}
\renewcommand{\thesection}{S.\arabic{section}}
\renewcommand{\theequation}{S.\arabic{equation}}
\renewcommand{\thefigure}{S.\arabic{figure}}
\renewcommand{\thetable}{S.\arabic{table}}
\renewcommand{\thetheorem}{S.\arabic{theorem}}
\renewcommand{\theHsection}{supp.\arabic{section}}
\renewcommand{\theHsubsection}{\theHsection.\arabic{subsection}}
\renewcommand{\theHsubsubsection}{\theHsubsection.\arabic{subsubsection}}
\renewcommand{\theHequation}{supp.\arabic{equation}}
\renewcommand{\theHfigure}{supp.\arabic{figure}}
\renewcommand{\theHtable}{supp.\arabic{table}}
\renewcommand{\theHtheorem}{supp.\arabic{theorem}}
\renewcommand{\theHdefinition}{supp.\arabic{definition}}
\renewcommand{\theHremark}{supp.\arabic{remark}}
\renewcommand{\theHexample}{supp.\arabic{example}}
\providecommand{\theHalgorithm}{\arabic{algorithm}}
\renewcommand{\theHalgorithm}{supp.\arabic{algorithm}}
\renewcommand{\theHfootnote}{supp.\arabic{footnote}}
\makeatletter
\newtheoremstyle{supporiginal}
  {\topsep}{\topsep}{\itshape}{}{\bfseries}{.}{.5em}{}
\theoremstyle{supporiginal}
\newtheorem{suppplainthm}[theorem]{Theorem}
\newtheorem{suppplainprop}[theorem]{Proposition}
\newtheorem{suppplainlem}[theorem]{Lemma}
\newtheorem{suppplaincor}[theorem]{Corollary}
\newtheorem{suppplainrem}[remark]{Remark}
\let\theorem\suppplainthm \let\endtheorem\endsuppplainthm
\let\proposition\suppplainprop \let\endproposition\endsuppplainprop
\let\lemma\suppplainlem \let\endlemma\endsuppplainlem
\let\corollary\suppplaincor \let\endcorollary\endsuppplaincor
\let\remark\suppplainrem \let\endremark\endsuppplainrem
\renewenvironment{proof}[1][\proofname]{\par
  \pushQED{\qed}\normalfont\topsep6\p@\@plus6\p@\relax
  \trivlist\item[\hskip\labelsep\itshape #1\@addpunct{.}]\ignorespaces
}{\popQED\endtrivlist\@endpefalse}
\makeatother
\renewcommand{\arraystretch}{1}
\pdfbookmark[0]{Supplementary Material}{supplement-start}

\begin{center}
{\Large \bfseries Supplementary Material for\par}
\vspace{0.5em}
{\Large \bfseries ``Identification problem and quasi-maximum likelihood estimation for matrix-variate CP-factor models''\par}
\vspace{1em}
{\large Hanzi Ye and Chun Yip Yau\par}
\vspace{2em}
\end{center}

\section{EM-type Algorithm}\label{sec:EMalg}

Let \(y_t=\operatorname{vec}(Y_t)\), \(Z=B\odot A\), and
\[
    \widehat M_y=\frac{1}{T}\sum_{t=1}^T (y_t-\bar y)(y_t-\bar y)^{\top}.
\]
At iteration \(m\), define
\[
\begin{split}
G^{(m)}
&=
\left\{
(M_x^{(m)})^{-1}
+(Z^{(m)})^\top(\Sigma_e^{(m)})^{-1}Z^{(m)}
\right\}^{-1},\\
\mu_t^{(m)}
&=
G^{(m)}(Z^{(m)})^\top(\Sigma_e^{(m)})^{-1}(y_t-\bar y),\\
H_t^{(m)}
&=
G^{(m)}+\mu_t^{(m)}(\mu_t^{(m)})^\top,
\end{split}
\]
where \(Z^{(m)}=B^{(m)}\odot A^{(m)}\). Further, let
\[
C_{yx}^{(m)}
=
\frac{1}{T}\sum_{t=1}^T (y_t-\bar y)(\mu_t^{(m)})^\top,
\qquad
C_{xx}^{(m)}
=
\frac{1}{T}\sum_{t=1}^T H_t^{(m)},
\]
and, for any \(Z\), define
\[
\mathcal R^{(m)}(Z)
=
\widehat M_y-C_{yx}^{(m)}Z^\top
-Z(C_{yx}^{(m)})^\top
+ZC_{xx}^{(m)}Z^\top.
\]

\begin{algorithm}[t]
\caption{Generalized EM Algorithm for the CP Factor Model}
\label{Salg:CP-GEM}
\begin{algorithmic}[1]

\Statex \textbf{Initialization:}
Choose
\(\theta^{(0)}
=(A^{(0)},B^{(0)},M_x^{(0)},\Sigma_e^{(0)})\),
a tolerance \(\varepsilon>0\), and set \(m=0\).

\Statex \textbf{Step 1: E-step.}
Using \(\theta^{(m)}\), compute
\(G^{(m)}\), \(\mu_t^{(m)}\), \(H_t^{(m)}\),
\(C_{yx}^{(m)}\), and \(C_{xx}^{(m)}\) as defined above.

\Statex \textbf{Step 2: Update the loading matrices.}
Numerically increase
\[
Q_{AB}^{(m)}(A,B)
=
-\frac{T}{2}
\operatorname{tr}\left\{
(\Sigma_e^{(m)})^{-1}
\mathcal R^{(m)}(B\odot A)
\right\}
\]
using a quasi-Newton or block-alternating method. Obtain a feasible pair $(A^{(m+1)},B^{(m+1)})$ satisfying
the prescribed loading normalization \(\frac{1}{p}A^{(m+1)\top}A^{(m+1)}=\frac{1}{q}B^{(m+1)\top}B^{(m+1)}=I_d\) and denote the
result by \((A^{(m+1)},B^{(m+1)})\). 

\Statex \textbf{Step 3: Update the covariance matrices.}
Set
\[
M_x^{(m+1)}=C_{xx}^{(m)},
\qquad
\Sigma_e^{(m+1)}
=
\operatorname{Diag}
\left\{
\operatorname{diag}
\bigl(\mathcal R^{(m)}(Z^{(m+1)})\bigr)
\right\}.
\]

\Statex \textbf{Step 4: Check convergence.}
Let
\(\theta^{(m+1)}
=(A^{(m+1)},B^{(m+1)},M_x^{(m+1)},\Sigma_e^{(m+1)})\).
Stop if
\[
\left|
\mathcal L_T(\theta^{(m+1)})
-
\mathcal L_T(\theta^{(m)})
\right|<\varepsilon.
\]
Otherwise, set \(m\leftarrow m+1\) and return to Step~1.

\end{algorithmic}
\end{algorithm}

Algorithm~\ref{Salg:CP-GEM} adapts the standard EM construction
for Gaussian factor analysis \cite{EMalgforML}; see also \cite[Section~8]{BAILI2012aos} for its implementation in high-dimensional factor models.
All conditional expectations are evaluated under the working
Gaussian model that ignores temporal dependence, consistently
with the quasi-likelihood criterion.
The CP constraint $Z=B\odot A$ is imposed in the loading update,
which is performed subject to the prescribed normalization.
Provided that Step~2 does not decrease $Q_{AB}^{(m)}$ and
Step~3 yields feasible positive-definite covariance updates,
the algorithm does not decrease the EM auxiliary function.
The standard GEM ascent property
\cite[Theorem~1]{EMalgforML} therefore gives
\[
\mathcal L_T(\theta^{(m+1)})
\ge
\mathcal L_T(\theta^{(m)}).
\]
The routine derivation is omitted.

\begin{figure}[t]
    \centering
    \includegraphics[width=\linewidth]{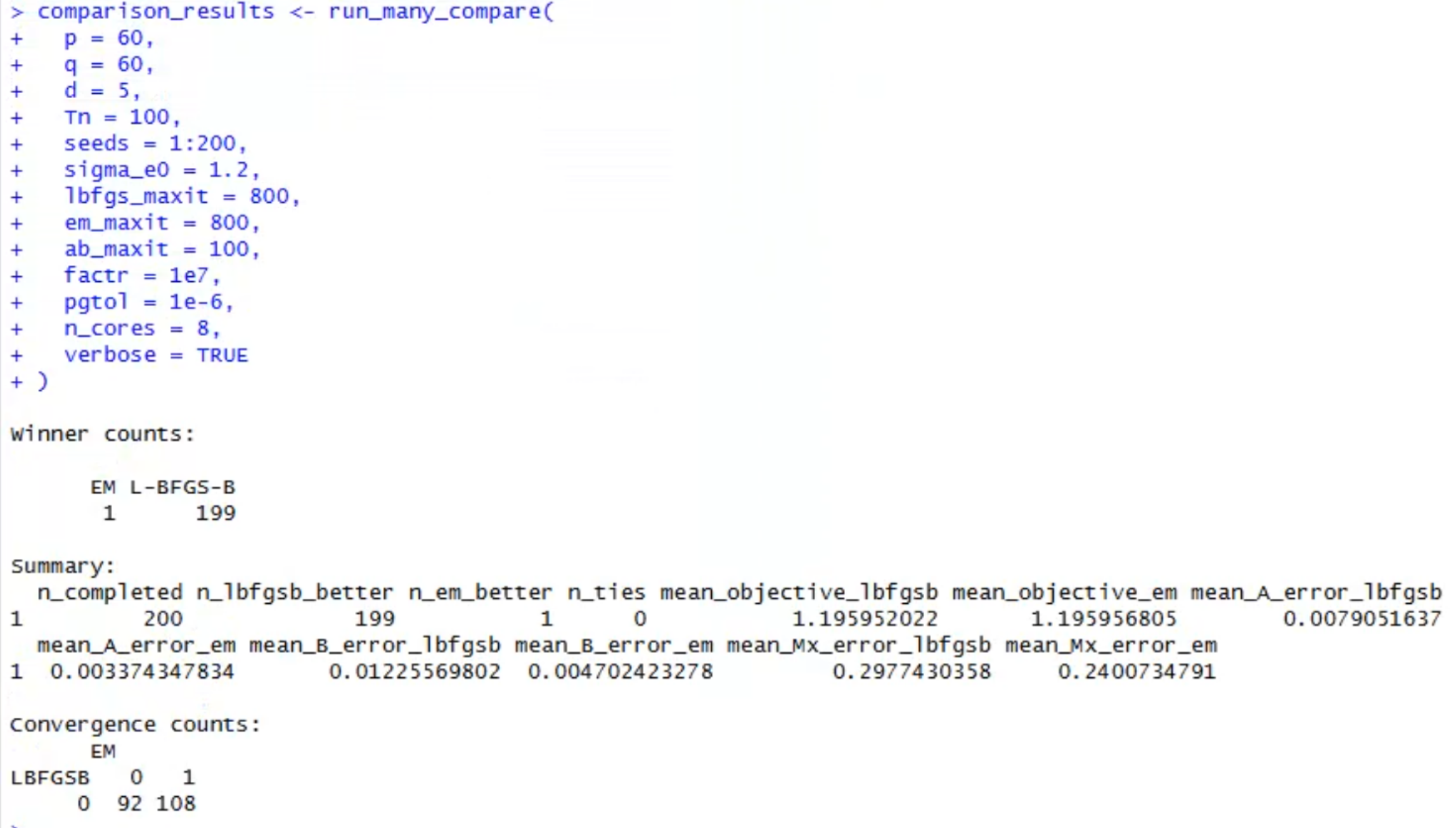}
    \caption{Boundary-case: Comparison for (60,60,100)}
    \label{fig:comparison results}
\end{figure}

We observed a substantial difference in convergence speed between the two algorithms. In the boundary-case simulation with \(p=q=60\), \(T=100\), and 200 replications, the EM algorithm frequently failed to reduce the relative change in the objective function below \(10^{-10}\), even after 800 iterations. In contrast, the L-BFGS-B algorithm converged within 500 iterations in all replications and attained a higher quasi-likelihood in 199 out of 200 cases. Figure \ref{fig:comparison results} shows the experiment result. 

To examine whether the computational difference persists at smaller observation dimensions, we repeated the boundary-case experiment with $p=q=20$, $d=5$, and $T\in\{100,200,400\}$, using 200 paired replications for each sample size. Both algorithms used the same oracle initialization, fixed parameter bounds, and relative objective-change \[\Delta_m
=
\frac{
\mathcal{L}_T(\theta^{(m)})-\mathcal{L}_T(\theta^{(m-1)})
}{
\max\left\{
1,\,
|\mathcal{L}_T(\theta^{(m-1)})|,\,
|\mathcal{L}_T(\theta^{(m)})|
\right\}
}\] tolerance of $10^{-10}$. L-BFGS-B was allocated a budget of 500 iterations, whereas generalized EM was evaluated at budgets of 500, 800, and 1000 outer iterations. As shown in Table~\ref{tab:lowdim-em-lbfgsb}, L-BFGS-B satisfied the stopping criterion in 144, 163, and 166 replications for $T=100,200,400$, respectively. The corresponding counts for EM were only 11, 25, and 50 at a budget of 500 iterations, increasing to 14, 28, and 59 at a budget of 1000 iterations. The mean recorded runtime of L-BFGS-B was 0.84--1.78 seconds, whereas that of EM at the 1000-iteration budget was 10.34--13.86 seconds, approximately 7.8--12.4 times longer at the corresponding sample size. Despite this larger budget, EM attained a lower quasi-likelihood than L-BFGS-B in 187, 161, and 134 of the 200 replications, respectively. These results support the greater computational efficiency of L-BFGS-B for optimizing the working Gaussian quasi-likelihood in this experimental design.
Moreover, EM yielded smaller mean loading RMSEs despite attaining
lower quasi-likelihoods on average. Under oracle initialization,
this pattern is consistent with incomplete optimization retaining
more information from the true starting values.
The comparison therefore concerns optimization performance,
not a general advantage in loading-estimation accuracy.

\begin{table}[!t]
\centering
\renewcommand{\arraystretch}{1.12}
\setlength{\tabcolsep}{3.5pt}

\captionsetup{
  justification=raggedright,
  singlelinecheck=false,
  position=top,
  skip=6pt
}
\caption{Optimization performance of L-BFGS-B and generalized EM
in the boundary case with $p=q=20$ and $d=5$.}
\label{tab:lowdim-em-lbfgsb}
\par

\begin{adjustbox}{max width=\linewidth}
\begin{threeparttable}
\begin{tabular}{clrrrccc}
\toprule
$T$
& \shortstack{Method\\(budget)}
& \shortstack{Stopping\\test met}
& \shortstack{Mean\\iterations}
& \shortstack{Mean\\time (s)}
& \shortstack{$10^5\overline{\Delta}_f$\\(MCSE)}
& \shortstack{L-BFGS-B\\wins}
& \shortstack{Mean RMSE\\$(A,B)\times10^3$} \\
\midrule

\multirow{4}{*}{100}
& L-BFGS-B (500) & $144/200$ & 382.0 & 0.84
& --- & --- & $(15.96,15.67)$ \\
& EM (500) & $11/200$ & 483.1 & 5.37
& $3.74\ (0.28)$ & $190/200$ & $(7.09,6.42)$ \\
& EM (800) & $12/200$ & 766.0 & 8.39
& $2.82\ (0.24)$ & $190/200$ & $(8.38,7.80)$ \\
& EM (1000) & $14/200$ & 953.3 & 10.34
& $2.29\ (0.21)$ & $187/200$ & $(9.29,8.76)$ \\

\midrule

\multirow{4}{*}{200}
& L-BFGS-B (500) & $163/200$ & 337.1 & 1.21
& --- & --- & $(11.30,10.99)$ \\
& EM (500) & $25/200$ & 465.1 & 5.75
& $1.55\ (0.14)$ & $167/200$ & $(4.59,4.02)$ \\
& EM (800) & $27/200$ & 725.4 & 8.90
& $1.35\ (0.13)$ & $162/200$ & $(5.07,4.54)$ \\
& EM (1000) & $28/200$ & 897.8 & 10.96
& $1.21\ (0.13)$ & $161/200$ & $(5.44,4.94)$ \\

\midrule

\multirow{4}{*}{400}
& L-BFGS-B (500) & $166/200$ & 273.9 & 1.78
& --- & --- & $(7.85,7.57)$ \\
& EM (500) & $50/200$ & 425.2 & 7.62
& $0.67\ (0.07)$ & $143/200$ & $(3.10,2.69)$ \\
& EM (800) & $58/200$ & 644.3 & 11.51
& $0.61\ (0.07)$ & $139/200$ & $(3.32,2.92)$ \\
& EM (1000) & $59/200$ & 786.2 & 13.86
& $0.57\ (0.06)$ & $134/200$ & $(3.49,3.11)$ \\

\bottomrule
\end{tabular}

\begin{tablenotes}[flushleft]
\footnotesize
\item[] \textit{Notes:}
Results are based on 200 paired replications per sample size, with
the same data, oracle initialization, and fixed parameter bounds
for both algorithms. The budget is the maximum permitted number of iterations. 
The stopping test requires a relative decrease in
the common objective $\Delta_m$ of at most $10^{-10}$.
the reported mean iterations are calculated over
all 200 replications, including runs that reached the iteration
budget without satisfying the stopping criterion. 

\item[]
For each EM budget,
$\Delta_\mathcal{L}=2[\mathcal{L}_T(\hat\theta_{\mathrm{LB}})-\mathcal{L}_T(\hat\theta_{\mathrm{EM}})]$ compares against L-BFGS-B with a budget of 500 iterations. Positive values favor L-BFGS-B. MCSE denotes the Monte Carlo standard error of the paired mean; both the mean difference and its MCSE are multiplied by $10^5$. ``L-BFGS-B wins'' counts replications with $\Delta_\mathcal{L}>10^{-10}$. 

\item[]
RMSEs are computed as
$\|\hat A^{\mathrm{al}}-A_0\|_{\mathrm F}/\sqrt{pd}$
and
$\|\hat B^{\mathrm{al}}-B_0\|_{\mathrm F}/\sqrt{qd}$
after joint column permutation, sign alignment, and normalization.
Their arithmetic means are multiplied by $1000$.
\end{tablenotes}

\end{threeparttable}
\end{adjustbox}
\par
\end{table}

\section{Example: A non-identifiable Sequence Converges to an identifiable limit}\label{example: convseqnonidentifiable}
We give an example in which two different sequences converge to the same boundary point, and the corresponding $n$-th elements of these sequences represent the same CP factor model. We take the numeric example \ref{examplenumeric}: 

Let $(A,B)$ be the pair of matrices given in Example \ref{examplenumeric}. We take $u=(0,1,0)^{\top}$ and $v=(0,2,0)^{\top}$ and consider the following two convergent sequences with limit $(A,B)$: 
\begin{equation}\label{example: sequenceofboundary}
\begin{aligned}\{(A_n,B_n):A_n=A,B_n=(b_1-\frac{1}{n}v,b_2,\ldots,b_5),n\in \mathbb{Z}_+\};\\
\{(A'_n,B'_n):A_n=(a_1+\frac{1}{n}u,a_2,\ldots,a_5),B_n=B,n\in \mathbb{Z}_+\}. \end{aligned}\end{equation}

To see both sequences living in $\mathbb{R}^{3\times 5}\times \mathbb{R}^{3\times 5 }\setminus{\mathcal{K}}_5$, let us look at the following linear combination given by $\lambda_n=(1,\frac{1}{n},\frac{1}{n},\frac{1}{n},\frac{1}{n})^{\top}\in \mathbb{R}^5$: \begin{equation}\begin{aligned}
B_n\odot A_n\cdot \lambda_n=b_1\otimes(a_1+\frac{1}{n}u)=[B'_n\odot A'_n]\cdot e_1
\end{aligned}
\end{equation}which is just the first column of $B'_n\odot A'_n$. By proposition \ref{prop:CPidcond}, $(A_n,B_n,x_{n,t})$ and $(A'_n,B'_n,x'_{n,t})$ represent the same CP-factor model, where $x_{n,t}=(\lambda_n,e_2,\cdots,e_5)x_{n,t}'\in\mathbb{R}^{5}$.

\begin{proposition}\label{prop:example-boundary-inner-directions}
Consider the pair $(A,B)$ in Example~\ref{examplenumeric}, and let
\(u=(0,1,0)^\top,\quad
v=(0,2,0)^\top\). 
Let $z_j=b_j\otimes a_j$, $j\in[5]$. Then: 
The path
    \[
    A_{\rm bd}(t)=A-tu e_1^\top,\qquad
    B_{\rm bd}(t)=B+tv e_1^\top
    \]
    remains on the boundary of $\mathcal K_5$ for all sufficiently
    small $|t|$:
    \(
    \bigl(A_{\rm bd}(t),B_{\rm bd}(t)\bigr)
    \in
    \mathcal K_5\setminus\operatorname{int}(\mathcal K_5).
    \)
    In particular,
    \(\delta_{\rm bd}=(-u e_1^\top,\,v e_1^\top)\)
    is a boundary-tangent direction at $(A,B)$.

\end{proposition}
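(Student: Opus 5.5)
The plan has two independent parts: first show $\bigl(A_{\rm bd}(t),B_{\rm bd}(t)\bigr)\in\mathcal K_5$ for small $|t|$, then show that each such point is not interior. The non-interior part will come from Proposition~\ref{prop:geometric meaning of rank deficiency}, and membership will come from Proposition~\ref{prop:algcond}.

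Only the first columns change along the path: $a^1(t)=(1,-t,0)^\top$ and $b^1(t)=(2,2t,0)^\top$. The other columns are as in \eqref{exampleforA,Binbdry}.

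\emph{Step 1 (membership).} I would form $m_t(\lambda)=\sum_\ell\lambda_\ell a^\ell(t)b^\ell(t)^\top$ and its nine $2\times2$ minors $\Psi_{m,n;k,l}(\lambda;t)$. Only terms involving $\lambda_1$ depend on $t$, and they do so polynomially.
\begin{itemize}
\item I would compute a Gröbner basis of these minors over the field $\mathbb{Q}(t)$, exactly as in Example~\ref{examplenumeric}.
\item I would record the finitely many rational functions of $t$ divided by in Buchberger's algorithm. Each is nonzero at $t=0$ because the $t=0$ computation reproduces $\mathcal G(A,B)$, so each is nonzero for $|t|$ small. Hence the basis specializes correctly.
\item I expect the specialized basis to have the same shape as $\mathcal G(A,B)$: $\lambda_i\lambda_j$ for $2\le i<j\le5$, $\lambda_1\lambda_5^2$, and $\lambda_1(\lambda_j-c_j(t)\lambda_5)$ for $j=2,3,4$, with $c_j(0)=1$.
\item The real-radical argument after Theorem~\ref{thm.H.N.Thm} then gives $\lambda_1\lambda_5=0$, hence $\lambda_1\lambda_j=0$ for all $j$. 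So the common zero set is the union of the coordinate axes, and Proposition~\ref{prop:algcond} gives membership in $\mathcal K_5$.
\end{itemize}

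A fallback avoids full specialization. Solve the minors directly: first the block of minors involving only $\lambda_2,\dots,\lambda_5$, which is independent of $t$, then the terms linear in $\lambda_1$.

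\emph{Step 2 (not interior).} By Proposition~\ref{prop:geometric meaning of rank deficiency}, it suffices to show $J_1$ at $\bigl(A_{\rm bd}(t),B_{\rm bd}(t)\bigr)$ is rank deficient for every small $t$.
\begin{itemize}
\item The columns of $J_1(t)$ are $\operatorname{vec}\psi\bigl(a^1(t)b^1(t)^\top,a^ib^{i\top}\bigr)$ for $i=2,\dots,5$.
\item By bilinearity of $\psi$ these are polynomial in $t$, of degree at most $2$.
\item I would exhibit an explicit kernel vector $w(t)$ with $w(0)=(1,1,1,1)^\top$, the kernel direction noted in the simulation design. I expect $w(t)$ to be the vector $(c_2(t),c_3(t),c_4(t),1)$ suggested by the Gröbner relations $\lambda_1(\lambda_j-c_j(t)\lambda_5)$.
\item Alternatively, show directly that all $4\times4$ minors of the $81\times4$ matrix $J_1(t)$ vanish identically in $t$. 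This needs only finitely many polynomial identities, and most rows vanish by the sparsity of $a^1(t)$ and $b^1(t)$.
\end{itemize}

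With rank deficiency and membership in hand, Proposition~\ref{prop:geometric meaning of rank deficiency} produces non-identifiable points $p+s\delta$ arbitrarily close to $\bigl(A_{\rm bd}(t),B_{\rm bd}(t)\bigr)$. Thus the point lies in $\mathcal K_5\setminus\operatorname{int}(\mathcal K_5)$. The boundary-tangent claim for $\delta_{\rm bd}$ follows by differentiating the path at $t=0$.

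\emph{Main obstacle.} I expect the main obstacle to be Step~1, namely controlling the Gröbner basis uniformly in $t$. Membership in $\mathcal K_5$ is not an open condition, since the base point is itself a boundary point. So continuity alone does not suffice, and the specialization argument must be carried out carefully. I would first try the direct elimination fallback, because the sparsity of $a^1(t)$ and $b^1(t)$ (third entries zero) should keep the $\lambda_1$-dependent minors few and explicit.
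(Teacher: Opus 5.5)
Your plan is correct and has the same skeleton as the paper's proof: a Gr\"obner/zero-set computation at $t\neq0$ gives membership, and rank deficiency of $J_1(t)$ combined with Proposition~\ref{prop:geometric meaning of rank deficiency} gives non-interiority. The difference is how you get the rank deficiency. You use brute-force computation; the paper uses a single identity, $z_2+z_3+z_4+z_5=v\otimes a_1+b_1\otimes u$. Along the path, $v\otimes a_1(t)+b_1(t)\otimes u=v\otimes a_1+b_1\otimes u$, because the $\pm t\,v\otimes u$ terms cancel. So this fixed nonzero vector lies in $\operatorname{span}\{z_2,\dots,z_5\}\cap\bigl(b_1(t)\otimes\mathbb R^3+\mathbb R^3\otimes a_1(t)\bigr)$. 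That is the tangent space of the rank-one variety at $z_1(t)$, on which $\psi(z_1(t),\cdot)$ vanishes. Hence $J_1(t)(1,1,1,1)^\top=0$ for every $t$, i.e.\ your $c_j(t)\equiv1$. This identity explains why $\delta_{\rm bd}$, the $s_1=-s_2$ case of Proposition~\ref{prop:geometric meaning of rank deficiency}, stays on the boundary; your computation would confirm this without showing why.

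The same identity also removes the step you call the main obstacle. At $t=0$, $\operatorname{rank}\Omega_{i<j}=9$ (Example~\ref{examplenumeric}), so by lower semicontinuity the rank is at least $9$ for small $|t|$. The kernel vector $k^*$, with $k^*_{12}=k^*_{13}=k^*_{14}=k^*_{15}=1$ and zeros elsewhere, caps the rank at $9$. Therefore the span of the nine minors is, for all small $|t|$, the fixed hyperplane of zero-diagonal forms $\sum_{i<j}\omega_{ij}\lambda_i\lambda_j$ with $\omega_{12}+\omega_{13}+\omega_{14}+\omega_{15}=0$. So the ideal of minors at $t$ equals the one at $t=0$, and its Gr\"obner basis is literally $\mathcal G(A,B)$, which is what the paper asserts by direct elimination. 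No parametric Gr\"obner specialization is needed.

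One small correction: your claim that the Buchberger denominators are nonzero at $t=0$ ``because the $t=0$ computation reproduces $\mathcal G(A,B)$'' does not follow. It is also unnecessary. You only need small $t\neq0$, and finitely many nonzero rational functions have no zeros in a punctured neighbourhood of $0$.
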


\begin{proof}
For the matrices in Example~\ref{examplenumeric}, direct calculation gives
\[
z_2+z_3+z_4+z_5
=
v\otimes a_1+b_1\otimes u.
\]
Denote this nonzero vector by
\[
z(s):=z_2+z_3+z_4+z_5.
\]
Along the path in part~(i),
\[
a_1(t)=a_1-tu,\qquad
b_1(t)=b_1+tv,
\]
while all the remaining columns are unchanged. Therefore,
\begin{align*}
v\otimes a_1(t)+b_1(t)\otimes u
&=
v\otimes(a_1-tu)+(b_1+tv)\otimes u\\
&=
v\otimes a_1+b_1\otimes u\\
&=
z(s).
\end{align*}
Consequently,
\[
0\neq z(s)
\in
\bigl\{
b_1(t)\otimes x+y\otimes a_1(t):
x,y\in\mathbb R^3
\bigr\}
\cap
\operatorname{span}\{z_2,z_3,z_4,z_5\}.
\]
By Proposition~\ref{prop:geometric meaning of rank deficiency}, this is equivalent to
\[
\ker
\left(
\left.
\partial_\lambda\Psi_t(e_1)
\right|_{e_1^\perp}
\right)
\neq\{0\}.
\]
Hence
\[
\operatorname{rank}
\left(
\left.
\partial_\lambda\Psi_t(e_1)
\right|_{e_1^\perp}
\right)
<4
\]
for every sufficiently small $t$.

It remains to verify that the perturbed point is still identifiable.
Let
\[
Z_{\rm bd}(t)=B_{\rm bd}(t)\odot A_{\rm bd}(t),
\]
and let $\Psi_t(\lambda)$ denote the collection of all $2\times2$
minors of
\[
\operatorname{Mat}\{Z_{\rm bd}(t)\lambda\},
\qquad
\lambda=(\lambda_1,\ldots,\lambda_5)^\top.
\]
For $t\neq0$, a direct elimination of these minors gives the following
Gr\"obner basis for their polynomial ideal:
\[
\begin{aligned}
&\lambda_1(\lambda_2-\lambda_5),\qquad
\lambda_1(\lambda_3-\lambda_5),\qquad
\lambda_1(\lambda_4-\lambda_5),\qquad
\lambda_1\lambda_5^2,\\
&\lambda_2\lambda_3,\quad
\lambda_2\lambda_4,\quad
\lambda_2\lambda_5,\quad
\lambda_3\lambda_4,\quad
\lambda_3\lambda_5,\quad
\lambda_4\lambda_5.
\end{aligned}
\]
Easy to see that, 
\[
\{\lambda:\Psi_t(\lambda)=0\}
=
\bigcup_{j=1}^5\mathbb R e_j.
\]
Since $Z_{\rm bd}(0)$ has full column rank, $Z_{\rm bd}(t)$ remains
full column rank for all sufficiently small $|t|$. It follows from
Proposition~\ref{prop:CPidcond} that
\[
(A_{\rm bd}(t),B_{\rm bd}(t))\in\mathcal K_5.
\]
Together with the rank deficiency above, we obtain
\[
(A_{\rm bd}(t),B_{\rm bd}(t))
\in
\mathcal K_5\setminus\operatorname{int}(\mathcal K_5).
\]
\end{proof}

This reveals that, when $s_1+s_2\neq 0$ ($\delta(s_1,s_2)\neq \delta(-s_2,-s_1)$) in Proposition~\ref{prop:geometric meaning of rank deficiency}, the tangent direction $\delta(s_1,s_2)$ goes outside, but when $s_1=-s_2=s$, the path $p_{\delta(s,-s)}$ would sometimes stay along the boundary part of $\mathcal{K}_d$. 
Since the ambient space is \(pqd^2\)-dimensional, we project a neighborhood of the boundary point \(p_0\in \mathcal{K}_d\setminus\operatorname{int}\mathcal{K}_d\subset \mathbb{R}^{p\times d}\times \mathbb{R}^{q\times d}\) onto the Euclidean space spanned by tangent directions $\delta(1,0)$, $\delta(0,-1)$ and $\delta_{in}$ to provide a more intuitive visualization of the local geometry near \(p_0\) in Figure \ref{fig:localgeometry}, where \(\delta_{\mathrm{in}}\) is an arbitrary auxiliary direction, implemented here by adding a pair of full-rank matrix perturbations to \(A_0\) and \(B_0\) respectively. The red lines and yellow regions, except for the point $p_0$ (the blue line in $(a)$), form the complement $\mathcal{K}_d^\complement$, while the purple region denotes the interior. Since the closure of \(\mathcal K_d\) coincides with the entire parameter space (\cite{DomanovDeLathauwer2015,AllmanMatiasRhodes2009}), the complement in the figure represents the boundary. They are visually thickened for clarity, and the displayed thickness has no geometric significance.    
    
\begin{figure}
  \centering
    \includegraphics[width=0.75\linewidth]{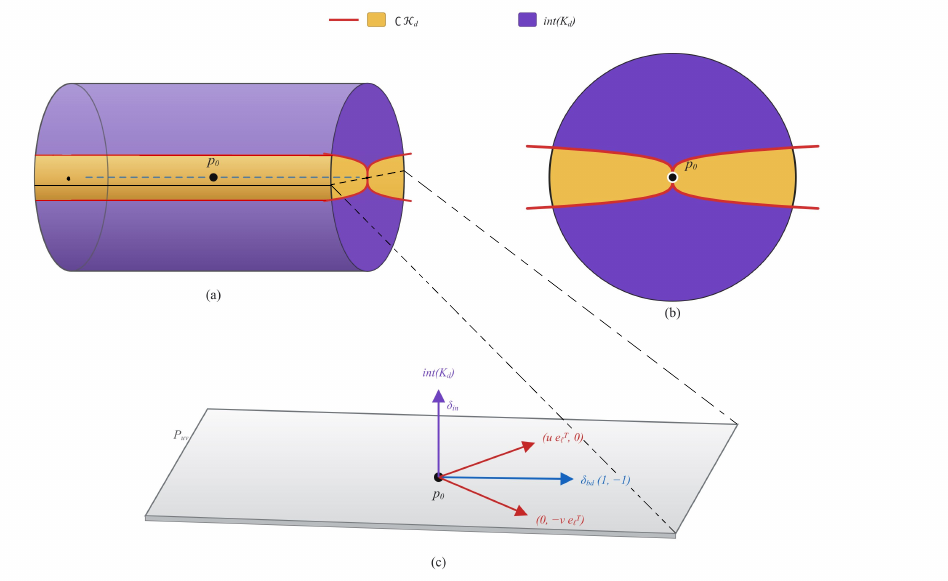}
    \caption{Projection of a small tube around $p_0$ in $\mathbb{R}^{p\times d}\times \mathbb{R}^{q\times d}$ onto the three directions \(\delta(1,0)\), \(\delta(0,-1)\), and \(\delta_{\mathrm{in}}\).}
    \label{fig:localgeometry}
\end{figure}

\begin{figure}[t]
\centering
\includegraphics[width=\linewidth]{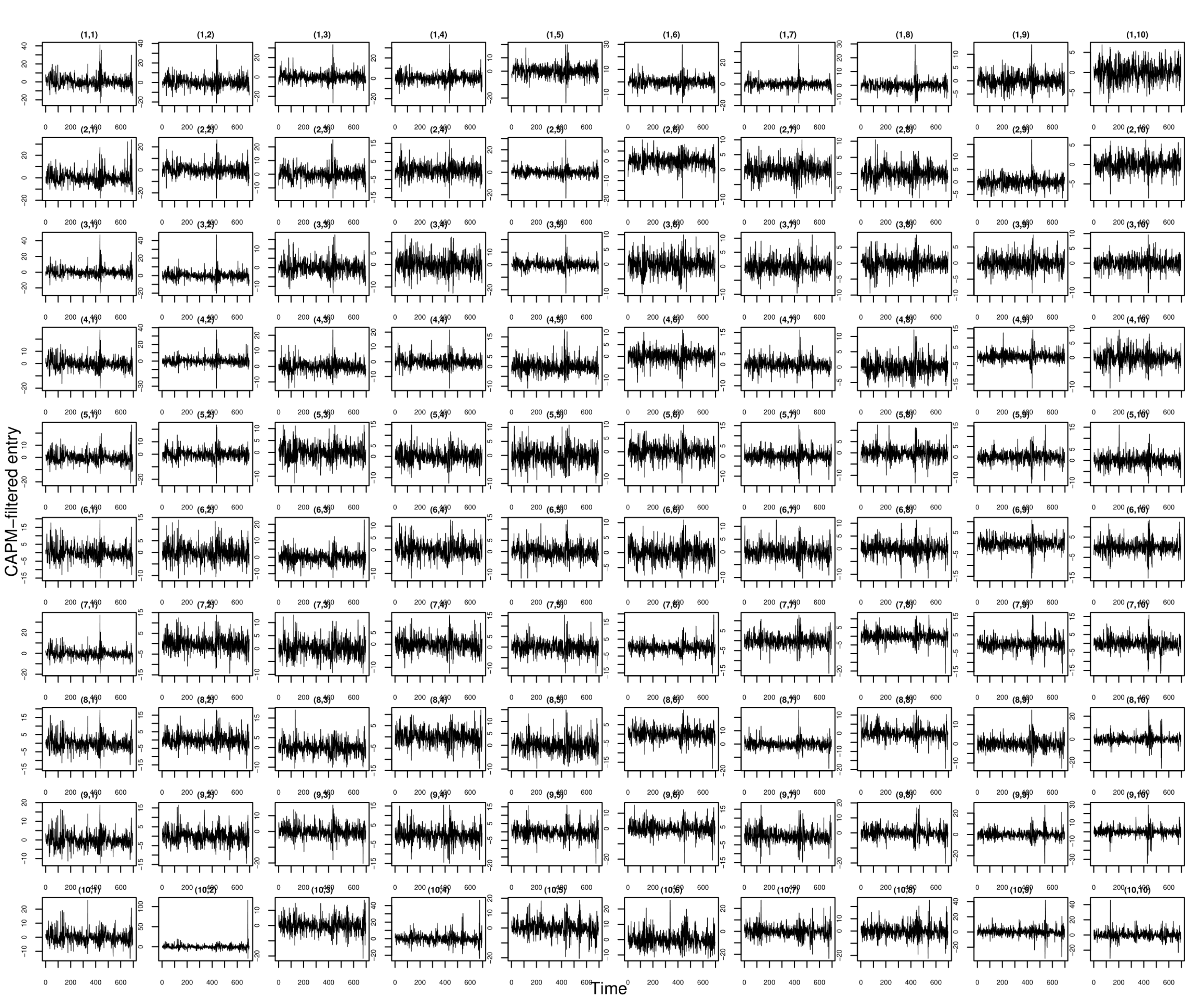}
\caption{Time-series plots of the market-adjusted return series for the 100 size--book-to-market portfolios (CAPM filtering). Rows correspond to the 10 BE/ME deciles, and columns correspond to the 10 size deciles.}
\label{fig:market_adjusted_100portfolios}
\end{figure}


\section{Proof of Propositions and Theorems related to the Identifiability Problem}

We briefly outline the organization of the proofs in this supplement.
We first prove Propositions in Section \ref{sec:idCPmodel}. Although part of Proposition~\ref{prop:CPidcond} has been justified in the main text, the argument given there is not fully rigorous. We therefore provide a complete proof below. We then describe the topology of \(\mathcal{K}_d\), which serves as the theoretical prerequisite for our analysis of boundary versus interior points. However, since, in the problems of matrix space, one typically works with the subspace topology inherited from the ambient matrix space $\mathcal{K}_d\subset\mathbb{R}^{p\times d}\times \mathbb{R}^{q\times d}$, we do not emphasize this point in the main text. Then we prove Proposition~\ref{prop:algcond} and Lemmas and Propositions~\ref{lemma:shortintervalexistoutside}-\ref{prop:geometric meaning of rank deficiency}. 

Then, the proof of Theorem~\ref{thmforcpconvrate} is presented. We defer the auxiliary lemmas for this result to Section \ref{sec:lemma}, along with an independent estimation procedure for the rate of $R$ in Section \ref{subsec:Est of R}. Finally, we prove the sharper rate under the Gaussian assumption in Section \ref{sec:gaussian-sharp-rate}. 

For notational convenience in the proofs, we will invoke the notion of an ideal in a polynomial ring; see, e.g., \cite{AnIntrotoCompuAGCA,atiyahmacCOMMUTATIVEALG}. 
In fact, no machinery from commutative algebra is needed beyond the definition: throughout, an ideal generated by an equation system $\Psi$ can be viewed simply as a collection of polynomials that is closed under addition and under multiplication by arbitrary polynomials. Equivalently, for polynomials \(\Psi_1,\dots,\Psi_m\), the ideal they generate consists of all polynomial combinations \(\sum_{k=1}^m h_k \Psi_k\) with polynomial coefficients \(h_k\).
We denote $I(B\odot A)$ as the ideal of $\Psi$ over the polynomial ring $\mathbb{R}[\lambda]$, i.e. $I(B\odot A)=\mathbb{R}[\lambda]\langle\Psi_{m,n;k,l},m,n\in[p],k,l\in[q]\rangle$

We take $K_d$ to be the subspace of $\mathbb{R}^{p\times d}\times \mathbb{R}^{q\times d}$ denoting for all the pairs $(A,B)$ such that $\operatorname{rank} B\odot A=d$. Let $\tilde{K}_d$ be the image of $K_d\subset \mathbb{R}^{p\times d}\times \mathbb{R}^{q\times d}$:  
\begin{align*}
    \odot:\mathbb{R}^{p\times d}\times \mathbb{R}^{q\times d}&\to \mathbb{R}^{pq\times d}\\
    (A,B)&\mapsto B\odot A. 
\end{align*}

If $b\otimes a=z\neq 0$ holds in $\mathbb{R}^{pq}$, then $\{(a',b'):b'\otimes a'=z\}=\{(c\cdot a, c^{-1}\cdot  b),c\in\mathbb{R}^*\}$. This is because $z_{ij}/z_{kj}=a_i/a_k$ and $z_{ij}/z_{il}=b_j/b_l$ always holds for every $i,k\in[p],j,l\in[q]$. Therefore, if $a_1b_1\neq0$, the vectors $a$ and $b$ are determined by $a_1$ and $b_1$, and if it is zero, they are determined by the first nonzero $a_ib_j$. For $Z=B\odot A\in \tilde{K}_d$, the preimage of $Z$ under $\odot$ map can be expressed as $\{(AD,BD^{-1})\big|D=\operatorname{diag}(c_1,\cdots,c_d),c_i\neq 0 \}$. By taking $(A^0,B^0)$ to be a normalized pair: $$\frac{1}{p}\operatorname{diag} A^{0\top}A^0=\frac{1}{q}\operatorname{diag}B^{0\top}B^0=\frac{1}{pq}\sqrt{\operatorname{diag}(Z^{\top}Z)},$$we have the map: \begin{align*}
    (A,B)&\mapsto (A^0,B^0)\times \{c\},\quad c\in\mathcal{D}_d\,\text{, s.t. }A^0c=A\\
    K_d&\to\tilde{K}_d\times \mathcal{D}_d,
\end{align*} which implies ${K}_d=\tilde{K}_d\times \mathcal{D}_d$, where $\mathcal{D}_d$ denotes the set of invertible diagonal real matrices. 

\begin{proof}[proof of Proposition~\ref{prop:CPidcond}]

    In the main context, we proved that, every triple $(A',B',X')$ represents the same model $Y$ in \eqref{eq:cpfactmodel} given by $(A,B,X)$ shares the same $\mathrm{M}(A,B)$. Every rank-1 basis $\mathrm{M}(A,B)=\operatorname{span}\{u_iv_i^{\top}, i\in[d]\}$ provides a transformation such that $H(v_1\otimes u_1,\ldots,v_d\otimes u_d)^{\top}=(B\odot A)^\top$ and produces a representation $$Y_t=(u_1,\ldots, u_d)\operatorname{diag}(x'_t)(v_1,\ldots,v_d)^{\top}=AX_tB^{\top},x_t'=(H^{\top})^{-1}x_t.$$
    This suggests us to collect all base-change transformations that map one ordered rank-one basis $\{b^{1}\otimes a^{1},\cdots, b^{d}\otimes a^{d}\}$ of $\mathrm{col}( A,B)$ to another (where rank-one means the corresponding matrix $\operatorname{Mat}(z),z\in\mathrm{col}(B\odot A)$ is rank-one): \begin{equation}\label{eq:Gab}\mathcal{G}(B\odot A)\coloneqq\{U:(B\odot A)U\in \tilde{K}_d\}.\end{equation}Each pair in $\{(A',B'):B'\odot A'=(B\odot A)U, U\in \mathcal{G}(B\odot A)\}$ represents the same CP-factor model \eqref{eq:cpfactmodel}. If the rank-1 basis of $\mathcal{S}(A,B)$ is unique up to scaling, the base-change transformations can only be permutations and scaling, and $\mathcal{G}(B\odot A)=\mathcal{D}_d\rtimes \mathcal{S}_d$. And if there exist a nonzero rank-1 tensor $v\otimes u\in\mathcal{S}(A,B)\setminus \cup _{\ell\in[d]}\mathbb{R}\cdot \{b^{\ell}\otimes a^{\ell}\}$, we can then extend it to a new basis using $\{b^{\ell}\otimes a^{\ell},\ell\in[d]\}$. The corresponding base-change does not belong to $\mathcal{D}_d\rtimes\mathcal{S}_d$ but still produces a new presentation. This proved equation~\eqref{identifiablespacewithoutquotient}.

\end{proof}

The identifiable region $\mathcal{K}_d$, by definition, is a subset of $K_d$. We take its topology to be the subspace topology via $\mathcal{K}_d\subset K_d\subset \mathbb{R}^{p\times d}\times \mathbb{R}^{q\times d}$. Because $K_d=\tilde{K}_d\times \mathcal{D}_d$, we have $\mathcal{K}_d=\tilde{\mathcal{K}}_d\times \mathcal{D}_d$ where $\tilde{\mathcal{K}}_d=\odot (\mathcal{K}_d)$ is the image space. 

Taking $\Omega(B\odot A)=\big[\operatorname{vec}\big(\psi(b^{i}\otimes a^i,b^j\otimes a^j)\big)\big]_{1\leq i\leq j\leq d}$ and $\Omega_{i<j}=\big[\operatorname{vec}\big(\psi(b^{i}\otimes a^i,b^j\otimes a^j)\big)\big]_{1\leq i< j\leq d}$, equation~\eqref{2minorquadraticequation} can be expressed as \begin{equation*}
    \psi\big(s(\lambda),s(\lambda)\big)=2\Omega_{i<j}(B\odot A){k}(\lambda),\quad k(\lambda)=(\lambda_1\lambda_2,\ldots, \lambda_1\lambda_d,\lambda_2\lambda_3,\ldots,\lambda_{d-1}\lambda_d)^\top. 
\end{equation*}

\begin{lemma}\label{lemma:omegafullrank}
    For any matrix $B\odot A\in {K}_d$, the following inequality holds: 
    \begin{equation}\label{eq:rankofOmegahasupperbound}
        \operatorname{rank}\big(\Omega(B\odot A)\big)\leq \frac{d(d-1)}{2}. 
    \end{equation}
\end{lemma}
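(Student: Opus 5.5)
The plan is to count nonzero columns rather than compute a rank. The matrix $\Omega(B\odot A)$ has $\frac{d(d+1)}{2}$ columns, indexed by pairs $1\le i\le j\le d$. Of these, $d$ are the \emph{diagonal} columns $\operatorname{vec}\big(\psi(a^ib^{i\top},a^ib^{i\top})\big)$, $i\in[d]$, and the remaining $\frac{d(d-1)}{2}$ form the block $\Omega_{i<j}(B\odot A)$. I will show that every diagonal column is the zero vector. The column rank of $\Omega(B\odot A)$ then equals that of $\Omega_{i<j}(B\odot A)$, and this is at most its number of columns, $\frac{d(d-1)}{2}$. This gives \eqref{eq:rankofOmegahasupperbound}.

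The key step is a direct evaluation of \eqref{def:Psimap} with $i=j$. In that case the two determinants in \eqref{def:Psimap} coincide. Hence, for each index $(m,n;k,l)$,
\[
\psi_{m,n;k,l}(a^ib^{i\top},a^ib^{i\top})=\det\begin{bmatrix} a^i_m b^i_k & a^i_m b^i_l\\ a^i_n b^i_k & a^i_n b^i_l\end{bmatrix}
= a^i_m a^i_n\big(b^i_k b^i_l-b^i_l b^i_k\big)=0 .
\]
Conceptually, this is just the $(m,n;k,l)$ $2\times 2$ minor of the rank-one matrix $a^ib^{i\top}$. Equivalently, it is the diagonal coefficient of $\lambda_i^2$ in \eqref{2minorquadraticequation}, which the paper already notes vanishes ("quadratic forms with diagonal part zero"). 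So each diagonal column of $\Omega(B\odot A)$ vanishes identically. This holds for every $(A,B)$; the hypothesis $B\odot A\in K_d$ is not needed beyond making the statement meaningful.

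There is no real obstacle here. The only point requiring care is to read the definition of $\Omega$ correctly: its index set is $i\le j$, so it includes the diagonal. The bound then follows from the elementary fact that deleting zero columns does not change the rank.
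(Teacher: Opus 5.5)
Your proof is correct and is the same as the paper's: both note that each diagonal column $\operatorname{vec}\big(\psi(a^ib^{i\top},a^ib^{i\top})\big)$ vanishes because $a^ib^{i\top}$ has rank one, so at most the $\frac{d(d-1)}{2}$ off-diagonal columns contribute to the rank. The paper also remarks that the bound is attained at $A=B=I_d$, but this sharpness observation is not needed for the inequality itself.
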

\begin{proof}
    It is easy to see that $\Omega(B\odot A)$ has $\frac{d(d+1)}{2}$ columns. Because each tensor in the form $b\otimes a$ is rank-1, the columns indexed by $(i=j)$ are zero, i.e. $\psi(b^i\otimes a^i,b^i\otimes a^i)=0$. When we take $A=B=I_d$, the equation $\operatorname{rank}\big(\Omega(B\odot A)\big)= \frac{d(d-1)}{2}$ holds. Therefore, the inequality $\operatorname{rank}\big(\Omega(B\odot A)\big)\leq \frac{d(d-1)}{2}$ is valid for all $B\odot A\in {K}_d$. 
\end{proof}

\begin{lemma}\label{lemma:Omegalambda}
    For each fixed pair $(p,q)$, the map $\Omega$ defined on $K_d$ can be naturally extended to be a continuous map defined on $\mathbb{R}^{pq\times d}$. The bi-linearity of $\psi$ ensures the following equation: 
    \begin{equation}
        \begin{aligned}
        &\quad \psi\big(v(\lambda),v(\lambda')\big)=\Omega\big(M\big)\mathrm{k}(\lambda,\lambda'),\quad v(\lambda)\coloneqq\sum_{i=1}^d\lambda_iv_i, \ M=(v_1,\cdots,v_d)\in\mathbb{R}^{pq\times d}, 
        \end{aligned}
    \end{equation}
where $\mathrm{k}(\lambda,\lambda')$ is defined as: $$\mathrm{k}(\lambda,\lambda')=\begin{pmatrix}\lambda_1\lambda_1'\\\lambda_1\lambda'_2+\lambda_2\lambda_1'\\\vdots\\\lambda_{1}\lambda_d'+\lambda_1'\lambda_d\\\vdots\\\lambda_{d}\lambda_d'
        \end{pmatrix}.$$ Moreover, for a matrix $M$ and a transformation $U\in\mathrm{GL}_{d}$, we have 
\begin{equation}
    \Omega(MU)=\Omega(M)\cdot \mathrm{k}(U), \quad \text{where }\mathrm{k}(U)\coloneqq \big(\mathrm{k}(u_i,u_j):i\leq j\big)_{\frac{d(d+1)}{2}\times \frac{d(d+1)}{2}}.
\end{equation}
\end{lemma}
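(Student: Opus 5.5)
The plan is to treat $\psi$ as a symmetric bilinear map $\psi:\mathbb{R}^{pq}\times\mathbb{R}^{pq}\to\mathbb{R}^{p^2q^2}$. I would define it for arbitrary vectors $z,w\in\mathbb{R}^{pq}$ by the same formula as \eqref{def:Psimap}, with $\operatorname{Mat}(z)$ and $\operatorname{Mat}(w)$ in place of $a^ib^{i\top}$ and $a^jb^{j\top}$. Concretely,
\[
\psi_{m,n;k,l}(z,w)=\tfrac12\big(\operatorname{Mat}(z)_{mk}\operatorname{Mat}(w)_{nl}-\operatorname{Mat}(w)_{mk}\operatorname{Mat}(z)_{nl}\big)+\tfrac12\big(\operatorname{Mat}(w)_{mk}\operatorname{Mat}(z)_{nl}-\operatorname{Mat}(z)_{mk}\operatorname{Mat}(w)_{nl}\big),
\]
written so that it is the polarization of the $2\times2$ minor; in practice I would take the polarization identity $\psi(z,w)=\tfrac12[\det(z+w)-\det z-\det w]$ minorwise as the definition. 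Each entry is then a polynomial, in fact bilinear, in the entries of $z$ and $w$, and it is symmetric in $(z,w)$.

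With this in hand I set $\Omega(M)=[\operatorname{vec}\psi(v_i,v_j)]_{i\le j}$ for any $M=(v_1,\dots,v_d)\in\mathbb{R}^{pq\times d}$. Its entries are quadratic polynomials in the entries of $M$, so $\Omega$ is continuous (indeed polynomial) on $\mathbb{R}^{pq\times d}$. On $K_d$, where $v_\ell=b^\ell\otimes a^\ell$, it coincides with the original definition, which gives the natural extension.

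For the identity in $\lambda,\lambda'$, I expand by bilinearity:
\[
\psi\Big(\sum_i\lambda_iv_i,\sum_j\lambda'_jv_j\Big)=\sum_{i,j}\lambda_i\lambda'_j\,\psi(v_i,v_j).
\]
I then group the pairs $(i,j)$ and $(j,i)$ for $i<j$ using symmetry, which gives coefficient $\lambda_i\lambda'_j+\lambda_j\lambda'_i$ on $\psi(v_i,v_j)$. The diagonal terms $i=j$ carry $\lambda_i\lambda'_i$. With the columns of $\Omega$ and the entries of $\mathrm{k}(\lambda,\lambda')$ indexed in the same order $i\le j$, this is exactly $\Omega(M)\mathrm{k}(\lambda,\lambda')$.

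For the $\mathrm{GL}_d$ statement, the columns of $MU$ are $Mu_i$, where $u_i$ is the $i$-th column of $U$. So the $(i,j)$ column of $\Omega(MU)$ is $\psi(Mu_i,Mu_j)=\Omega(M)\mathrm{k}(u_i,u_j)$ by the previous identity with $\lambda=u_i$, $\lambda'=u_j$. Stacking these columns over $i\le j$ gives $\Omega(MU)=\Omega(M)\mathrm{k}(U)$.

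No step is a real obstacle. The one point needing care is bookkeeping: the symmetric pairing must produce the factor $\lambda_i\lambda'_j+\lambda_j\lambda'_i$ off the diagonal and $\lambda_i\lambda'_i$ on it, and this must be consistent with the factor of $2$ in \eqref{2minorquadraticequation}. Taking $\lambda=\lambda'$ recovers $\Psi(\lambda)=\psi(v(\lambda),v(\lambda))=\sum_i\lambda_i^2\psi(v_i,v_i)+2\sum_{i<j}\lambda_i\lambda_j\psi(v_i,v_j)$, which confirms the conventions agree.
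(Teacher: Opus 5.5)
Your proof is correct and follows the paper's own argument. You polarize the $2\times2$ minors to define $\psi$, expand $\psi(v(\lambda),v(\lambda'))$ bilinearly, and pair the $(i,j)$ and $(j,i)$ terms to get $\Omega(M)\mathrm{k}(\lambda,\lambda')$. Continuity follows from polynomiality, and you also obtain $\Omega(MU)=\Omega(M)\mathrm{k}(U)$ column by column with $\lambda=u_i$, $\lambda'=u_j$, a step the paper leaves implicit.

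One slip to fix: your first explicit formula for $\psi_{m,n;k,l}(z,w)$ is identically zero, because its second bracket is exactly the negative of the first. Writing $X=\operatorname{Mat}(z)$ and $Y=\operatorname{Mat}(w)$, it should read $\tfrac12(X_{mk}Y_{nl}-Y_{ml}X_{nk})+\tfrac12(Y_{mk}X_{nl}-X_{ml}Y_{nk})$. This is precisely the polarization $\tfrac12[\det(z+w)-\det z-\det w]$ that you then adopt, and it agrees with \eqref{def:Psimap}.
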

\begin{proof}
    For a linear combination $v({\lambda})=\sum_i\lambda_iv_i$ of $\{v_i,i\in[d]\}$ given by $\lambda=(\lambda_i)_{i\in[d]}$, its $2\times 2$ minor with index $(m,n,k,l)$ is \begin{align*}&v({\lambda})_{mn}v({\lambda})_{kl}-v({\lambda})_{kn}v({\lambda})_{ml}\\&=\sum_{i, j}\lambda_i\lambda_j[(v_i)_{mn}(v_j)_{kl}-(v_i)_{kn}(v_j)_{ml}]\\&=\sum_{i< j}2\lambda_i\lambda_j(\psi_{ij})_{m,n;k,l}+\sum_i\lambda_i^2(\psi_{m,n;k,l}).\end{align*}
    Therefore, we take $\psi_{m,n;k,l}(v({\lambda}),v(\lambda))=v({\lambda})_{mn}v({\lambda})_{kl}-v({\lambda})_{kn}v({\lambda})_{ml}$. 
    Similarly, for two vectors $\lambda,\lambda'$, we take \begin{align*}
        \psi_{m,n;k,l}\big(v(\lambda),v(\lambda')\big)&=\frac{1}{2}\{v({\lambda})_{mn}v({\lambda'})_{kl}-v({\lambda})_{kn}v({\lambda'})_{ml}+v({\lambda'})_{mn}v({\lambda})_{kl}-v({\lambda'})_{kn}v({\lambda})_{ml}\}\\=\sum_{i, j}\lambda_i\lambda_j'&\frac{1}{2}[(v_i)_{mn}(v_j)_{kl}-(v_i)_{kn}(v_j)_{ml}]+\sum_{i, j}\lambda_j\lambda_i'\frac{1}{2}[(v_i)_{mn}(v_j)_{kl}-(v_i)_{kn}(v_j)_{ml}]\\=\sum_{i,j}\lambda_i\lambda_j'&\psi(v_i,v_j)=\Omega(M)(\lambda_1\lambda_1',\lambda_1\lambda'_2+\lambda_2\lambda_1',\cdots,\lambda_{1}\lambda_d'+\lambda_1'\lambda_d,\cdots,\lambda_{d}\lambda_d')^{\top},
    \end{align*}
    where the matrix $\Omega(M)$ is defined as  $$\Omega(M)=\bigg(\psi\big(v_i,v_j\big):i\leq j\in[d]\bigg).$$By taking $$\mathrm{k}(\lambda,\lambda')=(\lambda_1\lambda_1',\lambda_1\lambda'_2+\lambda_2\lambda_1',\cdots,\lambda_{1}\lambda_d'+\lambda_1'\lambda_d,\cdots,\lambda_{d}\lambda_d')^{\top},$$ it's easy to check that $$\psi\big(v(\lambda),v(\lambda')\big)=\Omega\big(M\big)\mathrm{k}(\lambda,\lambda').$$

    Because $\Omega$ is a polynomial map, it is continuous on $\mathbb{R}^{pq\times d}$. 

\end{proof}

In the following proof, we take $\psi^d(z)=\psi(\operatorname{Mat}(z),\operatorname{Mat}(z))$. 

\begin{proof}[proof of Proposition \ref{prop:algcond}]

    Let $\lambda\in \mathbb{R}^{d}$ be a vector with $\|\lambda\|=1$ and $\lambda_\ell<1,\forall \ell\in[d]$. By Proposition \ref{prop:CPidcond}, $(A,B)\in\mathcal{K}_d$ if and only if each tensor $s(\lambda)=\sum_{\ell=1}^d\lambda_{\ell}b^{\ell}\otimes a^{\ell}$ has its rank strictly more than one. Therefore, we have: 
    \begin{align*}
        B\odot A\in \mathcal{K}_d \Leftrightarrow \{\lambda\in\mathbb{R}^d\big|\|\lambda\|=1,\psi\big(s(\lambda),s(\lambda)\big)=0\}=\cup _{\ell=1}^d\pm e_{\ell}. 
    \end{align*}
    The solution set of equation $\psi\big(s(\lambda),s(\lambda)\big)=0$ equals that of the equation system: 
    \begin{align*}
        \Psi_{m,n;k,l}(\lambda)\coloneqq\psi_{m,n;k,l}\big(s(\lambda),s(\lambda)\big)=0,\forall m,n\in[p],k,l\in[q].
    \end{align*}
    Real Hilbert's Nullstellensatz theorem states that, two polynomial systems share the same zero set, if and only if the corresponding ideals have the same real radical. Therefore, the equation \eqref{eq:sqrtideal} holds. 
\end{proof}

\begin{proof}[proof of Corollary~\ref{coro:relationtoyao24}]

   Assume $B\odot A\in \tilde{\mathcal{K}}_d$ is a matrix such that $\operatorname{rank}\big(\Omega(B\odot A)\big)=\frac{d(d-1)}{2}$. By Lemma \ref{lemma:omegafullrank}, the sub-matrix $\Omega_{i<j}(B\odot A)$ is full-rank. Therefore, there are $\frac{d(d-1)}{2}$ linearly independent rows in $\Omega(B\odot A)$ corresponding to $\frac{d(d-1)}{2}$ quadratic forms which are linearly independent over $\mathbb{R}$. Let's denote them by $\Psi_1,\cdots, \Psi_{s},s=d(d-1)/2$. Because the vector spaces $\mathbb{R}\langle\lambda_i\lambda_j\big|i<j\in[d]\rangle$ and $\mathbb{R}\langle\Psi_1,\cdots, \Psi_s\rangle$ have the same dimension and are both spanned by $\{\lambda_i\lambda_j,i<j\in[d]\}$, they are the same vector space and the equation $I(B\odot A)=\mathbb{R}[\lambda]\langle\lambda_i\lambda_j\big|i<j\in[d]\rangle$ holds. 

    Conversely, let $B\odot A\in \mathcal{K}_d$ be a matrix satisfying $I(B\odot A)=\mathbb{R}[\lambda]\langle\lambda_i\lambda_j\big|i<j\in[d]\rangle$. By Lemma \ref{lemma:omegafullrank}, we can find $s\leq d(d-1)/2$ linearly independent rows of $\Omega(B\odot A)$ and hence $s$ linearly independent (over $\mathbb{R}$) quadratic forms $\Psi_1,\cdots, \Psi_s$ such that $I(B\odot A)=\mathbb{R}[\lambda]\langle\Psi_1,\cdots,\Psi_s\rangle$. Therefore this equation implies that, there exists polynomials $f_l^{ij}\in \mathbb{R}[\lambda],\ i<j\in[d],\ l\in[s]$ such that 
    \begin{align*}
        \lambda_i\lambda_j=\sum_{l=1}^s f_{l}^{ij}\Psi_l,\quad i<j\in[d]. 
    \end{align*}
    By comparing the coefficients of the quadratic terms on both sides, we obtain $\operatorname{deg}(f_{l}^{ij})=0$ and $\lambda_i\lambda_j\in \mathbb{R}\langle \Psi_1,\cdots,\Psi_s\rangle,\forall i<j\in[d]$. Therefore $\mathbb{R}\langle\lambda_i\lambda_j\big|i<j\in[d]\rangle$ and $\mathbb{R}\langle\Psi_1,\cdots, \Psi_s\rangle$ are the same vector space and have the same dimension $d(d-1)/2$ over $\mathbb{R}$. This implies $\operatorname{rank}\Omega(B\odot A)=\frac{d(d-1)}{2}$. 

    In summary, the space $\mathcal{K}_{\Omega}$ is a subspace of $\mathcal{K}_d$. Moreover, we know that $\mathcal{K}_{\Omega}$ is an open subset, because the full-rank condition $\operatorname{rank}\Omega_{i<j}=\frac{d(d-1)}{2}$ is an open condition. Therefore we have $\mathcal{K}_{\Omega}\subset \operatorname{int}(\mathcal{K}_d)$. 
\end{proof}

A direct observation is that, $\|\Omega_{i<j}k(\lambda)\|^2=\sum\Psi_{m,n;k,l}(\lambda)^2$. This links the quadratic forms and the matrix $\Omega$ defined by \cite{CHANGDUHUANGYAO2024}. 

\begin{proof}[proof of lemma \ref{lemma:shortintervalexistoutside}]

Denote $\mathcal{M}=\mathbb{R}^{p\times d}\times \mathbb{R}^{q\times d}$. We firstly prove that $\mathcal{K}_d$ is semi-algebraic. Because the space $\mathrm{Z}=\{\Psi^{(A,B)}(\lambda)=0,A,B,\lambda\in\mathcal{M}\times \mathbb{R}^d\}$ is semi-algebraic, the intersection $\mathrm{Z}\cap [\mathcal{M}\times\left(\cup_\ell \mathbb{R}e_\ell\right)^{\complement}]=\mathcal{I}$ is semi-algebraic. By the Tarski–Seidenberg projection theorem \cite{BochnakCosteRoy1998}, under the projection $\pi:\mathcal{M}\times \mathbb{R}^d\to \mathcal{M}$, the image $\pi(\mathcal{I})$ is till semi-algebraic and by definition is the completion of $\mathcal{K}_d$ in $\mathcal{M}$. Hence $\mathcal{K}_d$ is semi-algebraic. 

Set $E=\mathbb{R}^{p\times d}\times \mathbb{R}^{q\times d}\setminus\mathcal K_d$. Since $\mathcal M$ and $\mathcal K_d$ are semi-algebraic, so is $E$. Moreover, $p_0\in\mathcal K_d\cap\partial_{\mathcal M}\mathcal K_d$
implies
\[
p_0\in\overline E\setminus E.
\]
By the Nash curve selection lemma~\cite[Proposition~8.1.13]{BochnakCosteRoy1998}, there exist $\varepsilon>0$ and a Nash arc \(\gamma(t)=(A(t),B(t)),\,t\in[0,\varepsilon)\), such that
\[
\gamma(0)=p_0,
\qquad
\gamma(t)\in E
\quad\text{for all }t\in(0,\varepsilon).
\]

For $t\neq0$, define
\[
\delta_A(t)=\frac{A(t)-A}{t},
\qquad
\delta_B(t)=\frac{B(t)-B}{t}.
\]
Because $A(t)$ and $B(t)$ are real analytic at $t=0$ and
$A(0)=A$, $B(0)=B$, these maps extend real analytically to
$t=0$ by setting \(\delta_A(0)=A'(0),\,\delta_B(0)=B'(0)\).
Consequently, \(\bigl(A+t\delta_A(t),\,B+t\delta_B(t)\bigr)=\gamma(t)\in E\,\text{ for all }t\in(0,\varepsilon)\). Finally, by taking the column normalization and change of variables, we can assume $\|\delta_A\|=\|\delta_B\|=1$.



\end{proof}

\begin{proof}[proof of lemma \ref{lemma:limitingbehaviorofzerolocusinboundary}]
Let $\delta_t$ be a family chosen above such that $p_{\delta_t}$ is semi-algebraic. 
    
    Hardt's semialgebraic triviality \cite{BochnakCosteRoy1998} ensures the existence of such $\zeta$. Actually, we can choose a $\zeta$ such that each connected component of $\mathrm{Z}_{0<t<\zeta}$ is locally constant (homeomorphic). 

    It is easy to see that, $\Psi^{p_{\delta}(t)}(\lambda)$ is a homogeneous polynomial system for all $t$, therefore, the corresponding zero set $\mathrm{Z}_t$ consists of lines through origin. Therefore, every continuous branch of zero locus in $\mathrm{Z}_t\setminus \mathrm{Z}_0$ which is simply connected can be assumed to be the union of $\mathbb{R}\cdot v_t$. Assume $v_t\neq 0$ for all t sufficiently small. We need to prove, $\Psi_0(\lambda\in \lim_{t\to 0}\mathbb{R}\cdot v_t)=0$. This is because $\Psi^{p_{\delta}(t)}(\lambda)$ is a continuous function over $(\lambda,t)$ and therefore we have $\lim_{t\to0}\Psi_t(v_t)=0=\Psi_0(\lim_{t\to 0}v_t)$. Then, we have $\lim_{t\to 0}\mathbb{R}\cdot v_t\subset \cup_{\ell}\mathbb{R}\cdot e_\ell$. Therefore, each connected component converges to some axis. Because each component is locally constant, it is of dimension one. This proved the result. 
\end{proof}

\begin{proof}[proof of proposition \ref{prop:splittingoffleadingtorankdeficient}]
    We expand $\Psi_t$ around $t=0$:  $\Psi_t=\Psi_0+\partial_t\Psi_0\cdot t+\operatorname{h.o.t}$ where $$\partial_t\Psi_{0,m,n;k,l}=\sum_{i<j\in[d]}\partial_t\psi(a^i_tb^{i\top}_t,a^jb^{j\top}_t)\cdot 2\lambda_i\lambda_j$$ and $\operatorname{h.o.t}$ are quadratic forms of $\lambda$ with diagonal part zero, and take values at zero locus: $0=\Psi_t(e_\ell+\omega_t)=\Psi_0(e_\ell+\omega_t)+\partial_t\Psi_0(e_\ell+\omega_t)\cdot t+o(t\cdot \|\omega_t\|).$ Because $\lim_{t\to0}\omega_t=0$ and $\partial_t\Psi_0$ is a quadratic form with $\partial_t\Psi_0(e_\ell)=0$, $\|\partial_t\Psi_0(e_\ell+\omega_t)\|\leq \|\sum_{i\neq \ell\in[d]}C\omega_i+\sum_{i<j}C\omega_i\omega_j\|= O(\|\omega_t\|)$, where $C$ is the upper bound of absolute values of coefficients of $\partial_t\Psi_0$, and $$\|\Psi_0(e_\ell+\omega_t)\|=\|\partial_t\Psi_0(e_\ell+\omega_t)\cdot t\|+o(t\cdot \|\omega_t\|)\leq O(t\|\omega_t\|)=o(\|\omega_t\|).$$ This reveals that every boundary point in $\mathcal{K}_d$ produces a quadratic form $\Psi$ such that $\exists \ell\in[d],\ \lim_{\epsilon\to0}\frac{\|\Psi_0(e_\ell+\epsilon\delta_\lambda)\|}{\|\epsilon\delta_\lambda\|}=0$ along some direction $\delta_\lambda\perp e_\ell$. This is just the directional derivative of $\Psi_0$ at $e_\ell$ along $\delta_\lambda\in e_{\ell}^{\perp}$. Expressing it in terms of the Jacobian matrix, we have $\delta\in\ker\partial_{\lambda}\Psi_0(e_\ell)$. 

    Note that the limit $\lim_{t\to 0}\frac{\|\Psi_0(e_\ell+w_t)\|_2}{\|w_t\|_2}=0$ can be expressed as a derivative along $w_t$: \begin{equation}\label{eq:Jacmat}\lim_{t\to0}\frac{\|\Psi_0(e_\ell+w_t)-\Psi_0(e_\ell)\|}{\|w_t\|}=\lim_{t\to0}\bigg\|\partial_{\lambda}\Psi_0(e_\ell)\cdot \frac{\omega_{t}}{\|\omega_t\|}\bigg\|=0
\,,\end{equation}
where $\partial_\lambda\Psi_0(e_\ell)\in\mathbb{R}^{p^2q^2\times d}$ denotes the Jacobian matrix of $\Psi_0$.
Since $w_t\perp e_\ell$ for $t\in(0,\epsilon)$, the $\ell$-th entry of $w_t$ is always zero. Thus, \eqref{eq:Jacmat} implies that the matrix 
$\partial_\lambda\Psi_0(e_\ell)|_{e_{\ell}^\perp}:=[\partial_{\lambda_i}\Psi_0(e_\ell)]_{i\neq \ell}\in \mathbb{R}^{p^2q^2\times (d-1)}$, 
which is the matrix $\partial_\lambda\Psi_0(e_\ell)$ with the $\ell$-th column deleted, is rank deficient. 
\end{proof}

\begin{proof}[proof of proposition \ref{prop:geometric meaning of rank deficiency}]
    We prove that every $Z\in\tilde{\mathcal{K}}_d$ such that \begin{equation}\label{eq:1ordervanishingcondition}
    \exists \ell\in[d]\ \text{and }x\in\mathbb{R}^d \text{ such that }\lim_{\epsilon\to 0}\frac{\|\Omega_{i<j}(Z)k(e_\ell+\epsilon x)\|^2}{\|k(e_\ell+\epsilon x)\|^2}=0
\end{equation}is a boundary point of $\tilde{\mathcal{K}}_d$.  
Let's assume $Z=B\odot A\in\tilde{\mathcal{K}}_d$ is a matrix such that $I(Z)\neq \mathbb{R}[\lambda]\langle\lambda_i\lambda_j:1\leq i<j\leq d\rangle$ and the equation \eqref{eq:1ordervanishingcondition} holds. 
Therefore, $$\|\Omega_{i<j}(Z)k(e_\ell+\epsilon x)\|=o\big(\epsilon\|x\|\big).$$ By equation \eqref{2minorquadraticequation}, we have \begin{equation}\label{eq:jacobianvanishforphi}
\begin{aligned}
&\psi_{m,n;k,l}^d\big(Z\cdot (e_\ell+\epsilon x)\big)=\big(\Omega_{i<j}(Z)k(e_\ell+\epsilon x)\big)_{m,n;k,l}=o(\epsilon\|x\|) \\
&\approx\psi_{m,n;k,l}^d\big(z_\ell\big)+ \operatorname{Jac}(\psi_{m,n;k,l}^d)_{z_\ell}\cdot \epsilon Z(x)\\&+\epsilon^2x^{\top}Z^{\top}\operatorname{Hess}(\psi^d_{m,n;k,l})_{z_\ell}\cdot Z x
\end{aligned}
\end{equation}holds for all quadruples $m,n\in[p];k,l\in[q]$ and sufficiently large $n$, where $\operatorname{Jac}(\psi^d)_{z_\ell}=\partial_{z_\ell}\psi^d(z_\ell)$ (is a matrix on $\mathbb{R}^{(pq)^2\times pq}$). Therefore, for sufficiently large $n$, we have $$Z\cdot (x)\in \ker\big(\operatorname{Jac}(\psi^d)_{z_\ell}\big),$$and 
\begin{equation}\label{eq:secondordernonvanishing}
    \big(\Omega_{i<j}(Z)k(e_\ell+\epsilon x)\big)_{m,n;k,l}\geq O(\epsilon^2\|x\|^2).
\end{equation}This proved equation \eqref{eq:1ordervanishingcondition} holds if and only if $$\operatorname{span}\{z_i:i\neq \ell\}\cap \ker\big(\operatorname{Jac}(\psi^d)_{z_\ell}\big)\neq \{0\}.$$Basic algebraic manipulation shows that $\ker\big(\operatorname{Jac}(\psi^d)_{z_\ell}\big)=b^{\ell}\otimes \mathbb{R}^p+\mathbb{R}^q\otimes a^\ell$. 

We then give a Cauchy sequence in $K_d\setminus\mathcal{K}_d$ with limit $(A,B)\in \mathcal{K}_d$. There is a linear combination $z(s)=\sum_{i\neq \ell}s_iz_i$ such that $z(s)=v\otimes a^{\ell}+b^{\ell}\otimes u$ where $u,v$ are not colinear with $a^{\ell},b^{\ell}$ respectively. This means that, $(b^{\ell}-\frac{1}{n}v)\otimes a^{\ell}+\frac{1}{n}z(s)=b^{\ell}\otimes(a^{\ell}+\frac{1}{n}u)$ and the sequence of matrices 
\begin{equation}\label{sequenceofboundary}
\{Z_n=(z_1,\cdots,z_\ell-\frac{1}{n}v\otimes a^{\ell},\cdots,z_d),n\in \mathbb{Z}_+\}\subset \tilde{K}_d\setminus\tilde{\mathcal{K}}_d,\lim_{n\to\infty}Z_n=Z,
\end{equation} is a Cauchy sequence in $\tilde{K}_d\setminus\tilde{\mathcal{K}}_d$ with limit $Z\in \tilde{\mathcal{K}}_d$. This prove $Z\in \tilde{\mathcal{K}}_d$ with the property \eqref{eq:1ordervanishingcondition} is a boundary point. Similarly, we have the following two convergent sequences 
\begin{align*}
    \{Z_n=(z_1,\cdots,[b_\ell+\frac{1}{n}s_1v]\otimes[ a^{\ell}+\frac{1}{n}s_2u],\cdots,z_d),n\in \mathbb{Z}_+\}; \\
    \{Z_n'=(z_1,\cdots,[b_\ell-\frac{1}{n}s_2v]\otimes[ a^{\ell}-\frac{1}{n}s_1u],\cdots,z_d),n\in \mathbb{Z}_+\}, 
\end{align*}
with the property: 
\begin{align*}
    [b_\ell+\frac{1}{n}s_1v]\otimes[ a^{\ell}+\frac{1}{n}s_2u]-\frac{1}{n}(s_1+s_2)z(s)=[b_\ell-\frac{1}{n}s_2v]\otimes[ a^{\ell}-\frac{1}{n}s_1u].  
\end{align*}
Therefore, $Z_n$ and $Z_n'$ represent the same model for every $s_1+s_2\neq 0$. 

We still need to prove that, the intersection $\operatorname{span}\{z_i:i\neq \ell\}\cap \ker\big(\operatorname{Jac}(\psi^d)_{z_\ell}\big)$ is just the kernel space of $\partial_\lambda\Psi(e_\ell)|_{e_\ell^{\perp}}=\Big[\partial_{\lambda_{i}}\Psi_{0}(e_\ell)
\Big]_{i\in[d]\backslash{\{\ell\}}}$. This is a basic result by algebraic manipulation. Because $\psi$ is a bilinear map and $\psi(z_\ell,z_\ell)=0$, the kernel of $\partial_\lambda\Psi(e_\ell)|_{e_\ell^\perp}$ is all the vectors $\lambda$ such that $\sum_j\psi(z_j,z_\ell)\lambda_j=\psi(\sum_j\lambda_jz_j,z_\ell)=0$ and therefore $\psi^d(z_\ell+c\sum_j\lambda_jz_j)=c^2\psi^d(\sum_j\lambda_jz_j)$ for all $c\in\mathbb{R}$. By taking the limit for $c\to 0$, the vector $\lambda$ satisfies $\lim_{c\to 0}\psi^d(z_\ell+c\sum_j\lambda_jz_j)=o(c)$, and therefore is the element in the intersection. This proves the proposition. 
\end{proof}

We then summarize the above Lemma \ref{lemma:shortintervalexistoutside}-\ref{lemma:limitingbehaviorofzerolocusinboundary} and Propositions \ref{prop:splittingoffleadingtorankdeficient}-\ref{prop:geometric meaning of rank deficiency} as the following theorem: 

\begin{theorem}\label{thm:mainthmforidcond}

If all $\partial_\lambda\Psi_0(e_\ell)|_{e_\ell^{\perp}}=\Big[\partial_{\lambda_{i}}\Psi_{0}(e_\ell)
\Big]_{i\in[d]\backslash{\{\ell\}}}$'s are full-rank: $\operatorname{rank}\partial_{\lambda}\Psi(e_{\ell})\big|_{e_{\ell}^{\perp}}=d-1,\forall\ell\in[d]$, then $(A,B)\in \operatorname{int}\mathcal{K}_d$. Otherwise, $\exists\ell\in[d],\operatorname{rank}\partial_{\lambda}\Psi(e_{\ell})\big|_{e_{\ell}^{\perp}}<d-1$ and $(A,B)\in{\mathcal{K}}_d\setminus \operatorname{int}({\mathcal{K}}_d)$ lives on the boundary. 
\end{theorem}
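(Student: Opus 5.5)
The theorem is a dichotomy for a point $p_0=(A,B)\in\mathcal{K}_d$. I would prove the first half by contraposition using Proposition~\ref{prop:splittingoffleadingtorankdeficient}, and the second half directly from Proposition~\ref{prop:geometric meaning of rank deficiency}. Throughout I use the identification $J_\ell(p_0)=\partial_\lambda\Psi_0(e_\ell)|_{e_\ell^\perp}$ from \eqref{eq:submatrixfori}.

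\emph{Full rank implies interior.} Suppose $\operatorname{rank}J_\ell(p_0)=d-1$ for every $\ell$, and suppose for contradiction that $p_0\notin\operatorname{int}\mathcal{K}_d$. Then $p_0$ lies in $\mathcal{K}_d\cap\partial\mathcal{K}_d$. Lemma~\ref{lemma:shortintervalexistoutside} supplies an analytic path $p_{\delta_t}(t)$ leaving $\mathcal{K}_d$. Lemma~\ref{lemma:limitingbehaviorofzerolocusinboundary} supplies a bifurcating branch $\mathbb{R}v_t\to\mathbb{R}e_\ell$ and the normalized direction $\omega_t$ of \eqref{eq:omega.path}. Proposition~\ref{prop:splittingoffleadingtorankdeficient} then gives a unit vector $\lim_{t\to0}\omega_t/\|\omega_t\|\in\ker J_\ell(p_0)$. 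This vector is nonzero and lies in $e_\ell^\perp$, so $J_\ell(p_0)$ has a nontrivial kernel on $e_\ell^\perp$. That contradicts full column rank $d-1$.

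The limit exists because the branch is semi-algebraic, so after passing to a subsequence if needed we may take it to exist. The step I expect to be the main obstacle is the first-order expansion in Proposition~\ref{prop:splittingoffleadingtorankdeficient}. There, the $O(t\|\omega_t\|)$ perturbation term must be shown to be $o(\|\omega_t\|)$, uniformly along the branch. I would make this rigorous by writing $\Psi_t(\lambda)=\Psi_0(\lambda)+tR_t(\lambda)$ with $R_t$ bounded quadratic forms whose value at $e_\ell$ vanishes. Then $\|R_t(e_\ell+\omega_t)\|\le C\|\omega_t\|$. Dividing $0=\Psi_t(e_\ell+\omega_t)$ by $\|\omega_t\|$ gives $J_\ell(p_0)\,\omega_t/\|\omega_t\|+O(\|\omega_t\|)+O(t)=0$. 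Here $\Psi_0(e_\ell+\omega)=J_\ell\omega+\Psi_0(\omega)$ is exact, since $\Psi_0(e_\ell)=0$ and $\Psi_0$ is quadratic. Letting $t\to0$ yields the kernel vector.

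\emph{Rank deficiency implies boundary.} If $\operatorname{rank}J_\ell(p_0)<d-1$ for some $\ell$, take $0\neq x\in\ker J_\ell(p_0)\subset e_\ell^\perp$. By bilinearity, $\psi(z_\ell,\sum_{i\neq\ell}x_iz_i)=0$. The kernel of the linearized rank-one condition at $z_\ell$ is $b^\ell\otimes\mathbb{R}^p+\mathbb{R}^q\otimes a^\ell$, so $z(x)=\sum_{i\neq\ell}x_iz_i=v\otimes a^\ell+b^\ell\otimes u$ for some $u,v$.

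Now apply Proposition~\ref{prop:geometric meaning of rank deficiency}. The points $p_0+t\delta(s_1,s_2)$ and $p_0+t\delta(-s_2,-s_1)$ with $s_1+s_2\neq0$ represent the same model. They are distinct parameter points, and they are not scaling/permutation equivalent because they differ only in column $\ell$ and not by a rescaling, for $t$ small. Hence both lie outside $\mathcal{K}_d$, and as $t\to0$ they converge to $p_0$. Thus $p_0\in\mathcal{K}_d\setminus\operatorname{int}\mathcal{K}_d$.

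One degenerate case needs checking: $u\parallel a^\ell$ and $v\parallel b^\ell$. Then $z(x)$ would be a multiple of $z_\ell$, contradicting the linear independence of the columns of $B\odot A$ (since $p_0\in K_d$). So the perturbation genuinely changes the column, which completes the dichotomy.
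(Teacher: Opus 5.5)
Your proposal is correct and follows the paper's route. The paper states this theorem as a summary of Lemmas~\ref{lemma:shortintervalexistoutside}--\ref{lemma:limitingbehaviorofzerolocusinboundary} and Propositions~\ref{prop:splittingoffleadingtorankdeficient}--\ref{prop:geometric meaning of rank deficiency}, exactly as you combine them: full rank gives interior by contraposition through the kernel vector $\lim_{t\to0}\omega_t/\|\omega_t\|$, and rank deficiency gives boundary through the equivalent perturbations $p_{\delta(s_1,s_2)}(t)$ and $p_{\delta(-s_2,-s_1)}(t)$. Your exact identity $\Psi_0(e_\ell+\omega)=J_\ell\omega+\Psi_0(\omega)$ and your check that the two perturbed $\ell$-th columns are not proportional when $s_1+s_2\neq0$ make explicit two steps the paper only asserts.
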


In the proof of the consistency and convergence rate, giving a quantity description for a point in $\mathcal{K}_d$ to measure how close it is to the boundary is also important. To this purpose, we give a lower bound which is an analog of the {\L}ojasiewicz inequality (\cite{Lojasiewicz1965EnsemblesS,globallojasiewicsineq92,BochnakCosteRoy1998,Kurdyka2016}) with the exponent {\L}ojasiewicz $w$: 

\begin{theorem}\label{thm:mainthm2foridcond}

    For every element $(A,B)\in{\mathcal{K}}_d$, there exists a positive integer $w$ and a positive degree-$4(w-1)$ polynomial $C(\lambda)>0$ such that 
\begin{equation}\label{ineq:mnormnondegeneracy}
\frac{\sum_{m,n;k,l}\big|\Psi_{m,n;k,l}(\lambda)\big|^2}{\big[\sum_{i<j}\big|\lambda_i\lambda_j\big|^2\big]^{w}} \ge \frac{1}{C(\lambda)}, \quad \forall \lambda \in \mathbb{R}^d\setminus \cup _{\ell\in[d]}\mathbb{R}\cdot e_{\ell}. 
\end{equation}
Moreover, we have $w\leq 2$. In particular, $w=1$ if and only if $(A,B)\in\operatorname{int}({\mathcal{K}}_d)$; $w=2$ if and only if $(A,B)\in{\mathcal{K}}_d\setminus \operatorname{int}({\mathcal{K}}_d)$. For ease of exposition, we say that a parameter $(A,B)$ (or model) is \textbf{\(w\)-nondegenerate} if the inequality \eqref{ineq:mnormnondegeneracy} holds (for that \(w\)).

\end{theorem}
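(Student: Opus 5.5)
Write $F(\lambda)=\sum_{m,n;k,l}\Psi_{m,n;k,l}(\lambda)^2=\|\Omega_{i<j}(Z)k(\lambda)\|^2$ and $G(\lambda)=\sum_{i<j}\lambda_i^2\lambda_j^2=\|k(\lambda)\|^2$. Both are homogeneous of degree four. By Proposition~\ref{prop:algcond}, $(A,B)\in\mathcal{K}_d$ means exactly that $F$ and $G$ have the same real zero set, namely the union of the coordinate axes. The plan is a local \L ojasiewicz-type comparison of $F$ and $G^w$ on the unit sphere $S^{d-1}$, followed by an extension to all of $\mathbb{R}^d$ by homogeneity.

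\textbf{Step 1: reduction to neighbourhoods of the poles.} On the compact set $S^{d-1}\setminus\bigcup_\ell B(\pm e_\ell,\eta)$ both $F$ and $G$ are continuous and strictly positive. Hence $F\ge c\,G^w$ there for any $w$. The inequality therefore only has to be established near each pole $\pm e_\ell$.

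\textbf{Step 2: local coordinates at a pole.} Write $\lambda=e_\ell+\omega$ with $\omega\perp e_\ell$ small. Then $G(\lambda)=\|\omega\|^2+O(\|\omega\|^4)$. Using bilinearity and $\psi(z_\ell,z_\ell)=0$ (Lemma~\ref{lemma:Omegalambda}), one gets
\[
\textstyle\Psi(e_\ell+\omega)=2J_\ell(p)\,\omega+\Psi_0(\omega),
\]
where $J_\ell$ is the Jacobian block in \eqref{eq:submatrixfori} and $\Psi_0(\omega)$ is the quadratic part.

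\textbf{Interior case, $w=1$.} By Theorem~\ref{thm:mainthmforidcond}, every $J_\ell$ has full column rank $d-1$. Then $\|\Psi(e_\ell+\omega)\|\ge 2\sigma_{\min}(J_\ell)\|\omega\|-C\|\omega\|^2\gtrsim\|\omega\|$. This gives $F\gtrsim G$ near each pole, so the inequality holds with $w=1$ and a constant $C$.

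Conversely, if $J_\ell$ is rank deficient, take $x\in\ker J_\ell$ and set $\omega=\epsilon x$. Then $F=O(\epsilon^4)$ while $G\asymp\epsilon^2$, so the $w=1$ inequality fails. This proves that $w=1$ holds if and only if $(A,B)\in\operatorname{int}\mathcal{K}_d$.

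\textbf{Boundary case, $w=2$.} Decompose $\omega=\omega_K+\omega_\perp$, with $\omega_K\in\ker J_\ell$ and $\omega_\perp$ in its orthogonal complement. Then
\[
\textstyle\|\Psi\|\ge c\|\omega_\perp\|-\|\Psi_0(\omega)\|,
\]
and on $\ker J_\ell$ the map $\Psi_0$ restricted to $\omega_K$ must be nondegenerate. The reason is that $\Psi(e_\ell+\omega_K)=\Psi_0(\omega_K)$ is a homogeneous quadratic vanishing only at $\omega_K=0$: a common real zero there would give an extra rank-one direction $e_\ell+\omega_K$, contradicting $(A,B)\in\mathcal{K}_d$. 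This is the same observation as \eqref{eq:secondordernonvanishing}. Consequently $\|\Psi_0(\omega_K)\|\ge c\|\omega_K\|^2$.

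Splitting into the two regimes $\|\omega_\perp\|\ge M\|\omega\|^2$ and $\|\omega_\perp\|<M\|\omega\|^2$, one obtains $\|\Psi\|\gtrsim\|\omega\|^2$ in both. Hence $F\gtrsim\|\omega\|^4\asymp G^2$, which gives the bound with $w=2$. Combined with the converse in the interior case, this shows that $w=2$ holds exactly on $\mathcal{K}_d\setminus\operatorname{int}\mathcal{K}_d$, and in all cases $w\le2$.

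\textbf{Extension off the sphere.} Since $F$ has degree $4$ and $G^w$ has degree $4w$, homogeneity gives $F(\lambda)/G(\lambda)^w=\|\lambda\|^{4-4w}\,F(\hat\lambda)/G(\hat\lambda)^w$ with $\hat\lambda=\lambda/\|\lambda\|$. Taking $C(\lambda)=c^{-1}\|\lambda\|^{4(w-1)}$, which is a positive polynomial of degree $4(w-1)$ when $w\in\{1,2\}$, yields \eqref{ineq:mnormnondegeneracy}.

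\textbf{Main obstacle.} The delicate step is the $w=2$ regime split. One must control the cross terms in $\Psi_0(\omega_K+\omega_\perp)$, which are of order $\|\omega_K\|\|\omega_\perp\|$, and then choose $M$ large enough that they are absorbed by $\|\omega_K\|^2$ in the second regime, while the linear term $c\|\omega_\perp\|$ dominates in the first.
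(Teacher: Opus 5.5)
Your reduction to neighbourhoods of the poles, the homogeneity extension, the interior bound, and the converse (a kernel vector of $J_\ell$ gives $F=O(\epsilon^4)$ against $G\asymp\epsilon^2$) are sound and amount to the paper's Step~2. Your compactness argument away from the poles is cleaner than the paper's distance argument, and you avoid its Nullstellensatz Step~1.

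The gap is in the boundary case. In the second regime you discard the linear term. With $\omega=\omega_K+\omega_\perp$,
\[
\Psi(e_\ell+\omega)=2J_\ell\omega_\perp+\Psi_0(\omega_K)+O\bigl(\|\omega\|\,\|\omega_\perp\|\bigr).
\]
When $\|\omega_\perp\|<M\|\omega\|^2$, the term $2J_\ell\omega_\perp$ can be as large as $2\|J_\ell\|M\|\omega\|^2$. That is the same order as your lower bound $c\|\omega_K\|^2$, and it grows with the $M$ that the first regime forces you to take large. It can genuinely cancel $\Psi_0(\omega_K)$. If $\Psi_0(x)=-2J_\ell y$ for some $0\neq x\in\ker J_\ell$, then along the curved path $\omega(\epsilon)=\epsilon x+\epsilon^2y$ one has $\Psi(e_\ell+\omega(\epsilon))=O(\epsilon^3)$. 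Hence $F/G^2=O(\epsilon^2)\to0$, and the $w=2$ inequality fails there.

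The correct local criterion is that $P\Psi_0$ has no nonzero real root on $\ker J_\ell$, where $P$ is the orthogonal projector onto $(\operatorname{Im}J_\ell)^\perp$. Under that hypothesis your two-regime split does go through, once you apply $P$ in the second regime to kill the linear term.

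Identifiability, however, only gives you $\Psi_0(x)\neq0$. Membership in $\mathcal{K}_d$ forbids an \emph{exact} extra zero $e_\ell+x$, whereas $\Psi_0(x)\in\operatorname{Im}J_\ell$ only produces third-order vanishing along a curve. In matrix language, write $m(x)=a^\ell v^\top+u\,b^{\ell\top}$. The bad case is $uv^\top\in\mathrm{M}(A,B)+\{a^\ell y^\top+z\,b^{\ell\top}\}$: then $\mathrm{M}(A,B)$ osculates the rank-one variety to second order along $(a^\ell+tu)(b^\ell+tv)^\top$ without containing a further rank-one element. Nothing in your argument excludes this, so $w\le2$, and hence ``$w=2$ on the boundary'', is not established.

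You cannot borrow the missing step from the paper. Its Step~3 rests on the same straight-ray observation \eqref{eq:secondordernonvanishing} along $e_\ell+\epsilon x$, which cannot see curved approaches. A complete proof must do one of two things. It can derive the projected nondegeneracy of $P\Psi_0$ on $\ker J_\ell$ from $(A,B)\in\mathcal{K}_d$. Alternatively, it can settle for an unspecified finite $w$, which does follow from the semialgebraic {\L}ojasiewicz inequality on the sphere or from the paper's Step~1.
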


\begin{proof}

We first prove that the equation \eqref{eq:sqrtideal} implies a {\L}ojasiewicz type inequality with {\L}ojasiewicz exponent $w$. 
Second, in Step 2, we prove that, when $w=1$, all Jacobian matrices $J_\ell$ are of full rank. Moreover, this is an equivalence. In this proof, we take $\tilde{\Omega}=\operatorname{diag}(\Big[\partial_{\lambda_{i}}\Psi(e_\ell)
\Big]_{i\in[d]\backslash{\{\ell\}}},\ell\in[d])$. 
Finally, we prove that the {\L}ojasiewicz exponent $w\leq 2$ and $w=2$ if and only if the corresponding point $(A,B)\in\mathcal{K}_d$ lives on the boundary in Step 3.

    \medskip
    \noindent
\hypertarget{step:main:1}{\textbf{Step 1:}}We firstly prove $\sqrt[\mathbb{R}]{I(Z)}=\mathbb{R}[\lambda]\langle\lambda_i\lambda_j:1\leq i<j\leq d\rangle$ implies that, we can find a positive number $w\in\mathbb{Z}_+$ and a positive polynomial $C(\lambda)$ such that \begin{equation}\label{eq:analyticversioninequality}\frac{\|\Omega_{i<j}(Z)k(\lambda)\|^2}{\|k(\lambda)\|^{2w}}\geq\frac{1}{C(\lambda)},\quad \forall\lambda\in\mathbb{R}^d\setminus\cup_{\ell\in[d]}\mathbb{R}\cdot e_{\ell}.\end{equation}

If $I(Z)=\mathbb{R}[\lambda]\langle\lambda_i\lambda_j:1\leq i<j\leq d\rangle$, Proposition \ref{prop:algcond} shows that $\Omega_{i<j}(Z)$ is full-rank and its spectral norm $c$ is positive, then we take $w=1$ and $C(\lambda)=1/c$. 

If $I(Z)\neq \mathbb{R}[\lambda]\langle\lambda_i\lambda_j:1\leq i<j\leq d\rangle$, we take $$n=\min_{\mathbb{Z}_+}\Big\{ n \in \mathbb{Z}_{+} :
(\lambda_i \lambda_j)^{2n} \in I(Z) 
\text{ for all } 1 \le i < j \le d \Big\}.$$The definition of real radical ensures its existence and boundedness. Therefore, there exists $s\leq \frac{d(d-1)}{2}$ linearly independent quadratic forms $\{\Psi_1,\cdots,\Psi_s\}\subset \{\Psi_{m,n;k,l}: m,n\in[p],k,l\in[q]\}$ and polynomials $f_l^{ij},g_l^r\in\mathbb{R}[\lambda],l\in[s],r\in\mathbb{Z}_+,i<j\in[d]$ such that 
\begin{equation}
    (\lambda_i\lambda_j)^{2n}=\sum_{l=1}^{s}f_l^{ij}\Psi_l-\sum_{r\in\mathbb{Z}_+}(g_l^r)^2,\quad \forall i<j\in[d]. 
\end{equation}
By H\"{o}lder inequality, we have \begin{equation}
\begin{aligned}&(\lambda_i\lambda_j)^{2n}\leq \sum_{l=1}^{s}|f_l^{ij}||\Psi_l|\leq [\sum_{l=1}^s(f_l^{ij})^2+c]^{\frac{1}{2}}\cdot [\sum_{l=1}^s(\Psi_{l})^2]^{\frac{1}{2}},\forall c> 0\\\Rightarrow&\{\sum_{l,i<j}(f_l^{ij})^2+c\}^{-1}\leq \frac{|\sum_{l=1}^s(\Psi_{l})^2|}{\sum_{i<j\in[d]}(\lambda_i\lambda_j)^{4n}}\\&\quad\quad\quad\quad\quad\quad\quad\quad\leq\big( \frac{d(d-1)}{2}\big)^{2n-1}\frac{|\sum_{l=1}^s(\Psi_{l})^2|}{[\sum_{i<j\in[d]}(\lambda_i\lambda_j)^2]^{2n}}.\end{aligned}  \end{equation}
Therefore, we can take $w=2n$ and $C(\lambda)=[\frac{d(d-1)}{2}]^{-2n+1}\{\sum_{l,i<j}(f_l^{ij})^2+c\}^{-1}$.

\hypertarget{step:main:2}{\textbf{Step 2:}}Then we prove that the inequality
\begin{equation}\label{inequalitywhenm=1}
\frac{\left|\sum_{l=1}^s(\Psi_l)^2\right|}
{\sum_{i<j}(\lambda_i\lambda_j)^2}
\geq c>0
\end{equation}
holds if and only if the matrix $\tilde{\Omega}(B\odot A)$ is of full rank.

The inequality \eqref{inequalitywhenm=1} implies two facts:

\begin{enumerate}
    \item For every coordinate axis $\mathbb{R}\cdot e_{\ell}$,
    \[
    \inf_{\lambda=e_{\ell}+\epsilon x,\ \epsilon\to0}
    \frac{\left|\sum_{l=1}^s(\Psi_l)^2\right|}
    {\sum_{i\neq \ell}\epsilon^2(x_i)^2+O(\epsilon^4)}
    \geq c,
    \quad \forall x\in \mathbb{R}^d,\ \|x\|=1,
    \]
    i.e., the ratio is uniformly bounded from below along all tangent directions.

    \item Whenever $\|k(\lambda)\|\geq \epsilon>0$, the operator
    ${\Omega}(B\odot A)$ has strictly positive minimal singular value on $k(\lambda)$.
\end{enumerate}

Let $\lambda=e_{\ell}+\epsilon x$ with $\|x\|=1$. 
Expanding $\Psi_l$ around $e_{\ell}$ yields
\begin{align*}
\Psi_l(e_{\ell}+\epsilon x)
&=
\sum_{j\neq \ell}\psi_{\ell,j}^l\,\epsilon x_j
+
\sum_{i<j}\psi_{i,j}^l\,\epsilon^2 x_i x_j \\
&=
\epsilon\,J_\ell\cdot x_{-\ell}
+O(\epsilon^2),
\end{align*}
where $x_{-\ell}=(x_j)_{j\neq \ell}$.

Therefore,
\[
\sum_{l=1}^s (\Psi_l)^2
=
\epsilon^2\,|J_\ell x_{-\ell}|^2
+
O(\epsilon^3).
\]

On the other hand,
\[
\sum_{i<j}(\lambda_i\lambda_j)^2
=
\epsilon^2 \|x_{-\ell}\|^2
+
O(\epsilon^4).
\]

Hence,
\[
\lim_{\epsilon\to 0}
\frac{\epsilon^2|J_\ell x_{-\ell}|^2 + O(\epsilon^3)}
{\epsilon^2\|x_{-\ell}\|^2 + O(\epsilon^4)}
=
\frac{|J_\ell x_{-\ell}|^2}{\|x_{-\ell}\|^2}.
\]

The lower bound in \eqref{inequalitywhenm=1} therefore implies
\[
|J_\ell x_{-\ell}|
\geq c_0 \|x_{-\ell}\|,
\quad \forall x,\ \forall \ell,
\]
which is equivalent to the smallest singular value of 
$\tilde{\Omega}(B\odot A)$ being strictly positive.
In particular, $\tilde{\Omega}(B\odot A)$ is of full rank.

Assume $B\odot A\in \tilde{\mathcal{K}}_d$, which implies
\[
\ker(\Omega_{i<j})
\cap
\operatorname{Im}_{\mathbb{R}^d}(k)
=
\{0\}.
\]

Fix $\epsilon>0$ and define
\[
k_\epsilon
=
\{v\in \operatorname{Im}(k)\subset \mathbb{R}^{\frac{d(d-1)}{2}}
:\ \|v\|\geq \epsilon\}.
\]

Then:

\begin{enumerate}
    \item $k_\epsilon\cap \ker(\Omega_{i<j})=\emptyset$,
    \item both $k_\epsilon$ and $\ker(\Omega_{i<j})$ are closed sets.
\end{enumerate}

Since $k_\epsilon$ is closed and bounded away from zero, and
$\ker(\Omega_{i<j})$ is a closed linear subspace, their distance is strictly positive:
\[
d_\epsilon
=
\operatorname{dist}(k_\epsilon,\ker(\Omega_{i<j}))
>0.
\]

Let $P_{\ker^\perp}$ denote the orthogonal projection onto the orthogonal complement of $\ker(\Omega_{i<j})$. 
For any $v\in k_\epsilon$,
\[
\|P_{\ker^\perp}v\|
\geq d_\epsilon\|v\|.
\]

Let $\alpha>0$ denote the smallest nonzero singular value of $\Omega_{i<j}$. Then
\[
\|\Omega_{i<j} v\|^2
=
\|\Omega P_{\ker^\perp}v\|^2
\geq
\alpha^2\,\|P_{\ker^\perp}v\|^2
\geq
\alpha^2 d_\epsilon^2\|v\|^2.
\]

Therefore,
\[
\inf_{v\in k_\epsilon}
\frac{\|\Omega_{i<j} v\|^2}{\|v\|^2}
\geq
\alpha^2 d_\epsilon^2
>0,
\]
which establishes the desired uniform lower bound. 
This proved that, the condition that $\tilde{\Omega}$ is full rank and $Z\in\tilde{\mathcal{K}}_d$ implies the inequality \eqref{inequalitywhenm=1}. 

\hypertarget{step:main:3}{\textbf{Step 3:}}Regarding the case of $w>1$, we prove that $w$ can only be $2$. This is easy, because the degrees of $\psi$ and $\Psi$ are both $2$ and the highest vanishing degree of $\psi^d$ in equation \eqref{eq:jacobianvanishforphi} is at most $2$ (by equation \eqref{eq:secondordernonvanishing}). 


\end{proof}

\begin{proposition}\label{prop:polardecomposition}
Suppose that
\(p^{-1/2}\|\widehat A-A\|_F=o_p(1)\), and that the smallest positive eigenvalue of $p^{-1}A^\top A$ is uniformly bounded away from zero, $\widehat A$ is full-column-rank. Then there exist compatible decompositions
\(A=A_0A_s,\,\,\widehat A=\widehat A_0\widehat A_s\), where
\(A_0,\widehat A_0\in\mathbb R^{p\times d}\), \(A_0^\top A_0=\widehat A_0^\top\widehat A_0=I_d\) and $A_s,\widehat A_s\in\mathbb R^{d\times d}$, such that \(p^{-1/2}\|\widehat A_s-A_s\|_F=o_p(1)\).

\end{proposition}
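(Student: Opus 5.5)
The plan is to build $\widehat A_0$ from $\widehat A$ alone and then define $A_0$ as a small rotation of $\widehat A_0$. The rotation is chosen so that $\operatorname{col}(A_0)$ contains $\operatorname{col}(A)$. This matters because $A$ may be rank deficient ($\operatorname{rank}A=d_1\le d$), so a polar factor of $A$ itself need not exist. Once $\operatorname{col}(A)\subset\operatorname{col}(A_0)$, both $A=A_0A_0^\top A$ and $\widehat A=\widehat A_0\widehat A_0^\top\widehat A$ hold. We then take $A_s:=A_0^\top A$ and $\widehat A_s:=\widehat A_0^\top\widehat A$, and the only remaining task is to show that the rotation is $o_p(1)$-close to the identity.

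\textbf{Step 1 (the estimated side).} Since $\widehat A$ has full column rank, set $\widehat A_0:=\widehat A(\widehat A^\top\widehat A)^{-1/2}$. This is the polar factor, with $\widehat A_0^\top\widehat A_0=I_d$ and $\widehat A_0\widehat A_0^\top\widehat A=\widehat A$. Hence $\widehat A_s=(\widehat A^\top\widehat A)^{1/2}$ is positive definite.

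\textbf{Step 2 (subspace perturbation).} Let $U_1\in\mathbb R^{p\times d_1}$ be an orthonormal basis of $\operatorname{col}(A)$, and let $\widehat U_1$ be the top-$d_1$ left singular vectors of $\widehat A$. By assumption $\sigma_{d_1}(A)\ge\sqrt{cp}$ while $\sigma_{d_1+1}(A)=0$. Wedin's $\sin\Theta$ theorem then gives
\[
\bigl\|\sin\Theta(\widehat U_1,U_1)\bigr\|\le \frac{C\|\widehat A-A\|_F}{\sqrt{cp}}=o_p(1),
\]
after absorbing the $o_p(\sqrt p)$ perturbation of $\sigma_{d_1}(\widehat A)$ via Weyl's inequality. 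Let $R\in O(p)$ be the direct rotation of $\operatorname{col}(\widehat U_1)$ onto $\operatorname{col}(U_1)$ (Davis--Kahan). It acts as the identity outside $\operatorname{col}(\widehat U_1)+\operatorname{col}(U_1)$ and satisfies $\|R-I_p\|_{\op}\le C\|\sin\Theta\|=o_p(1)$. This is the step I expect to be the main obstacle: one must make sure the eigengap is the relevant one, namely between the $d_1$-th and the $(d_1+1)$-th singular values of $A$, which is exactly what the ``smallest positive eigenvalue'' hypothesis provides. Wedin's theorem should then be applied with that gap and not with $\sigma_d(A)$, which may be zero.

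\textbf{Step 3 (construction and bound).} Set $A_0:=R\widehat A_0$, so that $A_0^\top A_0=I_d$. Moreover $\operatorname{col}(A_0)=R\operatorname{col}(\widehat A)\supset R\operatorname{col}(\widehat U_1)=\operatorname{col}(U_1)=\operatorname{col}(A)$, so $A=A_0A_s$ with $A_s=A_0^\top A$. Then
\[
\widehat A_s-A_s=\widehat A_0^\top(\widehat A-A)+\widehat A_0^\top(I_p-R)^\top A.
\]
The first term has Frobenius norm at most $\|\widehat A-A\|_F=o_p(\sqrt p)$. The second is bounded by $\sqrt d\,\|I_p-R\|_{\op}\,\|A\|_{\op}$, and $\|A\|_{\op}\le\|A\|_F=\sqrt{pd}$ by the column normalization in Assumption~\ref{assump:boundedsetting}, so it is $o_p(1)\cdot O(\sqrt p)$. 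Dividing by $\sqrt p$ yields $p^{-1/2}\|\widehat A_s-A_s\|_F=o_p(1)$. The same construction applies verbatim to $B$.
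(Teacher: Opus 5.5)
Your argument is correct, on the event of probability tending to one where $\|\sin\Theta\|<1$ so that the direct rotation exists. However, it takes a different and more elaborate route than the paper's.

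\textbf{The paper's route.} It takes a polar decomposition of each matrix. From thin SVDs $A=U\Sigma V^\top$ and $\widehat A=\widehat U\widehat\Sigma\widehat V^\top$, completing $U$ by arbitrary orthonormal columns when $\operatorname{rank}A<d$, it sets $A_0=UV^\top$, $A_s=(A^\top A)^{1/2}$, and likewise for $\widehat A$. It then applies the Araki--Yamagami-type inequality $\|(\widehat A^\top\widehat A)^{1/2}-(A^\top A)^{1/2}\|_F\le\sqrt2\,\|\widehat A-A\|_F$.

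\textbf{A false motivating claim.} You say a polar factor of a rank-deficient $A$ ``need not exist.'' In fact $A=W(A^\top A)^{1/2}$ with $W^\top W=I_d$ always exists for $p\ge d$. What fails under rank deficiency is uniqueness and continuity of $W$, not existence.

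\textbf{What each route needs.} The paper's argument uses none of the eigengap hypothesis, the full column rank of $\widehat A$, or an upper bound on $\|A\|_{\op}$. Yours needs the eigengap for Wedin's theorem. It also imports the column normalization of Assumption~\ref{assump:boundedsetting}, which is not among the proposition's hypotheses, to bound $\|A\|_{\op}/\sqrt p$. Your second term is governed by the condition number $\sigma_1(A)/\sigma_{d_1}(A)$, so some such bound is unavoidable on your route.

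\textbf{What your route buys.} You get a genuinely compatible pair of orthonormal factors: $\|A_0-\widehat A_0\|_{\op}\le\|R-I_p\|_{\op}=o_p(1)$ and $\operatorname{col}(A)\subset\operatorname{col}(A_0)$. The paper's $A_0$ contains arbitrary completion columns and need not be close to $\widehat A_0$ when $A$ is rank deficient. So for the stated conclusion the paper's one-line argument is preferable. If a later argument needs the frames themselves to be close, your construction gives the stronger statement.
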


\begin{proof}
Take full singular value decompositions
\[
    A=U\Sigma V^\top,
    \qquad
    \widehat A=\widehat U\widehat\Sigma\widehat V^\top,
\]
where $U,\widehat U\in\mathbb R^{p\times d}$ have orthonormal
columns and $V,\widehat V\in\mathbb R^{d\times d}$ are orthogonal.
If $\operatorname{rank}(A)=r<d$, the last $d-r$ columns of $U$ may be chosen as
arbitrary orthonormal completions; the same convention is used for
$\widehat U$ whenever needed.

Define
\[
    A_0=UV^\top,
    \qquad
    A_s=V\Sigma V^\top=(A^\top A)^{1/2},
\]
and
\[
    \widehat A_0=\widehat U\widehat V^\top,
    \qquad
    \widehat A_s
    =
    \widehat V\widehat\Sigma\widehat V^\top
    =
    (\widehat A^\top\widehat A)^{1/2}.
\]
Then
\[
    A_0^\top A_0
    =
    \widehat A_0^\top\widehat A_0
    =
    I_d,
\]
and
\[
    A=A_0A_s,
    \qquad
    \widehat A=\widehat A_0\widehat A_s.
\]

Moreover, the standard Hilbert--Schmidt perturbation inequality for
matrix absolute values yields
\[
    \|\widehat A_s-A_s\|_F
    =
    \left\|
        (\widehat A^\top\widehat A)^{1/2}
        -
        (A^\top A)^{1/2}
    \right\|_F
    \leq
    \sqrt{2}\,
    \|\widehat A-A\|_F.
\]
Consequently,
\[
    p^{-1/2}\|\widehat A_s-A_s\|_F
    \leq
    \sqrt{2}\,
    p^{-1/2}\|\widehat A-A\|_F
    =
    o_p(1).
\]
This proves the result.
\end{proof}

\section{Proof of consistency and convergence rate}

In this proof, we do not distinguish between the true positive definite matrix $M_x^*$ and the empirical one. By assumption \ref{assump:full-ranklatentfactor}, we know the error is $O_p(1/\sqrt{T})$, the consequence of the $\frac{1}{\sqrt{T}}$ convergence rate is stable under this simplification. For clarity, we restate below several notations and key equalities that will be repeatedly used throughout the subsequent proofs. Let's denote 
\begin{align}
    G_H =& (M_x^{-1}+\sum_jb_jA^{\top}\Sigma_{ec1}^{-1}Ab_j)^{-1},\\
    H=&(Z^{\top}\Sigma_e^{-1}Z)^{-1}=(\sum_jb_jA^{\top}\Sigma_{ej}^{-1}Ab_j)^{-1}, \\
    \widehat\Sigma_e=&\operatorname{diag}(\hat{\sigma}_{1,1},
    \cdots,\hat{\sigma}_{i,j},\cdots,\widehat{\sigma}_{p,q}),\\
    \mathcal{Z}=&(\widehat{Z}-Z^*)\widehat{\Sigma}_e^{-1}\widehat{Z}\widehat{H}. 
\end{align}
Let $M+N$ and $N$ be two invertible matrices, then the equation \begin{align}
    (M+N)^{-1}=N^{-1}-N^{-1}M(M+N)^{-1},\label{eq:initialinverse}
\end{align}
holds, and therefore $H=G_H(I_{d}-M_x^{-1}G_H)^{-1}$. For $\Sigma_y=ZM_xZ^{\top}+\Sigma_e$ with $Z$ of full column rank, its inverse can be expressed as \begin{align}
    {\Sigma}_y^{-1} =& \Sigma_e^{-1}-\big[\Sigma_{ei}^{-1}Ab_iG_Hb_jA^{\top}\Sigma_{ej}^{-1}\big]_{i,j}, \label{eq:syinverse}\end{align} 
    and therefore \begin{align}
    Z^{\top}{\Sigma}_{e}^{-1} =& M_x^{-1}G_HZ^{\top}\Sigma_e^{-1}. 
\end{align}

The derivative of $\mathcal{L}_T$ is: 
Let \(b_j\) be the diagonal matrix formed from the \(j\)-th row of \(B\), \(Z=B\odot A\), and \(G_H=(M_x^{-1}+Z^\top\Sigma_e^{-1}Z)^{-1}\). We write the first-order conditions as:
\begin{subequations}\label{gradconds}
\begin{align}
\nabla_A{\mathcal L}_T:
&\quad
\sum_{j=1}^{q}
\Sigma_{e j}^{-1}(\Sigma_y-\widehat M_y)\Sigma_e^{-1}ZG_H b_j=0\,,
\label{gradcondforCP-A}\\
\nabla_{b_j}{\mathcal L}_T:
&\quad
\operatorname{diag}\!\left(
A^\top\Sigma_{e j}^{-1}(\Sigma_y-\widehat M_y)\Sigma_e^{-1}ZG_H
\right)=0\,,
\quad j\in[q]\,,
\label{gradcondforCP-B}\\
\nabla_{M_x}\mathcal L_T:
&\quad
Z^\top\Sigma_e^{-1}(\Sigma_y-\widehat M_y)\Sigma_e^{-1}Z=0\,,
\label{gradcondforCP-X}\\
\nabla_{\Sigma_e}\widetilde{\mathcal L}_T:
&\quad
\operatorname{diag}\!\left(
\Sigma_y^{-1}(\Sigma_y-\widehat M_y)\Sigma_y^{-1}
\right)=0\,.
\label{gradcondforCP-e}
\end{align}
\end{subequations}
Here, \(\Sigma_{ej}^{-1}\) denotes the submatrix consisting of the
\(((j-1)p+1)\)-th to \((jp)\)-th rows of \(\Sigma_e^{-1}\). 

\begin{proof}[proof of Theorem \ref{thmforcpconvrate}]
    The centered likelihood function can be written as 
    $$\mathcal{L}_T=\bar{\mathcal{L}}(\theta)+R(\theta),$$
    with $$\bar{\mathcal{L}}(\theta)=-\frac{1}{2pq}\ln|\Sigma_{y}|-\frac{1}{2pq}\operatorname{tr}(\Sigma_y^*\Sigma_y^{-1})+\frac{1}{2}+\frac{1}{2pq}\ln|\Sigma_y^*|,$$ and $$R(\theta)=-\frac{1}{2pq}\operatorname{tr}\big[(M_y-\Sigma_y^*)\Sigma_y^{-1}\big].$$

    \textbf{Step 1: }We firstly prove the error term $\mathcal{E}^2\coloneqq\frac{1}{pq}\operatorname{tr}[(\hat\Sigma_{e}\Sigma_{e}^{*-1}-I_{pq})^2]=o_p(1)$. 
    The lemma \ref{lemmaRformatrixfactor} implies $R(\theta)=O_p(T^{-\frac{1}{2}})$ uniformly in $\Theta$, and therefore $|R(\theta^*)-R(\hat\theta)|=O_p(T^{-\frac{1}{2}})$. Since $\hat{\theta}$ maximizes $\mathcal{L}_T(\theta)$ and $\theta^*$ maximizes $\bar{\mathcal{L}}(\theta)$, $0=\bar{\mathcal{L}}(\theta^*)>\bar{\mathcal{L}}(\hat\theta)>\bar{\mathcal{L}}(\theta^*)+R(\theta^*)-R(\hat\theta)>\bar{\mathcal{L}}(\theta^*)-O_p(T^{-\frac{1}{2}})$. Hence $0>\bar{\mathcal{L}}(\hat{\theta})=O_p(T^{-\frac{1}{2}})$. 

    Noticing $|\Sigma_y|=|\Sigma_{e}|\cdot |I_{d}+M_xZ^{\top}\Sigma_{e}^{-1}Z|$ and $|I_{d}+M_xZ^{\top}\Sigma_{e}^{-1}Z|=O_p(pq)$, we have \begin{equation}\label{eq:errorlnSigmay}\frac{1}{2pq}\ln|\Sigma_{y}^*|-\frac{1}{2pq}\ln|\Sigma_y|=\frac{1}{2pq}(\ln|\Sigma_{e}^*|-\ln|\Sigma_{e}|)+O_p(\frac{\ln(pq)}{pq}).\end{equation}
    For the term $\operatorname{tr}\big(\Sigma_{y}^*\Sigma_y^{-1}\big)$, we use equation \eqref{eq:syinverse} and derive \begin{align*}\Sigma_y^*\Sigma_{y}^{-1}&=Z^*M_x^* Z^{*\top}\Sigma_{y}^{-1}+\Sigma_{e}^*\Sigma_{e}^{-1}-\Sigma_{e}^*\Sigma_{e}^{-1}Z(M_x^{-1}+Z^\top\Sigma_{e}^{-1}Z)^{-1}Z^{\top}\Sigma_{e}^{-1}. 
    \end{align*}
    Because \begin{align*}&\operatorname{tr}[\Sigma_{e}^*\Sigma_{e}^{-1}Z(M_x^{-1}+Z^\top\Sigma_{e}^{-1}Z)^{-1}Z^{\top}\Sigma_{e}^{-1}]\\&=\operatorname{tr}[Z^{\top}\Sigma_{e}^{-1}\Sigma_{e}^*\Sigma_{e}^{-1}Z(M_x^{-1}+Z^\top\Sigma_{e}^{-1}Z)^{-1}]\\&\leq \operatorname{tr}[C_e^4Z^{\top}\Sigma_{e}^{-1}Z(M_x^{-1}+Z^\top\Sigma_{e}^{-1}Z)^{-1}]=C_e^4d,\end{align*} combining equation \ref{eq:errorlnSigmay}, we have \begin{align*}
        \bar{\mathcal{L}}(\hat\theta)&=-\frac{1}{2pq}\bigg[\sum_{i,j}\bigg(\ln(\hat{\sigma}_{ij}^2/\sigma_{ij}^{*2})+\frac{\sigma_{ij}^{*2}}{\hat{\sigma}_{ij}^2}-1\bigg)\bigg]-\frac{1}{2pq}\operatorname{tr}(Z^*M_x^* Z^{*\top}\widehat\Sigma_{y}^{-1})\\&+O_p(\frac{\ln (pq)}{pq})+O_p(\frac{1}{pq}). 
    \end{align*}
    Notice that $\ln\ x+\frac{1}{x}-1\geq 0$, $\frac{1}{2pq}\sum_{i,j}\bigg(\ln(\hat{\sigma}_{ij}^2/\sigma_{ij}^{*2})+\frac{\sigma_{ij}^{*2}}{\hat{\sigma}_{ij}^2}-1\bigg)$ and $\frac{1}{2pq}\operatorname{tr}(Z^*M_x^* Z^{*\top}\widehat\Sigma_{y}^{-1})$ are both nonnegative. By $\bar{\mathcal{L}}(\hat{\theta})=O_p(T^{-\frac{1}{2}})$, both terms are $o_p(1)$. Since $x=1$ is the only zero of $f(x)=\ln\ x+\frac{1}{x}-1$, for every bounded interval $I$, there exists some real number $s$ such that $f(x)\geq s(x-1)^2$ holds on $I$. Because $C_e^{-4}\leq\hat{\sigma}_{ij}^2/\sigma_{ij}^2 \leq C_e^4$ holds for all $i\in[p],j\in[q]$, the error term \begin{equation}\label{errorforwninCPfactmod}
        \frac{1}{pq}\sum_{i,j}(\hat{\sigma}_{ij}^2/\sigma_{ij}^{*2}-1)^2\xrightarrow{p}0,  
    \end{equation}
    goes to zero as $T\to \infty$.

\textbf{Step 2: }Secondly, we prove $\widehat{G}_H=o_p(1)$ and $\widehat{H}=o_p(1)$. Regarding another term $$\frac{1}{2pq}\operatorname{tr}(Z^*M_x^* Z^{*\top}\widehat\Sigma_{y}^{-1})=o_p(1),$$ by equations \eqref{eq:initialinverse} and \eqref{eq:syinverse}, we derive \begin{align*}
        &\operatorname{tr}(Z^*M_x^*Z^{*\top}\widehat{\Sigma}_{y}^{-1})=\operatorname{tr}\big[M_x^*Z^{*\top}\widehat{\Sigma}_{e}^{-1}Z^*-M_x^*Z^{*\top}\widehat{\Sigma}_{e}^{-1}\widehat{Z}(\widehat M_x^{-1}+\widehat{Z}^{\top}\widehat\Sigma_{e}^{-1}\widehat{Z})^{-1}\widehat{Z}^{\top}\widehat{\Sigma}_{e}^{-1}Z^*\big]\\
        &=\operatorname{tr}\big[M_x^*Z^{*\top}\widehat{\Sigma}_{e}^{-1}Z^*-M_x^*Z^{*\top}\hat{\Sigma}_{e}^{-1}\widehat{Z}\widehat{H}\widehat{Z}^{\top}\widehat{\Sigma}_{e}^{-1}Z^*+M_x^*Z^{*\top}\widehat{\Sigma}_{e}^{-1}\widehat{Z}\widehat{H}\widehat{M}_x^{-1}\widehat{G}_H\widehat{Z}^{\top}\widehat{\Sigma}_{e}^{-1}Z^*\big]\\
        &=\operatorname{tr}\big[M_x^*Z^{*\top}\big(\widehat{\Sigma}_{e}^{-1}-\widehat{\Sigma}_{e}^{-1}\widehat{Z}\widehat{H}\widehat{Z}^{\top}\widehat{\Sigma}_{e}^{-1}\big)Z^*\big]+\operatorname{tr}\big[M_x^*Z^{*\top}\widehat{\Sigma}_{e}^{-1}\widehat{Z}\widehat{H}\widehat{M}_x^{-1}\widehat{G}_H\widehat{Z}^{\top}\widehat{\Sigma}_{e}^{-1}Z^*\big].
    \end{align*} 
    Notice that $\widehat{\Sigma}_{e}^{-1}-\widehat{\Sigma}_{e}^{-1}\widehat{Z}\widehat{H}\widehat{Z}^{\top}\widehat{\Sigma}_{e}^{-1}$ and $M_x^*Z^{*\top}\widehat{\Sigma}_{e}^{-1}\widehat{Z}\widehat{H}\widehat{M}_x^{-1}\widehat{G}_H\widehat{Z}^{\top}\widehat{\Sigma}_{e}^{-1}Z^*$ are both nonnegative, we must have $$\frac{1}{2pq}\operatorname{tr}\big[M_x^*Z^{*\top}\big(\widehat{\Sigma}_{e}^{-1}-\widehat{\Sigma}_{e}^{-1}\widehat{Z}\widehat{H}\widehat{Z}^{\top}\widehat{\Sigma}_{e}^{-1}\big)Z^*\big]\xrightarrow{p}0$$ and $$\frac{1}{2pq}\operatorname{tr}\big[M_x^*Z^{*\top}\widehat{\Sigma}_{e}^{-1}\widehat{Z}\widehat{H}\widehat{M}_x^{-1}\widehat{G}_H\widehat{Z}^{\top}\widehat{\Sigma}_{e}^{-1}Z^*\big]\xrightarrow{p}0.$$ Then the following holds: \begin{equation}\label{eqfanzhengjiuyongyici}
        \frac{1}{2pq}M_x^*Z^{*\top}\widehat{\Sigma}_{e}^{-1}Z^*-\frac{1}{2pq}M_x^*Z^{*\top}\widehat{\Sigma}_{e}^{-1}\widehat{Z}\widehat{H}\widehat{Z}^{\top}\widehat{\Sigma}_{e}^{-1}Z^*=o_p(1), 
    \end{equation} \begin{align*}\frac{1}{2pq}&\operatorname{tr}\big[M_x^*Z^{*\top}\widehat{\Sigma}_{e}^{-1}\widehat{Z}\widehat{H}\widehat{M}_x^{-1}\widehat{G}_H\widehat{Z}^{\top}\widehat{\Sigma}_{e}^{-1}Z^*\big]=\frac{1}{2pq}\operatorname{tr}\big[(\widehat{Z}^{\top}\widehat{\Sigma}_{e}^{-1}Z^*M_x^*Z^{*\top}\widehat{\Sigma}_{e}^{-1}\widehat{Z}\widehat{H})\widehat{M}_x^{-1}\widehat{G}_H\big]\\&=\frac{1}{2pq}\operatorname{tr}\big[(\widehat{Z}^{\top}\widehat{\Sigma}_{e}^{-1}Z^*)M_x^*Z^{*\top}\widehat{\Sigma}_{e}^{-1}Z^*(\widehat{Z}^{\top}\widehat{\Sigma}_{e}^{-1}Z^*)^{-1}\widehat{M}_x^{-1}\widehat{G}_H\big]+o_p(1)=o_p(1).\end{align*}Therefore, $\widehat{G}_H=o_p(1)$ and $\widehat{H}=\widehat{G}_H(I_{d}-\widehat M_x^{-1}\widehat{G}_H)^{-1}=o_p(1)$. 

By \eqref{eqfanzhengjiuyongyici}, $\frac{1}{pq}Z^{*\top}\widehat{\Sigma}_{e}^{-1}Z^*-\frac{1}{pq}Z^{*\top}\widehat{\Sigma}_{e}^{-1}\widehat{Z}\widehat{H}\widehat{Z}^{\top}\widehat{\Sigma}_{e}^{-1}Z^*=o_p(1)$. By lemma \ref{lemmaforHerror} and equation \eqref{errorforwninCPfactmod}, the equation above is equivalent to $$\frac{1}{pq}Z^{*\top}\Sigma_{e}^{*-1}Z^*-Z^{*\top}\widehat{\Sigma}_{e}^{-1}\widehat{Z}\widehat{H}\widehat{Z}^{\top}\widehat{\Sigma}_{e}^{-1}Z^*=o_p(1).$$Let $\mathcal{Z}=(\widehat{Z}-Z^*)^{\top}\widehat{\Sigma}_{e}^{-1}\widehat{Z}\widehat{H}$, then $Z^{*\top}\widehat{\Sigma}_{e}^{-1}\widehat{Z}\widehat{H}$ equals $I_{d}-\mathcal{Z}$. Therefore, we express \eqref{eqfanzhengjiuyongyici} as \begin{equation}
       H^*Z^{*\top}\widehat{\Sigma}_{e}^{-1}\widehat{Z}(I-\mathcal{Z})^{\top}=I_d+o_p(1). 
\end{equation}
Moreover, by the boundedness of each row of $\hat{Z}$, we derive: \begin{equation}\label{limitofH_rforfactmod}
       (I-\mathcal{Z})\widehat{Z}^{\top}\Sigma_{e}^{*-1}Z^{*}H^*=I_d+o_p(1). 
\end{equation}
    Since the right-hand side is invertible, $\lim_{T\to \infty} I_{d}-\mathcal{Z}$ is of full rank. 

    Let's denote $R=\widehat{Z}(I-\mathcal{Z})^{\top}-Z^*$ with the property $R^{\top}\widehat{\Sigma}_{e}^{-1}\widehat{Z}\widehat{H}=[(I-\mathcal{Z})\widehat{Z}^{\top}-Z^{*\top}]\widehat{\Sigma}_{e}^{-1}\widehat{Z}\widehat{H}=0$. Then the equation \eqref{limitofH_rforfactmod} can be expressed as $$R^{\top}\widehat\Sigma_e^{-1}Z^*H^*=o_p(1)=R^{\top}\widehat\Sigma_e^{-1}\widehat{Z}(I_d-\mathcal{Z})^{\top}{H}^*-R^{\top}\widehat{\Sigma}_e^{-1}{R}{H}^*.$$ By assumption, $\widehat{H}$ is invertible, hence $R^{\top}\widehat{\Sigma}_e^{-1}\widehat{Z}=0$ and \begin{align}\label{eq:Risreallysmall}
        R^{\top}\widehat{\Sigma}_e^{-1}RH^*=o_p(1).
    \end{align} On the other hand, \eqref{limitofH_rforfactmod} also implies \begin{equation}\label{eq:HhatandH*}\begin{aligned}
        &(I_d-\mathcal{Z})\widehat{Z}^{\top}\widehat{\Sigma}_e^{-1}\widehat{Z}(I_d-\mathcal{Z})^{\top}H^*-(I_d-\mathcal{Z})\widehat{Z}^{\top}\widehat{\Sigma}_{e}^{-1}RH^*\\&=(I_d-\mathcal{Z})\frac{1}{pq}\widehat{Z}^{\top}\widehat{\Sigma}_e^{-1}\widehat{Z}(I_d-\mathcal{Z})^{\top}{(pq)\cdot}H^*=I_d+o_p(1)
         \end{aligned}
    \end{equation}
Since $H^*=O_p(\frac{1}{pq})$, \eqref{eq:Risreallysmall} means that $$\operatorname{tr}(R^{\top}\widehat{\Sigma}_e^{-1}R/pq)=o_p(1)=\frac{1}{pq}\sum_{ijl}R_{ij,l}^2\hat{\sigma}_{ij}^{-2}\geq C_e^{-2}\frac{1}{pq}\operatorname{tr}(RR^{\top}). $$

\textbf{Step 3: }We then use the gradient conditions to prove $I_d-\mathcal{Z}=O_p(1)$. Regarding the term $\widehat{\Sigma}_y-M_y$, by omitting the smaller-order term including $\bar{e}$, we express it as:
\begin{equation}\label{eqforerrorofsigmay}
\begin{aligned}
    \widehat{\Sigma}_y-M_y =&  \widehat{Z}\widehat{M}_x\widehat{Z}^{\top}-Z^*M_X^*Z^{*\top}+\widehat{\Sigma}_e-\Sigma_e^{*}\\&+\big\{\Sigma_e^{*}-\frac{1}{T}e_te_t^{\top}-\frac{1}{T}\sum_t\big[e_tx_t^{*\top}Z^{*\top}+Z^{*}x^*_te_t^{\top}\big]\big\}\\=& \widehat{Z}(\widehat{M}_x-M^*_x)\widehat{Z}^{\top}-(\widehat{Z}-Z^*)M_x^*(\widehat{Z}-Z^*)^{\top}\\&+(\widehat{Z}-Z^*)M^*_x\widehat{Z}^{\top}+\widehat{Z}M_x^*(\widehat{Z}-Z^*)^{\top}\\&+\widehat{\Sigma}_e-\Sigma_e^*+\big\{\Sigma_e^{*}-\frac{1}{T}e_te_t^{\top}-\frac{1}{T}\sum_t\big[e_tx_t^{*\top}Z^{*\top}+Z^{*}x^*_te_t^{\top}\big]\big\}.
\end{aligned}
\end{equation}
By the gradient condition \eqref{gradcondforCP-X}, the error term $\hat{M}_x-M_x^*$ can be written as
\begin{equation}\label{errorforMx}
    \begin{aligned}
        \widehat{M}_x-M_x^*=&\widehat{H}\widehat{Z}^{\top}\widehat{\Sigma}_{e}^{-1}(\widehat{Z}-Z^*)M_x^*(\widehat{Z}-Z^*)^{\top}\widehat{\Sigma}_{e}^{-1}\widehat{Z}\widehat{H}\\&-M_x^*(\widehat{Z}-Z^*)^{\top}\widehat{\Sigma}_{e}^{-1}\widehat{Z}\widehat{H}-\widehat{H}\widehat{Z}^{\top}\widehat{\Sigma}_{e}^{-1}(\widehat{Z}-Z^*)M_x^*\\&+ \widehat{H}\widehat{Z}^{\top}\widehat{\Sigma}_{e}^{-1}\frac{1}{T}\sum_{t=1}^T[Z^*x_t^{*}e_t^{\top}+e_tx_t^{*\top}Z^{*\top}+e_te_t^{\top}-\Sigma_{e}^*]\widehat{\Sigma}_{e}^{-1}\widehat{Z}\widehat{H}\\
        &+\widehat{H}\widehat{Z}^{\top}\widehat{\Sigma}_{e}^{-1}[\Sigma^*_{e}-\widehat{\Sigma}_{e}]\widehat{\Sigma}_{e}^{-1}\widehat{Z}\widehat{H}. 
    \end{aligned}
\end{equation}

The lemma \ref{lemmaforfactmodconsistency2} shows that the fourth and fifth terms of the equation \eqref{errorforMx} are asymptotically at most $O_p(\mathcal{Z}\cdot T^{-\frac{1}{2}})+o_p(1)$. Therefore, $$\widehat{M}_x=(I_d-\mathcal{Z})^{\top}M_x^*(I_d-\mathcal{Z})+O_p(\mathcal{Z}\cdot T^{-\frac{1}{2}})+o_p(1)$$ and by the assumption of boundedness of $\widehat{M}_x$, we have $$I_d-\mathcal{Z}=O_p(1).$$ 

\textbf{Step 4: }Finally, we prove $\mathcal{Z}=o_p(1)$ and the error term $\frac{1}{pq}\|\widehat{Z}-Z^*\|_2=o_p(1)$. Assume $Z_0^*=B_0^*\otimes A_0^*$ is a unitary sub-matrix in $\operatorname{Mat}_{pq\times d^2}(\mathbb{R})$, such that $$Z_r^*=\frac{1}{{pq}}Z_0^{*\top}Z^*=\big(\frac{1}{{q}}B_0^{*\top}B^*\big)\odot\big(\frac{1}{{p}}A_0^{*\top}A^*\big)$$ has a Khatri-Rao product structure, meaning that $B_0^*B_0^{*\top}B^*=B^*$. Therefore, \begin{align}\label{eq:Zr*isZhatrplusRr}
    Z_r^*=\frac{1}{pq}Z_0^{*\top}\widehat{Z}(I_d-\mathcal{Z})^{\top}-\frac{1}{{pq}}Z_0^{*\top}R=\frac{1}{{pq}}Z_0^{*\top}\widehat{Z}(I_d-\mathcal{Z})^{\top}+o_p(1).
\end{align} We denote $$\widehat{Z}_r=\frac{1}{{pq}}Z_0^{*\top}\widehat{Z}=(\frac{1}{{q}}B_0^{*\top}\widehat{B})\odot(\frac{1}{{p}}A_0^{*\top}\widehat{A}),$$ which also has a Khatri-Rao product structure. One can check that, $(B_0^*\otimes A_0^*)\cdot Z_r^*=Z^*$.

Recall that, the map $\Omega$ sends each matrix $Z'=(z'_{mn,i}), \ m,n,i\in[d]$ in $\operatorname{Mat}_{d_1d_2\times d}(\mathbb{R})$ to $\Omega(Z')$ in $\operatorname{Mat}_{(d_1d_2)^2\times d(d+1)/2}$ given by $$\Omega(Z')_{mnkl,ij}=\det\begin{bmatrix}
        z'_{mn,i}&z'_{ml,j}\\z'_{kn,i} &z'_{kl,j}
    \end{bmatrix}+\det\begin{bmatrix}
        z'_{mn,j}&z'_{ml,i}\\z'_{kn,j}&z'_{kl,i}
    \end{bmatrix}, i\leq j\in [d].$$
By the Lemma \ref{lemma:Omegalambda}, we have $\Omega\big(\widehat{Z}_r(I_d-\mathcal{Z})^{\top}\big)=\Omega(\widehat{Z}_r)\mathrm{k}\big((I_d-\mathcal{Z})^{\top}\big)$. Let us denote $I_d-\mathcal{Z}=\mathcal{U}$. It is easy to see that the map $\Omega$ is continuous, meaning that \begin{align}\label{errorforZ}\Omega(Z_r^*)\mathrm{k}\big(\widetilde{\mathcal{U}}\big)-\Omega(\widehat{Z}_r)=o_p(1),\quad \widetilde{\mathcal{U}}=\mathcal{U}^{-1\top}.\end{align}

Since both $\Omega(Z_r^*)$ and $\Omega(\widehat{Z}_r)$ have columns indexed by $(i,i)$ zero (i.e. $Z_r^*,\widehat{Z}_r\in K_d$) and $\|\Omega_{i<j}(Z_r^*)k(u)\|^2>\frac{1}{C(\lambda)}\|k(u)\|^{2m}$ ($Z_r^*\in\mathcal{K}_d$ and Theorem \ref{thm:mainthm2foridcond}), the norm $$\|\Omega({Z}_r^*)\mathrm{k}\big(\widetilde{\mathcal{U}}\big)\big|_{i=i}\|=o_p(1),\forall i\in[d]$$ implies $o_p(1)\geq \frac{1}{C(\lambda)}\|\mathrm{k}\big(\widetilde{\mathcal{U}}\big)\|^{2m}$. By denoting $\widetilde{\mathcal{U}}=(u'_{1},\cdots,u'_d)$, we therefore have $${k}(u_\ell')=O(\mathrm{k}(u_\ell',u_{\ell
}'))=o_p(1).$$ This means that, for every $i,j\in [d]$, $u_{\ell,i}u_{\ell,j}=o_p(1)$ and $(I_d-\mathcal{Z})^{-1}\in D_d\rtimes S_d$ in the limit (because $u_{\ell,i}u_{\ell,j}=o_p(1) ,\forall i\neq j$ means that every column $u_\ell$ has only one nonzero entry in its limit). Therefore, $I_d-\mathcal{Z}\in D_d\rtimes S_d$ holds in the limit. 


By restricting each column sign and particular column order of $\widehat{Z}$ and the diagonal entries of $\widehat{Z}^{\top}\widehat{Z}$ to be $pq$, we have $\lim_{T\to \infty}\mathcal{Z}=0$. Therefore, we proved that $\frac{1}{pq}\|\widehat{Z}-Z^*\|_2^2=o_p(1)$. 

The error terms $\frac{1}{p}\|\widehat{A}-A\|_2^2=o_p(1)$ and $\frac{1}{q}\|\widehat{B}-B^*\|_2^2=o_p(1)$ asymptotically vanish naturally derived from $\frac{1}{pq}\|\widehat{Z}-Z^*\|^2=o_p(1)$. Denote the norms of columns of $\hat{b}_j\odot \widehat{A}$ by $\hat{r}_j=(\hat{r}_{j,1},\cdots,\hat{r}_{j,d})$, where $\hat{r}_{j,l}^2:= \frac{1}{p}\|\hat{b}_{j,l}\odot \hat{a}^l\|_2^2$. By the equation $\frac{1}{q}\sum_{j}\hat{r}_j^2=\frac{1}{pq}\sum_j|\hat{b}_{j,l}|^2\|\hat{a}^l\|_2^2=1$, we restrict $\frac{1}{p}\|\hat{a}^l\|_2^2=1$ and therefore $\frac{1}{q}\sum_j\|\hat{b}^l\|^2_2=1$. Then the error term $\frac{1}{q}\|\widehat{B}-B^*\|_2^2$ can be expressed in terms of $\hat{r}$ and $r^*$:
\begin{align}
    \frac{1}{q}\|\widehat{B}-B^*\|_2^2=&\frac{1}{q}\sum_{j,l} (\hat{r}_{j,l}-r_{j,l}^*)^2\notag\\=&\frac{1}{pq}\sum_{j,l}(\|\hat{b}_{j,l}\hat{a}^l\|_2-\|{b}^*_{j,l}a^{*l}\|_2)^2\notag\\
    \leq &\frac{1}{pq}\sum_{j,l}\|\hat{b}_{j,l}\hat{a}^l-b_{j,l}^*a^{*l}\|_2^2=\frac{1}{pq}\|\widehat{Z}\mathcal{Z}^{\top}+R\|_2^2=o_p(1).\label{errorforB} 
\end{align}
Similarly, we have $\frac{1}{p}\|\widehat{A}-A^*\|_2^2=o_p(1)$. An important observation is that, the error terms $\frac{1}{p}\|\widehat{A}-A^*\|_2^2$ and $\frac{1}{q}\|\widehat{B}-B^*\|_2^2$ are dominated by $O_p(\|\mathcal{Z}\|^2)+O_p(\frac{1}{pq}\|R\|^2)$.

When $p,q$ are fixed, we proved in Step 1 that $\mathcal{L}_T\xrightarrow{a.s.}\mathcal{L}$ and $\Theta$ is a compact set in fixed dimension; therefore, by the extremum estimator consistency theorem, we have the consistency for the estimator $\hat{\theta}$. 

\textbf{Step 5: (Proof of the convergence rate) } \label{step5forconvrate}\\
We start by proving $\|\mathcal{Z}\|^{2m}\leq O_p\big(\frac{1}{pq}\operatorname{tr}(R^{\top}R)\big)$. The correction term $$R_r=\frac{1}{{pq}}Z_0^{*\top}R$$ in the equation \eqref{eq:Zr*isZhatrplusRr} has its norm $\operatorname{tr}(R_r^{\top}R_r)\leq \frac{1}{pq}\operatorname{tr}(R^{\top}R)$. It's easy to see that the map $\Omega$ is smooth and locally bounded around $Z_r^*$. Assumption~\ref{assump:bdedidcond} ensures the uniform boundedness of the maximal eigenvalue of the Jacobian matrix of $\Omega(Z_r)$ on $\Theta$ over $T$, which implies \begin{equation}\label{ineq:Omegasmoothandbounded}\Omega\big(Z_r^*\widetilde{\mathcal{U}}\big)-\Omega\big(\widehat{Z}_r\big)\leq O_p\big(\operatorname{tr}(R_r^{\top}R_r)^{\frac{1}{2}}\big).\end{equation} Concerning the columns of $\Omega\big({Z}_r^*\widetilde{\mathcal{U}}\big)$ indexed by $(k,k),k\in[d]$, we have the equation \begin{align}
    \sum_k\big\|\sum_{i, j} u_{k,i}u_{k,j}\psi_{ij}(Z_r^*)\big\|^2\leq O_p(\|R_r\|^2),\quad \psi_{ii}(Z_r^*)=0.
\end{align}
 
By assumption $\|\Omega(Z_r^*)_{i<j}k(u)\|^2>\frac{1}{C(\lambda)}\|k(u)\|^{2m}$, so the norm of the off-diagonal part of $(I_d-\mathcal{Z})^{-1}$ has the property $\|[(I_d-\mathcal{Z})^{-1}]_{\text{off}}\|^{2m}\leq O_p\big(\operatorname{tr}(\frac{1}{pq}R^{\top}R)\big)$. Since $\mathcal{Z}=o_p(1)$, the inverse $(I_d-\mathcal{Z})^{-1}=I_d+O_p(\mathcal{Z})$, and the off-diagonal part of $\mathcal{Z}$ is also $\|\mathcal{Z}_{\text{off}}\|^{2m}\leq O_p\big(\operatorname{tr}(\frac{1}{pq}R^{\top}R)\big)$. Regarding the diagonal part of $\mathcal{Z}$, we consider the equation \begin{equation}
    \begin{aligned}
        &0=\operatorname{diag}(Z^{*\top}Z^*)-\operatorname{diag}(\hat{Z}^{\top}\hat{Z}^{\top})\\\Rightarrow&\frac{1}{pq}\operatorname{diag}\big\{\mathcal{Z}\widehat{Z}^{\top}\widehat{Z}+\widehat{Z}^{\top}\widehat{Z}\mathcal{Z}^{\top}-\mathcal{Z}\widehat{Z}^{\top}\widehat{Z}\mathcal{Z}^{\top}{}\big\}\\&=\frac{1}{pq}\operatorname{diag}\big\{-R^{\top}\widehat{Z}-\widehat{Z}^{\top}R+R^{\top}\widehat{Z}\mathcal{Z}^{\top}+\mathcal{Z}\widehat{Z}^{\top}R+R^{\top}R\big\}\\&\leq O_p\big(\operatorname{tr}(\frac{1}{pq}R^{\top}R)^{\frac{1}{2}}\big). 
    \end{aligned}
\end{equation}
For the term $\operatorname{diag}\{\mathcal{Z}\cdot \frac{1}{pq}\widehat{Z}^{\top}\widehat{Z}\}_i=\mathcal{Z}_{i,i}+\sum_{j\neq i}\mathcal{Z}_{i,j}\frac{1}{pq}\big(\widehat{Z}^{\top}\widehat{Z}\big)_{j,i}$, we have \begin{equation}\label{ineq:diagofZisoffdiagofZ}
\|\mathcal{Z}_{i,i}|\leq O_p(\mathcal{Z}_{\text{off}})+O_p\big(\operatorname{tr}(\frac{1}{pq}R^{\top}R)^{\frac{1}{2}}\big),i\in[d].\end{equation}
Hence we proved $\|\mathcal{Z}\|^{2m}\leq O_p\big(\operatorname{tr}(\frac{1}{pq}R^{\top}R)\big)$. By \eqref{errorforB}, we have $$\frac{1}{q}\|\widehat{B}-B^*\|^{2m}=O_p\big(\operatorname{tr}(\frac{1}{pq}R^{\top}R)\big),\quad \frac{1}{p}\|\widehat{A}-A^*\|^{2m}=O_p\big(\operatorname{tr}(\frac{1}{pq}R^{\top}R)\big).$$

Secondly, we prove that $\operatorname{tr}\big(\frac{1}{pq}R^{\top}\widehat{\Sigma}_e^{-1}R\big)=O_p(T^{-1})+O_p\big(\frac{1}{pq}\operatorname{tr}(\mathcal{E}_e^2)\big)$, where $\mathcal{E}_e=\widehat{\Sigma}_e-\Sigma_e^*$. We rewrite the equation \eqref{eqforerrorofsigmay} by substituting \eqref{errorforMx} as: 
\begin{equation}
    \begin{aligned}
        \widehat{\Sigma}_y-M_y=&RM_x^*\widehat{Z}^{\top}+\widehat{Z}M_x^*R^{\top}-RM_x^*\mathcal{Z}\widehat{Z}^{\top}-\widehat{Z}\mathcal{Z}^{\top}M_x^*R^{\top}-RM_x^*R^{\top}\\&+\mathcal{T}+\mathcal{E}_e-\widehat{Z}\widehat{H}\widehat{Z}^{\top}\widehat{\Sigma}_e^{-1}(\mathcal{T}+\mathcal{E}_e)\widehat{\Sigma}_e^{-1}\widehat{Z}\widehat{H}\widehat{Z}^{\top},
    \end{aligned}
\end{equation}
where $\mathcal{T}:= \Sigma_e^*-\frac{1}{T}\sum_te_te_t^{\top}-\frac{1}{T}\sum_t(e_tx_t^{*\top}Z^{*\top}+Z^*x^*_te_t^{\top})$ and $\mathcal{E}_e=\widehat{\Sigma}_e-\Sigma_e^*$. 
    By the equation above, the gradient condition \eqref{gradcondforCP-A} can be expressed as 
    \begin{equation}\label{eq:gradcondAexpressed}
        \begin{aligned}&
            \sum_j\widehat{\Sigma}_{ej}^{-1}R_{bj}M_x^*(I_d-\mathcal{Z})(I_d-\widehat{M}_x^{-1}\widehat{G}_H)\hat{b}_j=\\&-\sum_j\widehat{\Sigma}_{ej}^{-1}\big\{\Sigma_e^{*}-\frac{1}{T}e_te_t^{\top}-\frac{1}{T}\sum_t\big[e_tx_t^{\top}Z^{*\top}+Z^{*}x_te_t^{\top}\big]\big\}\widehat{\Sigma}_{e}^{-1}\widehat{Z}\widehat{G}_H\widehat{b}_j\\&-\sum_j\widehat{\Sigma}_{ej}^{-1}\widehat{A}\hat{b}_j\widehat{H}\widehat{Z}^{\top}\widehat{\Sigma}_e^{-1}\frac{1}{T}\sum_t\big[Z^*x_t^*e_t^{\top}+e_tx_t^{*\top}Z^{*\top}+e_te_t^{\top}-\Sigma_e^*\big]\widehat{\Sigma}_e^{-1}\widehat{Z}\widehat{G}_H\hat{b}_j\\&-\sum_j\widehat{\Sigma}_{ej}^{-1}(\widehat{\Sigma}_{ej}-\Sigma_{ej}^*)\widehat{\Sigma}_{ej}^{-1}\widehat{Z}\widehat{G}_H\hat{b}_j\\&-\sum_{j}\widehat{\Sigma}_{ej}^{-1}\widehat{A}\hat{b}_j\widehat{H}\widehat{Z}^{\top}\widehat{\Sigma}_e^{-1}[\Sigma_e^*-\widehat{\Sigma}_e]\widehat{\Sigma}_e^{-1}\widehat{Z}\widehat{G}_H\hat{b}_j, 
        \end{aligned}
    \end{equation}
    where $R_{bj}=\widehat{A}\hat{b}_j(I_d-\mathcal{Z})^{\top}-A^*b_j^*$ is a sub-matrix of $R$, and the gradient condition \eqref{gradcondforCP-B} can be expressed as
    \begin{equation}
        \begin{aligned}
            &\operatorname{diag}\big\{\widehat{A}^{\top}\widehat{\Sigma}_{ej}^{-1}R_{bj}M_x^*(I_d-\mathcal{Z})(I_d-\widehat{M}_x^{-1}\widehat{G}_H)\big\}\\&=-\operatorname{diag}\big\{\widehat{A}^{\top}(\widehat{\Sigma}_{ej}^{-1}-\widehat{\Sigma}_{ej}^{-1}\widehat{A}\hat{b}_j\widehat{H}\widehat{Z}^{\top}\widehat{\Sigma}_e^{-1})\mathcal{T}\widehat{\Sigma}_e^{-1}\widehat{Z}\widehat{G}_H\\&\ \ \ \ -\widehat{A}^{\top}(\widehat{\Sigma}_{ej}^{-1}-\widehat{\Sigma}_{ej}^{-1}\widehat{A}\hat{b}_j\widehat{H}\widehat{Z}^{\top}\widehat{\Sigma}_e^{-1})\mathcal{E}_e\widehat{\Sigma}_e^{-1}\widehat{Z}\widehat{G}_H\big\}.
        \end{aligned}
    \end{equation} 
    On the other hand, we have: 
    \begin{equation}\label{eq:rsrequation}
        \begin{aligned}
            &\sum_jR_{bj}^{\top}\widehat{\Sigma}_{ej}^{-1}R_{bj}M_x^*\\=&\sum_j[(I_d-\mathcal{Z})\hat{b}_j\widehat{A}^{\top}-b_j^*A^{*\top}]\widehat{\Sigma}_e^{-1}R_{bj}M_x^*\\
            =&\sum_j [\hat{b}_j-b_j^*]\widehat{A}^{\top}\widehat{\Sigma}_{ej}^{-1}R_{bj}M_x^*-\mathcal{Z}\sum_j\hat{b}_j\widehat{A}\widehat{\Sigma}_{ej}^{-1}R_{bj}M_x^*\\&+\sum_{j}(b_j^*-\hat{b}_j)(\widehat{A}-A^*)^{\top}\widehat{\Sigma}_{ej}^{-1}R_{bj}M_x^*+\sum_{j}\hat{b}_j(\widehat{A}-A^*)^{\top}\widehat{\Sigma}_{ej}^{-1}R_{bj}M_x^*
        \end{aligned}
    \end{equation}
    By \eqref{gradcondforCP-B} and \eqref{eq:gradcondAexpressed}, the equation \eqref{eq:rsrequation} can be expressed as: 
    \begin{equation}\label{eq:RSRisT}
        \begin{aligned}
            &\frac{1}{pq}\operatorname{diag}\{\sum_jR_{bj}^{\top}\widehat{\Sigma}_{ej}^{-1}R_{bj}M_x^*(I_d-\mathcal{Z})(I_d-\widehat{M}_x^{-1}\widehat{G}_H)\}\\&=\frac{1}{pq}\operatorname{diag}\{\sum_j(b_j^*-\hat{b}_j)(\widehat{A}-A^*)^{\top}\widehat{\Sigma}_{ej}^{-1}R_{bj}M_x^*(I_d-\mathcal{Z})(I_d-\widehat{M}_x^{-1}\widehat{G}_H)\}\\&-\frac{1}{pq}\operatorname{diag}\{\sum_j(\widehat{A}-A^*)(\widehat{\Sigma}_{ej}^{-1}-\widehat{\Sigma}_{ej}^{-1}\widehat{A}\hat{b}_j\widehat{H}\widehat{Z}^{\top}\widehat{\Sigma}_e^{-1})(\mathcal{T}+\mathcal{E}_e)\widehat{\Sigma}_e^{-1}\widehat{Z}\widehat{G}_H\hat{b}_j\}\\&-\frac{1}{pq}\operatorname{diag}\{\sum_j(\hat{b}_j-b_j^*)\widehat{A}^{\top}(\widehat{\Sigma}_{ej}^{-1}-\widehat{\Sigma}_{ej}^{-1}\widehat{A}\hat{b}_j\widehat{H}\widehat{Z}^{\top}\widehat{\Sigma}_e^{-1})(\mathcal{T}+\mathcal{E}_e)\widehat{\Sigma}_e^{-1}\widehat{Z}\widehat{G}_H\}.
        \end{aligned}
    \end{equation}
    Because $\operatorname{tr}(\frac{1}{pq}R^{\top}\widehat{\Sigma}_{e}^{-1}R)\geq O_p\big((\mathcal{Z}\mathcal{Z}^{\top})^{m}\big)$ and the lemma~\ref{lemma:eqRSRisT}, the equation \eqref{eq:RSRisT} implies 
    \begin{equation}\label{ineq}
    \begin{aligned}
        &\frac{1}{pq}\operatorname{tr}\{R^{\top}\widehat{\Sigma}_e^{-1}RM_x^{*}(I_d-\mathcal{Z})(I_d-\widehat{M}_x^{-1}\widehat{G}_H)\}\\&\leq O_p(\operatorname{tr}(\frac{1}{pq}R^{\top}R)^{\frac{1}{m}+\frac{1}{2}})+O_p\big(\operatorname{tr}(\frac{1}{pq}R^{\top}R)^{\frac{1}{2m}}(T^{-\frac{1}{2}}+\frac{1}{\sqrt{pq}}[\operatorname{tr}(\mathcal{E}_e^2)]^{1/2})\big).
    \end{aligned}
    \end{equation}
    By $\mathcal{Z}=o_p(1)$ and $\widehat{G}_H=O_p(\frac{1}{pq})$, we have $$\lambda_{min}\cdot\operatorname{tr}(\frac{1}{pq}R^{\top}\widehat{\Sigma}_e^{-1}R)\leq\operatorname{tr}(\frac{1}{pq}R^{\top}\widehat{\Sigma}_e^{-1}RM_x^*)\leq \lambda_{max}\cdot \operatorname{tr}(\frac{1}{pq}R^{\top}\widehat{\Sigma}_e^{-1}R)$$where $\lambda_{min}$ and $\lambda_{max}$ denote the minimal and maximal eigenvalues of $M_x^*$. For $m=1$, we have $$\operatorname{tr}(\frac{1}{pq}R^{\top}\widehat{\Sigma}_e^{-1}R)\leq O_p(T^{-1})+O_p\big(\frac{1}{pq}\operatorname{tr}(\mathcal{E}_e^2)\big).$$ If $m\geq 2$, $2m\geq m+2$, the inequality \eqref{ineq} always holds, and therefore no requirement can be imposed on the convergence rate of the second term.
 
In the following, we only study the case $m=1$. Actually, we can prove $\tr(\frac{1}{pq}R^{\top}R)=O_p(\frac{1}{T})$, we leave its proof in the subsection \ref{subsec:Est of R} and continue with our proof. 
    Similarly, the gradient condition \eqref{gradcondforCP-e} can be expressed as
    \begin{equation}\label{eq:sigma_e-errorexpression}
        \begin{aligned}
            \Sigma_{e}^*-\widehat{\Sigma}_{e}=&\operatorname{diag}\big\{RM_x^*(I_d-\mathcal{Z})(I_d-\widehat{H}^{-1}\widehat{G}_H)\widehat{Z}^{\top}\\&+\widehat
            {Z}(I_d-\widehat{G}_H\widehat{H}^{-1})(I_d-\mathcal{Z})^{\top}M_x^*R^{\top}\\&-RM_x^*R^{\top}+\widehat{Z}\widehat{G}_H\widehat{Z}^{\top}\widehat{\Sigma}_e^{-1}RM_x^*R^{\top}\\&+RM_x^*R^{\top}\widehat{\Sigma}_e^{-1}\widehat{Z}\widehat{G}_H\widehat{Z}^{\top}\\&+\mathcal{T}-\widehat{Z}\widehat{H}\widehat{Z}^{\top}\widehat{\Sigma}_e^{-1}\mathcal{T}\widehat{\Sigma}_{e}^{-1}\widehat{Z}\widehat{H}\widehat{Z}^{\top}\\&-\widehat{Z}\widehat{H}\widehat{Z}^{\top}\widehat{\Sigma}_{e}^{-1}\mathcal{E}_e\widehat{\Sigma}_{e}^{-1}\widehat{Z}\widehat{H}\widehat{Z}^{\top}\\&-\widehat{Z}\widehat{G}_H\widehat{Z}^{\top}\widehat{\Sigma}_{e}^{-1}\mathcal{T}(I_{pq}-\widehat{\Sigma}_e^{-1}\widehat{Z}\widehat{H}\widehat{Z}^{\top})\\&-(I_{pq}-\widehat{Z}\widehat{H}\widehat{Z}^{\top}\widehat{\Sigma}_e^{-1})\mathcal{T}\widehat{\Sigma}_{e}^{-1}\widehat{Z}\widehat{G}_H\widehat{Z}^{\top}\\&-\widehat{Z}\widehat{G}_H\widehat{Z}^{\top}\widehat{\Sigma}_e^{-1}\mathcal{E}_e+\widehat{Z}\widehat{G}_H\widehat{Z}^{\top}\widehat{\Sigma}_e^{-1}\mathcal{E}_e\widehat{\Sigma}_e^{-1}\widehat{Z}\widehat{H}\widehat{Z}^{\top}\\&-\mathcal{E}_e\widehat{\Sigma}_e^{-1}\widehat{Z}\widehat{G}_H\widehat{Z}^{\top}+\widehat{Z}\widehat{H}\widehat{Z}^{\top}\widehat{\Sigma}_e^{-1}\mathcal{E}_e\widehat{\Sigma}_e^{-1}\widehat{Z}\widehat{G}_H\widehat{Z}^{\top}\big\}.
        \end{aligned}
    \end{equation}
    Let's study the case when $m=1$. To see the $\mathcal{E}_e$ terms in the right-hand-side of the above equation is $o_p(\frac{1}{pq}\operatorname{tr}(\mathcal{E}_e^2))$, we have 
    \begin{align*}
        \frac{1}{pq}\sum_{i\in[p],j\in[q]}(r_{ij}^{\top}[M_x^*\widehat{M}_x^{-1}\widehat{G}_H]\hat{z}_{ij})^2\leq \frac{1}{(pq)^3}O_p(R^{\top}R\widehat{Z}^{\top}\widehat{Z})=o_p(\frac{1}{pq}\operatorname{tr}R^\top R); 
    \end{align*}
    \begin{align*}&\frac{1}{pq}\sum_{i,j}\operatorname{diag}(\mathcal{E}_e\widehat{\Sigma}_e^{-1}\widehat{Z}\widehat{G}_H\widehat{Z}^{\top})_{ij}^2\\&=\frac{1}{pq}\sum_{i,j}\mathcal{E}_{e,ij}^2\hat{\sigma}_{ij}^{-4}(\widehat{Z}_{ij}^{\top}\widehat{G}_H\widehat{Z}_{ij})^2\\& \leq \frac{1}{pq}\big\{\frac{1}{pq}\sum_{i,j}[\widehat{Z}_{ij}^{\top}(pq\cdot\widehat{G}_H)\widehat{Z}_{ij}]^2\big\}\big\{\frac{1}{pq}\sum_{i,j}C_e^4\mathcal{E}_{e,ij}^2\big\}\\&=\frac{1}{pq}O_p\big(\frac{1}{pq}\operatorname{tr}(\mathcal{E}_e^2)\big),
    \end{align*}
    and \begin{align*}
        &\frac{1}{pq}\sum_{i,j}\operatorname{diag}(\widehat{Z}\widehat{H}\widehat{Z}^{\top}\widehat{\Sigma}_e^{-1}\mathcal{E}_e\widehat{\Sigma}_e^{-1}\widehat{Z}\widehat{G}_H\widehat{Z}^{\top})_{ij}^2\\&=\frac{1}{pq}\sum_{i,j}\big\{\sum_{i',j'}\widehat{Z}_{ij}^{\top}\widehat{H}\widehat{Z}_{i'j'}\hat{\sigma}_{i'j'}^{-2}\mathcal{E}_{e,i'j'}\hat{\sigma}_{i'j'}^{-2}\widehat{Z}_{i'j'}^{\top}\widehat{G}_H\widehat{Z}_{ij}\big\}^2\\&\leq\sum_{i,j}\{\sum_{i',j'}\widehat{Z}_{ij}^{\top}\widehat{H}\widehat{Z}_{ij}\widehat{Z}_{i'j'}^{\top}\widehat{H}\widehat{Z}_{i'j'}\}\{\frac{1}{pq}\sum_{i',j'}C_e^8\mathcal{E}_{e,i'j'}^2\}\{\sum_{i',j'}\widehat{Z}_{ij}^{\top}\widehat{G}_H\widehat{Z}_{ij}\widehat{Z}_{i'j'}^{\top}\widehat{G}_H\widehat{Z}_{i'j'}\}\\&\leq \frac{1}{pq}O_p\big(\frac{1}{pq}\operatorname{tr}(\mathcal{E}_e^2)\big). 
    \end{align*}
    By Assumption \ref{assump:boundedsetting}, we have\begin{equation}|r_{ij}^{\top}M_{x}^*r_{ij}|=\|r_{ij}\|_F^2\operatorname{tr}(M_x^*)\leq C\operatorname{tr}(M_x^*)=CM. 
    \end{equation}
    Because $\frac{1}{pq}\tr R^{\top}R=O_p(\frac{1}{T})$, \begin{align*}
        \sum_{ij}\frac{1}{pq}\mathrm{1}_{\|r_{ij}\|\geq1}\|r_{ij}\|^4+\sum_{ij}\frac{1}{pq}\mathrm{1}_{\|r_{ij}\|<1}\|r_{ij}\|^4\leq O_p\big(\frac{1}{pq}\tr(R^\top R)\big).
    \end{align*} This means that $\mathcal{E}_e\leq o_p(\frac{1}{pq}R^{\top}R)+O_p(T^{-1})=o_p(\mathcal{E}_e)+O_p(T^{-1})$. 
    Therefore $\frac{1}{pq}\operatorname{tr}(\mathcal{E}_e^2)=O_p(T^{-1})$. 

\textbf{Step 6: (Fixed-dimension Case)}
For fixed $p$ and $q$, the parameter dimension is fixed.  We consider
the interior case.  Consistency has already been established, so it
remains only to derive the convergence rate.

Write
\[
    \bar{\mathcal L}_T(\theta)
    =
    \bar{\mathcal L}(\theta)+R_T(\theta),
\]
where
\[
    R_T(\theta)
    =
    -\frac{1}{2pq}
    \tr\left\{
        E_T\Sigma_y(\theta)^{-1}
    \right\},
    \qquad
    E_T=M_y-\Sigma_y^*,
\]
and let
\[
    P(\theta)=\Sigma_y(\theta)^{-1}.
\]

Under Assumptions
\ref{assump:full-ranklatentfactor}--\ref{assump:boundedsetting},
the latent factor process is strongly mixing with the stated
$4+\delta$ moment condition, while the idiosyncratic errors are
independent over time and independent of the latent factors.
Consequently, for every fixed pair of coordinates $(i,j)$,
\[
    (E_T)_{ij}
    =
    \frac1T\sum_{t=1}^T
    \left\{
        y_{t,i}y_{t,j}
        -
        E(y_{t,i}y_{t,j})
    \right\}
    =
    O_p(T^{-1/2}).
\]
Indeed, the covariance series of each summand is absolutely summable
under Assumptions
\ref{assump:full-ranklatentfactor}--\ref{assump:boundedsetting},
so its sample average has variance of order $T^{-1}$.
Since $pq$ is fixed,
\[
    \|E_T\|_F
    =
    O_p(T^{-1/2}).
    \label{S.1}
\]

For a direction $h$, write
\[
    \dot\Sigma
    =
    d\Sigma_y(\theta)[h].
\]
Using
\[
    dP(\theta)[h]
    =
    -P(\theta)\dot\Sigma P(\theta),
\]
we have
\[
    dR_T(\theta)[h]
    =
    \frac{1}{2pq}
    \tr\left\{
        E_TP(\theta)\dot\Sigma P(\theta)
    \right\}.
\]
Hence, by the Cauchy--Schwarz inequality,
\[
\begin{aligned}
    |dR_T(\theta)[h]|
    &\leq
    \frac{1}{2pq}
    \|E_T\|_F
    \|P(\theta)\dot\Sigma P(\theta)\|_F.
\end{aligned}
\]
Under Assumptions
\ref{assump:full-ranklatentfactor}--\ref{assump:boundedsetting},
the covariance matrices are uniformly positive definite on the
parameter space under consideration and the first derivatives of
$\Sigma_y(\theta)$ are uniformly bounded.  Therefore,
\[
    \sup_{\theta\in\Theta}
    \sup_{\|h\|=1}
    \|P(\theta)\dot\Sigma P(\theta)\|_F
    \leq C,
\]
and consequently
\[
    \sup_{\theta\in\Theta}
    \|\nabla R_T(\theta)\|
    =
    O_p(T^{-1/2}).
    \label{S.2}
\]
In particular, since
$\nabla\bar{\mathcal L}(\theta^*)=0$,
\[
    \nabla\bar{\mathcal L}_T(\theta^*)
    =
    \nabla R_T(\theta^*)
    =
    O_p(T^{-1/2}).
    \label{S.3}
\]

We next control the Hessian.  For two directions $h_1,h_2$, let
\[
    \dot\Sigma_i
    =
    d\Sigma_y(\theta)[h_i],
    \qquad i=1,2,
\]
and
\[
    \ddot\Sigma_{12}
    =
    d^2\Sigma_y(\theta)[h_1,h_2].
\]
Differentiating once more gives
\[
\begin{aligned}
    d^2R_T(\theta)[h_1,h_2]
    =
    \frac{1}{2pq}
    \tr\Big[
        E_T\{
        &P\dot\Sigma_2P\dot\Sigma_1P
        +
        P\dot\Sigma_1P\dot\Sigma_2P\\
        &-
        P\ddot\Sigma_{12}P
        \}
    \Big],
\end{aligned}
\]
where $P=P(\theta)$.  Hence
\[
\begin{aligned}
    |d^2R_T(\theta)[h_1,h_2]|
    \leq
    \frac{\|E_T\|_F}{2pq}
    \Big\|
        &P\dot\Sigma_2P\dot\Sigma_1P
        +
        P\dot\Sigma_1P\dot\Sigma_2P\\
        &-
        P\ddot\Sigma_{12}P
    \Big\|_F.
\end{aligned}
\]
By Assumptions
\ref{assump:full-ranklatentfactor}--\ref{assump:boundedsetting},
$P(\theta)$ and the first two derivatives of
$\Sigma_y(\theta)$ are uniformly bounded on $\Theta$.  Therefore,
for $\|h_1\|=\|h_2\|=1$,
\[
    \sup_{\theta\in\Theta}
    \Big\|
        P\dot\Sigma_2P\dot\Sigma_1P
        +
        P\dot\Sigma_1P\dot\Sigma_2P
        -
        P\ddot\Sigma_{12}P
    \Big\|_F
    \leq C.
\]
Together with \eqref{S.1}, this yields
\[
    \sup_{\theta\in\Theta}
    \|\nabla^2R_T(\theta)\|_{\op}
    =
    O_p(T^{-1/2})
    =
    o_p(1).
    \label{S.4}
\]

Since consistency has already been established,
\[
    \widehat\theta\xrightarrow{p}\theta^*.
\]
As $\theta^*$ is an interior point, the first-order condition gives
\[
    \nabla\bar{\mathcal L}_T(\widehat\theta)=0.
\]
A Taylor expansion around $\theta^*$ therefore yields
\[
    0
    =
    \nabla\bar{\mathcal L}_T(\theta^*)
    +
    \nabla^2\bar{\mathcal L}_T(\widetilde\theta)
    (\widehat\theta-\theta^*),
\]
for some $\widetilde\theta$ between $\widehat\theta$ and $\theta^*$.
Moreover,
\[
    \nabla^2\bar{\mathcal L}_T(\widetilde\theta)
    =
    \nabla^2\bar{\mathcal L}(\widetilde\theta)
    +
    \nabla^2R_T(\widetilde\theta).
\]
By consistency, continuity of the population Hessian, and
\eqref{S.4},
\[
    \nabla^2\bar{\mathcal L}_T(\widetilde\theta)
    =
    \nabla^2\bar{\mathcal L}(\theta^*)+o_p(1).
\]
Since the population Hessian at the identified interior parameter
$\theta^*$ is nonsingular,
\[
    \left\|
        \{\nabla^2\bar{\mathcal L}_T(\widetilde\theta)\}^{-1}
    \right\|_{\op}
    =
    O_p(1).
\]
Combining this with \eqref{S.3},
\[
\begin{aligned}
    \|\widehat\theta-\theta^*\|
    &\leq
    \left\|
        \{\nabla^2\bar{\mathcal L}_T(\widetilde\theta)\}^{-1}
    \right\|_{\op}
    \|\nabla\bar{\mathcal L}_T(\theta^*)\|\\
    &=
    O_p(T^{-1/2}).
\end{aligned}
\]

\end{proof}

\subsection{Lemma}\label{sec:lemma}

The following lemmas are frequently used in the proof of Theorem \ref{thmforcpconvrate}. In the following equations, we use $$e_{v,t}=\operatorname{vec}(e_t)$$to denote the vectorization of $e_t$ by stacking the columns of $e_t$ into a single column. 
\begin{lemma}\label{lemmaRformatrixfactor} Under assumptions \ref{assump:full-ranklatentfactor}--\ref{assump:boundedsetting}, we have 

    \begin{enumerate}
        \item[(a)] $\sup_{\theta\in\Theta} \frac{1}{Tpq}\operatorname{tr}\big\{\sum_{t=1}^T[Z^*x_t^{*}e_{v,t}^{\top}+e_{v,t}x_t^{*\top}Z^{*\top}]\cdot{\Sigma}_{y}^{-1}(\theta)\big\}\xrightarrow{p} 0, $
        \item[(b)] $\sup_{\theta\in\Theta}\frac{1}{Tpq}\operatorname{tr}(\sum_{t=1}^Te_{v,t}e_{v,t}^{\top}-\Sigma_{e}^*)\Sigma_{y}^{-1}(\theta)\xrightarrow{p}0$, 
        \item[(c)] $\sup_{\theta\in\Theta}\frac{1}{pq}\operatorname{tr}\big(\bar{e}_v\bar{e}_v^{\top}\Sigma_{r}^{-1}(\theta)\big)\xrightarrow{p}0$
    \end{enumerate}
\end{lemma}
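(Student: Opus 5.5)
Each of the three statements is a uniform-in-$\theta$ bound of the form $\sup_{\theta\in\Theta}\frac{1}{pq}\tr\{E\,\Sigma_y^{-1}(\theta)\}$, where $E$ is a centred random matrix. The plan is to split $\Sigma_y^{-1}(\theta)$ with the Woodbury identity \eqref{eq:syinverse},
\[
\Sigma_y^{-1}=\Sigma_e^{-1}-\Sigma_e^{-1}ZG_HZ^\top\Sigma_e^{-1},
\]
and to bound the diagonal part and the low-rank part separately. On $\Theta$ we have $C_e^{-2}\le\sigma_{ij}^2\le C_e^2$, the rows of $A$ and $B$ are bounded, and $G_H\preceq (Z^\top\Sigma_e^{-1}Z)^{-1}$. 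Hence $\Sigma_e^{-1}$ is a diagonal matrix with bounded entries, and $\Sigma_e^{-1/2}ZG_HZ^\top\Sigma_e^{-1/2}$ is a projection-like matrix of rank at most $d$ and operator norm at most $1$. Every bound below will use only these facts, so it holds uniformly in $\theta$.

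\textbf{Part (b).} Write $W=\frac1T\sum_t e_{v,t}e_{v,t}^\top-\Sigma_e^*$.
\begin{itemize}
\item Diagonal term. We have $\frac1{pq}\tr(W\Sigma_e^{-1})=\frac1{pq}\sum_{ij}\sigma_{ij}^{-2}\bigl(\frac1T\sum_t (e^j_{i;t})^2-\sigma^{*2}_{ij}\bigr)$. By Cauchy--Schwarz this is at most
\[
C\Bigl\{\frac1{pq}\sum_{ij}\Bigl(\frac1T\sum_t(e^j_{i;t})^2-\sigma^{*2}_{ij}\Bigr)^2\Bigr\}^{1/2},
\]
which is free of $\theta$. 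Its expectation is $O(T^{-1})$ by independence and the fourth-moment bound in Assumption~\ref{assump:wnbdedcov}, so the term is $O_p(T^{-1/2})$.
\item Low-rank term. We have $\frac1{pq}|\tr(WP)|\le \frac{d}{pq}\|\Sigma_e^{-1/2}W\Sigma_e^{-1/2}\|_{\op}$. This alone is not enough when $pq\gg T$. Instead I would write the term as
\[
\frac1{pq}\tr\Bigl(G_H^{1/2}Z^\top\Sigma_e^{-1}W\Sigma_e^{-1}ZG_H^{1/2}\Bigr)
\]
and use $\|G_H\|=O(1/pq)$. The problem then reduces to bounding $\frac{1}{(pq)^2}\|Z^\top \Sigma_e^{-1}W\Sigma_e^{-1}Z\|$ uniformly over $Z$ with bounded rows and bounded $\Sigma_e$. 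Writing $W$ entrywise, this is at most $\frac1{(pq)^2}\sum_{k,k'}|W_{kk'}|\cdot C$. The $\frac{1}{(pq)^2}\sum_{k,k'}W_{kk'}^2$ version has expectation $O(1/T)$, and Cauchy--Schwarz gives $O_p(T^{-1/2})$.
\end{itemize}
Both pieces of (b) are therefore $O_p(T^{-1/2})$ uniformly in $\theta$.

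\textbf{Part (a).} Symmetry gives $\frac{2}{Tpq}\sum_t x_t^{*\top}Z^{*\top}\Sigma_y^{-1}e_{v,t}$, and the same Woodbury split applies.
\begin{itemize}
\item Diagonal term. We have $\frac{1}{pq}\sum_{k}\sigma_k^{-2}\,\frac1T\sum_t (Z^*x_t^*)_k e_{k,t}$. By Cauchy--Schwarz this is at most $C\{\frac1{pq}\sum_k(\frac1T\sum_t(Z^*x^*_t)_ke_{k,t})^2\}^{1/2}$. Since $e$ is independent of $x$, the second moment is $\frac{1}{T}\sigma_k^{*2}E(Z^*x_t^*)_k^2=O(1/T)$.
\item Low-rank term. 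We have $\frac1{pq}\tr(G_HZ^\top\Sigma_e^{-1}S)$ with $S=\frac1T\sum_te_{v,t}x_t^{*\top}Z^{*\top}\Sigma_e^{-1}Z$. Its norm is controlled by $\|G_H\|\cdot\|Z^\top\Sigma_e^{-1}\frac1T\sum_t e_{v,t}x_t^{*\top}\|\cdot\|Z^{*\top}\Sigma_e^{-1}Z\|$. The middle factor is $O_p(\sqrt{pq/T})$ uniformly over bounded $Z$: I would bound it by the Frobenius norm of $\frac1T\sum_t e_{v,t}x_t^{*\top}$, which is $O_p(\sqrt{pq/T})$, times $\|Z\|_F=O(\sqrt{pq})$, times a factor of $1/pq$. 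Combining the pieces gives $O_p(T^{-1/2})$.
\end{itemize}

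\textbf{Part (c).} Since $\Sigma_y^{-1}\preceq\Sigma_e^{-1}\preceq C I$, we get $\frac1{pq}\bar e_v^\top\Sigma_y^{-1}\bar e_v\le \frac{C}{pq}\|\bar e_v\|^2$. This has expectation $O(1/T)$ and holds uniformly in $\theta$.

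\textbf{Main obstacle.} The main obstacle is uniformity over $\Theta$ in the low-rank parts when $pq\to\infty$. A naive operator-norm bound on $W$ introduces factors like $\sqrt{pq/T}$, and these do not vanish without a rate condition. The remedy I intend is to never take the operator norm of the $pq\times pq$ noise matrix. Instead I would exploit the rank-$d$ structure and entrywise Frobenius bounds, so that the dimension factors cancel against $\|G_H\|=O(1/pq)$. If that cancellation fails for the $ee^\top$ term, I would fall back on a chaining or $\varepsilon$-net argument over the compact low-dimensional set of $(A_r,B_r,M_x)$.
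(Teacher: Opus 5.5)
Your overall plan is the paper's: a Woodbury split of $\Sigma_y^{-1}$, Cauchy--Schwarz on the diagonal pieces, an entrywise Frobenius bound on the noise in the low-rank pieces, and $\Sigma_y^{-1}\preceq\Sigma_e^{-1}$ for (c). Part (c) and the diagonal pieces of (a) and (b) are fine as written.

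The gap is in how you normalise the low-rank pieces of (a) and (b). You invoke $\|G_H\|=O(1/pq)$ so that $pq\|G_H\|$ cancels the $(pq)^2$ from $\sum_{k,k'}|W_{kk'}|$ (and likewise in (a)). The relation $G_H\preceq(Z^\top\Sigma_e^{-1}Z)^{-1}$ gives this only if $\lambda_{\min}(Z^\top\Sigma_e^{-1}Z)\ge c\,pq$. Assumption~\ref{assump:boundedsetting} guarantees that only at $Z^*$, not uniformly over $\Theta$. Suppose $\theta\in\Theta$ has $Zv\approx0$ for a unit $v$, i.e.\ $Z=B\odot A$ is nearly rank deficient, which the column normalisation does not exclude. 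Then $v^\top G_H^{-1}v\approx v^\top M_x^{-1}v=O(1)$, so $\|G_H\|\asymp1$ and your bound becomes of order $pq/\sqrt T$. The supremum is therefore not controlled. You also cannot borrow the later fact $\|\widehat G_H\|_{\op}=O_p(N^{-1})$. This lemma is used in Step~1 of the consistency proof to handle $\theta=\widehat\theta$, before anything is known about $\widehat G_H$.

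The fix is the bound you wrote down and then discarded. Let $P=\Sigma_e^{-1/2}ZG_HZ^\top\Sigma_e^{-1/2}$, so that $0\preceq P\preceq I$ and $\operatorname{rank}P\le d$. For (b),
\[
\tfrac1{pq}\bigl|\tr(\Sigma_e^{-1/2}W\Sigma_e^{-1/2}P)\bigr|\le\tfrac{Cd}{pq}\|W\|_{\op}\le\tfrac{Cd}{pq}\|W\|_F .
\]
Since $E\|W\|_F^2\le C(pq)^2/T$, this is $O_p(T^{-1/2})$ uniformly in $\theta$, with no lower bound on $Z^\top\Sigma_e^{-1}Z$ needed. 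Your worry about a factor $\sqrt{pq/T}$ overlooks the prefactor $1/pq$.

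For (a), put $G=\frac1T\sum_te_{v,t}x_t^{*\top}$. Then
\[
\tfrac1{pq}\bigl|\tr(\Sigma_e^{-1/2}GZ^{*\top}\Sigma_e^{-1/2}P)\bigr|\le\tfrac{C\sqrt d}{pq}\|G\|_F\|Z^*\|_{\op}=O_p(T^{-1/2}).
\]
This uses only $\|G\|_F=O_p(\sqrt{pq/T})$ and $\|Z^*\|_{\op}=O(\sqrt{pq})$ at the true value. Incidentally, your ``middle factor'' in (a) is $O_p(pq/\sqrt T)$, not $O_p(\sqrt{pq/T})$, before the $1/pq$ is applied.

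This self-normalisation is what the paper does, via $H^{1/2}Z^\top\Sigma_e^{-1}ZH^{1/2}=I_d$ and $(H^{1/2}M_x^{-1}H^{1/2}+I_d)^{-1}\preceq I_d$. Because the resulting bounds depend on $\theta$ only through these deterministic facts, no chaining or $\varepsilon$-net argument is needed.
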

\begin{proof}
\,  \\

    \begin{enumerate}
        \item[(a)] By \eqref{eq:syinverse}, 
        \begin{align*}
        &\frac{1}{Tpq}\operatorname{tr}\{\sum_{t=1}^T[Z^*x_t^*e_{v,t}^{\top}+e_{v,t}x_t^{*\top}Z^{*\top}]\cdot{\Sigma}_{r}^{-1}(\theta)\}\\&=\frac{1}{Tpq}\operatorname{tr}\{\sum_{t=1}^T[Z^*x_t^*e_{v,t}^{\top}+e_{v,t}x_t^{*\top}Z^{*\top}]\cdot\Sigma_{er}^{-1}\}\\&-\frac{1}{Tpq}\operatorname{tr}\{\sum_{t=1}^T[Z^*x_t^*e_{v,t}^{\top}+e_{v,t}x_t^{*\top}Z^{*\top}]\cdot{\Sigma}_{e}^{-1}Z(M_x^{-1}+Z^{\top}\Sigma_{e}^{-1}Z)^{-1}Z^{\top}\Sigma_{e}^{-1}\}\\
        &=2\frac{1}{Tpq}\cdot \sum_{i,j,m,t}\frac{1}{\sigma_{ij}^2}a^*_{im}b^*_{jm}x_{m,t}^*e_{ij,t}\\&-2\frac{1}{Tpq}\operatorname{tr}\big\{\Sigma_{e}^{-1}ZH^{\frac{1}{2}}(H^{\frac{1}{2}}M_x^{-1}H^{\frac{1}{2}}+I_{d})^{-1}H
        ^{\frac{1}{2}}Z^{\top}\Sigma_{e}^{-1}e_{v,t}x_t^{*\top}Z^{*\top}\big\}.
        \end{align*}
        Because $$\frac{1}{Tpq}\cdot \sum_{i,j,m,t}\frac{1}{\phi_{ij}^2}a^*_{im}b^*_{jm}f_{m,t}^*e_{ij,t}\leq  \frac{1}{pq}\sum_{i,j}\big(\sum_{m}a_{im}^{*2}b_{jm}^{*2}\big)^{\frac{1}{2}}\big(\sum_{m}\bigg|\sum_t\frac{1}{T\phi_{ij}^2}e_{ij,t}x_{m,t}^*\bigg|^2\big)^{\frac{1}{2}},$$by assumption \ref{assump:wnbdedcov} and \ref{assump:boundedsetting}, the first term $\frac{1}{Tpq}\cdot \sum_{i,j,m,t}\frac{1}{\phi_{ij}^2}a^*_{im}b^*_{jm}x_{m,t}^*e_{ij,t}$ is $O_p(T^{-\frac{1}{2}})$ uniformly in $\Theta$. For the second term, we have \begin{align*}
        &\frac{1}{Tpq}\operatorname{tr}\big\{Z^{*\top}\Sigma_{e}^{-1}ZH^{\frac{1}{2}}(H^{\frac{1}{2}}M_x^{-1}H^{\frac{1}{2}}+I_{d})^{-1}H
        ^{\frac{1}{2}}Z^{\top}\Sigma_{e}^{-1}e_{v,t}x^{*\top}_t\big\}\\&\leq (\frac{1}{pq}\sum_{i,j,n}|a_{i,n}^*b^*_{j,n}|^2)^{\frac{1}{2}}\big(C_e^2\cdot \operatorname{tr}(H^{\frac{1}{2}}Z^{\top}\Sigma_{e}^{-1}ZH^{\frac{1}{2}})\big)^{\frac{1}{2}}(H^{\frac{1}{2}}M_x^{-1}H^{\frac{1}{2}}+I_{d})^{-1}\\&\cdot\big(C_e^2\cdot \operatorname{tr}(H^{\frac{1}{2}}Z^{\top}\Sigma_{e}^{-1}ZH^{\frac{1}{2}})\big)^{\frac{1}{2}}(\frac{1}{q}\sum_{i,j}\bigg|\frac{1}{Tpq}\sum_{t=1}^Te_{ij,t}x_t^{*\top}\bigg|^2)^{\frac{1}{2}}.
        \end{align*}
        By $(H^{\frac{1}{2}}M_x^{-1}H^{\frac{1}{2}}+I_{d})^{-1}=O_p(1)$ and $H^{\frac{1}{2}}Z^{\top}\Sigma_{e}^{-1}ZH^{\frac{1}{2}}=I_{d}$, the second term is positive and $O_p(T^{-\frac{1}{2}})$ uniformly in $\Theta$. Therefore, we proved (a). 
        \item[(b)] Again substituting equation \ref{eq:syinverse} into (b), $\frac{1}{Tpq}tr(\sum_{t=1}^Te_{v,t}e_{v,t}^{\top}-\Sigma_{e}^*)\Sigma_{y}^{-1}(\theta)$ is divided into two terms \begin{align*}&\operatorname{tr}\big(\frac{1}{Tpq}\sum_{t=1}^T(e_{v,t}e_{v,t}^{\top}-\Sigma_{e}^*)\Sigma_{e}^{-1}\big)\\-\operatorname{tr}&\big(\frac{1}{pq}(\frac{1}{T}\sum_{t=1}^Te_{v,t}e_{v,t}^{\top}-\Sigma_{e}^*)\Sigma_{e}^{-1}Z(M_x^{-1}+Z^{\top}\Sigma_{e}^{-1}Z)^{-1}Z^{\top}\Sigma_{e}^{-1}\big).\end{align*} 
        For the first term, $$\operatorname{tr}\big(\frac{1}{pq}(\frac{1}{T}\sum_{t=1}^Te_{v,t}e_{v,t}^{\top}-\Sigma_{e}^*)\Sigma_{e}^{-1}\big)\leq \big\{\sum_{i,j}\frac{1}{pq}\big[\frac{1}{T}\sum_{t}\big(e_{ij,t}^2-(\sigma_{ij})^2\big)\big]^2\big\}^{\frac{1}{2}}\big(\frac{1}{pq}\sum_{i,j}\phi_{ij}^{-4}\big)^{\frac{1}{2}}$$is $O_p(T^{-\frac{1}{2}})$ uniformly in $\Theta$. The second term is also $O_p(T^{-\frac{1}{2}})$ uniformly in $\Theta$, because: 
        \begin{align*}
            &\operatorname{tr}\big(\frac{1}{Tpq}\sum_{t=1}^T(e_te_t^{\top}-\Sigma_{e}^*)\Sigma_{e}^{-1}Z(M_x^{-1}+Z^{\top}\Sigma_{e}^{-1}Z)^{-1}Z^{\top}\Sigma_{e}^{-1}\big)\\=&\operatorname{tr}\big\{\frac{1}{pq}\sum_{i,i',j,j'}H^{\frac{1}{2}}z_{ij}z_{i'j'}^{\top}H^{\frac{1}{2}}\big[\frac{1}{T}\sum_{t}\big( e_{ij,t}e_{i'j',t}-\delta_{ij}^{i'j'}(\sigma_{ij})^2\big)\big]\\&\cdot \phi_{ij}^{-2}\phi_{i'j'}^{-2}(H^{\frac{1}{2}}M_x^{-1}H^{\frac{1}{2}}+I_{d})^{-1}\big\}\\\leq& C_e^2\cdot \operatorname{tr}(H^{\frac{1}{2}}Z^{\top}\Sigma_{e}^{-1}ZH^{\frac{1}{2}})\big\{\frac{1}{p^2q^2}\sum_{i,i',j,j'}\big[\frac{1}{T}\sum_{t}\big(e_{ij,t}e_{i'j',t}-\delta_{ij}^{i'j'}(\sigma_{ij})^2\big)\big]^2\big\}^{\frac{1}{2}}\\&\cdot \operatorname{tr}(H^{\frac{1}{2}}M_x^{-1}H^{\frac{1}{2}}+I_{d})^{-1}. 
        \end{align*}
        \item[(c)] By $\Sigma_{y}=ZM_xZ^{\top}+\Sigma_{e}\succ \Sigma_{e}$, $\frac{1}{pq}\operatorname{tr}\big(\bar{e}_v\bar{e}_v^{\top}\Sigma_{y}^{-1}(\theta)\big)\leq\frac{1}{pq}\operatorname{tr}\{(\frac{1}{T}\sum_t e_{v,t})(\frac{1}{T}\sum_te_{v,t})^{\top}\Sigma_{e}^{-1}\}$ is $O_p(T^{-1})$ uniformly in $\Theta$. 
    \end{enumerate}
\end{proof}
\begin{lemma}\label{lemmaforHerror}
    Under assumptions \ref{assump:full-ranklatentfactor}--\ref{assump:boundedsetting}, $$\|\frac{1}{pq}Z^{*\top}(\widehat{\Sigma}_{e}^{-1}-\Sigma_{e}^{*-1})Z^*\|=O_p(\{\frac{1}{pq}\operatorname{tr}[(\widehat{\Sigma}_{e}\Sigma_{e}^{*-1}-I_{pq})^2]\}^{\frac{1}{2}}).$$
\end{lemma}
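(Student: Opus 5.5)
The plan is to exploit the fact that $\widehat{\Sigma}_e$, $\Sigma_e^*$ are both diagonal, so the difference of inverses is diagonal and can be written entrywise. Set $\delta_{ij}=\hat\sigma_{ij}^2/\sigma_{ij}^{*2}-1$, the diagonal entries of $\widehat{\Sigma}_e\Sigma_e^{*-1}-I_{pq}$. Then
\[
\widehat{\Sigma}_e^{-1}-\Sigma_e^{*-1}=\Sigma_e^{*-1}\Bigl[(I_{pq}+\Delta)^{-1}-I_{pq}\Bigr]=-\Sigma_e^{*-1}\Delta(I_{pq}+\Delta)^{-1},\qquad \Delta=\operatorname{diag}(\delta_{ij}),
\]
so the $(ij)$-th diagonal entry equals $-\sigma_{ij}^{*-2}\delta_{ij}/(1+\delta_{ij})=-\delta_{ij}/\hat\sigma_{ij}^{2}$. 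Under Assumption~\ref{assump:wnbdedcov} and the bounded parameter set $\Theta$ (which, as used in Step~1 of the proof of Theorem~\ref{thmforcpconvrate}, keeps $\hat\sigma_{ij}^2$ in $[C_e^{-2},C_e^{2}]$ up to constants), the weights $w_{ij}:=\hat\sigma_{ij}^{-2}$ are uniformly bounded by a constant $C_e^2$.

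Next write the $d\times d$ matrix as a weighted sum over the $pq$ rows $z^*_{ij}=(a^*_{i\ell}b^*_{j\ell})_{\ell\in[d]}\in\mathbb{R}^d$ of $Z^*$:
\[
\frac{1}{pq}Z^{*\top}(\widehat{\Sigma}_e^{-1}-\Sigma_e^{*-1})Z^*=-\frac{1}{pq}\sum_{i\in[p],j\in[q]}w_{ij}\,\delta_{ij}\,z^*_{ij}z^{*\top}_{ij}.
\]
Taking the Frobenius (or operator) norm and applying the triangle inequality followed by Cauchy--Schwarz over the index set of size $pq$ gives
\[
\Bigl\|\tfrac{1}{pq}Z^{*\top}(\widehat{\Sigma}_e^{-1}-\Sigma_e^{*-1})Z^*\Bigr\|\le C_e^2\Bigl(\frac{1}{pq}\sum_{i,j}\|z^*_{ij}\|^4\Bigr)^{1/2}\Bigl(\frac{1}{pq}\sum_{i,j}\delta_{ij}^2\Bigr)^{1/2}.
\]
By Assumption~\ref{assump:boundedsetting}, $\|z^*_{ij}\|^2=\sum_\ell a_{i\ell}^{*2}b_{j\ell}^{*2}\le\|a_i^*\|^2\|b_j^*\|^2\le C^4$, so the first factor is bounded by $C^4$ uniformly in $p,q,T$. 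The second factor is exactly $\{\frac{1}{pq}\operatorname{tr}[(\widehat{\Sigma}_e\Sigma_e^{*-1}-I_{pq})^2]\}^{1/2}$, which yields the claim deterministically, hence in particular as an $O_p$ statement.

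The only delicate point is the uniform lower bound on $\hat\sigma_{ij}^2$ needed to bound $w_{ij}$. I would take it from the parameter space restriction (bounded $\Theta$ with the reparametrisation $\Sigma_e=\exp(S)$ over bounded $S$), exactly as it is invoked in Step~1 of the proof of Theorem~\ref{thmforcpconvrate}. If one preferred not to rely on it, one could instead use the consistency result \eqref{errorforwninCPfactmod} to show that only a vanishing fraction of $\delta_{ij}$ are near $-1$; however, the bounded-$\Theta$ route is the intended and simpler one.
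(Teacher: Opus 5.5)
Your proof is correct and is essentially the paper's own argument. Like the paper, you write the diagonal difference entrywise as $-\delta_{ij}/\hat\sigma_{ij}^2$ and expand $Z^{*\top}(\cdot)Z^*$ as a weighted sum of $z^*_{ij}z^{*\top}_{ij}$. You then apply Cauchy--Schwarz using a uniform bound on $\hat\sigma_{ij}^{-2}$ (the paper uses the cruder constant $C_e^4$), and finish with row-boundedness of $A^*,B^*$ to control $\frac{1}{pq}\sum\|z^*_{ij}\|^4$.

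Both arguments rely on $\hat\sigma_{ij}^2$ staying in $[C_e^{-2},C_e^2]$ via the parameter-space restriction. Your suggested fallback through \eqref{errorforwninCPfactmod} would not remove that reliance, because that display was itself derived using the same bound.
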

\begin{proof}
    \begin{align*}
        \|\frac{1}{pq}Z^{*\top}(\widehat{\Sigma}_{e}^{-1}-\Sigma_{e}^{*-1})Z^*\|=\|\frac{1}{pq}\sum_{i,j}z^*_{ij}z_{ij}^{*\top}\frac{\sigma^{*2}_{ij}-\sigma_{ij}^{2}}{\sigma_{ij}^{*2}}\frac{1}{\sigma_{ij}^2}\|\\\leq C_e^4(\frac{1}{pq}\sum_{i,j}\|z_{ij}^*\|^4)^{\frac{1}{2}}\big(\frac{1}{pq}\operatorname{tr}[(\widehat{\Sigma}_{e}\Sigma_{e}^{*-1}-I_{pq})^2]\big)^{\frac{1}{2}}.
    \end{align*}By assumption \ref{assump:boundedsetting}, the result holds. 
\end{proof}

\begin{lemma}\label{lemmaforfactmodconsistency1}
    Under assumptions \ref{assump:full-ranklatentfactor}--\ref{assump:boundedsetting}, the following hold for each column $i\in[p],j\in[q]$:
    \begin{enumerate}
        \item[(a)] $\widehat{H}\widehat{Z}^{\top}\widehat{\Sigma}_{e}^{-1}Z^*\frac{1}{T}(\sum_{t=1}^Tx_t^{*\top}Z^{*\top}e_{ij,t})=\|\widehat{H}^{\frac{1}{2}}(pq)^{\frac{1}{2}}\|\cdot O_p(T^{-\frac{1}{2}});$
        \item[(b)] $\widehat{H}\widehat{Z}^{\top}\widehat{\Sigma}_{e}^{-1}\frac{1}{T}(\sum_{t=1}^Te_{v,t}^{\top}Z^*x_t^{*})\big|^{ij}=\|\widehat{H}^{\frac{1}{2}}(pq)^{\frac{1}{2}}\|\cdot O_p(T^{-\frac{1}{2}});$
        \item[(c)] $\widehat{H}\widehat{Z}^{\top}\widehat{\Sigma}_{e}^{-1}(\frac{1}{T}\sum_{t=1}^Te_{v,t}e_{v,t}^{\top}-\Sigma_{e}^*)|^{ij}=\|\widehat{H}^{\frac{1}{2}}(pq)^{\frac{1}{2}}\|\cdot O_p(T^{-\frac{1}{2}});$
        \item[(d)] $\widehat{H}\widehat{Z}^{\top}\widehat{\Sigma}_{e}^{-1}(\Sigma_{e}^*-\widehat\Sigma_{e})|^{ij}=O_p(\|\widehat{H}\|_2);$ 
        \item[(e)] $\|(pq)^{1/2}\widehat{H}^{1/2}\|^2=O_p[(I_{d}-\mathcal{Z})^2]=O_p(1);$
        \item[(f)] $\widehat{H}\widehat{Z}^{\top}\widehat{\Sigma}_{e}^{-1}Z^*\frac{1}{Tpq}(\sum_{t=1}^Tx_t^{*}e_{v,t}^{\top})\widehat{\Sigma}_{e}^{-1}\widehat{Z}\widehat{H}= O_p(T^{-\frac{1}{2}});$
        \item[(g)] $\widehat{H}\widehat{Z}^{\top}\widehat{\Sigma}_{e}^{-1}\frac{1}{Tpq}(\sum_{t=1}^Te_{v,t}x_t^{*\top}Z^{*\top})\widehat\Sigma_{e}^{-1}\widehat{Z}\widehat{H}=O_p(T^{-\frac{1}{2}});$
        \item[(h)] $\widehat{H}\widehat{Z}^{\top}\widehat{\Sigma}_{e}^{-1}(\frac{1}{T}\sum_{t=1}^Te_{v,t}e_{v,t}^{\top}-\Sigma_{e}^*)\widehat{\Sigma}_{e}^{-1}\widehat{Z}\widehat{H}=O_p(T^{-\frac{1}{2}});$
        \item[(i)] $\widehat{H}\widehat{Z}^{\top}\widehat{\Sigma}_{e}^{-1}(\Sigma_{e}^*-\widehat\Sigma_{e})\widehat{\Sigma}_{e}^{-1}\widehat{Z}\widehat{H}=O_p\big((pq)^{-\frac{1}{2}}\mathcal{E}_e\big).$
    \end{enumerate}
\end{lemma}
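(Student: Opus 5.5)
Each of the nine bounds follows the same pattern. I first use the column-space structure of $\widehat{\Sigma}_e^{-1/2}\widehat Z\widehat H^{1/2}$ to reduce everything to a weighted sum of centered noise (or noise--factor) averages. I then control that sum by second moments under Assumptions~\ref{assump:full-ranklatentfactor}--\ref{assump:wnbdedcov}. The key deterministic fact is that, since $\widehat H=(\widehat Z^\top\widehat\Sigma_e^{-1}\widehat Z)^{-1}$,
\[
\widehat H^{1/2}\widehat Z^\top\widehat\Sigma_e^{-1}\widehat Z\widehat H^{1/2}=I_d .
\]
Consequently $W:=\widehat\Sigma_e^{-1/2}\widehat Z\widehat H^{1/2}$ has orthonormal columns, and $\widehat H\widehat Z^\top\widehat\Sigma_e^{-1}=\widehat H^{1/2}W^\top\widehat\Sigma_e^{-1/2}$. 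The entries of $\widehat\Sigma_e^{-1/2}$ are bounded by $C_e$ on $\Theta$, and $\|W\|_{\op}=1$. Every left factor $\widehat H\widehat Z^\top\widehat\Sigma_e^{-1}$ therefore costs at most $C_e\|\widehat H^{1/2}\|$, which equals $(pq)^{-1/2}\|(pq)^{1/2}\widehat H^{1/2}\|$. This produces the factor $\|\widehat H^{1/2}(pq)^{1/2}\|$ in (a)--(c).

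I would prove (e) first, because it converts that factor into $O_p(1)$. Using \eqref{limitofH_rforfactmod} from Step~2 of the proof of Theorem~\ref{thmforcpconvrate}, namely $(I-\mathcal Z)\widehat Z^\top\Sigma_e^{*-1}Z^*H^*=I_d+o_p(1)$, together with $Z^{*\top}\widehat\Sigma_e^{-1}\widehat Z\widehat H=(I-\mathcal Z)^\top$ and Cauchy--Schwarz in the $\widehat\Sigma_e^{-1}$ inner product, I get
\[
(I-\mathcal Z)\,\widehat H^{-1}... \quad\text{i.e.}\quad \widehat H\preceq (I-\mathcal Z)^\top\big(Z^{*\top}\widehat\Sigma_e^{-1}Z^*\big)^{-1}(I-\mathcal Z)
\]
up to $o_p$ terms. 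This holds because $\widehat\Sigma_e^{-1/2}Z^*$ projected onto $\operatorname{col}(W)$ has Gram matrix $(I-\mathcal Z)\widehat H^{-1}(I-\mathcal Z)^\top\preceq Z^{*\top}\widehat\Sigma_e^{-1}Z^*$. Assumption~\ref{assump:boundedsetting} gives $\frac{1}{pq}Z^{*\top}\widehat\Sigma_e^{-1}Z^*\succeq c>0$. Hence $pq\|\widehat H\|=O_p(\|I-\mathcal Z\|^2)$, and Step~3 supplies $I-\mathcal Z=O_p(1)$.

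For (a)--(c) I fix $(i,j)$ and bound the remaining random object. In (a) it is $\frac1T\sum_t x_t^{*\top}Z^{*\top}e_{ij,t}$ premultiplied by $\widehat H\widehat Z^\top\widehat\Sigma_e^{-1}Z^*=(I-\mathcal Z)^\top$ up to normalization. Its second moment is $O(T^{-1})$ by independence of $e$ from $x$, bounded rows of $Z^*$, and $E\|x_t\|^2<\infty$. In (b) the object is the $ij$-row of $\frac1T\sum_t e_{v,t}x_t^{*\top}Z^{*\top}$, i.e.\ a $pq$-vector whose $\ell_2$-norm squared has expectation $O(pq/T)$. In (c) it is the $ij$-column of $\frac1T\sum_t(e_{v,t}e_{v,t}^\top-\Sigma_e^*)$, whose squared norm has expectation $O(pq\,C_e^4/T)$ by independence and the fourth-moment bound. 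In both cases Cauchy--Schwarz with the $\|W\|_{\op}=1$ reduction gives $(pq)^{-1/2}\|(pq)^{1/2}\widehat H^{1/2}\|\cdot O_p(\sqrt{pq/T})$. Item (d) is the same computation with the deterministic diagonal $\Sigma_e^*-\widehat\Sigma_e$, whose $ij$ column has a single bounded entry, giving $O_p(\|\widehat H\|)$.

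Items (f)--(i) carry the reduction on both sides, $\widehat H^{1/2}W^\top\widehat\Sigma_e^{-1/2}(\cdot)\widehat\Sigma_e^{-1/2}W\widehat H^{1/2}$, and each side contributes $(pq)^{-1/2}O_p(1)$ by (e). For (f) and (g) the middle matrix $\frac1T\sum_t Z^*x_t^*e_{v,t}^\top$ (times $\frac1{pq}$ in the statement, absorbed appropriately) has the form $Z^*\cdot\Xi$ with $\|\Xi\|_F^2=O_p(pq/T)$. Combined with $\|\widehat H\widehat Z^\top\widehat\Sigma_e^{-1}Z^*\|=O_p(1)$, this gives $O_p(T^{-1/2})$. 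For (h) I would bound $W^\top D(\frac1T\sum e e^\top-\Sigma_e^*)DW$, with $D=\widehat\Sigma_e^{-1/2}$, using operator norm. This is where I expect the main obstacle. The operator norm of the centered sample covariance of $pq$ independent coordinates is about $\sqrt{pq/T}+pq/T$, not $\sqrt{pq/T}$. So I would instead use the Frobenius-type bound $\|\widehat H^{1/2}\|^2\|W^\top D E D W\|_F$ and exploit the fact that $W$ lies in the span of $\widehat\Sigma_e^{-1/2}\widehat Z$, whose rows are bounded. The quantity is then $\frac{1}{pq}\sum_{ij,i'j'}\hat z_{ij}\hat z_{i'j'}^\top(\cdot)E_{ij,i'j'}$. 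Its uniformity over $\widehat Z$ is handled by writing it as a bilinear form in the bounded vectors $\hat z_{ij}\in\mathbb R^d$ and applying Cauchy--Schwarz against $\big(\frac{1}{p^2q^2}\sum E_{ij,i'j'}^2\big)^{1/2}=O_p(T^{-1/2})$, exactly as in Lemma~\ref{lemmaRformatrixfactor}(b). Finally, (i) follows from the diagonal structure: the expression equals $\widehat H\sum_{ij}\hat z_{ij}\hat z_{ij}^\top\hat\sigma_{ij}^{-4}(\sigma^{*2}_{ij}-\hat\sigma^2_{ij})\widehat H$. Cauchy--Schwarz over $(i,j)$, $\|\widehat H\|=O_p((pq)^{-1})$, and bounded rows give $O_p\big((pq)^{-1/2}(\frac{1}{pq}\operatorname{tr}\mathcal E_e^2)^{1/2}\big)$, which is the stated $O_p((pq)^{-1/2}\mathcal E_e)$.
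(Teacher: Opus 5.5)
Your reduction through the orthonormal columns of $W=\widehat\Sigma_e^{-1/2}\widehat Z\widehat H^{1/2}$, followed by Cauchy--Schwarz, is essentially what the paper does for (a)--(d) and (f)--(i): it uses $\operatorname{tr}(\widehat H^{1/2}\widehat Z^\top\widehat\Sigma_e^{-1}\widehat Z\widehat H^{1/2})=d$ in the same way. The genuine gap is in (e), on which (f)--(i) depend: your inequality points the wrong way. The projection bound $(I-\mathcal Z)\widehat H^{-1}(I-\mathcal Z)^\top\preceq Z^{*\top}\widehat\Sigma_e^{-1}Z^*$ is true. But inverting it gives $\widehat H\succeq (I-\mathcal Z)^\top\big(Z^{*\top}\widehat\Sigma_e^{-1}Z^*\big)^{-1}(I-\mathcal Z)$, which is a \emph{lower} bound on $\widehat H$. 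No argument of this geometric kind can give the upper bound. The inequality $\|WW^\top v\|\le\|v\|$ holds for every $\widehat Z$, including nearly rank-deficient ones for which $\widehat H$ is arbitrarily large. (A minor slip: $Z^{*\top}\widehat\Sigma_e^{-1}\widehat Z\widehat H=I-\mathcal Z$; it is $\widehat H\widehat Z^\top\widehat\Sigma_e^{-1}Z^*$ that equals $(I-\mathcal Z)^\top$.)

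What you need is the reverse, statistical, fact: projecting $\widehat\Sigma_e^{-1/2}Z^*$ onto $\operatorname{col}(W)$ loses asymptotically nothing,
\[
\frac{1}{pq}(I-\mathcal Z)\widehat H^{-1}(I-\mathcal Z)^\top=\frac{1}{pq}Z^{*\top}\widehat\Sigma_e^{-1}Z^*+o_p(1).
\]
This is \eqref{eqfanzhengjiuyongyici}, equivalently $\frac{1}{pq}\operatorname{tr}(R^\top\widehat\Sigma_e^{-1}R)=o_p(1)$. It comes from the likelihood comparison in Steps 1--2 of the proof of Theorem~\ref{thmforcpconvrate}. You can also get it from \eqref{limitofH_rforfactmod}, provided you use that equation as an approximate \emph{identity}. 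To do so, swap $\Sigma_e^{*-1}$ for $\widehat\Sigma_e^{-1}$ (Step 1 plus bounded rows) and note $\widehat Z^\top\widehat\Sigma_e^{-1}Z^*=\widehat H^{-1}(I-\mathcal Z)^\top$. Since $\frac{1}{pq}Z^{*\top}\widehat\Sigma_e^{-1}Z^*\succeq c\,I_d$ by Assumption~\ref{assump:boundedsetting}, $I-\mathcal Z$ is invertible with probability tending to one. Then
\[
\widehat H=(I-\mathcal Z)^\top\big[(I-\mathcal Z)\widehat H^{-1}(I-\mathcal Z)^\top\big]^{-1}(I-\mathcal Z)\preceq (c\,pq)^{-1}(I-\mathcal Z)^\top(I-\mathcal Z),
\]
which is (e); this is how the paper computes $\operatorname{tr}(pq\widehat H)$.

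Two smaller points. In (d), $\|W\|_{\mathrm{op}}=1$ alone only gives $O_p(\|\widehat H^{1/2}\|)$. You need the explicit form $\widehat H\hat z_{ij}\hat\sigma_{ij}^{-2}(\sigma^{*2}_{ij}-\hat\sigma^{2}_{ij})$ together with $\|\hat z_{ij}\|\le C$. Item (h) is not a real obstacle: once (e) holds, the crude bound $C_e^2\|\widehat H\|\,\|\frac1T\sum_t e_{v,t}e_{v,t}^\top-\Sigma_e^*\|_F=O_p((pq)^{-1})\,O_p(pq/\sqrt T)$ already gives $O_p(T^{-1/2})$.
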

\begin{proof}
    \begin{align*}
    &\widehat{H}\widehat{Z}^{\top}\widehat{\Sigma}_{e}^{-1}Z^*\frac{1}{T}(\sum_{t=1}^Tx_t^{*}e_{ij,t})\\=&\widehat{H}^{\frac{1}{2}}\sum_{i',j'}\sigma_{i'j'}^{-2}\widehat{H}^{\frac{1}{2}}\hat{z}_{i'j'}z_{i'j'}^{*\top}(\frac{1}{T}\sum_{t=1}^Tx_t^{*}e_{ij,t})\\\leq&(pq)^{\frac{1}{2}}\widehat{H}^{\frac{1}{2}}C_e[\operatorname{tr}(\widehat{H}^{\frac{1}{2}}\widehat{Z}\widehat{\Sigma}_{e}^{-1}\widehat{Z}\widehat{H}^{\frac{1}{2}})]^{\frac{1}{2}}[\frac{1}{pq}\sum_{i',j'}\|z_{i'j'}^*\|^2]^{\frac{1}{2}}[\frac{1}{T}\sum_{t=1}^Tx_t^{*}e_{ij,t}]\\=&\|\widehat{H}^{\frac{1}{2}}(pq)^{\frac{1}{2}}\|\cdot d\cdot O_p(T^{-\frac{1}{2}}).\end{align*}
    This proves (a). Similarly for (b), \begin{align*}
        &\widehat{H}\widehat{Z}^{\top}\widehat{\Sigma}_{e}^{-1}\frac{1}{T}(\sum_{t=1}^Te_{v,t}x_t^{*\top}Z^{*\top})\big|^{ij}\\&=\widehat{H}^{\frac{1}{2}}\sum_{i',j'}\sigma_{i'j'}^{-2}\widehat{H}^{\frac{1}{2}}\hat{z}_{i'j'}\frac{1}{T}\sum_t[e_{i'j',t}x_t^{*\top}]z_{ij}^*\\&\lesssim C\|(pq)^{\frac{1}{2}}\widehat{H}^{\frac{1}{2}}\|[\operatorname{tr}(\widehat{H}^{\frac{1}{2}}\widehat{Z}\widehat{\Sigma}_{e}^{-1}\widehat{Z}\widehat{H}^{\frac{1}{2}})]^{\frac{1}{2}}[\frac{1}{Tpq}\sum_{t,i',j'}(e_{i'j',t}x^{*\top})]^{\frac{1}{2}}z^*_{ij}\\&=\|(pq)^{\frac{1}{2}}\widehat{H}^{\frac{1}{2}}\|\cdot O_p(T^{\frac{1}{2}}).
    \end{align*}
    Using the same method as for (b), we have (c).
    
    For (d), $\widehat{H}\widehat{Z}^{\top}\widehat{\Sigma}_{e}^{-1}(\Sigma_{e}^*-\widehat\Sigma_{e})|^{ij}=\widehat{H}\hat{z}_{ij}\sigma_{ij}^{-2}(\sigma^{*2}_{ij}-\hat{\sigma}^{2}_{ij})\leq 2C_e^4\|\widehat{H}\|_2\|\hat{z}_{ij}\|=O_p(\|\widehat{H}\|_2).$ This proves (d). 
    
    For $\|(pq)^{\frac{1}{2}}\widehat{H}^{\frac{1}{2}}\|$ in (e), via equation \eqref{limitofH_rforfactmod}, it has trace $\|(pq)^{\frac{1}{2}}\widehat{H}^{\frac{1}{2}}\|^2=\operatorname{tr}\big((pq)\widehat{H}\big)=\operatorname{tr}\{(I_{d}-\mathcal{Z})(I_{d}-\mathcal{Z})^{\top}[\frac{1}{pq}Z^{*\top}\Sigma_{e}^{*-1}Z^*]^{-1}\}+o_p(1)=O_p[(I_{d}-\mathcal{Z})^2]=O_p(1)$. Therefore,  by (b) and (d), (e) to (h) hold. 

    For (i), \begin{align*}
    |\widehat{H}\widehat{Z}^{\top}\widehat{\Sigma}_{e}^{-1}(\Sigma_{e}^*-\widehat\Sigma_{e})\widehat{\Sigma}_{e}^{-1}\widehat{Z}\widehat{H}|&=|pq\widehat{H}\frac{1}{p^2q^2}\sum_{i,j}\hat\sigma_{ij}^{-4}(\hat{\sigma}_{ij}^2-\sigma_{ij}^{*2})\hat{z}_{ij}\hat{z}_{ij}^{\top}\cdot pq\widehat{H}|\\&\leq pq\widehat{H}[\frac{1}{p^2q^2}\sum_{i,j}\hat\sigma_{ij}^{-4}(\hat\sigma_{ij}^2-\sigma_{ij}^{*2})^2]^{\frac{1}{2}}[\frac{1}{p^2q^2}\sum_{i,j}\hat\sigma_{ij}^{-4}\|\hat{z}_j\|^4]^{\frac{1}{2}}pq\widehat{H}\\&=O_p\left((pq)^{-\frac{1}{2}}\mathcal{E}_e\right)\end{align*}
\end{proof}

\begin{lemma}\label{lemmaforfactmodconsistency2}
Under assumptions \ref{assump:full-ranklatentfactor}--\ref{assump:boundedsetting}, the following hold
    \begin{enumerate}
        \item[(a)] $\frac{1}{pq}\widehat{H}\widehat{Z}^{\top}\widehat{\Sigma}_{e}^{-1}Z^*\frac{1}{T}(\sum_{t=1}^Tx_t^{*}e_{v,t}^{\top})Z^*=\|\widehat{H}^{\frac{1}{2}}(pq)^{\frac{1}{2}}\|\cdot O_p(T^{-\frac{1}{2}});$
        \item[(b)] $\frac{1}{pq}\widehat{H}\widehat{Z}^{\top}\widehat{\Sigma}_{e}^{-1}\frac{1}{T}(\sum_{t=1}^Te_{v,t}x_t^{*\top}Z^{*\top})Z^*= O_p(T^{-\frac{1}{2}});$
        \item[(c)] $\frac{1}{pq}\widehat{H}\widehat{Z}^{\top}\widehat{\Sigma}_{e}^{-1}(\frac{1}{T}\sum_{t=1}^Te_{v,t}e_{v,t}^{\top}-\Sigma_{e}^*)Z^*=O_p(T^{-\frac{1}{2}});$
        \item[(d)] $\frac{1}{pq}\widehat{H}\widehat{Z}^{\top}\widehat{\Sigma}_{e}^{-1}(\Sigma_{e}^*-\widehat\Sigma_{e})Z^*=O_p(T^{-\frac{1}{2}}).$
    \end{enumerate}
\end{lemma}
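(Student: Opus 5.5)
The plan is to handle all four items with the device used in Lemma~\ref{lemmaforfactmodconsistency1}. We split $\widehat H=\widehat H^{1/2}\widehat H^{1/2}$ and attach one factor $\widehat H^{1/2}$ to $\widehat Z^{\top}\widehat\Sigma_e^{-1/2}$, which is normalized because
\[
\operatorname{tr}\big(\widehat H^{1/2}\widehat Z^{\top}\widehat\Sigma_e^{-1}\widehat Z\widehat H^{1/2}\big)=d .
\]
The other factor $\widehat H^{1/2}$ is rescaled by $(pq)^{1/2}$. By Lemma~\ref{lemmaforfactmodconsistency1}(e), $\|(pq)^{1/2}\widehat H^{1/2}\|=O_p(1)$. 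The remaining stochastic part is then a $pq$-averaged sample mean, bounded by Cauchy--Schwarz. The prefactor $\frac1{pq}$ together with the trailing $Z^*$ (whose rows are bounded by Assumption~\ref{assump:boundedsetting}, so $\frac1{pq}\|Z^*\|_F^2=O(1)$) gives exactly the normalization needed to remove one power of $(pq)^{1/2}$.

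\textbf{Item (a).} Write the expression as
\[
\frac{1}{pq}\widehat H\,\big(\widehat Z^{\top}\widehat\Sigma_e^{-1}Z^*\big)\Big(\frac1T\sum_t x_t^*e_{v,t}^{\top}Z^*\Big).
\]
The middle factor is bounded by $(pq)^{1/2}\|\widehat H^{-1/2}\|$ times $O(1)$, using the normalization above and $\frac1{pq}\|Z^*\|_F^2=O(1)$. The last factor is a $d\times d$ matrix with entries
\[
\frac1T\sum_t x^*_{m,t}\sum_{ij}z^*_{ij,n}e_{ij,t}.
\]
Using independence of $e$ across $(i,j,t)$ and from $x$, its second moment is $O(pq/T)$, so the factor is $(pq)^{1/2}O_p(T^{-1/2})$. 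Multiplying out gives $\|\widehat H^{1/2}(pq)^{1/2}\|\,O_p(T^{-1/2})$.

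\textbf{Item (b).} The inner object is $\frac1T\sum_t e_{v,t}x_t^{*\top}\,(Z^{*\top}Z^*)$, and $Z^{*\top}Z^*=O(pq)$. The factor $\frac1{pq}\,\widehat H^{1/2}\cdot\widehat H^{1/2}\widehat Z^{\top}\widehat\Sigma_e^{-1}\frac1T\sum_t e_{v,t}x_t^{*\top}$ is bounded by Cauchy--Schwarz by
\[
\big\|\widehat H^{1/2}\big\|\Big(\sum_{ij}\Big\|\frac1T\sum_t e_{ij,t}x^*_t\Big\|^2\Big)^{1/2}=(pq)^{-1/2}O_p(1)\cdot (pq)^{1/2}O_p(T^{-1/2}).
\]
Here the first estimate uses Lemma~\ref{lemmaforfactmodconsistency1}(e), and the second uses the mixing and moment conditions in Assumption~\ref{assump:full-ranklatentfactor} together with the independence in Assumption~\ref{assump:wnbdedcov}. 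The factor $pq$ from $Z^{*\top}Z^*$ cancels against the prefactor $\frac1{pq}$.

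\textbf{Item (c).} Write the term as $\frac1{pq}\sum_{(i,j),(i',j')}\widehat H\hat z_{ij}\hat\sigma_{ij}^{-2}\,\xi_{ij,i'j'}\,z^{*\top}_{i'j'}$, where
\[
\xi_{ij,i'j'}=\frac1T\sum_t\big(e_{ij,t}e_{i'j',t}-\delta\,\sigma_{ij}^{*2}\big).
\]
The diagonal part ($\delta=1$) involves $pq$ centered terms, each with variance $O(1/T)$ by the fourth-moment bound in Assumption~\ref{assump:wnbdedcov}. The off-diagonal part consists of mean-zero, mutually uncorrelated products. Computing the second moment directly, rather than applying Cauchy--Schwarz entrywise as in Lemma~\ref{lemmaRformatrixfactor}(b), gives $E\|\cdot\|^2=O\big(\|\widehat H\|^2(pq)^2/T\big)=O(T^{-1})$.

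\textbf{Item (d).} This term is deterministic given $\widehat\Sigma_e$, namely
\[
\frac1{pq}\sum_{ij}\widehat H\hat z_{ij}\hat\sigma_{ij}^{-2}\big(\sigma^{*2}_{ij}-\hat\sigma_{ij}^2\big)z^{*\top}_{ij}.
\]
Cauchy--Schwarz bounds it by $\|pq\widehat H\|\cdot\frac1{pq}\big(\frac1{pq}\sum\|\hat z\|^2\|z^*\|^2\big)^{1/2}\mathcal E$, where $\mathcal E$ is the quantity from Step 1. This yields $O_p(\mathcal E/pq)$ or, more crudely, $O_p(\mathcal E)$.

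\textbf{Main obstacle.} The hard part is (d). The stated $O_p(T^{-1/2})$ rate needs $\mathcal E=O_p(T^{-1/2})$, which is available only once the rate argument of Step 5 has been run. In the consistency step the available bound is only $o_p(1)$. I would therefore prove (d) in the form $O_p(\mathcal E)$ and invoke the sharper rate only after Step 5. The dependence of the random quantities $\widehat H,\widehat Z,\widehat\Sigma_e$ on the noise is handled by a uniform-in-$\Theta$ argument: only the sample-average factors are stochastic, and each is bounded by a supremum over $\theta$, as in Lemma~\ref{lemmaRformatrixfactor}.
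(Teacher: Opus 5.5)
\textbf{(c).} You compute $E\|\cdot\|^2$ for $\frac{1}{pq}\sum\widehat H\hat z_{ij}\hat\sigma_{ij}^{-2}\xi_{ij,i'j'}z^{*\top}_{i'j'}$ using that the $\xi$'s are centred and mutually uncorrelated. That step treats $\widehat H\hat z_{ij}\hat\sigma_{ij}^{-2}$ as fixed weights, and they are not. $\widehat\theta$ is a function of $\widehat M_y$, hence of the same $e_t$'s, so expectations of products of weights and $\xi$'s do not factor and the cross terms need not vanish. Your closing appeal to uniformity does not rescue this, because a fixed-$\theta$ second moment does not control a supremum over $\Theta$. The Cauchy--Schwarz separation you set aside is exactly what makes such bounds valid at the random $\widehat\theta$.

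The repair is the device from your own first paragraph. Write $\bigl\|\frac{1}{pq}\widehat H\widehat Z^{\top}\widehat\Sigma_e^{-1}E_TZ^*\bigr\|\le\frac{C}{pq}\|\widehat H^{1/2}\|\,\|E_TZ^*\|_F$ with $E_T=\frac1T\sum_t e_{v,t}e_{v,t}^{\top}-\Sigma_e^*$. For the estimator-free matrix $E_TZ^*$ your moment computation is legitimate: $E\|E_TZ^*\|_F^2=O((pq)^2/T)$, which gives $\|\widehat H^{1/2}\|\,O_p(T^{-1/2})$. The paper instead bounds each column of $E_T$ via Lemma~\ref{lemmaforfactmodconsistency1}(c) and applies Cauchy--Schwarz over $(i,j)$.

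\textbf{(a).} Use the same grouping here. Bounding $\widehat Z^{\top}\widehat\Sigma_e^{-1}Z^*$ by $(pq)^{1/2}\|\widehat H^{-1/2}\|$ and then multiplying by $\widehat H$ leaves $\|\widehat H\|\,\|\widehat H^{-1/2}\|$. That equals $\|\widehat H^{1/2}\|$ only up to the condition number of $\widehat H$. The factorization $\widehat H^{1/2}\cdot(\widehat H^{1/2}\widehat Z^{\top}\widehat\Sigma_e^{-1/2})\cdot(\widehat\Sigma_e^{-1/2}Z^*)$ gives $C(pq)^{1/2}\|\widehat H^{1/2}\|$ directly. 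With that, your direct second moment of the estimator-free matrix $\frac1T\sum_t x_t^*e_{v,t}^{\top}Z^*$ is valid. It yields $\|\widehat H^{1/2}\|\,O_p(T^{-1/2})$, a factor $(pq)^{-1/2}$ better than the paper's column-wise Cauchy--Schwarz. Your (b) is essentially the paper's argument.

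\textbf{(d).} Your diagnosis is right and sharper than the paper's proof. The paper says the remaining items follow for the same reasons. For (d), though, the column-wise input is Lemma~\ref{lemmaforfactmodconsistency1}(d), $O_p(\|\widehat H\|)$, so that route gives only $O_p((pq)^{-1})$. This is not $O_p(T^{-1/2})$ when $pq$ is bounded. Your bound, more precisely $(pq)^{-1}\|(pq)^{1/2}\widehat H^{1/2}\|\,O(\mathcal E)$, is the correct form.

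You do overstate what is needed. The stated rate requires only $\mathcal E=O_p(pq\,T^{-1/2})$. Since $\mathcal E$ is bounded on $\Theta$, (d) holds as stated with no rate input whenever $pq\ge c\sqrt T$. Only for small $pq$ does one need the Step~5 rate, and the paper obtains that rate only in the interior case. For the consistency argument, $O_p(\mathcal E/(pq))=o_p(1)$ (by Step~1) suffices, because Step~3 only claims the $\widehat\Sigma_e$-terms are $o_p(1)$.

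\textbf{Circularity.} Both you and the paper replace $\|(pq)^{1/2}\widehat H^{1/2}\|$ by $O_p(1)$ via Lemma~\ref{lemmaforfactmodconsistency1}(e). That $O_p(1)$ is really $O_p(\|I_d-\mathcal Z\|)$, and $I_d-\mathcal Z=O_p(1)$ is deduced in Step~3 from this very lemma. To avoid the circularity, keep the factor $\|(pq)^{1/2}\widehat H^{1/2}\|$ explicit in all four items; your bounds produce it naturally.
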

\begin{proof}
    The proofs of these equations are very similar and depend on the assumption \ref{assump:boundedsetting}, which implies that $Z^{\top}Z/(pq)\prec C\cdot I_{d}$. For (a), \begin{align*}
        &\widehat{H}\widehat{Z}^{\top}\widehat{\Sigma}_{e}^{-1}Z^*\frac{1}{Tpq}(\sum_{t=1}^Tx_t^{*}e_{v,t}^{\top})Z^*\\&=\frac{1}{pq}\sum_{ij}\big[\widehat{H}\widehat{Z}^{\top}\widehat{\Sigma}_{e}^{-1}Z^*\frac{1}{T}(\sum_{t=1}^Tx_t^{*}e_{v,t}^{\top})\big]\big|^{ij}z^*_{ij}\\
        &\leq\{\frac{1}{pq} \sum_{ij}\|\big[\widehat{H}\widehat{Z}^{\top}\widehat{\Sigma}_{e}^{-1}Z^*\frac{1}{T}(\sum_{t=1}^Tx_t^{*}e_{v,t}^{\top})\big]\big|^{ij}\|^2\}^{\frac{1}{2}}\{\frac{1}{pq}\sum_{ij}\|z_{ij}^*\|^2\}^{\frac{1}{2}}\\&\leq  O_p(T^{-\frac{1}{2}}), 
    \end{align*}
    and (a) is proved. 
    The reasons for the rest are the same. 
\end{proof}

\begin{lemma}\label{lemma:eqRSRisT}Under Assumption~\ref{assump:full-ranklatentfactor}-\ref{assump:bdedidcond}, the following hold:
    \begin{enumerate}
        \item $\frac{1}{pq}\operatorname{diag}\{\sum_j(b_j^*-\hat{b}_j)(\widehat{A}-A^*)^{\top}\widehat{\Sigma}_{ej}^{-1}R_{bj}M_x^*(I_d-\mathcal{Z})(I_d-\widehat{M}_x^{-1}\widehat{G}_H)\}\\\leq O_p((\frac{1}{pq}R^{\top}R)^{\frac{1}{m}+\frac{1}{2}})$;
        \item $\frac{1}{pq}\operatorname{diag}\{\sum_j(\widehat{A}-A^*)^{\top}(\widehat{\Sigma}_{ej}^{-1}-\widehat{\Sigma}_{ej}^{-1}\widehat{A}\hat{b}_j\widehat{H}\widehat{Z}^{\top}\widehat{\Sigma}_e^{-1})(\mathcal{T}+\mathcal{E}_e)\widehat{\Sigma}_e^{-1}\widehat{Z}\widehat{G}_H\hat{b}_j\}\\\leq  O_p((\frac{1}{pq}R^{\top}R)^{\frac{1}{2m}})(T^{-1}+\frac{1}{pq}[\operatorname{tr}(\mathcal{E}_e^2)])^{\frac{1}{2}}$;
        \item $\frac{1}{pq}\operatorname{diag}\{\sum_j(\hat{b}_j-b_j^*)\widehat{A}^{\top}(\widehat{\Sigma}_{ej}^{-1}-\widehat{\Sigma}_{ej}^{-1}\widehat{A}\hat{b}_j\widehat{H}\widehat{Z}^{\top}\widehat{\Sigma}_e^{-1})(\mathcal{T}+\mathcal{E}_e)\widehat{\Sigma}_e^{-1}\widehat{Z}\widehat{G}_H\}\\\leq  O_p((\frac{1}{pq}R^{\top}R)^{\frac{1}{2m}})(T^{-1}+\frac{1}{pq}[\operatorname{tr}(\mathcal{E}_e^2)])^{\frac{1}{2}}$. 
    \end{enumerate}
\end{lemma}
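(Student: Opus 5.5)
Each of the three bounds is a Cauchy--Schwarz estimate: I split every trace or diagonal into a factor controlled by $R$ (or by $\widehat A-A^*$, $\widehat B-B^*$) and a factor that is either another copy of $R$ or a noise term. Two facts from the proof of Theorem~\ref{thmforcpconvrate} are used throughout. First, Step~5 gives $\|\mathcal Z\|^{2m}\le O_p(\operatorname{tr}(\frac1{pq}R^\top R))$. Second, \eqref{errorforB} and its $A$-analogue bound $\frac1p\|\widehat A-A^*\|^2$ and $\frac1q\|\widehat B-B^*\|^2$ by $O_p(\|\mathcal Z\|^2)+O_p(\frac1{pq}\|R\|^2)$. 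Together these give
\[
\frac1p\|\widehat A-A^*\|^2+\frac1q\|\widehat B-B^*\|^2=O_p\Bigl(\bigl(\tfrac1{pq}\operatorname{tr}R^\top R\bigr)^{1/m}\Bigr),
\]
because $\frac1{pq}\operatorname{tr}R^\top R=o_p(1)$ and $m\ge1$. I also use the following consequences of Step~2, Lemma~\ref{lemmaforfactmodconsistency1}(e) and Assumption~\ref{assump:boundedsetting}: $M_x^*$, $I_d-\mathcal Z$ and $I_d-\widehat M_x^{-1}\widehat G_H$ are $O_p(1)$; $pq\,\widehat G_H$ and $pq\,\widehat H$ are $O_p(1)$; the $\hat b_j$ are uniformly bounded; and $\widehat\Sigma_e^{-1}$ is bounded by $C_e^2$.

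For item 1, I write the $\ell$-th diagonal entry as
\[
\frac1{pq}\sum_j(b^*_{j\ell}-\hat b_{j\ell})\,\bigl[(\widehat A-A^*)^\top\widehat\Sigma_{ej}^{-1}R_{bj}W\bigr]_{\ell\ell},
\]
where $W=M_x^*(I_d-\mathcal Z)(I_d-\widehat M_x^{-1}\widehat G_H)=O_p(1)$. Applying Cauchy--Schwarz over $(j,\ell)$ and over the $p$ rows bounds this by
\[
\Bigl(\frac1q\sum_j\|b^*_j-\hat b_j\|^2\Bigr)^{1/2}\Bigl(\frac1p\|\widehat A-A^*\|_F^2\Bigr)^{1/2}\Bigl(\frac1{pq}\sum_j\|R_{bj}\|_F^2\Bigr)^{1/2}\cdot O_p(1),
\]
after absorbing the factors $\sqrt p$ and $\sqrt q$. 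By the display above this product is $O_p((\frac1{pq}R^\top R)^{1/(2m)+1/(2m)+1/2})=O_p((\frac1{pq}R^\top R)^{1/m+1/2})$.

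For items 2 and 3, I first pull out $(\frac1p\|\widehat A-A^*\|^2)^{1/2}$, respectively $(\frac1q\|\widehat B-B^*\|^2)^{1/2}$; each is $O_p((\frac1{pq}R^\top R)^{1/(2m)})$. What remains is a row-normalized average of the vectors $[(\widehat\Sigma_e^{-1}-\widehat\Sigma_e^{-1}\widehat Z\widehat H\widehat Z^\top\widehat\Sigma_e^{-1})(\mathcal T+\mathcal E_e)\widehat\Sigma_e^{-1}\widehat Z\widehat G_H]^{ij}$. I bound this remainder term by term using Lemmas~\ref{lemmaforfactmodconsistency1} and~\ref{lemmaforfactmodconsistency2}. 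The $\mathcal T$-pieces $e_tx_t^\top Z^{*\top}$, $Z^*x_te_t^\top$ and $e_te_t^\top-\Sigma_e^*$ are $O_p(T^{-1/2})$ after multiplication by $\widehat Z\widehat G_H$, by Lemma~\ref{lemmaforfactmodconsistency1}(a)--(c) and (f)--(h) together with $pq\widehat G_H=O_p(1)$. The $\mathcal E_e$-piece is diagonal, so Cauchy--Schwarz with the bounded rows $\hat z_{ij}$ makes it $O_p((\frac1{pq}\operatorname{tr}\mathcal E_e^2)^{1/2})$, using Lemma~\ref{lemmaforfactmodconsistency1}(d) and (i). The projection part $\widehat\Sigma_e^{-1}\widehat Z\widehat H\widehat Z^\top\widehat\Sigma_e^{-1}$ has operator norm $O_p(1)$, so it does not change these orders. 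Combining the two factors gives the stated bound with $(T^{-1}+\frac1{pq}\operatorname{tr}\mathcal E_e^2)^{1/2}$.

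I expect the main obstacle to be the $e_te_t^\top-\Sigma_e^*$ term in items 2 and 3. There $\widehat\Sigma_e^{-1}\widehat Z\widehat G_H$ is random and data-dependent, so the bound must hold uniformly over $\Theta$. My plan is to reuse the decomposition from Lemma~\ref{lemmaRformatrixfactor}(b): expand in $z_{ij}z_{i'j'}^\top$ and control $\frac1{p^2q^2}\sum(\frac1T\sum_t e_{ij,t}e_{i'j',t}-\delta\sigma^2)^2=O_p(T^{-1})$ through independence and the fourth-moment bound in Assumption~\ref{assump:wnbdedcov}. This removes the dependence on the estimated parameters before Cauchy--Schwarz is applied.
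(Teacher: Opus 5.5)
Your proposal is correct and follows essentially the same route as the paper. Both apply Cauchy--Schwarz, control $\frac1p\|\widehat A-A^*\|^2$ and $\frac1q\|\widehat B-B^*\|^2$ by $O_p(\|\mathcal Z\|^2+\frac1{pq}\operatorname{tr}R^\top R)$ together with $\|\mathcal Z\|^{2m}\le O_p(\frac1{pq}\operatorname{tr}R^\top R)$ from Step~5, and bound the noise factor by $(T^{-1}+\frac1{pq}\operatorname{tr}\mathcal E_e^2)^{1/2}$. The only minor difference is that you bound the projection piece $\widehat\Sigma_e^{-1}\widehat Z\widehat H\widehat Z^\top\widehat\Sigma_e^{-1}$ by its operator norm, whereas the paper splits it off and extracts an extra small factor from it; both give the stated rates.
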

\begin{proof}
    For the first one, we have $M_x^*(I_d-\mathcal{Z})(I_d-\widehat{M}_x^{-1}\widehat{G}_H)=O_p(1)$. The diagonal part equals 
    \begin{align*}
        &\frac{1}{pq}\operatorname{diag}\{\sum_j(\widehat{A}-A^*)^{\top}\widehat{\Sigma}_{ej}^{-1}R_{b_j}(b_j^*-\hat{b}_j)\}\\&\leq\operatorname{diag}\bigg\{\big\{\frac{1}{q}\sum_j\big[\frac{1}{p}(\widehat{A}-A^*)^{\top}\widehat{\Sigma}_{ej}^{-1}(\widehat{A}-A^*)\big]\big[\frac{1}{p}R_{b_j}^{\top}\widehat{\Sigma}_{e_j}^{-1}R_{b_j}\big]\big\}^{\frac{1}{2}}\big\{\frac{1}{q}\sum_j{(b_j^*-\hat{b}_j)^2}\big\}^{\frac{1}{2}}\bigg\}\\&\leq\operatorname{diag}\{O_p(\|\mathcal{Z}\|^2+\frac{1}{pq}R^{\top}\widehat{\Sigma}_e^{-1}R)\cdot O_p(\frac{1}{pq}R^\top\widehat{\Sigma}_e^{-1} R)^{\frac{1}{2}}\}.  
    \end{align*}This proved $(a)$. 

    For the second one, we use a similar method: 
    \begin{align*}
        &\frac{1}{pq}\sum_j(\widehat{A}-A^*)^{\top}(\widehat{\Sigma}_{ej}^{-1}-\widehat{\Sigma}_{ej}^{-1}\widehat{A}\hat{b}_j\widehat{H}\widehat{Z}^{\top}\widehat{\Sigma}_e^{-1})(\mathcal{T}+\mathcal{E}_e)\widehat{\Sigma}_e^{-1}\widehat{Z}\widehat{G}_H\hat{b}_j\\&\leq \frac{1}{pq}\sum_j\big[(\widehat{A}-A^*)^\top\widehat{\Sigma}_{ej}^{-1}(\widehat{A}-A^*)\big]^{\frac{1}{2}}\big[\widehat{Z}^{\top}\widehat{\Sigma}_e^{-1}(\mathcal{T}+\mathcal{E}_e)^{\top}(\widehat{\Sigma}_{ej}^{-1})(\mathcal{T}+\mathcal{E}_e)\widehat{\Sigma}_e^{-1}\widehat{Z}\big]^{\frac{1}{2}}\widehat{G}_H\hat{b}_j\\&\ \ \ \ +\frac{1}{pq} \sum_j\big[(\widehat{A}-A^*)^\top\widehat{\Sigma}_{ej}^{-1}(\widehat{A}-A^*)\big]^{\frac{1}{2}}\big[(\widehat{A}\hat{b}_j-A^*b_j^*)^\top\widehat{\Sigma}_{ej}^{-1}(\widehat{A}\hat{b}_j-A^*b_j^*)\big]^{\frac{1}{2}}\\&\ \ \ \ \cdot\widehat{H}\widehat{Z}^{\top}\widehat{\Sigma}_e^{-1}(\mathcal{T}+\mathcal{E}_e)\widehat{\Sigma}_e^{-1}\widehat{Z}\widehat{G}_H\hat{b}_j\\&\leq O_p(\|\mathcal{Z}\|+[\frac{1}{pq}R^{\top}\widehat{\Sigma}_e^{-1}R]^{\frac{1}{2}})(T^{-1}+\frac{1}{pq}[\operatorname{tr}(\mathcal{E}_e^2)])^{\frac{1}{2}}\widehat{G}_H\cdot pq\\&\ \ \ +O_p(\|\mathcal{Z}\|^2+[\frac{1}{pq}R^{\top}\widehat{\Sigma}_e^{-1}R])(T^{-\frac{1}{2}}+\frac{1}{\sqrt{pq}}[\operatorname{tr}(\mathcal{E}_e^2)]^{\frac{1}{2}})
        \\&\leq  O_p([\frac{1}{pq}R^{\top}\widehat{\Sigma}_e^{-1}R]^{\frac{1}{2m}})(T^{-1}+\frac{1}{pq}[\operatorname{tr}(\mathcal{E}_e^2)])^{\frac{1}{2}}. 
    \end{align*}

    The third one holds for the same reason. 
    
\end{proof}

\subsection{Estimation of the term R}\label{subsec:Est of R}

Let $N=pq$.  The target is
\[
 x_T:=\frac1N\tr(R^\top\Seh^{-1}R),
 \qquad
 R=\Zh(I_d-\mathcal Z)^\top-Z^*.
\]
The proof uses the loading, $M_x$, and diagonal-$\Sigma_e$ score equations in
\eqref{gradcondforCP-A}--\eqref{gradcondforCP-e}.  The key operation is orthogonally projecting a loading covariance direction onto the complement of the nuisance covariance space generated by diagonal $\Sigma_e$ changes and first-order $M_x$ changes.

\subsubsection{Likelihood inner product and score directions}

For matrices of the same size, define the Frobenius inner product
\[
 \ip{A}{B}_F=\tr(A^\top B).
\]
For symmetric matrices $K_1,K_2$ and a positive definite matrix $P$, define
\begin{equation}
 \ip{K_1}{K_2}_P=\tr(PK_1PK_2),
 \qquad
 \norm{K}_P^2=\ip{K}{K}_P.
 \label{eq:P-inner-product}
\end{equation}
This is the ordinary Frobenius inner product after whitening by $P^{1/2}$.

At the QMLE, write
\[
 \Ph=\Syh^{-1},
 \qquad
 \Sh=\Ph(\Syh-\widehat{M}_{y})\Ph.
\]

\begin{lemma}[Directional score equations]
\label{lem:score-equations}
Let
\[
 \dot Z=\widehat B\odot\dot A+\dot B\odot\widehat A
\]
be a feasible loading tangent.  Then
\begin{align}
 \ip{\Sh}{\dot Z\Mh\Zh^\top+\Zh\Mh\dot Z^\top}_F&=0,
 \label{eq:loading-score}\\
 \ip{\Sh}{\Zh U\Zh^\top}_F&=0,
 \qquad U=U^\top,
 \label{eq:M-score}\\
 \ip{\Sh}{D}_F&=0,
 \qquad D\text{ diagonal}.
 \label{eq:diagonal-score}
\end{align}
\end{lemma}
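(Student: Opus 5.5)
The three identities are first-order optimality conditions for the quasi-likelihood, each read off along a particular direction in covariance space. The plan is to use one directional-derivative formula for $\mathcal L_T$ and then substitute the three families of directions.

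\emph{Step 1: the generic directional derivative.} Since $\mathcal L_T(\theta)$ depends on $\theta$ only through $\Sigma_y(\theta)$, take any smooth curve $\theta(s)$ with $\theta(0)=\widehat\theta$ and write $\dot\Sigma=\frac{d}{ds}\Sigma_y(\theta(s))|_{s=0}$. The identities $d\log|\Sigma|=\tr(\Sigma^{-1}d\Sigma)$ and $d\Sigma^{-1}=-\Sigma^{-1}d\Sigma\,\Sigma^{-1}$ give
\[
\frac{d}{ds}\mathcal L_T(\theta(s))\Big|_{s=0}=-\frac{1}{2pq}\tr\big\{\Ph(\Syh-\widehat M_y)\Ph\,\dot\Sigma\big\}=-\frac{1}{2pq}\ip{\Sh}{\dot\Sigma}_F,
\]
using that $\Sh$ is symmetric. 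Because $\widehat\theta$ maximizes $\widetilde{\mathcal L}_T$ over an open reparametrized set, this derivative vanishes for every admissible curve. The unconstrained coordinates are $A$, $B$, the lower-triangular $L$ with positive diagonal, and $S$. This reduces the lemma to identifying which $\dot\Sigma$ arise.

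\emph{Step 2: the three directions.}

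For \eqref{eq:loading-score}, perturb $(\widehat A,\widehat B)\mapsto(\widehat A+s\dot A,\widehat B+s\dot B)$ with $(\dot A,\dot B)$ feasible. The Khatri--Rao product is bilinear, so $\frac{d}{ds}(B\odot A)=\widehat B\odot\dot A+\dot B\odot\widehat A=\dot Z$. It follows that $\dot\Sigma=\dot Z\Mh\Zh^\top+\Zh\Mh\dot Z^\top$.

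For \eqref{eq:M-score}, $M_x=LL^\top$ ranges over an open subset of symmetric positive definite matrices, since the Cholesky map is a diffeomorphism onto $\mathrm S_d^+$. Hence every symmetric $U$ is attained as $\dot M_x$, and $\dot\Sigma=\Zh U\Zh^\top$. Equivalently, this is the projection of \eqref{gradcondforCP-X}.

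For \eqref{eq:diagonal-score}, $\Sigma_e=\exp(S)$ gives $\dot\Sigma=\exp(\widehat S)\dot S$. This is an arbitrary diagonal matrix because $\exp(\widehat S)$ is diagonal and invertible. The result matches \eqref{gradcondforCP-e}.

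\emph{Step 3: the main obstacle.} The only delicate point is feasibility in the loading direction. If the column normalization of Assumption~\ref{assump:boundedsetting} is enforced, $(\dot A,\dot B)$ must be tangent to the constraint set, and the maximizer must lie in the relative interior of that smooth constraint manifold, not at a bound or boundary. I would handle this by restricting to constraint-tangent directions, which is what "feasible" means in the statement. Under this restriction $s\mapsto\mathcal L_T$ is differentiable with a local maximum at $s=0$, so the derivative vanishes.

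Alternatively, the normalization can be dropped. The scaling $(AD,BD^{-1})$ together with $M_x\mapsto DM_xD$ leaves $\Sigma_y$ invariant, so the maximum can be taken without constraints on the loadings, and all tangents are then feasible. Either way the identity follows, and it reproduces \eqref{gradcondforCP-A}--\eqref{gradcondforCP-B} read coordinatewise.
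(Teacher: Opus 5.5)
Your proposal is correct and is essentially the paper's proof: the paper also computes the covariance-space directional derivative $-\frac{1}{2pq}\tr\{\Sigma^{-1}(\Sigma-\widehat M_y)\Sigma^{-1}K\}$ and substitutes the tangents generated by $(A,B)$, by symmetric $M_x$-perturbations and by diagonal $\Sigma_e$-perturbations; your remarks on the Cholesky and $\exp(S)$ reparametrizations and on feasibility only make explicit what the paper leaves implicit. One slip in your side remark (which your main argument does not need): $(AD,BD^{-1})$ alone already leaves $Z$ and hence $\Sigma_y$ invariant, and adding $M_x\mapsto DM_xD$ would break this; to impose both column normalizations you need independent rescalings $(AD_1,BD_2)$ compensated by $M_x\mapsto (D_1D_2)^{-1}M_x(D_1D_2)^{-1}$.
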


\begin{proof}
For
\[
 \mathcal{L}_T(\Sigma)=-\frac1{2N}\{\log|\Sigma|+\tr(\widehat{M}_{y}\Sigma^{-1})\},
\]
we have
\[
 \left.\frac{d}{ds}\mathcal{L}_T(\Sigma+sK)\right|_{s=0}
 =-\frac1{2N}\tr\{\Sigma^{-1}(\Sigma-\widehat{M}_{y})\Sigma^{-1}K\}.
\]
Use the covariance derivatives generated by $(A,B)$, by a symmetric perturbation of
$M_x$, and by a diagonal perturbation of $\Sigma_e$.
\end{proof}

\subsubsection{Primitive $\alpha$-mixing condition for the factor process}

Let
\[
 M=M_x^*,
 \qquad
 u_t=M^{-1/2}x_t,
\]
so that $E(u_tu_t^\top)=I_d$.  For $h\geq1$, define the strong-mixing coefficient
\begin{equation}
 \alpha_x(h)
 =\sup_{A\in\sigma(u_s:s\leq0),\,B\in\sigma(u_s:s\geq h)}
 \left|P(A\cap B)-P(A)P(B)\right|.
 \label{eq:alpha-definition}
\end{equation}

Then the factor process $\{u_t\}_{t\in\mathbb Z}$ is strictly stationary,
$E u_t=0$, $E(u_tu_t^\top)=I_d$, and the $\alpha$-mixing condition is  
\begin{equation}
 \sup_t E\norm{u_t}^{4+\delta}<\infty,
 \qquad
 \sum_{h=1}^{\infty}
 \alpha_x(h)^{\delta/(4+\delta)}<\infty.
 \label{eq:alpha-summability}
\end{equation}

\begin{lemma}
\label{lem:alpha-implies-quadratic}
Under Assumption~\ref{assump:full-ranklatentfactor},
\[
 \norm{\frac1T\sum_{t=1}^T(u_tu_t^\top-I_d)}_F
 =O_p(T^{-1/2}).
\]
\end{lemma}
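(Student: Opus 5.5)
The plan is to reduce the Frobenius-norm bound to a second-moment bound, one scalar entry of the $d\times d$ matrix at a time. Since $d$ is fixed, it suffices to show, for each pair $(a,b)\in[d]^2$, that the centered scalar process $\xi_t:=u_{t,a}u_{t,b}-\delta_{ab}$ satisfies
\[
E\Big(\frac1T\sum_{t=1}^T\xi_t\Big)^2=O(T^{-1}).
\]
Chebyshev's inequality then gives $O_p(T^{-1/2})$ for each entry. Summing the $d^2$ squared entries gives the claim for $\norm{\cdot}_F$.

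First, $\{\xi_t\}$ is strictly stationary with mean zero. It is $\alpha$-mixing with coefficients bounded by $\alpha_x(h)$, because $\xi_t$ is a measurable function of $u_t$ and so $\sigma(\xi_s:s\le 0)\subset\sigma(u_s:s\le0)$. For the moment, Cauchy--Schwarz together with \eqref{eq:alpha-summability} gives
\[
E|\xi_t|^{2+\delta/2}\lesssim E\norm{u_t}^{4+\delta}+1<\infty .
\]
So $\xi_t$ has a finite moment of order $r:=2+\delta/2>2$.

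Next, expand
\[
E\Big(\sum_{t=1}^T\xi_t\Big)^2=T\gamma(0)+2\sum_{h=1}^{T-1}(T-h)\gamma(h),\qquad \gamma(h)=\operatorname{Cov}(\xi_0,\xi_h).
\]
Bound each autocovariance by Davydov's covariance inequality:
\[
|\gamma(h)|\le C\,\alpha_x(h)^{1-2/r}\norm{\xi_0}_r^2 .
\]
With $r=2+\delta/2=(4+\delta)/2$, we get $1-2/r=\delta/(4+\delta)$. This is exactly the exponent in the summability condition \eqref{eq:alpha-summability}, so $\sum_h|\gamma(h)|<\infty$. Hence the variance of the sum is at most $T\big(\gamma(0)+2\sum_{h\ge1}|\gamma(h)|\big)=O(T)$, which gives the $O(T^{-1})$ bound on the second moment of the average.

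The only delicate step is matching the exponents: the moment order of $\xi_t$ (half of $4+\delta$) has to produce the Davydov exponent $\delta/(4+\delta)$ assumed in Assumption~\ref{assump:full-ranklatentfactor}. I checked above that it does. If the stated moment were on $x_t$ rather than $u_t$, this costs nothing, because $u_t=M^{-1/2}x_t$ with $M$ fixed and invertible, so $\norm{u_t}\le\norm{M^{-1/2}}\norm{x_t}$ and the moment transfers.
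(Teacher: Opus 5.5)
Your argument is correct and is essentially what the paper intends: the paper omits the proof and cites Ibragimov--Linnik, whose covariance inequality for strongly mixing sequences (equivalently Davydov's inequality) is exactly what you apply. Applying it to $\xi_t=u_{t,a}u_{t,b}-\delta_{ab}$ with moment order $2+\delta/2$ produces the summability exponent $\delta/(4+\delta)$ of Assumption~\ref{assump:full-ranklatentfactor}, and you have simply written out the resulting $O(T)$ variance bound, the Chebyshev step and the sum over the $d^2$ entries explicitly.
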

This is a well-known result, and the proof can be found in \cite{Ibragimov1971IndependentAS}; we omit details here. 

\begin{lemma}
\label{lem:sample-moment-bounds}
Define
\begin{align*}
 F_T=\frac1T\sum_{t=1}^T(x_tx_t^\top-M_x),\quad
 G_T=\frac1T\sum_{t=1}^T x_te_t^\top,\quad
 E_T=\frac1T\sum_{t=1}^T(e_te_t^\top-\Sigma_e^*).
\end{align*}
Then
\begin{equation}
 \norm{F_T}_F=O_p(T^{-1/2}),
 \qquad
 \norm{G_T}_F=O_p\!\left(\sqrt{\frac NT}\right),
 \qquad
 \norm{E_T}_F=O_p\!\left(\frac N{\sqrt T}\right).
 \label{eq:sample-moment-bounds}
\end{equation}
For
\[
 \cW_T=\widehat{M}_{y}-\Sigma_y^*
 =Z^*F_TZ^{*\top}+Z^*G_T+G_T^\top Z^{*\top}+E_T,
\]
we also have
\begin{equation}
 \norm{\cW_T}_F=O_p\!\left(\frac N{\sqrt T}\right).
 \label{eq:WT-Frobenius}
\end{equation}
\end{lemma}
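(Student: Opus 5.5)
The plan is to bound each of the three sample-moment matrices through its expected squared Frobenius norm and then apply Markov's inequality. The bound \eqref{eq:WT-Frobenius} then follows from the decomposition of $\cW_T$ and the triangle inequality.

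\emph{The factor term $F_T$.} Write $x_t=M^{1/2}u_t$ with $M=M_x^*$, so that $F_T=M^{1/2}\{\frac1T\sum_t(u_tu_t^\top-I_d)\}M^{1/2}$. Since $\norm{M^{1/2}}_{\op}$ is bounded (the parameter set $\Theta$ is bounded), Lemma~\ref{lem:alpha-implies-quadratic} gives $\norm{F_T}_F\le\norm{M}_{\op}\,O_p(T^{-1/2})=O_p(T^{-1/2})$ directly.

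\emph{The cross term $G_T$.} Treat the noise $\{e_t\}$ as independent of the factors $\{x_t\}$, as used in Step~6 of the proof of Theorem~\ref{thmforcpconvrate}. Assumption~\ref{assump:wnbdedcov} makes the entries independent over $t$ with mean zero. Then for each entry,
\[
E\Big(\tfrac1T\sum_t x_{m,t}e_{ij,t}\Big)^2=\tfrac1{T^2}\sum_t E(x_{m,t}^2)\,E(e_{ij,t}^2)\le \tfrac{C}{T},
\]
because all cross terms vanish by independence and zero mean. Summing over the $dN$ entries gives $E\norm{G_T}_F^2\le CdN/T$, so $\norm{G_T}_F=O_p(\sqrt{N/T})$.

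\emph{The noise term $E_T$.} Consider an off-diagonal entry $(a,b)$ with $a\neq b$. Independence across coordinates and time gives $E\big(\frac1T\sum_t e_{a,t}e_{b,t}\big)^2=\frac1{T}\sigma_a^2\sigma_b^2\le C_e^4/T$. For a diagonal entry, the uniform fourth-moment bound gives $E\big(\frac1T\sum_t(e_{a,t}^2-\sigma_a^2)\big)^2\le C_e^4/T$. Summing over the $N^2$ entries yields $E\norm{E_T}_F^2\le CN^2/T$, so $\norm{E_T}_F=O_p(N/\sqrt T)$.

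\emph{Combining for $\cW_T$.} First, $\norm{Z^*}_F^2=\sum_{\ell}\norm{a^{*\ell}}^2\norm{b^{*\ell}}^2=dpq=dN$ by the column normalization in Assumption~\ref{assump:boundedsetting}, so $\norm{Z^*}_{\op}\le\sqrt{dN}$. Using $\norm{XY}_F\le\norm{X}_{\op}\norm{Y}_F$ repeatedly:
\begin{itemize}
\item the factor part satisfies $\norm{Z^*F_TZ^{*\top}}_F\le dN\,O_p(T^{-1/2})$;
\item each cross part satisfies $\norm{Z^*G_T}_F\le\sqrt{dN}\,O_p(\sqrt{N/T})=O_p(N/\sqrt T)$;
\item the noise part is already $O_p(N/\sqrt T)$.
\end{itemize}
All three are $O_p(N/\sqrt T)$, which proves \eqref{eq:WT-Frobenius}.

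\emph{Main obstacle.} The only delicate point is the independence between $\{e_t\}$ and $\{x_t\}$ needed in the $G_T$ computation. The assumptions state it only implicitly, via Step~6 of the proof of Theorem~\ref{thmforcpconvrate}, and I would invoke it explicitly there. If instead only a martingale-difference structure of $e_t$ with respect to the past of $x$ were available, the same variance bound would still hold, since the cross terms would still vanish.
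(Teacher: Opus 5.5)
Your proposal is correct and follows essentially the same route as the paper. It uses Lemma~\ref{lem:alpha-implies-quadratic} for $F_T$; second-moment bounds plus Markov's inequality for $G_T$ and $E_T$, relying on temporal and coordinate independence, independence of $\{e_t\}$ from $\{x_t\}$ (which the paper also invokes), and the uniform fourth moments; and finally $\|Z^*\|=O(\sqrt N)$ in operator norm with the triangle inequality for $\mathcal{W}_T$. The only differences are cosmetic: you compute variances entrywise and get $\|Z^*\|_F^2=dN$ from the column normalization, whereas the paper bounds $E\{\|x_t\|^2\|e_t\|^2\}$ directly and uses row boundedness.
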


\begin{proof}
Lemma \ref{lem:alpha-implies-quadratic} proved the first bound.

For $G_T$, all cross terms with $t\ne s$ vanish because the errors are temporally
independent, centered, and independent of the factor process.  Hence
\[
 E\norm{G_T}_F^2
 =\frac1{T^2}\sum_{t=1}^T
   E\{\norm{x_t}_2^2\norm{e_t}_2^2\}
 \le C\frac NT.
\]
Similarly,
\[
 E\norm{E_T}_F^2
 =\frac1T E\norm{e_te_t^\top-\Sigma_e^*}_F^2
 \le C\frac{N^2}{T},
\]
where the last inequality follows from coordinate independence and the uniform fourth moments.  Finally, row boundedness and fixed $d$ imply
$\norm{Z^*}_{\op}=O(\sqrt N)$, so the displayed decomposition of $\cW_T$ proves \eqref{eq:WT-Frobenius}.
\end{proof}

\subsubsection{Diagonal and $M_x$ nuisance projections}

Let $\cD_N$ be the set of diagonal symmetric $N\times N$ matrices.

For positive definite $P$ and symmetric $K$, define
\begin{equation}
 D_P(K)=\Diag\left\{(P\circ P)^{-1}\diag(PKP)\right\},
 \qquad
 \cR_D^P(K)=K-D_P(K).
 \label{eq:diagonal-projection}
\end{equation}
Then, 
\begin{lemma}[Diagonal projection]
\label{lem:diagonal-projection}
$D_P(K)$ is the $P$-orthogonal projection onto $\cD_N$,
\begin{align}
 &\diag\{P\cR_D^P(K)P\}=0,
 \label{eq:diagonal-zero}
\\
 \norm{D_P(K)}_P^2
 =&\diag(PKP)^\top(P\circ P)^{-1}\diag(PKP).
 \label{eq:diagonal-projection-norm}
\end{align}
\end{lemma}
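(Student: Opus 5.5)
The whole lemma reduces to computing the $P$-inner product against a diagonal matrix. For $D=\Diag(\delta)\in\cD_N$ and symmetric $K$, the cyclic property of the trace gives
\[
\ip{K}{D}_P=\tr(PKPD)=\sum_i (PKP)_{ii}\,\delta_i=\delta^\top\diag(PKP).
\]
For two diagonal matrices $D=\Diag(\delta)$ and $D'=\Diag(\delta')$ we get
\[
\ip{D}{D'}_P=\tr(PDPD')=\sum_{i,j}P_{ij}\delta_jP_{ji}\delta'_i=\delta'^\top(P\circ P)\delta,
\]
using symmetry of $P$. So, with coordinates $\delta\in\R^N$ on $\cD_N$, the Gram operator of $\ip{\cdot}{\cdot}_P$ restricted to $\cD_N$ is the Hadamard square $P\circ P$.

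The first step is to check that $P\circ P$ is invertible, so that $D_P$ is well defined. By the Schur product theorem $P\circ P$ is positive definite whenever $P$ is, and a cheap direct argument is also available: $\delta^\top(P\circ P)\delta=\tr(PDPD)=\|P^{1/2}DP^{1/2}\|_F^2$, which is positive for $D\neq0$ because $P^{1/2}$ is invertible. This positivity is the only point needing care; everything after it is linear algebra.

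Next, I characterize the projection through its normal equations. $D_P(K)$ is the $P$-orthogonal projection of $K$ onto $\cD_N$ exactly when $D_P(K)\in\cD_N$ and $\ip{K-D_P(K)}{D'}_P=0$ for every $D'\in\cD_N$. Writing $D_P(K)=\Diag(\delta^\star)$, the two formulas above turn this condition into $\diag(PKP)=(P\circ P)\delta^\star$. Hence $\delta^\star=(P\circ P)^{-1}\diag(PKP)$, which is precisely the definition \eqref{eq:diagonal-projection}, and uniqueness follows from the invertibility just established.

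Both displayed identities then follow directly. For \eqref{eq:diagonal-zero}, orthogonality to every $D'=\Diag(\delta')$ says $\delta'^\top\diag\{P\cR_D^P(K)P\}=0$ for all $\delta'\in\R^N$, so $\diag\{P\cR_D^P(K)P\}=0$. For \eqref{eq:diagonal-projection-norm},
\[
\norm{D_P(K)}_P^2=\delta^{\star\top}(P\circ P)\delta^\star=\diag(PKP)^\top(P\circ P)^{-1}\diag(PKP),
\]
using the symmetry of $(P\circ P)^{-1}$.
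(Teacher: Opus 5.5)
Your proposal is correct and follows essentially the paper's argument: the paper's one-line proof rests on the identity $\diag\{P\Diag(d)P\}=(P\circ P)d$, which is your Gram-matrix computation, together with positive definiteness of $P\circ P$ via the Schur product theorem. You spell out the normal equations explicitly and add a self-contained positivity argument via $\|P^{1/2}DP^{1/2}\|_F^2$, but the route is the same.
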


\begin{proof}
For $d\in\R^N$,
$\diag\{P\Diag(d)P\}=(P\circ P)d$.  The result follows immediately.  The matrix
$P\circ P$ is positive definite by the Schur product theorem.
\end{proof}

At the estimator, define the $M_x$ covariance tangent space
\[
 \cM_{\Zh}=\{\Zh U\Zh^\top:U=U^\top\}
\]
and its diagonal-residualized image
\[
 \cW_M=\{\cR_D^{\Ph}(\Zh U\Zh^\top):U=U^\top\}.
\]
Let $\Pi_M^{\Ph}$ be the orthogonal projection onto $\cW_M$ under
$\ip{\cdot}{\cdot}_{\Ph}$.  Define the intrinsic complete nuisance residual
\begin{equation}
 \cR_{D,M}^{\Ph}(K)
 =(I-\Pi_M^{\Ph})\cR_D^{\Ph}(K).
 \label{eq:complete-residual}
\end{equation}
Then it is orthogonal to all diagonal matrices and to every matrix
$\Zh U\Zh^\top$.

\begin{remark}[Coordinate expression and the Gram matrix]
Let $E_1,\ldots,E_s$, $s=d(d+1)/2$, be an orthonormal basis of symmetric
$d\times d$ matrices and put
\[
 W_a=\cR_D^{\Ph}(\Zh E_a\Zh^\top),
 \qquad
 \mathbb G_{ab}=\ip{W_a}{W_b}_{\Ph}.
\]
Whenever $W_1,\ldots,W_s$ are linearly independent,
\[
 \Pi_M^{\Ph}(K_D)
 =\sum_{a,b=1}^s
 W_a(\mathbb G^{-1})_{ab}\ip{K_D}{W_b}_{\Ph}.
\]
Here $(\mathbb G^{-1})_{ab}$ is the $(a,b)$ entry of the inverse matrix.  If the image has
smaller dimension, one may instead take an orthonormal basis of $\cW_M$; the projection
itself is intrinsic and does not depend on a chosen basis.
\end{remark}

Since we have proved the consistency, the error $\frac{1}{N}Z^{*\top}\Sigma_e^{*-1}Z^*-\frac{1}{N}\Zh^{\top}\Seh^{-1}\Zh=o_p(1)$. Let's assume \(\frac{1}{pq}\Zh^\top\Seh^{-1}\Zh\succ cI>0\). Then we use this to prove: 

\begin{lemma}
    The Gram matrix is non-singular in high-dimensional case, and therefore the existence of $s$-linearly independent basis is ensured. 
\end{lemma}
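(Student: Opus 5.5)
We need to show that the Gram matrix $\mathbb G_{ab}=\ip{W_a}{W_b}_{\Ph}$, with $W_a=\cR_D^{\Ph}(\Zh E_a\Zh^\top)$, has smallest eigenvalue bounded below, after suitable scaling, with probability tending to one. The plan is to prove the quadratic-form bound
\[
 \norm{\cR_D^{\Ph}(\Zh U\Zh^\top)}_{\Ph}^2\ge c\,\norm{U}_F^2
\]
uniformly in symmetric $U$; nonsingularity of $\mathbb G$ then follows.

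By Pythagoras for the $\Ph$-orthogonal diagonal projection (Lemma~\ref{lem:diagonal-projection}), we have
\[
 \norm{\cR_D^{\Ph}(K)}_{\Ph}^2=\norm{K}_{\Ph}^2-\norm{D_{\Ph}(K)}_{\Ph}^2,\qquad K=\Zh U\Zh^\top .
\]
So we need a lower bound on the first term and a matching upper bound on the second.

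\textbf{Lower bound on $\norm{K}_{\Ph}^2$.} Write $\norm{K}_{\Ph}^2=\tr\{(\Zh^\top\Ph\Zh)U(\Zh^\top\Ph\Zh)U\}$. Using the Woodbury form \eqref{eq:inversematrix}, $\Zh^\top\Ph\Zh=\widehat M_x^{-1}-\widehat M_x^{-1}\widehat G_H\widehat M_x^{-1}$. Consistency gives $\widehat G_H=O_p(1/N)$ together with $\frac1N\Zh^\top\Seh^{-1}\Zh\succ cI$. Hence $\Zh^\top\Ph\Zh=\widehat M_x^{-1}+O_p(1/N)$, which has eigenvalues bounded above and below (using $\widehat M_x\to M_x^*$). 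It follows that $\norm{K}_{\Ph}^2\ge c_1\norm{U}_F^2$.

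\textbf{Upper bound on $\norm{D_{\Ph}(K)}_{\Ph}^2$.} By \eqref{eq:diagonal-projection-norm}, this equals $\diag(\Ph K\Ph)^\top(\Ph\circ\Ph)^{-1}\diag(\Ph K\Ph)$. I would write $\Ph=\Seh^{-1}-\Seh^{-1}\Zh\widehat G_H\Zh^\top\Seh^{-1}$ and treat $\Ph\circ\Ph$ as $\Seh^{-2}$ plus a correction. This correction has operator norm $O_p(1)$ relative to a diagonal matrix bounded below, so $(\Ph\circ\Ph)^{-1}\preceq C\,I$. It therefore suffices to bound $\norm{\diag(\Ph K\Ph)}_2^2$.

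Next, $\Ph\Zh=\Seh^{-1}\Zh\widehat G_H\widehat M_x^{-1}$, a row-bounded $N\times d$ matrix whose rows are $O_p(1/N)$ times bounded vectors. Then $(\Ph K\Ph)_{ii}=\hat w_i^\top U\hat w_i$ with $\|\hat w_i\|=O_p(1/N)$, so
\[
 \norm{\diag(\Ph K\Ph)}^2\le \sum_{i=1}^N\|\hat w_i\|^4\norm{U}_F^2=O_p(N^{-3})\norm{U}_F^2 .
\]
Row boundedness of $\Zh$ comes from Assumption~\ref{assump:boundedsetting} and $\Theta$ bounded. The diagonal-projection loss is thus $o_p(1)\norm{U}_F^2$.

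Combining the two bounds gives $\lambda_{\min}(\mathbb G)\ge c_1/2$ with probability tending to one. This yields linear independence of $W_1,\ldots,W_s$ and the coordinate formula in the preceding remark.

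\textbf{Main obstacle.} The step I expect to be hardest is the control of $(\Ph\circ\Ph)^{-1}$, because $\Ph$ is not diagonal. My first attempt would be the Schur-product bound
\[
 \Ph\circ\Ph\succeq \lambda_{\min}(\Ph)\,(I\circ\Ph)=\lambda_{\min}(\Ph)\Diag(\diag\Ph).
\]
Here $\lambda_{\min}(\Ph)=\lambda_{\max}(\Syh)^{-1}$, which is only of order $1/N$, so this bound alone loses a factor $N$. That is still affordable against the $O_p(N^{-3})$ bound above, leaving $O_p(N^{-2})\norm{U}_F^2$, which is enough.

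If this route fails, I would instead bound $\norm{D_{\Ph}(K)}_{\Ph}\le\norm{K'}_{\Ph}$ directly, for the explicit diagonal choice $K'=\Diag(\diag K)$. This avoids inverting $\Ph\circ\Ph$, since the projection minimizes the residual over diagonals.
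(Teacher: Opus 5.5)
Your argument is correct, but it is not the route the paper takes for this lemma.

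\textbf{The paper's route.} The paper argues qualitatively. Since $\cR_D^{\Ph}$ annihilates exactly the diagonal matrices, $\mathbb G$ is singular iff $\Zh U\Zh^\top$ is diagonal for some symmetric $U\neq 0$. In that case at most $r\le\operatorname{rank}U\le d$ diagonal entries are nonzero. By full column rank of $\Zh$, the remaining rows of $\Zh$ lie in $\ker U$. Hence for unit $u\perp\ker U$ one gets $u^\top(\frac1N\Zh^\top\Seh^{-1}\Zh)u\le CC_e r/N\to0$, contradicting $\frac1N\Zh^\top\Seh^{-1}\Zh\succ cI$. This needs only row boundedness and the information lower bound, with no Woodbury and no control of $(\Ph\circ\Ph)^{-1}$. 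However, it yields only injectivity, not a lower bound on $\lambda_{\min}(\mathbb G)$.

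\textbf{Your route.} You use Pythagoras: $\norm{W(U)}_{\Ph}^2=\tr(C_ZUC_ZU)-\norm{D_{\Ph}(\Zh U\Zh^\top)}_{\Ph}^2$, with $C_Z=\Mh^{-1}-\Mh^{-1}\Gh\Mh^{-1}$, together with the $O_p(N^{-3})\norm{U}_F^2$ bound on $\diag(\Ph\Zh U\Zh^\top\Ph)$. This is essentially the proof the paper gives later for \eqref{eq:Gram-stability} in Lemma~\ref{lem:projection-stability}. It buys the uniform bound $\lambda_{\min}(\mathbb G)\ge c$ with probability tending to one, which is what the subsequent projection estimates actually use.

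\textbf{Two side remarks need fixing.}

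First, your initial justification of $(\Ph\circ\Ph)^{-1}\preceq CI$ is insufficient as stated: an $O_p(1)$ perturbation of a diagonal matrix bounded below need not be invertible. What is true is the following. With $L=\Seh^{-1}\Zh\Gh\Zh^\top\Seh^{-1}$, one has $\Ph\circ\Ph=\Seh^{-2}-2\Diag(\hat\sigma_r^{-2}L_{rr})+L\circ L$. The diagonal correction is $O_p(N^{-1})$, and $L\circ L\succeq 0$ by the Schur product theorem. Equivalently, use the paper's Gershgorin argument: off-diagonal row sums are $O_p(N^{-1})$. Your Schur-product fallback $\Ph\circ\Ph\succeq\lambda_{\min}(\Ph)\Diag(\diag\Ph)\succeq cN^{-1}I$ is also valid and suffices, since it leaves only $O_p(N^{-2})\norm{U}_F^2$.

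Second, the alternative bound $\norm{D_{\Ph}(K)}_{\Ph}\le\norm{\Diag(\diag K)}_{\Ph}$ does not follow. The minimizing property of the projection bounds the residual $\norm{K-D_{\Ph}(K)}_{\Ph}$ from above, not the projection itself. Indeed, $D_{\Ph}(K)$ can be nonzero even when $\diag K=0$, because $\Ph$ is not diagonal. Moreover, here $\diag(\Zh U\Zh^\top)$ has entries of order one, so that bound would be useless anyway.
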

\begin{proof}
    To see the non-singular, we only need to prove that, the diagonal-residualized image has the same dimension with $\mathcal{M}_{\Zh}$, i.e., $\Zh U\Zh^{\top}$ can never be a diagonal matrix for sufficiently large $N$. 

    Suppose ${\Zh}U\Zh^{\top}$ is diagonal, then the number $r$ of diagonal nonzero entries should be smaller than the rank of $U$. Without loss of generality, let $\Zh=(z_1,\cdots, z_N)^{\top}$ and assume $z_i^{\top}Uz_i\neq 0$ for $i=[r]$, then $z_{r+j}\in\ker U$. 

    Consider the pairing \begin{align*}
        &u^{\top}(\frac{1}{N}\Zh^{\top}\Seh^{-1}\Zh)u,\, u\in \ker U^{\perp},\,\|u\|_F=1\\\leq &\frac{1}{N} r\cdot CC_e\|u\|_F, 
    \end{align*}which goes to zero as $N\to \infty$. This contradicts to the result: $\frac{1}{pq}\Zh^\top\Seh^{-1}\Zh\succ cI>0$. Therefore, such $U$ does not exist. 
\end{proof}

After the alignment used in the interior proof,
\begin{equation}
 R^\top\Seh^{-1}\Zh=0,
 \qquad
 x_T=o_p(1).
 \label{eq:R-orthogonality}
\end{equation}
There are a feasible tangent $\dot Z$ and a remainder $Q_Z$ satisfying
\begin{equation}
    \Zh = (B^*+\dot B)\odot (A^*+\dot A),\quad Z^*-\Zh=\dot Z+Q_Z. 
 \label{eq:local-expansion}
\end{equation}
Hence we have
\begin{align}
 \dot Z=&-R-\Zh\mathcal Z^\top-Q_Z,
 \label{eq:dotZ-decomposition}\\
 \norm{Q_Z}_F=O_p(\sqrt N\,x_T),
 \quad
 \norm{\mathcal Z}_{\op}&=O_p(x_T^{1/2}),
 \quad
 \norm{\Mh-M_x^*}_{\op}=o_p(1).
 \label{eq:local-rates}
\end{align}
To see $\norm{Q_Z}_F=O_p(\sqrt N\,x_T)$, we only need to compute 
\begin{align*}
    \|\dot B\odot \dot A\|_F^2\leq \|\dot B\|_F^2\|\dot A\|_F^2=O_p(pqx_T). 
\end{align*}
To see $\frac{1}{N}\|\dot Z\|=O_p(x_T)$, we only need to consider the equation
\begin{align*}\frac{1}{N^2}\|(\Zh-Z^*)\Seh^{-1}\Zh\|_F^2=\frac{1}{N^2}\tr\big[(\Zh-Z^*)\Seh^{-1}\Zh\Zh^{\top}\Seh^{-1}(\Zh-Z^*)^{\top}\big]\\\leq \frac{C_e^2}{N}\|(\Zh-Z^*)\|_F^2\|\frac{1}{N}\Zh^{\top}\Zh\|_F, 
\end{align*}while the left hand side is $O_p(\|\mathcal{Z}\|_F^2)$. 

\begin{lemma}
\label{lem:projection-stability}
\begin{align}
 \max_{r\le N}\norm{(\Ph\Zh)_r}_2=O_p(N^{-1}),
 \label{eq:PZ-row}\\
 \norm{(\Ph\circ\Ph)^{-1}}_{\op}=O_p(1).
 \label{eq:Hadamard-inverse}
\end{align}
If $N\to\infty$, the Gram matrix $\mathbb G$ above satisfies
\begin{equation}
 cI_s\preceq\mathbb G\preceq CI_s
 \label{eq:Gram-stability}
\end{equation}
with probability tending to one.
\end{lemma}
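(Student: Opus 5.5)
We prove the three claims of Lemma~\ref{lem:projection-stability} in order, all through the Woodbury form \eqref{eq:inversematrix} of $\Ph=\Syh^{-1}$ and the bounds already available at the estimator: $\Seh$ has diagonal entries in $[C_e^{-2},C_e^{2}]$, the rows of $\Zh$ are uniformly bounded (Assumption~\ref{assump:boundedsetting}), $\frac1N\Zh^\top\Seh^{-1}\Zh\succ cI$, and $\Mh$ has eigenvalues bounded above and below.

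\textbf{Row bound \eqref{eq:PZ-row}.} By Woodbury,
\[
\Ph\Zh=\Seh^{-1}\Zh-\Seh^{-1}\Zh\widehat G_H\Zh^\top\Seh^{-1}\Zh
=\Seh^{-1}\Zh\,\widehat G_H\Mh^{-1},
\]
since $\widehat G_H(\Mh^{-1}+\Zh^\top\Seh^{-1}\Zh)=I$. Hence the $r$-th row is $\hat\sigma_r^{-2}\hat z_r^\top\widehat G_H\Mh^{-1}$. Because $\widehat G_H\preceq(\Zh^\top\Seh^{-1}\Zh)^{-1}\preceq (cN)^{-1}I$ and $\|\Mh^{-1}\|_{\op}=O_p(1)$, every row has norm at most $C_e^2\,C\,(cN)^{-1}O_p(1)$, uniformly in $r$. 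This gives \eqref{eq:PZ-row}.

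\textbf{Hadamard inverse \eqref{eq:Hadamard-inverse}.} Write $\Ph=\Seh^{-1}-K$ with $K=\Seh^{-1}\Zh\widehat G_H\Zh^\top\Seh^{-1}\succeq0$. Then
\[
\Ph\circ\Ph=\Seh^{-2}-2\,\Seh^{-1}\circ K+K\circ K .
\]
Since $\Seh^{-1}$ is diagonal, $\Seh^{-1}\circ K$ is diagonal with entries $\hat\sigma_r^{-2}K_{rr}=O(N^{-1})$, because $K_{rr}\le C\|\hat z_r\|^2/(cN)$. The matrix $K\circ K$ is positive semidefinite by the Schur product theorem. Therefore $\Ph\circ\Ph\succeq \Seh^{-2}-O_p(N^{-1})I\succeq (C_e^{-4}/2)I$ for large $N$, and the inverse has bounded operator norm.

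\textbf{Gram bounds \eqref{eq:Gram-stability}.} These are the main obstacle.
\begin{itemize}
\item \emph{Reduction of the unresidualized term.} Whitening by $\Ph^{1/2}$, for symmetric $U$ we have $\|\Zh U\Zh^\top\|_{\Ph}^2=\tr\{(\Zh^\top\Ph\Zh)U(\Zh^\top\Ph\Zh)U\}$. Also $\Zh^\top\Ph\Zh=\Zh^\top\Seh^{-1}\Zh\,\widehat G_H\Mh^{-1}=I-\Mh^{-1}\widehat G_H\Mh^{-1}\cdots$, which is $\Mh^{-1}+O(N^{-1})$ and hence has eigenvalues bounded above and below. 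So $\|\Zh U\Zh^\top\|_{\Ph}^2\asymp\|U\|_F^2$.
\item \emph{Diagonal part is negligible.} By \eqref{eq:diagonal-projection-norm} and the bound just proved for $(\Ph\circ\Ph)^{-1}$, $\|D_{\Ph}(\Zh U\Zh^\top)\|_{\Ph}^2\lesssim\|\diag(\Ph\Zh U\Zh^\top\Ph)\|_2^2$. The $r$-th entry equals $(\Ph\Zh)_rU(\Ph\Zh)_r^\top=O_p(N^{-2})\|U\|$ by \eqref{eq:PZ-row}. Summing over $N$ rows gives $O_p(N^{-3})\|U\|^2$, so this term vanishes.
\item \emph{Conclusion.} By Pythagoras in $\ip{\cdot}{\cdot}_{\Ph}$, $\|W(U)\|_{\Ph}^2=\|\Zh U\Zh^\top\|_{\Ph}^2-\|D_{\Ph}(\Zh U\Zh^\top)\|_{\Ph}^2\asymp\|U\|_F^2$, uniformly over the orthonormal basis $E_a$. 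This yields $cI_s\preceq\mathbb G\preceq CI_s$ with probability tending to one, and it also recovers the nonsingularity lemma above quantitatively.
\end{itemize}

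The delicate point is the two-sided control of $\Zh^\top\Ph\Zh$ when $(\widehat A,\widehat B)$ may be rank deficient. Here we rely only on $\frac1N\Zh^\top\Seh^{-1}\Zh\succ cI$, which holds because $\Zh$ has full column rank $d$, not $\widehat A$ and $\widehat B$ individually.
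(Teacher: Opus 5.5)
Your proposal is correct and is essentially the paper's proof. You get \eqref{eq:PZ-row} from the same identity $\Ph\Zh=\Seh^{-1}\Zh\Gh\Mh^{-1}$, and \eqref{eq:Gram-stability} from the same Pythagorean split $\norm{W(U)}_{\Ph}^2=\tr(C_ZUC_ZU)-\norm{D_{\Ph}(\Zh U\Zh^\top)}_{\Ph}^2$ with $C_Z=\Zh^\top\Ph\Zh=\Mh^{-1}-\Mh^{-1}\Gh\Mh^{-1}$. Your intermediate ``$I-\Mh^{-1}\Gh\Mh^{-1}$'' is a slip, though your conclusion $\Mh^{-1}+O(N^{-1})$ is right. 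The Gram bound should also be stated for every $U=\sum_a u_aE_a$, so that it controls $u^\top\mathbb G u$, not merely the basis elements.

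The one local difference is \eqref{eq:Hadamard-inverse}. Your exact expansion $\Ph\circ\Ph=\Seh^{-2}-2\,\Seh^{-1}\circ K+K\circ K$, with $K\circ K\succeq0$ by the Schur product theorem, replaces the paper's entrywise Woodbury bounds plus Gershgorin and is slightly cleaner. Finally, the uniform bound $\frac1N\Zh^\top\Seh^{-1}\Zh\succ cI$ comes from consistency together with the positive-definite limit in Assumption~\ref{assump:boundedsetting}. It does not follow from full column rank of $\Zh$, which gives no $N$-uniform constant.
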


\begin{proof}
$\norm{\Gh}_{\op}=O_p(N^{-1})$ and \(\Ph\Zh=\Seh^{-1}\Zh\Gh\Mh^{-1}\) give the row bound.  If $N\to\infty$, then by Woodbury formula, we have $\Ph_{rr}=\widehat\sigma_r^{-2}+O_p(N^{-1})$ and $\Ph_{rs}=O_p(N^{-1})$ for $r\ne s$.  Hence, the diagonal entries of $\Ph\circ\Ph$ are bounded away from zero, and each off-diagonal row sum is $O_p(N^{-1})$; this follows from Gershgorin's theorem and yields \eqref{eq:Hadamard-inverse}. 

For $U=U^\top$, put $M(U)=\Zh U\Zh^\top$ and
$C_Z=\Zh^\top\Ph\Zh$.  Then
\[
 \norm{M(U)}_{\Ph}^2=\tr(C_ZUC_ZU)\asymp\norm U_F^2,
\]
because
$C_Z=\Mh^{-1}-\Mh^{-1}\Gh\Mh^{-1}$ has eigenvalues bounded above and away from zero.
Furthermore,
\[
 \{\diag(\Ph M(U)\Ph)\}_r=(\Ph\Zh)_r^\top U(\Ph\Zh)_r,
\]
so
\[
 \norm{\diag(\Ph M(U)\Ph)}_2^2
 \le C N^{-3}\norm U_F^2.
\]
By \eqref{eq:diagonal-projection-norm},
$\norm{D_{\Ph}\{M(U)\}}_{\Ph}^2\le CN^{-3}\norm U_F^2$.  Since diagonal
projection is orthogonal,
\[
 \norm{\cR_D^{\Ph}\{M(U)\}}_{\Ph}^2
 =\norm{M(U)}_{\Ph}^2-\norm{D_{\Ph}\{M(U)\}}_{\Ph}^2
 \asymp\norm U_F^2.
\]
Writing $U=\sum_a u_aE_a$, because $E_1,\ldots,E_s$ are Frobenius orthonormal,
\begin{equation}
    \norm{U}_F^2
    =
    \sum_{a=1}^s u_a^2
    =
    \norm{u}_2^2.
    \label{eq:U-u-norm}
\end{equation}
By linearity,
\[
    W(U)=\sum_{a=1}^s u_aW_a.
\]
Therefore,
\begin{align}
    u^\top\mathbb G u
    &=
    \sum_{a=1}^s\sum_{b=1}^s
    u_au_b
    \langle W_a,W_b\rangle_{\Ph} \notag\\
    &=
    \left\langle
        \sum_{a=1}^s u_aW_a,
        \sum_{b=1}^s u_bW_b
    \right\rangle_{\Ph} \notag\\
    &=
    \norm{W(U)}_{\Ph}^2.
    \label{eq:Gram-quadratic-form}
\end{align}
Combining
\eqref{eq:U-u-norm}, and
\eqref{eq:Gram-quadratic-form}, we obtain \(c\norm{u}_2^2
    \leq
    u^\top\mathbb G u
    \leq
    C\norm{u}_2^2\) uniformly over $u\in\mathbb R^s$. Hence
\(c
    \leq
    \lambda_{\min}(\mathbb G)
    \leq
    \lambda_{\max}(\mathbb G)
    \leq
    C\). Since $\mathbb G$ is symmetric positive definite,
\(\norm{\mathbb G}_{\op}
    =
    \lambda_{\max}(\mathbb G)
    \leq C\), and \(\norm{\mathbb G^{-1}}_{\op}
    =
    \lambda_{\min}(\mathbb G)^{-1}
    \leq c^{-1}\). 
This proves \eqref{eq:Gram-stability}.

\end{proof}

\subsubsection{The profiled loading direction and its information}

For $H\in\R^{N\times d}$ define
\begin{equation}
 \mathcal K(H)=H\Mh\Zh^\top+\Zh\Mh H^\top.
 \label{eq:K-H}
\end{equation}
Put
\[
 K_R=\mathcal K(R),
 \qquad
 K_Q=\mathcal K(Q_Z),
\]
and
\[
 K_{\mathcal Z}
 =\Zh(\mathcal Z^\top\Mh+\Mh\mathcal Z)\Zh^\top.
\]
The feasible loading covariance tangent is
\[
 K_L=\mathcal K(\dot Z)
 =-K_R-K_{\mathcal Z}-K_Q.
\]
Define
\begin{equation}
 V_T=\cR_{D,M}^{\Ph}(K_L).
 \label{eq:VT-definition}
\end{equation}
Since $K_{\mathcal Z}\in\cM_{\Zh}$,
\begin{equation}
 V_T=-\cR_{D,M}^{\Ph}(K_R)-\cR_{D,M}^{\Ph}(K_Q).
 \label{eq:VT-decomposition}
\end{equation}

\begin{lemma}
\label{lem:zero-score}
The direction $V_T$ satisfies
\[
 \ip{\Sh}{V_T}_F=0,
\]
and is $\Ph$-orthogonal to every diagonal matrix and every
$\Zh U\Zh^\top$, $U=U^\top$.
\end{lemma}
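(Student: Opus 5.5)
The orthogonality claims are structural, so I will establish them first. By \eqref{eq:complete-residual}, $V_T=(I-\Pi_M^{\Ph})\cR_D^{\Ph}(K_L)$. Lemma~\ref{lem:diagonal-projection} says $D_{\Ph}$ is the $\Ph$-orthogonal projection onto $\cD_N$. Hence $\cR_D^{\Ph}(K_L)$ is $\Ph$-orthogonal to $\cD_N$. Every element of $\cW_M$ is a diagonal residual, so $\Pi_M^{\Ph}$ maps into $\cD_N^{\perp_{\Ph}}$, and $V_T\perp_{\Ph}\cD_N$ follows.

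For $\Zh U\Zh^\top$ with $U=U^\top$, I decompose
\[
\Zh U\Zh^\top=D_{\Ph}(\Zh U\Zh^\top)+\cR_D^{\Ph}(\Zh U\Zh^\top).
\]
The first piece is diagonal and is handled above. The second piece lies in $\cW_M$, and $V_T$ is $\Ph$-orthogonal to $\cW_M$ by construction of $I-\Pi_M^{\Ph}$. Therefore $\ip{V_T}{\Zh U\Zh^\top}_{\Ph}=0$.

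For the score identity, I use $\Sh=\Ph(\Syh-\widehat M_y)\Ph$. This gives, for any symmetric $K$,
\[
\ip{\Sh}{K}_F=\tr\{\Ph(\Syh-\widehat M_y)\Ph K\}=\ip{\Syh-\widehat M_y}{K}_{\Ph}.
\]
So the score is a $\Ph$-inner product against the fixed residual $\Syh-\widehat M_y$, and it is linear in $K$.

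Next I expand $V_T$. I write
\[
V_T=K_L-D_{\Ph}(K_L)-\Pi_M^{\Ph}\cR_D^{\Ph}(K_L).
\]
The last term is a linear combination of the $W_a=\cR_D^{\Ph}(\Zh E_a\Zh^\top)$, using the Remark and the nonsingularity of $\mathbb G$ from Lemma~\ref{lem:projection-stability}. Each $W_a$ equals $\Zh E_a\Zh^\top$ minus a diagonal matrix. So $V_T=K_L+D+\Zh U\Zh^\top$ for some diagonal $D$ and some symmetric $U$. Linearity of $\ip{\Sh}{\cdot}_F$ then splits $\ip{\Sh}{V_T}_F$ into three terms:
\begin{itemize}
\item the term in $K_L=\mathcal K(\dot Z)$, which vanishes by \eqref{eq:loading-score} because $\dot Z$ is a feasible loading tangent by \eqref{eq:local-expansion};
\item the diagonal term, which vanishes by \eqref{eq:diagonal-score};
\item the term in $\Zh U\Zh^\top$, which vanishes by \eqref{eq:M-score}.
\end{itemize}
Hence $\ip{\Sh}{V_T}_F=0$.

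The main obstacle is justifying that $\dot Z$ in \eqref{eq:local-expansion} is genuinely a feasible tangent, i.e.\ of the form $\widehat B\odot\dot A+\dot B\odot\widehat A$ at the estimator. The gradient condition \eqref{eq:loading-score} only holds for such directions. The column normalization in Assumption~\ref{assump:boundedsetting} adds a further constraint, since \eqref{gradcondforCP-A}--\eqref{gradcondforCP-B} are derived for admissible variations under it.

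I will handle this by rewriting $Z^*-\Zh$ around $(\widehat A,\widehat B)$ rather than around $(A^*,B^*)$, so that $\dot A=A^*-\widehat A$ and $\dot B=B^*-\widehat B$, with the quadratic remainder absorbed into $Q_Z$. Any component that breaks the norm constraint is a column rescaling, so its tangent lies in $\Zh\,\Diag(\cdot)$. The corresponding covariance change lies in $\cM_{\Zh}$ and is removed by the $M_x$ projection, so it is harmless. If the estimator lies on the boundary of the optimization box, the first-order conditions fail. I therefore also note that we are in the interior case, where $\widehat\theta$ is an interior stationary point with probability tending to one.
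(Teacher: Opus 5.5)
Your proposal is correct and follows the paper's own argument. Orthogonality comes from the two successive projections together with the split $\Zh U\Zh^\top=\cR_D^{\Ph}(\Zh U\Zh^\top)+D_{\Ph}(\Zh U\Zh^\top)$, and the zero score comes from writing $V_T$ as $K_L$ plus a diagonal matrix plus an $M_x$ direction and invoking the three score equations of Lemma~\ref{lem:score-equations}.

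Your re-centering of \eqref{eq:local-expansion} at $(\widehat A,\widehat B)$, and your observation that normalization-breaking rescalings only produce $M_x$ directions, usefully justify a step the paper merely asserts. The appeal to nonsingularity of $\mathbb G$ is unnecessary: every element of $\cW_M$ is by definition $\cR_D^{\Ph}(\Zh U\Zh^\top)$ for some symmetric $U$, so the decomposition holds deterministically, not just with probability tending to one.
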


\begin{proof}
The operator \eqref{eq:complete-residual} subtracts linear combinations of diagonal and
$M_x$ covariance directions from the feasible loading direction $K_L$.  All corresponding
score equations are zero by Lemma~\ref{lem:score-equations}.  Orthogonality is the
definition of the two successive projections.  For the raw $M_x$ direction, write
$\Zh U\Zh^\top=\cR_D^{\Ph}(\Zh U\Zh^\top)+D_{\Ph}(\Zh U\Zh^\top)$ and use both
orthogonalities.
\end{proof}

Recall that
\begin{equation}
 \Ph R=\Seh^{-1}R,
 \qquad
 R^\top\Ph\Zh=0,
 \label{eq:P-R-orthogonality}
\end{equation}
and
\begin{equation}
 \norm{K_R}_{\Ph}^2
 =2\tr\!\left\{
 R^\top\Ph R\,\Mh(\Zh^\top\Ph\Zh)\Mh
 \right\}
 \asymp Nx_T.
 \label{eq:KR-information}
\end{equation}
\begin{lemma}
\label{lem:information}
For every $H$,
\begin{equation}
 \norm{D_{\Ph}\{\mathcal K(H)\}}_{\Ph}
 \le \frac{C}{N}\norm{\Ph H}_F.
 \label{eq:diagonal-correction}
\end{equation}
If $N\to\infty$, then
\begin{equation}
 \norm{\Pi_M^{\Ph}\cR_D^{\Ph}(K_R)}_{\Ph}
 \le C\norm{D_{\Ph}(K_R)}_{\Ph}
 =O_p\!\left(\frac{\sqrt{Nx_T}}N\right).
 \label{eq:M-projection-KR}
\end{equation}
\end{lemma}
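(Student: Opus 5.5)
We prove \eqref{eq:diagonal-correction} first and then derive \eqref{eq:M-projection-KR} from it.

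\textbf{Step 1: the diagonal correction.} By \eqref{eq:diagonal-projection-norm} and \eqref{eq:Hadamard-inverse},
\[
\norm{D_{\Ph}\{\mathcal K(H)\}}_{\Ph}^2\le C\norm{\diag(\Ph\mathcal K(H)\Ph)}_2^2 .
\]
The $r$-th diagonal entry is
\[
2(\Ph H)_r^\top\Mh(\Ph\Zh)_r ,
\]
where $(\cdot)_r$ denotes the $r$-th row. The row bound \eqref{eq:PZ-row} gives $\max_r\norm{(\Ph\Zh)_r}_2=O_p(N^{-1})$, and $\norm{\Mh}_{\op}=O_p(1)$. Cauchy--Schwarz in each row, followed by summing over $r$, yields
\[
\norm{\diag(\Ph\mathcal K(H)\Ph)}_2\le CN^{-1}\norm{\Ph H}_F ,
\]
which is \eqref{eq:diagonal-correction}.

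\textbf{Step 2: reduce the $M_x$ projection to a diagonal residual.} Write $W(U)=\cR_D^{\Ph}(\Zh U\Zh^\top)$. By the Gram stability \eqref{eq:Gram-stability},
\[
\norm{\Pi_M^{\Ph}K}_{\Ph}\le C\sup_{\norm U_F=1}|\ip{K}{W(U)}_{\Ph}| .
\]
Take $K=\cR_D^{\Ph}(K_R)$. Since $D_{\Ph}$ is the $\Ph$-orthogonal projection onto $\cD_N$ and $W(U)\perp\cD_N$,
\[
\ip{\cR_D^{\Ph}(K_R)}{W(U)}_{\Ph}=\ip{K_R}{W(U)}_{\Ph}=\ip{K_R}{\Zh U\Zh^\top}_{\Ph}-\ip{K_R}{D_{\Ph}(\Zh U\Zh^\top)}_{\Ph}.
\]

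\textbf{Step 3: the main term vanishes.} We have
\[
\ip{K_R}{\Zh U\Zh^\top}_{\Ph}=\tr\bigl(\Ph(R\Mh\Zh^\top+\Zh\Mh R^\top)\Ph\Zh U\Zh^\top\bigr).
\]
Each term contains a factor $\Zh^\top\Ph R$ or $R^\top\Ph\Zh$, and both are zero by \eqref{eq:P-R-orthogonality}. So this term vanishes exactly. This is the key structural point: the orthogonality of $R$ built into the alignment kills the leading contribution.

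\textbf{Step 4: the remaining cross term.} Since $D_{\Ph}$ is self-adjoint and idempotent,
\[
\ip{K_R}{D_{\Ph}(\Zh U\Zh^\top)}_{\Ph}=\ip{D_{\Ph}(K_R)}{D_{\Ph}(\Zh U\Zh^\top)}_{\Ph},
\]
so by Cauchy--Schwarz
\[
|\ip{K_R}{D_{\Ph}(\Zh U\Zh^\top)}_{\Ph}|\le\norm{D_{\Ph}(K_R)}_{\Ph}\,\norm{D_{\Ph}(\Zh U\Zh^\top)}_{\Ph}.
\]
The proof of Lemma~\ref{lem:projection-stability} gives $\norm{D_{\Ph}(\Zh U\Zh^\top)}_{\Ph}\le CN^{-3/2}\norm U_F$, which is certainly bounded. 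This gives the first inequality in \eqref{eq:M-projection-KR}, with room to spare.

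\textbf{Step 5: the rate.} Apply Step 1 with $H=R$. Since $\Ph R=\Seh^{-1}R$ by \eqref{eq:P-R-orthogonality} and $\hat\sigma_{ij}^{-2}\le C_e^2$,
\[
\norm{\Ph R}_F\le C\bigl(\tr R^\top\Seh^{-1}R\bigr)^{1/2}=C\sqrt{Nx_T}.
\]
Hence
\[
\norm{D_{\Ph}(K_R)}_{\Ph}=O_p\bigl(\sqrt{Nx_T}/N\bigr).
\]

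\textbf{Main obstacle.} The delicate step is Step 4. It requires the Gram lower bound, which holds only when $N\to\infty$, so that $\Pi_M^{\Ph}$ can be controlled by inner products with the unnormalised $W(U)$. It also requires the uniform-in-$U$ smallness of the diagonal part of $\Zh U\Zh^\top$. Both facts are supplied by Lemma~\ref{lem:projection-stability}, so we rely on it in full rather than rederiving the row bounds.
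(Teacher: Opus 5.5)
Your proof is correct and takes essentially the same route as the paper. Both arguments prove \eqref{eq:diagonal-correction} from the row bound on $\Ph\Zh$. Both then use the orthogonality $R^\top\Ph\Zh=0$ to reduce $\ip{\cR_D^{\Ph}(K_R)}{W(U)}_{\Ph}$ to a pairing with $D_{\Ph}(K_R)$, and control the projection through the Gram lower bound.

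The only difference is the surviving term. You bound it with $\norm{D_{\Ph}(\Zh U\Zh^\top)}_{\Ph}\le CN^{-3/2}\norm{U}_F$, while the paper uses the cruder $\norm{\Zh U\Zh^\top}_{\Ph}\le C\norm{U}_F$. Both suffice; yours has an extra factor $N^{-3/2}$ to spare.
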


\begin{proof}

For $K=\mathcal K(H)$,
\[
 \{\diag(\Ph K\Ph)\}_r
 =2(\Ph H)_r^\top\Mh(\Ph\Zh)_r.
\]
Use the row bound in Lemma~\ref{lem:projection-stability} and
\eqref{eq:diagonal-projection-norm} to obtain \eqref{eq:diagonal-correction}.

For the last statement, let $K_{R,D}=\cR_D^{\Ph}(K_R)$ and
$W(U)=\cR_D^{\Ph}(\Zh U\Zh^\top)$.  Since $K_{R,D}$ is orthogonal to diagonal
matrices and $K_R$ is orthogonal to the raw $M_x$ space,
\[
 \ip{K_{R,D}}{W(U)}_{\Ph}
 =\ip{K_R-D_{\Ph}(K_R)}{\Zh U\Zh^\top}_{\Ph}
 =-\ip{D_{\Ph}(K_R)}{\Zh U\Zh^\top}_{\Ph}.
\]
The norm equivalence and 
Lemma~\ref{lem:projection-stability} show that the norm of this linear functional on
$\cW_M$ is bounded by $C\norm{D_{\Ph}(K_R)}_{\Ph}$: 
\begin{align*}
    \ip{D_{\Ph}(K_R)}{\Zh U\Zh^\top}_{\Ph}&\leq \|D_{\Ph}(K_R)\|_{\Ph}\|\Zh U\Zh^{\top}\|_{\Ph}, \\c\|U\|_F\leq \|W(U)\|_{\Ph}, \quad \|\Zh U\Zh^{\top}\|_{\Ph}&=\langle U,\widehat{M}^{-1}\widehat{G}_H\widehat{H}^{-1}\rangle\leq \operatorname{const}\cdot \|U\|_F, 
\end{align*}therefore, together with equation \eqref{eq:diagonal-correction}, we proved equation \eqref{eq:M-projection-KR}. 
\end{proof}

\begin{lemma}
\label{lem:coercivity}
Under the preceding assumptions,
\begin{equation}
 \norm{V_T}_{\Ph}^2\asymp Nx_T.
 \label{eq:coercivity}
\end{equation}
\end{lemma}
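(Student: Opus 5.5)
We need to show $\norm{V_T}_{\Ph}^2\asymp Nx_T$, where $V_T=-\cR_{D,M}^{\Ph}(K_R)-\cR_{D,M}^{\Ph}(K_Q)$ by \eqref{eq:VT-decomposition}. The plan is to prove that $\cR_{D,M}^{\Ph}(K_R)$ carries essentially all the information of $K_R$, whose size is already recorded in \eqref{eq:KR-information}, and that the remainder $\cR_{D,M}^{\Ph}(K_Q)$ is of smaller order.

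\textbf{Step 1 (leading term).} Write
\[
\cR_{D,M}^{\Ph}(K_R)=K_R-D_{\Ph}(K_R)-\Pi_M^{\Ph}\cR_D^{\Ph}(K_R).
\]
By \eqref{eq:KR-information}, $\norm{K_R}_{\Ph}^2\asymp Nx_T$. Both sides of that equivalence come from the identity
\[
\norm{K_R}_{\Ph}^2=2\tr\{R^\top\Ph R\,\Mh C_Z\Mh\},
\]
which uses the orthogonality $R^\top\Ph\Zh=0$ from \eqref{eq:P-R-orthogonality} to kill the cross terms. The eigenvalues of $C_Z$ and $\Mh$ are bounded above and below by Lemma~\ref{lem:projection-stability}, and $\Ph R=\Seh^{-1}R$.

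Next apply \eqref{eq:diagonal-correction} with $H=R$. Using $\norm{\Ph R}_F\le C\norm R_F\le C\sqrt{Nx_T}$ gives $\norm{D_{\Ph}(K_R)}_{\Ph}=O_p(\sqrt{Nx_T}/N)$. The $M_x$-projection term is bounded by the same quantity by \eqref{eq:M-projection-KR}, which requires $N\to\infty$. Since $N^{-1}\to0$, the triangle inequality gives
\[
\norm{\cR_{D,M}^{\Ph}(K_R)}_{\Ph}^2=(1+o_p(1))\norm{K_R}_{\Ph}^2\asymp Nx_T.
\]
In the fixed-dimension case this step fails, and I would instead argue directly that $\cR_{D,M}^{\Ph}$ is injective on the closed cone of admissible $K_R$ directions.

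\textbf{Step 2 (remainder).} The residual operator is a $\Ph$-orthogonal projection, so it is a contraction:
\[
\norm{\cR_{D,M}^{\Ph}(K_Q)}_{\Ph}\le\norm{K_Q}_{\Ph}.
\]
With $\Ph\preceq\Seh^{-1}$ and $\Mh,\Zh$ bounded, $\norm{\mathcal K(H)}_{\Ph}\le C\norm{\Ph^{1/2}H}_F\,\norm{\Zh^\top\Ph\Zh}_{\op}^{1/2}\le C\norm H_F$. Combined with \eqref{eq:local-rates}, this gives
\[
\norm{K_Q}_{\Ph}\le C\norm{Q_Z}_F=O_p(\sqrt N x_T)=o_p(\sqrt{Nx_T}),
\]
because $x_T=o_p(1)$ by \eqref{eq:R-orthogonality}.

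\textbf{Step 3 (combine).} The reverse triangle inequality yields $\norm{V_T}_{\Ph}=\sqrt{Nx_T}\,(c+o_p(1))$, together with a matching upper bound, which proves \eqref{eq:coercivity}.

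\textbf{Main obstacle.} The delicate point is the bound $\norm{K_Q}_{\Ph}\le C\norm{Q_Z}_F$ together with the claimed rate $\norm{Q_Z}_F=O_p(\sqrt N x_T)$. That rate relies on the product $\norm{\dot B}_F\norm{\dot A}_F$ being controlled by $x_T$ rather than by $\norm{\mathcal Z}$, which could be of order $x_T^{1/2}$. If that fails, I would use \eqref{eq:dotZ-decomposition} to absorb the $\Zh\mathcal Z^\top$ part into the $M_x$-space, where it is annihilated. I would then re-estimate only the genuinely second-order piece $\dot B\odot\dot A$ via the scale normalization of Assumption~\ref{assump:boundedsetting}.
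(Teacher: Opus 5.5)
Your proposal is correct and follows the paper's proof essentially step for step. The paper likewise uses Lemma~\ref{lem:information} to get $\norm{\cR_{D,M}^{\Ph}(K_R)-K_R}_{\Ph}=O_p(\sqrt{Nx_T}/N)$ and combines this with \eqref{eq:KR-information}; it then disposes of $K_Q$ via contractivity of the residual operator and $\norm{K_Q}_{\Ph}\le C\norm{Q_Z}_F=O_p(\sqrt N x_T)=o_p(\sqrt{Nx_T})$. Your version additionally spells out the short justifications the paper leaves implicit, and, like the paper, it covers only the regime $N\to\infty$; your fixed-dimension aside is unproved but also unnecessary, since the paper handles fixed $(p,q)$ separately through the Taylor-expansion argument in Step 6 of the proof of Theorem~\ref{thmforcpconvrate}.
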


\begin{proof}
When $N\to\infty$, Lemma~\ref{lem:information} gives
\[
 \norm{\cR_{D,M}^{\Ph}(K_R)-K_R}_{\Ph}
 =O_p\!\left(\frac{\sqrt{Nx_T}}N\right).
\]

Therefore, $\norm{\cR_{D,M}^{\Ph}(K_R)}_{\Ph}^2\asymp Nx_T$.  Since the complete residual operator
is contractive and
\[
 \norm{K_Q}_{\Ph}\le C\norm{Q_Z}_F
 =O_p(\sqrt N x_T)=o_p(\sqrt{Nx_T}),
\]
\eqref{eq:VT-decomposition} proves the claim.
\end{proof}

Let
\[
 \Delta_Z=Z^*-\Zh,
 \qquad
 \Delta_M=M_x^*-\Mh,
 \qquad
 D_T=\Sigma_e^*-\Seh.
\]
Since $\Delta_Z=\dot Z+Q_Z$, direct expansion yields
\begin{align}
 \Sigma_y^*-\Syh
 ={}&K_L+\Zh\Delta_M\Zh^\top+D_T+\mathfrak R_T,
 \label{eq:covariance-expansion}\\
 \mathfrak R_T={}&K_Q
 +\Delta_Z\Mh\Delta_Z^\top
 +\Delta_Z\Delta_M\Zh^\top
 +\Zh\Delta_M\Delta_Z^\top
 +\Delta_Z\Delta_M\Delta_Z^\top.
 \label{eq:remainder}
\end{align}

\begin{lemma}
\label{lem:exact-identity}
The direction $V_T$ satisfies
\begin{equation}
 \norm{V_T}_{\Ph}^2
 =-\ip{\Ph\cW_T\Ph}{V_T}_F
  -\ip{\mathfrak R_T}{V_T}_{\Ph}.
 \label{eq:exact-identity}
\end{equation}
\end{lemma}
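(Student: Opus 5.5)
We will obtain \eqref{eq:exact-identity} by pairing the zero-score identity of Lemma~\ref{lem:zero-score} with the covariance expansion \eqref{eq:covariance-expansion}. The plan has three steps: rewrite the score matrix $\Sh$ in terms of $\Sigma_y^*-\Syh$ and $\cW_T$, substitute the expansion, and remove the nuisance terms using the orthogonality properties of $V_T$.

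First, since $\widehat M_y=\Sigma_y^*+\cW_T$, we have
\[
\Syh-\widehat M_y=-(\Sigma_y^*-\Syh)-\cW_T,
\qquad
\Sh=-\Ph(\Sigma_y^*-\Syh)\Ph-\Ph\cW_T\Ph .
\]
Lemma~\ref{lem:zero-score} gives $\ip{\Sh}{V_T}_F=0$. Using the definition \eqref{eq:P-inner-product}, $\ip{\Ph K\Ph}{V}_F=\tr(\Ph K\Ph V)=\ip{K}{V}_{\Ph}$ for symmetric $K$ and $V$. Hence
\[
0=-\ip{\Sigma_y^*-\Syh}{V_T}_{\Ph}-\ip{\Ph\cW_T\Ph}{V_T}_F .
\]
Some care is needed about symmetry. $V_T$ is symmetric because $K_L$ is symmetric and both $D_{\Ph}$ and $\Pi_M^{\Ph}$ preserve symmetry. $\Sigma_y^*-\Syh$ and every term of the expansion are symmetric as well, since $\mathfrak R_T$ collects symmetric pieces once the cross terms are paired.

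Second, we substitute \eqref{eq:covariance-expansion} for $\Sigma_y^*-\Syh$. The term $\ip{\Zh\Delta_M\Zh^\top}{V_T}_{\Ph}$ vanishes because $V_T$ is $\Ph$-orthogonal to every $\Zh U\Zh^\top$ with $U=U^\top$ (Lemma~\ref{lem:zero-score}); here $U=\Delta_M$ is symmetric. The term $\ip{D_T}{V_T}_{\Ph}$ vanishes because $D_T$ is diagonal and $V_T$ is $\Ph$-orthogonal to $\cD_N$. What remains is
\[
0=-\ip{K_L}{V_T}_{\Ph}-\ip{\mathfrak R_T}{V_T}_{\Ph}-\ip{\Ph\cW_T\Ph}{V_T}_F .
\]

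Third, we identify $\ip{K_L}{V_T}_{\Ph}$ with $\norm{V_T}_{\Ph}^2$. By construction $V_T=\cR_{D,M}^{\Ph}(K_L)=(I-\Pi_M^{\Ph})(I-D_{\Ph})K_L$. Write
\[
K_L=V_T+D_{\Ph}(K_L)+\Pi_M^{\Ph}\cR_D^{\Ph}(K_L).
\]
The second summand is diagonal and the third lies in $\cW_M$, so both are $\Ph$-orthogonal to $V_T$. Therefore $\ip{K_L}{V_T}_{\Ph}=\norm{V_T}_{\Ph}^2$, and rearranging gives \eqref{eq:exact-identity}.

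The main obstacle is checking that the successive projections leave $V_T$ orthogonal both to $\cD_N$ and to $\cW_M$. The residual $\cR_D^{\Ph}(K_L)$ is orthogonal to $\cD_N$. Every element of $\cW_M$ is itself diagonal-residualized, hence also orthogonal to $\cD_N$. It follows that subtracting $\Pi_M^{\Ph}$ keeps $V_T$ orthogonal to the diagonal matrices, which is exactly what the argument requires.
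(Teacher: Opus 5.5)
Your proposal is correct and follows the paper's proof essentially verbatim. You rewrite $\Sh=-\Ph(\Sigma_y^*-\Syh)\Ph-\Ph\cW_T\Ph$ and substitute the covariance expansion into the zero-score identity $\ip{\Sh}{V_T}_F=0$. You then remove the $D_T$ and $\Zh\Delta_M\Zh^\top$ terms by $\Ph$-orthogonality, and conclude from $\ip{K_L}{V_T}_{\Ph}=\norm{V_T}_{\Ph}^2$, since $V_T$ is the orthogonal residual of $K_L$. Your explicit checks of symmetry and of $V_T\perp\cD_N$ after the second projection fill in details the paper leaves implicit.
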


\begin{proof}
Since $\Syh-\widehat{M}_{y}=-(\Sigma_y^*-\Syh)-\cW_T$, substitute equation \eqref{eq:covariance-expansion} into the zero-score identity $\ip{\Sh}{V_T}_F=0$.  The terms $D_T$ and $\Zh\Delta_M\Zh^\top$ vanish exactly by orthogonality. Finally, $\ip{K_L}{V_T}_{\Ph}=\norm{V_T}_{\Ph}^2$ because $V_T$ is the orthogonal residual of $K_L$.
\end{proof}

\subsubsection{Nonlinear covariance terms}

\begin{lemma}
\label{lem:quadratic-orthogonality}
Let $Q_R=R\Mh R^\top$.  Then for every $H\in\R^{N\times d}$ and every symmetric $U$,
\begin{equation}
 \ip{Q_R}{\mathcal K(H)}_{\Ph}=0,
 \qquad
 \ip{Q_R}{\Zh U\Zh^\top}_{\Ph}=0.
 \label{eq:QR-orthogonality}
\end{equation}
The same statements hold with $R\Mh R^\top$ replaced by
$R\Delta_MR^\top$.
\end{lemma}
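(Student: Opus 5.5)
The plan is to reduce both identities to the single orthogonality relation $R^\top\Ph\Zh=0$ from \eqref{eq:P-R-orthogonality}, using only cyclicity of the trace. The argument never uses any property of the middle factor $\Mh$ beyond it being a fixed $d\times d$ matrix. The same computation will therefore cover the variant with $\Delta_M$ in place of $\Mh$.

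\textbf{Step 1: justify the orthogonality relation.} I would first confirm that \eqref{eq:P-R-orthogonality} holds at the estimator. From the Woodbury representation in the proof of Lemma~\ref{lem:projection-stability} we have $\Ph\Zh=\Seh^{-1}\Zh\Gh\Mh^{-1}$. Hence
\[
 R^\top\Ph\Zh=(R^\top\Seh^{-1}\Zh)\Gh\Mh^{-1}=0
\]
by \eqref{eq:R-orthogonality}. Transposing gives $\Zh^\top\Ph R=0$ as well, since $\Ph$ is symmetric.

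\textbf{Step 2: the identity with $\mathcal K(H)$.} By \eqref{eq:P-inner-product},
\[
 \ip{Q_R}{\mathcal K(H)}_{\Ph}=\tr\bigl(\Ph R\Mh R^\top\Ph H\Mh\Zh^\top\bigr)+\tr\bigl(\Ph R\Mh R^\top\Ph\Zh\Mh H^\top\bigr).
\]
In the second trace the factor $R^\top\Ph\Zh$ appears explicitly, so that term is zero. For the first trace I would cycle it to $\tr\bigl(\Mh R^\top\Ph H\Mh\,(\Zh^\top\Ph R)\bigr)$, which vanishes by the transposed relation.

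\textbf{Step 3: the identity with $\Zh U\Zh^\top$.} Here
\[
 \ip{Q_R}{\Zh U\Zh^\top}_{\Ph}=\tr\bigl(\Ph R\Mh\,(R^\top\Ph\Zh)\,U\Zh^\top\bigr)=0
\]
directly, for every symmetric $U$.

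\textbf{Step 4: replacing $\Mh$ by $\Delta_M$.} Steps 2 and 3 go through verbatim with $\Mh$ replaced by $\Delta_M$ in $Q_R$, since only the outer factors $R^\top\Ph$ and $\Ph R$ were used. Symmetry of $\Delta_M$ is not needed.

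There is no real obstacle in this lemma. The only point needing care is Step 1: the orthogonality holds with respect to the whitening $\Ph=\Syh^{-1}$ and not only $\Seh^{-1}$. This follows from the factorization $\Ph\Zh=\Seh^{-1}\Zh\Gh\Mh^{-1}$, which relies on $\Gh$ and $\Mh$ being invertible at the estimator.
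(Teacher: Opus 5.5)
Your proof is correct and is the same argument as the paper's: expand each trace cyclically so that every term contains $R^\top\Ph\Zh$ or $\Zh^\top\Ph R$, which vanish by \eqref{eq:P-R-orthogonality}, and the argument uses nothing about the middle factor, so it also covers $\Delta_M$. Your Step~1, which derives $R^\top\Ph\Zh=0$ from $R^\top\Seh^{-1}\Zh=0$ via the Woodbury factorization $\Ph\Zh=\Seh^{-1}\Zh\Gh\Mh^{-1}$, is a valid justification of a relation the paper only recalls.
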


\begin{proof}
Expand each trace cyclically.  Every term contains either
$R^\top\Ph\Zh$ or $\Zh^\top\Ph R$, both of which are zero.
\end{proof}

\begin{lemma}[Remainder bound]
\label{lem:remainder-bound}
Under the preceding assumptions,
\begin{equation}
 \ip{\mathfrak R_T}{V_T}_{\Ph}=o_p(Nx_T).
 \label{eq:remainder-bound}
\end{equation}
\end{lemma}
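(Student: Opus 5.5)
The plan is to expand $\mathfrak R_T$ through \eqref{eq:remainder} and bound the $\Ph$-inner product of each summand with $V_T$ separately. The tools are Cauchy--Schwarz in $\ip{\cdot}{\cdot}_{\Ph}$, the coercivity $\norm{V_T}_{\Ph}\asymp\sqrt{Nx_T}$ from Lemma~\ref{lem:coercivity}, and the orthogonality relations of Lemma~\ref{lem:quadratic-orthogonality}. Throughout I use $\norm{K}_{\Ph}\le\norm{\Ph}_{\op}\norm{K}_F$ with $\norm{\Ph}_{\op}=O_p(1)$, since $\Ph\preceq\Seh^{-1}$. It then suffices to show that every summand $S$ satisfies either $\norm{S}_{\Ph}=o_p(\sqrt{Nx_T})$, or $\ip{S}{V_T}_{\Ph}$ vanishes up to such a term.

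First, the term $K_Q$. As in the proof of Lemma~\ref{lem:coercivity}, $\norm{K_Q}_{\Ph}\le C\norm{Q_Z}_F=O_p(\sqrt N x_T)$. Hence
\[
\ip{K_Q}{V_T}_{\Ph}=O_p(\sqrt N x_T\cdot\sqrt{Nx_T})=O_p(Nx_T^{3/2})=o_p(Nx_T),
\]
because $x_T=o_p(1)$ by \eqref{eq:R-orthogonality}.

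Second, the quadratic term $\Delta_Z\Mh\Delta_Z^\top$. By \eqref{eq:dotZ-decomposition}, $\Delta_Z=\dot Z+Q_Z=-R-\Zh\mathcal Z^\top$, so
\[
\Delta_Z\Mh\Delta_Z^\top=R\Mh R^\top+R\Mh\mathcal Z\Zh^\top+\Zh\mathcal Z^\top\Mh R^\top+\Zh\mathcal Z^\top\Mh\mathcal Z\Zh^\top .
\]
The last summand lies in $\cM_{\Zh}$. Since $V_T$ is $\Ph$-orthogonal to $\cM_{\Zh}$ by Lemma~\ref{lem:zero-score}, its contribution is exactly zero. The first summand $Q_R=R\Mh R^\top$ is the crux. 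A crude bound $\norm{Q_R}_{\Ph}\lesssim\norm{R}_F^2=O(Nx_T)$ is not good enough. Instead I would use \eqref{eq:VT-decomposition} to write
\[
V_T=-K_R+(\text{diagonal part})+(\cW_M\text{ part})-\cR_{D,M}^{\Ph}(K_Q).
\]
Then:
\begin{itemize}
\item[(i)] $\ip{Q_R}{K_R}_{\Ph}=0$ and $\ip{Q_R}{\Zh U\Zh^\top}_{\Ph}=0$ by Lemma~\ref{lem:quadratic-orthogonality}.
\item[(ii)] The diagonal piece $D_{\Ph}(K_R)$, together with the diagonal parts of the elements of $\cW_M$, has $\Ph$-norm $O_p(\sqrt{Nx_T}/N)$ by Lemma~\ref{lem:information} and \eqref{eq:M-projection-KR}. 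Paired with $\norm{Q_R}_{\Ph}=O(Nx_T)$, this gives $O_p(\sqrt N x_T^{3/2})=o_p(Nx_T)$.
\item[(iii)] The $K_Q$ piece contributes $O(Nx_T)\cdot O_p(\sqrt N x_T)=o_p(Nx_T)$ when $N$ is bounded. When $N\to\infty$ I would bound it via $\norm{Q_R}_{\Ph}$ more carefully; a plausible route is the row-boundedness of $R$, which gives $\norm{Q_R}_F\le\norm{R}_F\norm{R}_{\op}$ together with $\norm{R}_{\op}=o(\sqrt N)$.
\end{itemize}
For the mixed summands $R\Mh\mathcal Z\Zh^\top+\Zh\mathcal Z^\top\Mh R^\top=\mathcal K(R')$ with $R'=R\Mh\mathcal Z\Mh^{-1}$, I would use that $\ip{\mathcal K(R')}{V_T}_{\Ph}$ expands, through $R^\top\Ph\Zh=0$, into $\ip{\mathcal K(R')}{K_R}_{\Ph}$ plus small diagonal and $\cW_M$ corrections. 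This quantity is of order $\norm{R}_F^2\norm{\mathcal Z}_{\op}=O_p(Nx_T\cdot x_T^{1/2})=o_p(Nx_T)$, using \eqref{eq:local-rates}.

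Third, the terms involving $\Delta_M$. Write $\Delta_Z\Delta_M\Zh^\top+\Zh\Delta_M\Delta_Z^\top=\mathcal K(\Delta_Z\Delta_M\Mh^{-1})$. By Cauchy--Schwarz its pairing with $V_T$ is at most
\[
C\norm{\Delta_Z}_F\norm{\Zh}_{\op}\norm{\Delta_M}_{\op}\sqrt{Nx_T}=O_p\bigl(\sqrt{Nx_T}\cdot\sqrt N\cdot o_p(1)\cdot\sqrt{Nx_T}\bigr),
\]
which is too large by a factor of $\sqrt N$. To fix this, substitute $\Delta_Z=-R-\Zh\mathcal Z^\top$. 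The $\Zh\mathcal Z^\top$ part lands in $\cM_{\Zh}$ and pairs to zero. The $R$ part is $\mathcal K(R\Delta_M\Mh^{-1})$, whose pairing with the dominant $-K_R$ component of $V_T$ is $O(\norm{R}_F^2\norm{\Delta_M}_{\op})=o_p(Nx_T)$, using the weighting $\Mh(\Zh^\top\Ph\Zh)\Mh=O_p(1)$ exactly as in \eqref{eq:KR-information}. The cubic term $\Delta_Z\Delta_M\Delta_Z^\top$ splits in the same way, and its $R\Delta_MR^\top$ part is handled by the second clause of Lemma~\ref{lem:quadratic-orthogonality}.

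I expect the main obstacle to be keeping the $\Ph$-weights correctly normalized, with $\Zh^\top\Ph\Zh=O_p(1)$ rather than $O_p(N)$. Every apparently $N$-inflated term has to be routed through either the orthogonality $R^\top\Ph\Zh=0$ or membership in $\cM_{\Zh}$, instead of being bounded crudely. The piece $\ip{Q_R}{\cR_{D,M}^{\Ph}(K_Q)}_{\Ph}$ in the growing-$N$ regime is where the estimate is tightest.
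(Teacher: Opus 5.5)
Your overall structure matches the paper's proof. You bound $K_Q$ by Cauchy--Schwarz, expand $\Delta_Z\Mh\Delta_Z^\top$ through $\Delta_Z=-R-\Zh\mathcal Z^\top$, and drop the $\cM_{\Zh}$ piece. You also use the orthogonality lemma for $Q_R=R\Mh R^\top$, and reduce the $\Delta_M$ terms to a loading direction of norm $o_p(\sqrt{Nx_T})$ plus an $M_x$ direction.

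\textbf{The gap is step (iii).} There you bound the pairing of $Q_R$ with the $K_Q$-component of $V_T$ by $\norm{Q_R}_{\Ph}\norm{K_Q}_{\Ph}=O_p(N^{3/2}x_T^2)$. This is $o_p(Nx_T)$ only if $\sqrt N x_T\to0$, and nothing gives you that:
\begin{itemize}
\item the paper imposes no relation between $N$ and $T$, and at this stage only $x_T=o_p(1)$ is known;
\item the lemma is an input to Proposition~\ref{prop:R-rate}, so you cannot use $x_T=O_p(T^{-1})$ here, and even that rate would need $N=o(T^2)$.
\end{itemize}
Your proposed repair does not help either. Since $R\in\mathbb R^{N\times d}$ with $d$ fixed, $\norm{R}_{\op}\ge d^{-1/2}\norm{R}_F$. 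Hence $\norm{R}_F\norm{R}_{\op}\asymp\norm{R}_F^2\asymp Nx_T$, which is the same crude bound you started from.

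\textbf{The missing observation.} $K_Q=\mathcal K(Q_Z)$ is itself of the form $\mathcal K(H)$. So Lemma~\ref{lem:quadratic-orthogonality} gives $\ip{Q_R}{K_Q}_{\Ph}=0$ exactly, just as for $K_R$: every trace term contains $R^\top\Ph\Zh$ or $\Zh^\top\Ph R$. Write $\cR_{D,M}^{\Ph}(K_Q)=K_Q-D_{\Ph}(K_Q)-W(U_Q)$, where $W(U_Q)=\Zh U_Q\Zh^\top-D_{\Ph}(\Zh U_Q\Zh^\top)$. Then only diagonal pieces survive the pairing with $Q_R$:
\begin{itemize}
\item by \eqref{eq:diagonal-correction}, $\norm{D_{\Ph}(K_Q)}_{\Ph}\le CN^{-1}\norm{\Ph Q_Z}_F=O_p(x_T/\sqrt N)$;
\item by the proof of Lemma~\ref{lem:projection-stability}, $\norm{D_{\Ph}(\Zh U_Q\Zh^\top)}_{\Ph}\le CN^{-3/2}\norm{U_Q}_F=O_p(x_T/N)$, using $\norm{U_Q}_F\lesssim\norm{W(U_Q)}_{\Ph}\le\norm{K_Q}_{\Ph}$ from Gram stability.
\end{itemize}
Paired with $\norm{Q_R}_{\Ph}=O_p(Nx_T)$, these contribute $O_p(\sqrt N x_T^2)$ and $O_p(x_T^2)$. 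Both are $o_p(Nx_T)$ for any growth of $N$. This is exactly the paper's bound $C\norm{Q_R}_{\Ph}\{\norm{\Ph R}_F/N+\norm{\Ph Q_Z}_F/N\}$. The same orthogonality, with $\mathcal K(H)$ for arbitrary $H$, is what handles the $R\Delta_MR^\top$ part of the cubic term. Your bounded-$N$ argument in (iii) is fine as it stands.
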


\begin{proof}
The linear term $K_Q$ has norm $O_p(\sqrt N x_T)$, so its pairing with $V_T$ is
$O_p(Nx_T^{3/2})$.

Next use the exact relation $\Delta_Z=-R-\Zh\mathcal Z^\top$ to write
\begin{align*}
 \Delta_Z\Mh\Delta_Z^\top
 ={}&R\Mh R^\top
 +R\Mh\mathcal Z\Zh^\top
 +\Zh\mathcal Z^\top\Mh R^\top
 +\Zh\mathcal Z^\top\Mh\mathcal Z\Zh^\top.
\end{align*}
The last term is an $M_x$ direction and therefore has zero pairing with $V_T$.  The two
middle terms have $\Ph$-norm $O_p(\sqrt N x_T)$ and hence pairing
$O_p(Nx_T^{3/2})$.

For $Q_R=R\Mh R^\top$, Lemma~\ref{lem:quadratic-orthogonality} shows that it is
orthogonal to $K_R$, $K_Q$, and every raw $M_x$ correction.  Thus its pairing with $V_T$
comes only from diagonal pieces: the direct diagonal projections of $K_R$ and $K_Q$, and the diagonal pieces attached to the residualized $M_x$ projections $D_{\Ph}(\Zh U_{R/Q}\Zh)$ in $\Pi_M^{\Ph}(K_{R/Q,D})$.  When $N\to\infty$,
the latter are smaller than the former by the bound
$\norm{D_{\Ph}\{\Zh U\Zh^\top\}}_{\Ph}\le CN^{-3/2}\norm U_F$ from the proof of
Lemma~\ref{lem:projection-stability}; when $N$ is bounded, all relevant dimensions are
fixed and their contribution is $O_p(x_T^{3/2})$.  Consequently, by equation \eqref{eq:diagonal-correction}: 
\begin{align*}
 |\ip{Q_R}{V_T}_{\Ph}|
 &\le C\norm{Q_R}_{\Ph}
 \left\{
 \frac{\norm{\Ph R}_F}{N}
 +\frac{\norm{\Ph Q_Z}_F}{N}
 \right\}+o_p(Nx_T)\\
 &\le C(Nx_T)
 \left\{\frac{\sqrt{Nx_T}}N+\frac{x_T}{\sqrt N}\right\}
 +o_p(Nx_T)
 =o_p(Nx_T).
\end{align*}

The terms linear in $\Delta_M$ are, after substituting
$\Delta_Z=-R-\Zh\mathcal Z^\top$, a loading-type direction with norm
$o_p(\sqrt{Nx_T})$ plus an $M_x$ direction, whose pairing is zero.  Their total pairing is
$o_p(Nx_T)$.  Finally, the leading component of
$\Delta_Z\Delta_M\Delta_Z^\top$ is $R\Delta_MR^\top$; by
Lemma~\ref{lem:quadratic-orthogonality}, only diagonal corrections contribute, and the
same displayed bound multiplied by $\norm{\Delta_M}_{\op}=o_p(1)$ applies.  Every
remaining component contains at least one factor
$\mathcal Z=O_p(x_T^{1/2})$ and is smaller.
\end{proof}

\subsubsection{Stochastic bound}

Define \(C_T=\Ph\Zh\Mh\). Woodbury gives \begin{equation}
 C_T=\Seh^{-1}\Zh\Gh.
 \label{eq:CT-Woodbury}
\end{equation}

We have proved that, $$\norm{C_T}_{\op}=O_p(N^{-1/2}),
 \qquad
 \norm{Z^{*\top}C_T}_{\op}=O_p(1).$$Hence we have: 

\begin{lemma}[Sample covariance applied to the factor direction]
\label{lem:WCT}

\begin{equation}
  \norm{\cW_TC_T}_F=O_p\!\left(\sqrt{\frac NT}\right).
 \label{eq:WCT}
\end{equation}

\end{lemma}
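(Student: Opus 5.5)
The plan is to use the decomposition of $\cW_T$ from Lemma~\ref{lem:sample-moment-bounds},
\[
\cW_T C_T = Z^*F_TZ^{*\top}C_T + Z^*G_TC_T + G_T^\top Z^{*\top}C_T + E_TC_T ,
\]
and to bound each of the four pieces in Frobenius norm. The two facts to be used repeatedly are the ones stated just before the lemma: $\norm{C_T}_{\op}=O_p(N^{-1/2})$ and $\norm{Z^{*\top}C_T}_{\op}=O_p(1)$. Together with $\norm{Z^*}_{\op}=O(\sqrt N)$ from row boundedness, these handle three of the terms almost directly.

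\textbf{Factor term.} Write $\norm{Z^*F_TZ^{*\top}C_T}_F\le \norm{Z^*}_{\op}\norm{F_T}_F\norm{Z^{*\top}C_T}_{\op}$. This gives $O(\sqrt N)\,O_p(T^{-1/2})\,O_p(1)=O_p(\sqrt{N/T})$.

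\textbf{First cross term.} Here $G_T\in\R^{d\times N}$, so $\norm{Z^*G_TC_T}_F\le \norm{Z^*}_{\op}\norm{G_TC_T}_F$. I would not use the crude bound $\norm{G_T}_F\norm{C_T}_{\op}=O_p(\sqrt{N/T})\,O_p(N^{-1/2})$. Its product with $\sqrt N$ is still $O_p(\sqrt{N/T})$, so it would suffice, but it hides the structure. Instead, note that $G_TC_T=T^{-1}\sum_t x_t(e_t^\top C_T)$ is $d\times d$. Conditionally on the $x$'s, its second moment is controlled by $T^{-1}\norm{C_T}_F^2\,E\norm{x_t}^2=O_p(T^{-1}\cdot N\cdot N^{-1})$, since $\norm{C_T}_F^2\le d\norm{C_T}_{\op}^2$ with rank $d$. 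Hence $\norm{G_TC_T}_F=O_p(T^{-1/2})$, and either route gives $O_p(\sqrt{N/T})$.

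\textbf{Second cross term.} $\norm{G_T^\top Z^{*\top}C_T}_F\le\norm{G_T}_F\norm{Z^{*\top}C_T}_{\op}=O_p(\sqrt{N/T})$.

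\textbf{Noise term (main obstacle).} The crude bound $\norm{E_T}_F\norm{C_T}_{\op}=O_p(N/\sqrt T)\cdot N^{-1/2}=O_p(\sqrt{N/T})$ already works, because $\norm{E_TC_T}_F\le\norm{E_T}_F\norm{C_T}_{\op}$. The catch is that $C_T$ depends on the estimator, so it is data dependent and correlated with $E_T$. The operator-norm route avoids any independence requirement, though, since these are deterministic inequalities applied to random quantities whose orders are already established. So the only real issue is whether the stated bounds on $C_T$ hold. I would verify them from \eqref{eq:CT-Woodbury}: $\norm{\Seh^{-1}}_{\op}\le C_e^2$ up to consistency, $\norm{\Zh}_{\op}=O_p(\sqrt N)$, and $\norm{\Gh}_{\op}=O_p(N^{-1})$ by Step 2 and Lemma~\ref{lemmaforfactmodconsistency1}(e). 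Together these give $\norm{C_T}_{\op}=O_p(N^{-1/2})$. Likewise $Z^{*\top}C_T=(Z^{*\top}\Seh^{-1}\Zh)\Gh$ has operator norm $O_p(N\cdot N^{-1})$.

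Summing the four bounds yields \eqref{eq:WCT}. The delicate point I expect is justifying $\norm{\Gh}_{\op}=O_p(N^{-1})$ uniformly. This needs $\frac1N\Zh^\top\Seh^{-1}\Zh\succ cI$, which is the standing assumption introduced before the Gram-matrix lemma. I would invoke it explicitly rather than re-derive it.
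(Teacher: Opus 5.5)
Your proposal is correct and is essentially the paper's proof. Both use the same four-term split of $\cW_TC_T$, bound each piece by deterministic Frobenius/operator-norm inequalities with $\norm{Z^*}_{\op}=O(\sqrt N)$, $\norm{C_T}_{\op}=O_p(N^{-1/2})$ and $\norm{Z^{*\top}C_T}_{\op}=O_p(1)$, and do not condition on $C_T$.

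Drop the conditional-second-moment aside for $G_TC_T$, though. $C_T=\Seh^{-1}\Zh\Gh$ depends on the errors $e_t$ through the estimator, so it cannot be treated as fixed given the $x_t$'s. The crude bound $\norm{G_T}_F\norm{C_T}_{\op}$, which the paper uses, is the valid route.
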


\begin{proof}
Apply
deterministic norm inequalities on the high-probability event where the first two bounds
hold:
\begin{align*}
 \norm{Z^*F_TZ^{*\top}C_T}_F
 &\le \norm{Z^*}_{\op}\norm{F_T}_F
       \norm{Z^{*\top}C_T}_{\op}
 =O_p\!\left(\sqrt{\frac NT}\right),\\
 \norm{Z^*G_TC_T}_F
 &\le \norm{Z^*}_{\op}\norm{G_T}_F\norm{C_T}_{\op}
 =O_p\!\left(\sqrt{\frac NT}\right),\\
 \norm{G_T^\top Z^{*\top}C_T}_F
 &\le \norm{G_T}_F\norm{Z^{*\top}C_T}_{\op}
 =O_p\!\left(\sqrt{\frac NT}\right),\\
 \norm{E_TC_T}_F
 &\le \norm{E_T}_F\norm{C_T}_{\op}
 =O_p\!\left(\sqrt{\frac NT}\right).
\end{align*}
No conditioning on $C_T$ is used.
\end{proof}

\begin{lemma}
\label{lem:structured-scores}
For $H\in\R^{N\times d}$,
\begin{equation}
 \left|\ip{\Ph\cW_T\Ph}{\mathcal K(H)}_F\right|
 \le 2\norm{\Ph H}_F\norm{\cW_TC_T}_F.
 \label{eq:structured-score}
\end{equation}
For symmetric $U$,
\begin{equation}
 \left|\ip{\Ph\cW_T\Ph}{\Zh U\Zh^\top}_F\right|
 =O_p\!\left(\frac{\norm U_F}{\sqrt T}\right).
 \label{eq:M-score-stochastic}
\end{equation}
For diagonal $D$,
\begin{equation}
 \left|\ip{\Ph\cW_T\Ph}{D}_F\right|
 \le C\norm{\cW_T}_F\norm D_{\Ph}.
 \label{eq:diagonal-score-stochastic}
\end{equation}
\end{lemma}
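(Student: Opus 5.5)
We prove the three bounds of Lemma~\ref{lem:structured-scores} separately, in each case moving the whitening matrices $\Ph$ onto the structured factor so that the already controlled quantity $\cW_TC_T$ from Lemma~\ref{lem:WCT} appears.

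\emph{Bound \eqref{eq:structured-score}.} Since $\Ph$ and $\cW_T$ are symmetric, cyclicity of the trace gives
\[
 \ip{\Ph\cW_T\Ph}{H\Mh\Zh^\top}_F=\tr(\Ph\cW_T\Ph H\Mh\Zh^\top)=\tr\{(\Ph H)^\top\cW_T\Ph\Zh\Mh\}=\ip{\Ph H}{\cW_TC_T}_F ,
\]
using $C_T=\Ph\Zh\Mh$. The transposed summand $\Zh\Mh H^\top$ yields the same value. Cauchy--Schwarz then gives the bound with constant $2$. This part is purely deterministic.

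\emph{Bound \eqref{eq:M-score-stochastic}.} In the same way,
\[
 \ip{\Ph\cW_T\Ph}{\Zh U\Zh^\top}_F=\tr(U\,\Zh^\top\Ph\cW_T\Ph\Zh)=\tr(U\,\Mh^{-1}C_T^\top\cW_TC_T\Mh^{-1}),
\]
so it suffices to show that $\|C_T^\top\cW_TC_T\|_F=O_p(T^{-1/2})$; the factor $\|\Mh^{-1}\|_{\op}=O_p(1)$ is harmless. The crude bound $\|C_T\|_{\op}\|\cW_TC_T\|_F=O_p(N^{-1/2}\sqrt{N/T})=O_p(T^{-1/2})$ is already enough.

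As a cross-check we can also go term by term through the decomposition of $\cW_T$ in Lemma~\ref{lem:sample-moment-bounds}.
\begin{itemize}
\item[] $C_T^\top Z^*F_TZ^{*\top}C_T$ is bounded by $\|Z^{*\top}C_T\|_{\op}^2\|F_T\|_F=O_p(T^{-1/2})$.
\item[] $C_T^\top Z^*G_TC_T$ is bounded by $\|Z^{*\top}C_T\|_{\op}\|G_T\|_F\|C_T\|_{\op}=O_p(\sqrt{N/T}\,N^{-1/2})$, and its transpose is handled identically.
\item[] $C_T^\top E_TC_T$ is bounded by $\|C_T\|_{\op}\|E_TC_T\|_F$, which is the last display in the proof of Lemma~\ref{lem:WCT} times $N^{-1/2}$.
\end{itemize}
Each term is $O_p(T^{-1/2})$, and Cauchy--Schwarz in $U$ then gives the factor $\|U\|_F$.

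\emph{Bound \eqref{eq:diagonal-score-stochastic}.} Write $\ip{\Ph\cW_T\Ph}{D}_F=\tr(\Ph^{1/2}\cW_T\Ph^{1/2}\,\Ph^{1/2}D\Ph^{1/2})$. Cauchy--Schwarz gives the bound $\|\Ph^{1/2}\cW_T\Ph^{1/2}\|_F\,\|D\|_{\Ph}$, and $\|\Ph^{1/2}\cW_T\Ph^{1/2}\|_F\le\|\Ph\|_{\op}\|\cW_T\|_F$. It remains to check $\|\Ph\|_{\op}\le C$. This follows from $\Syh\succeq\Seh$ together with the bounds on $\widehat\sigma$ inherited from Assumption~\ref{assump:wnbdedcov} via the parameter-space restriction, which give $\|\Ph\|_{\op}\le C_e^{2}$.

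\emph{Main obstacle.} The only nontrivial input is the pair of bounds on $C_T$, namely $\|C_T\|_{\op}=O_p(N^{-1/2})$ and $\|Z^{*\top}C_T\|_{\op}=O_p(1)$. I would rederive them from \eqref{eq:CT-Woodbury}. First, $\|\Gh\|_{\op}=O_p(N^{-1})$ and $\|\Seh^{-1}\Zh\|_{\op}=O_p(\sqrt N)$ by row boundedness, which gives $\|C_T\|_{\op}=O_p(N^{-1/2})$. Second, $\|Z^*\|_{\op}=O(\sqrt N)$, so $\|Z^{*\top}C_T\|_{\op}\le\|Z^*\|_{\op}\|C_T\|_{\op}=O_p(1)$. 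Everything else is trace algebra.
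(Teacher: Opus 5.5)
Your proof is correct. For the first two bounds it is the paper's argument. For the structured direction you use the cyclic-trace identity $\langle \widehat P\mathcal W_T\widehat P,\mathcal K(H)\rangle_F=2\langle \widehat P H,\mathcal W_T C_T\rangle_F$ followed by Cauchy--Schwarz. For the $M_x$ direction you factor $\widehat P\widehat Z=C_T\widehat M_x^{-1}$ (the paper's $B_T$) and bound by $\|C_T\|_{\operatorname{op}}\|\mathcal W_T C_T\|_F=O_p(T^{-1/2})$.

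The only real difference is in the diagonal bound. The paper first uses $|\operatorname{tr}(\widehat P\mathcal W_T\widehat P D)|\le\|\widehat P\|_{\operatorname{op}}^2\|\mathcal W_T\|_F\|D\|_F$. It then needs norm equivalence $\|D\|_F\le C\|D\|_{\widehat P}$ on diagonal matrices. That step amounts to $\|D\|_F^2\le\|(\widehat P\circ\widehat P)^{-1}\|_{\operatorname{op}}\|D\|_{\widehat P}^2$, so it relies on the Hadamard-inverse bound \eqref{eq:Hadamard-inverse} from Lemma~\ref{lem:projection-stability}.

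Your whitening identity $\|\widehat P^{1/2}D\widehat P^{1/2}\|_F=\|D\|_{\widehat P}$ makes that step exact. It gives the constant $C=\|\widehat P\|_{\operatorname{op}}\le\|\widehat\Sigma_e^{-1}\|_{\operatorname{op}}$ with no input about $\widehat P\circ\widehat P$. The bound also holds for every symmetric $D$, not only diagonal ones, so it is slightly more elementary and more general than the paper's version.
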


\begin{proof}
Equation \eqref{eq:structured-score} follows from
\[
 \ip{\Ph\cW_T\Ph}{\mathcal K(H)}_F
 =2\tr(H^\top\Ph\cW_TC_T)
\]
and Cauchy--Schwarz.  For the $M_x$ direction, put $B_T=\Ph\Zh=C_T\Mh^{-1}$.
Then $\norm{B_T}_{\op}=O_p(N^{-1/2})$ and the proof of
Lemma~\ref{lem:WCT} also gives
$\norm{\cW_TB_T}_F=O_p(\sqrt{N/T})$.  Therefore
\[
 \left|\tr\{U B_T^\top\cW_TB_T\}\right|
 \le \norm U_F\norm{B_T}_{\op}\norm{\cW_TB_T}_F
 =O_p(\norm U_F/\sqrt T).
\]
The diagonal bound follows from Frobenius Cauchy--Schwarz, boundedness of
$\norm{\Ph}_{\op}$, and norm equivalence on diagonal matrices.
\end{proof}

\begin{lemma}
\label{lem:stochastic-score}
Under the preceding assumptions,
\begin{equation}
 \left|\ip{\Ph\cW_T\Ph}{V_T}_F\right|
 =O_p\!\left(N\sqrt{\frac{x_T}{T}}\right)
  +O_p\!\left(\frac{Nx_T}{\sqrt T}\right)
 =O_p\!\left(N\sqrt{\frac{x_T}{T}}\right).
 \label{eq:stochastic-score}
\end{equation}
\end{lemma}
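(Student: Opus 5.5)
We decompose $V_T$ through \eqref{eq:VT-decomposition} and the definition \eqref{eq:complete-residual}, writing
\[
V_T=-K_R+D_{\Ph}(K_R)+\Pi_M^{\Ph}\cR_D^{\Ph}(K_R)-\cR_{D,M}^{\Ph}(K_Q).
\]
We then bound the pairing of $\Ph\cW_T\Ph$ with each of the four pieces separately, using Lemma~\ref{lem:structured-scores}.

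\emph{Leading loading piece.} We have $K_R=\mathcal K(R)$. By \eqref{eq:structured-score} and Lemma~\ref{lem:WCT},
\[
|\ip{\Ph\cW_T\Ph}{K_R}_F|\le 2\norm{\Ph R}_F\norm{\cW_TC_T}_F .
\]
Since $\Ph R=\Seh^{-1}R$ by \eqref{eq:P-R-orthogonality}, and $\Seh$ has entries bounded above and below, $\norm{\Ph R}_F\le C\sqrt{N x_T}$. Combined with $\norm{\cW_TC_T}_F=O_p(\sqrt{N/T})$ from Lemma~\ref{lem:WCT}, this piece is $O_p(N\sqrt{x_T/T})$, which is the main term.

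\emph{The $K_Q$ piece.} Write $\cR_{D,M}^{\Ph}(K_Q)=K_Q-D_{\Ph}(K_Q)-\Pi_M^{\Ph}\cR_D^{\Ph}(K_Q)$. The $K_Q=\mathcal K(Q_Z)$ part is handled the same way as $K_R$, with $\norm{Q_Z}_F=O_p(\sqrt N x_T)$ from \eqref{eq:local-rates}. This gives $O_p(\sqrt N x_T\sqrt{N/T})=O_p(Nx_T/\sqrt T)$, which explains the second term in \eqref{eq:stochastic-score}.

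\emph{Diagonal pieces.} For $D_{\Ph}(K_R)$ and $D_{\Ph}(K_Q)$ we use \eqref{eq:diagonal-score-stochastic} together with \eqref{eq:diagonal-correction}. This gives
\[
C\norm{\cW_T}_F\,\frac{\norm{\Ph R}_F}{N}=O_p\!\left(\frac{N}{\sqrt T}\cdot\frac{\sqrt{Nx_T}}{N}\right)=O_p\!\left(\sqrt{\frac{Nx_T}{T}}\right),
\]
which is dominated by the leading piece.

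\emph{$M_x$-projected pieces.} The pieces $\Pi_M^{\Ph}\cR_D^{\Ph}(K)$ have the form $W(U)=\Zh U\Zh^\top-D_{\Ph}(\Zh U\Zh^\top)$ for some symmetric $U$. By Lemma~\ref{lem:projection-stability} we have $\norm{U}_F\asymp\norm{W(U)}_{\Ph}$, and by \eqref{eq:M-projection-KR} this is $O_p(\sqrt{Nx_T}/N)$. The raw part is then bounded with \eqref{eq:M-score-stochastic}, giving $O_p(\norm U_F/\sqrt T)$, which is negligible. The diagonal correction is bounded with \eqref{eq:diagonal-score-stochastic} and the estimate $\norm{D_{\Ph}(\Zh U\Zh^\top)}_{\Ph}\le CN^{-3/2}\norm U_F$, giving $O_p(N^{-1/2}\norm U_F/\sqrt T)$. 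For the analogous pieces coming from $K_Q$, we use the same argument with the contractivity of the projections.

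\emph{Combining the bounds.} Adding the four pieces gives the first equality in \eqref{eq:stochastic-score}. The final simplification uses $x_T=o_p(1)$ from \eqref{eq:R-orthogonality}: it implies $Nx_T/\sqrt T=N\sqrt{x_T/T}\cdot\sqrt{x_T}=o_p(N\sqrt{x_T/T})$.

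The main obstacle is the diagonal piece. The crude bound $\norm{\cW_T}_F=O_p(N/\sqrt T)$ only balances against the $N^{-1}$ gain in \eqref{eq:diagonal-correction}, and that gain depends on the row bound \eqref{eq:PZ-row}. If this step turns out too loose, the fallback is to compute $\ip{\Ph\cW_T\Ph}{D}_F$ directly as $\sum_r D_{rr}(\Ph\cW_T\Ph)_{rr}$. Each diagonal entry $(\Ph\cW_T\Ph)_{rr}$ is $O_p(T^{-1/2})$, so the pairing is bounded by $\norm D_F\sqrt N/\sqrt T$, which gives the same order.
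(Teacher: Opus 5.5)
Your proposal is correct and follows the paper's proof essentially step by step. You use the same split of $V_T$: the raw directions $K_R$ and $K_Q$ are bounded via \eqref{eq:structured-score} and Lemma~\ref{lem:WCT}; the diagonal pieces via \eqref{eq:diagonal-correction} and \eqref{eq:diagonal-score-stochastic}; and the $M_x$-projected pieces via \eqref{eq:M-projection-KR}, Gram stability and \eqref{eq:M-score-stochastic}. You are in fact slightly more explicit than the paper, since you also bound the diagonal correction $D_{\widehat P}(\widehat Z U\widehat Z^\top)$ inside the projected pieces and justify the final simplification via $x_T=o_p(1)$; the fallback you sketch for the diagonal piece is not needed.
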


\begin{proof}
For the raw directions, \eqref{eq:structured-score} gives
\begin{align*}
 \left|\ip{\Ph\cW_T\Ph}{K_R}_F\right|
 &=O_p\!\left(N\sqrt{\frac{x_T}{T}}\right),\\
 \left|\ip{\Ph\cW_T\Ph}{K_Q}_F\right|
 &=O_p\!\left(\frac{Nx_T}{\sqrt T}\right).
\end{align*}

Consider next the nuisance pieces subtracted by the complete projection.  A diagonal
piece generated by $\mathcal K(H)$ has norm at most
$CN^{-1}\norm{\Ph H}_F$ by \eqref{eq:diagonal-correction}; combining
\eqref{eq:WT-Frobenius} and \eqref{eq:diagonal-score-stochastic} gives orders no larger
than those of the corresponding raw directions.

When $N\to\infty$, the boundedness of the eigenvalues of the Gram matrix implies that the $M_x$ projection of the $R$ direction
has a coefficient matrix of norm
$O_p(\sqrt{Nx_T}/N)$ by \eqref{eq:M-projection-KR}; its stochastic pairing is therefore
$O_p(\sqrt{Nx_T}/(N\sqrt T))$.  The coefficient matrix for the $Q_Z$ direction is at most
$O_p(\norm{K_Q}_{\Ph})=O_p(\sqrt N x_T)$, so
\eqref{eq:M-score-stochastic} gives $O_p(\sqrt N x_T/\sqrt T)$.  Both are smaller than
the raw-direction bounds. 
\end{proof}

\subsubsection{Rate for R}

\begin{proposition}
\label{prop:R-rate}
Under Assumptions~\ref{assump:full-ranklatentfactor}--\ref{assump:boundedsetting} and the high-dimensional assumption,
\begin{equation}
 x_T=O_p(T^{-1}).
 \label{eq:R-rate}
\end{equation}
\end{proposition}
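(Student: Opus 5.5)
The plan is to combine the exact identity of Lemma~\ref{lem:exact-identity} with the coercivity bound of Lemma~\ref{lem:coercivity}, the remainder bound of Lemma~\ref{lem:remainder-bound}, and the stochastic bound of Lemma~\ref{lem:stochastic-score}, and then solve the resulting inequality for $x_T$. By Lemma~\ref{lem:coercivity} there is a constant $c>0$ such that, with probability tending to one,
\[
 c\,N x_T\le \norm{V_T}_{\Ph}^2 .
\]
Next, \eqref{eq:exact-identity} together with the triangle inequality gives
\[
 \norm{V_T}_{\Ph}^2\le \left|\ip{\Ph\cW_T\Ph}{V_T}_F\right|+\left|\ip{\mathfrak R_T}{V_T}_{\Ph}\right| .
\]

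We bound the two terms on the right separately. Lemma~\ref{lem:stochastic-score} bounds the first term by $O_p(N\sqrt{x_T/T})$. Lemma~\ref{lem:remainder-bound} bounds the second term by $o_p(Nx_T)$. This remainder term is absorbed into the left-hand side: on an event of probability tending to one it is at most $\tfrac c2 Nx_T$. Combining these yields
\[
 \tfrac c2\,N x_T\le O_p\!\left(N\sqrt{x_T/T}\right).
\]

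Dividing by $N\sqrt{x_T}$ (the case $x_T=0$ is trivial) gives $\sqrt{x_T}=O_p(T^{-1/2})$, that is, $x_T=O_p(T^{-1})$. To make this rigorous I will work on the intersection of the high-probability events on which the constants in Lemmas~\ref{lem:projection-stability}, \ref{lem:coercivity}, \ref{lem:remainder-bound} and \ref{lem:stochastic-score} hold. I will use the consistency $x_T=o_p(1)$ from \eqref{eq:R-orthogonality}, which is needed because those lemmas rely on the local expansion \eqref{eq:local-expansion}--\eqref{eq:local-rates}.

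The main obstacle I expect is that the lemmas are stated in $O_p/o_p$ form with implicit dependence on $x_T$ itself. In particular, the $o_p(Nx_T)$ in Lemma~\ref{lem:remainder-bound} must be uniform, meaning it should be of the form $\eta_T N x_T$ with $\eta_T=o_p(1)$, rather than merely $o_p$ of a random quantity. I will check this in the proof of Lemma~\ref{lem:remainder-bound}: each remainder piece there is bounded by $N x_T$ times a factor $x_T^{1/2}$, $\norm{\Delta_M}_{\op}$, or $N^{-1/2}$, and each of these factors is $o_p(1)$ independently of the direction.

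Similarly, the stochastic bound should be read as $|\ip{\Ph\cW_T\Ph}{V_T}_F|\le \xi_T N\sqrt{x_T/T}$ with $\xi_T=O_p(1)$. This follows from Lemma~\ref{lem:WCT}, which controls $\norm{\cW_T C_T}_F$ without reference to $R$. Combined with $\norm{\Ph R}_F\lesssim\sqrt{N x_T}$ from \eqref{eq:KR-information}, it gives exactly this form.

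Finally, the fixed-$N$ case, where the Gram-matrix stability \eqref{eq:Gram-stability} is not asserted, is handled differently. There I will instead invoke the direct Taylor argument of Step~6 of the proof of Theorem~\ref{thmforcpconvrate}, which already gives $\norm{\widehat\theta-\theta^*}=O_p(T^{-1/2})$ and hence $x_T=O_p(T^{-1})$.
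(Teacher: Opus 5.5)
Your proposal is correct and follows the paper's proof essentially verbatim: combine Lemma~\ref{lem:exact-identity} with the coercivity, remainder and stochastic-score bounds, absorb the $o_p(Nx_T)$ term into the left-hand side, and divide by $N\sqrt{x_T}$, treating $x_T=0$ separately. Reading the two bounds as $\eta_T Nx_T$ and $\xi_T N\sqrt{x_T/T}$, with $\eta_T=o_p(1)$ and $\xi_T=O_p(1)$, makes explicit a uniformity the paper leaves implicit; the fixed-$N$ detour is unnecessary, since the statement is only asserted under the high-dimensional assumption.
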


\begin{proof}
From Lemmas~\ref{lem:exact-identity}, \ref{lem:remainder-bound}, and
\ref{lem:stochastic-score},
\[
 \norm{V_T}_{\Ph}^2
 \le O_p\!\left(N\sqrt{\frac{x_T}{T}}\right)+o_p(Nx_T).
\]
By Lemma~\ref{lem:coercivity},
$\norm{V_T}_{\Ph}^2\asymp Nx_T$.  Absorb the last term and divide by $N$:
\[
 x_T\le O_p\!\left(\sqrt{\frac{x_T}{T}}\right).
\]
On $\{x_T>0\}$, division by $x_T^{1/2}$ gives
$x_T^{1/2}=O_p(T^{-1/2})$; the conclusion is trivial on $\{x_T=0\}$.
\end{proof}

\subsection{A sharper rate under Gaussian short-memory factors}
\label{sec:gaussian-sharp-rate}

Let \(N=pq\), and recall
\[
x_T
=
\frac1N
\operatorname{tr}
\left(
R^\top\widehat\Sigma_e^{-1}R
\right),
\qquad
R=\widehat Z(I_d-\mathcal Z)^\top-Z^*.
\]

We retain Assumptions A1--A4 and impose the following additional
Gaussian condition.

\paragraph{Assumption}
\label{ass:gaussian-short-memory}
The latent factor process \(\{x_t\}_{t\in\mathbb Z}\) is a
mean-zero, stationary, Gaussian \(d\)-dimensional process.

The idiosyncratic errors satisfy
\begin{equation}
    e_t\stackrel{\mathrm{i.i.d.}}{\sim}
N(0,\Sigma_e^*),
\qquad
\Sigma_e^*
=
\operatorname{Diag}
(\sigma_1^{*2},\ldots,\sigma_N^{*2}),
\label{G.2}
\end{equation}
are independent of the entire factor process, and
\begin{equation}
0<c_e
\le
\min_{r\le N}\sigma_r^{*2}
\le
\max_{r\le N}\sigma_r^{*2}
\le C_e<\infty .
\label{G.3}
\end{equation}
Moreover,
\begin{equation}
\log N=o(T).
\label{G.4}
\end{equation}


We first record the coarse results proved in the preceding
subsections.  In particular,
\begin{equation}
x_T=O_p(T^{-1}),\quad
\|R\|_F^2\asymp Nx_T=O_p(NT^{-1}).
\label{G.6,7}
\end{equation}
Furthermore,
\begin{equation}
\|\mathcal Z\|_{\mathrm{op}}
=
O_p(x_T^{1/2})
=
O_p(T^{-1/2}),\quad
\frac1p\|\widehat A-A^*\|_F^2
+
\frac1q\|\widehat B-B^*\|_F^2
=
O_p(T^{-1}).
\label{G.8,9}
\end{equation}
Finally,
\begin{equation}
\frac1N
\|\widehat\Sigma_e-\Sigma_e^*\|_F^2
=
O_p(T^{-1}).
\label{G.10}
\end{equation}

\subsubsection{Proof Strategy}
To achieve the sharper rate, we consider the following trace: 
\begin{equation}
    \langle\Ph \mathcal{W}_T\Ph, V_T\rangle_F=-\tr (\Ph \mathcal{W}_T\Ph[R\Mh\Zh^{\top}+\Zh\Mh R^{\top}])+\tr(\Ph\mathcal{W}_T\Ph\{[I-R_{D,M}^{\widehat P}]K_R-R_{D,M}^{\widehat P}(K_Q)\} . 
\end{equation}
Recall that the second term on the right-hand side is $o_p(Nx_T)$; we only need to study the first term. 

By \(\mathcal W_T=Z^*F_TZ^{*\top}+Z^*G_T+G_T^\top Z^{*\top}+E_T\) and $Z^*=\Zh(I-\mathcal{Z})^{\top}-R$, the trace can be written as 
\begin{equation}\label{eq:VTsquare}
\begin{aligned}
    &\frac{1}{2}\tr(\Ph \mathcal{W}_T\Ph[R\Mh\Zh^{\top}+\Zh\Mh R^{\top}])\\
=&-\tr(R^\top\widehat PRF_TB_T)-\tr(R^\top\widehat PRG_TC_T)\\
&+\tr(R^\top\widehat WG_T^\top B_T)+\tr(R^\top\widehat WE_TC_T)\, ,\\
&B_T=(I-\mathcal{Z})\widehat H^{-1}\widehat G_H; C_T=\widehat W\Zh \Gh.
\end{aligned}
\end{equation}
The first two terms are small: \(\tr(R^{\top}\Ph RF_tB_T)=o_p(Nx_T)\), \(\tr
(R^\top\widehat PRG_TC_T)=o_p(Nx_T)\). Therefore, we only need to estimate the rest: $T_G=\tr(R^\top\widehat WG_T^\top B_T)$ and $T_E=\tr(R^\top\widehat WE_TC_T)$.

To understand where the sharper rate comes from, we rewrite the error term $R=\Zh(I-\mathcal{Z})^{\top}-Z^*=(\widehat B-B^*)\odot \widehat A+\widehat B\odot (\widehat A-A^*)-\Zh \mathcal{Z}^{\top}-(\widehat B-B^*)\odot (\widehat A-A^*)$. By decomposing  $T_G=\tr(R^{\top} W^*G_T^{\top}B_T)+\tr[R^{\top}(\widehat{W}-W^*)G_T^{\top}B_T]$, the first term gives $\hat{a}_i^m\mathcal{G}_{ij,T}^m(\hat{b}_j^m-b_j^{*m})+(\hat{a}_i^m-a_i^{*m})\mathcal{G}_{ij,T}^m\hat{b}_j^m+\sum_\ell\mathcal{Z}_{m,\ell}\hat{a}_i^\ell\mathcal{G}_{ij,T}^\ell\hat{b}_j^\ell+\mathrm{h.o.t}$ in the $(ij,m)$-entry, where $\mathcal{G}_{ij,T}^m=\frac{1}{T}\sum_{t}e_{ij,t}x_{m,t}$ is a $p\times q$ matrix for each $m\in[d]$. Because $\mathcal{G}^m$ are random matrix with independent, mean-zero, sub-Gaussian entries, by standard operator norm inequality, we know their operator norms are $O_p(\sqrt{(p+q)/T})$. This produces the main contribution of the sharper rate. But the term $W^*$ is difficult to deal with, we consider $\widetilde W=[\diag\left(\frac{1}{T}\sum_te_te_t^{\top}\right)]^{-1}$ and $T_G=\tr(R^{\top} \widetilde WG_T^{\top}B_T)+\tr[R^{\top}(\widehat{W}-\widetilde W)G_T^{\top}B_T]$ instead.

In the following, we only need to prove the rest terms are smaller than $O_p(Nx_T)$. We firstly prove that each error term is entry-wise $o_p(1)$. 

\subsubsection{Uniform rowwise loading localization}

Write
\[
\Delta A=\widehat A-A^*,
\qquad
\Delta B=\widehat B-B^*.
\]
For the \(j\)-th row of \(B\), let \(D_{\widehat b_j}=\operatorname{Diag}(\widehat b_j),\,D_{b_j^*}=\operatorname{Diag}(b_j^*)\).

\begin{lemma}[Uniform rowwise loading localization]
\label{lem:gaussian-rowwise}
Under Assumption~\ref{ass:gaussian-short-memory},
\begin{equation}
\max_{i\le p}
\|\widehat a_i-a_i^*\|
=
O_p\left(
T^{-1/2}
+
\sqrt{\frac{\log p}{qT}}
\right),
\quad
\max_{j\le q}
\|\widehat b_j-b_j^*\|
=
O_p\left(
T^{-1/2}
+
\sqrt{\frac{\log q}{pT}}
\right).
\label{G.12}
\end{equation}
\end{lemma}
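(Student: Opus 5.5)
Our plan is to localize the rowwise score equations for $A$ and $B$ at the QMLE, treating each row $\widehat a_i$ (resp. $\widehat b_j$) as the solution of a weighted least-squares problem of fixed dimension $d$, with all other parameters plugged in at their already-established coarse rates \eqref{G.6,7}--\eqref{G.10}. We treat $\widehat a_i$; the bound for $\widehat b_j$ follows by the symmetric argument with the roles of $(p,q)$ exchanged. First, we use the loading gradient condition \eqref{gradcondforCP-A} and extract its $i$-th row. By the Woodbury form $\widehat P\widehat Z\widehat M_x=\widehat\Sigma_e^{-1}\widehat Z\widehat G_H$, the row reads
\[
\sum_{j=1}^q \hat\sigma_{ij}^{-2}\bigl\{(\widehat\Sigma_y-\widehat M_y)\widehat P\widehat Z\widehat M_x\bigr\}_{(i,j)}\,D_{\widehat b_j}=0,
\]
which is a $d$-dimensional equation. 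Substituting the decomposition \eqref{eqforerrorofsigmay} and $Z^*=\widehat Z(I-\mathcal Z)^\top-R$ expresses the leading term as $\Gamma_i(\widehat a_i-a_i^*)$. Here $\Gamma_i=\frac1q\sum_j \hat\sigma_{ij}^{-2}D_{\widehat b_j}M_x^*D_{\widehat b_j}$, up to $O_p(T^{-1/2})$ perturbations. By the consistency in Theorem~\ref{thmforcpconvrate} and Assumption~\ref{assump:boundedsetting}, together with the bounded-below singular value condition of Assumption~\ref{assump:bdedidcond} applied to $B$, $\Gamma_i$ has eigenvalues bounded away from zero uniformly in $i$ with probability tending to one.

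Second, we classify the remaining terms by size. The deterministic-in-$i$ contributions, namely $\widehat a_i$ times $\Delta B$, $\mathcal Z$, $\widehat M_x-M_x^*$, and the diagonal $\widehat\Sigma_e-\Sigma_e^*$ terms, are bounded uniformly in $i$ by row boundedness and the averaged rates $\frac1q\|\Delta B\|_F^2$, $\|\mathcal Z\|^2$, and $\frac1N\|\widehat\Sigma_e-\Sigma_e^*\|_F^2=O_p(T^{-1})$. Each is $O_p(T^{-1/2})$ after Cauchy--Schwarz over $j$. The quadratic terms such as $\Delta A\,\Delta B$ are of smaller order, and a bootstrap inequality absorbs them.

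Third, we handle the stochastic score terms. The key one is $\frac1q\sum_j \hat\sigma_{ij}^{-2}D_{\widehat b_j}\frac1T\sum_t e_{ij,t}x_t$. We replace $\widehat b_j,\hat\sigma_{ij}$ by $b_j^*,\sigma_{ij}^*$, and the replacement error is controlled by the previous step times $\max_{ij}|\mathcal G^m_{ij,T}|=O_p(\sqrt{\log N/T})$. Conditional on the Gaussian factor path, the main sum is a Gaussian vector with variance $\asymp \frac{1}{qT}\widehat M_x$. A union bound over $i\le p$ then gives $O_p(\sqrt{\log p/(qT)})$. The terms $\frac1T\sum_t e_{ij,t}e_{i'j',t}$ enter only through projections weighted by $\widehat P\widehat Z=O(N^{-1})$ rowwise (Lemma~\ref{lem:projection-stability}); Hanson--Wright and a union bound make them $O_p(T^{-1/2}+\sqrt{\log p/(qT)})$. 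The $F_T$ term contributes $O_p(T^{-1/2})$ uniformly, since $\|F_T\|=O_p(T^{-1/2})$ and the rows of $A^*$ are bounded.

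The main obstacle is the dependence between the plug-in quantities $\widehat b_j,\hat\sigma_{ij},\widehat Z$ and the noise $e_{ij,t}$, which prevents a direct conditional Gaussian argument. We plan to resolve this by always centering at true values and bounding the cross terms as (estimation error)$\times$(noise supremum), using \eqref{G.8,9} and $\log N=o(T)$. Leave-one-row-out arguments are the fallback if these products are not small enough.
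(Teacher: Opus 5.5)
Your architecture is the paper's. You take the $i$-th row of the loading score and isolate the $d\times d$ coefficient $\sum_j \widehat w_{ij}D_{\widehat b_j}M D_{\widehat b_j}$, whose smallest eigenvalue is at least $c\,q$ uniformly in $i$. You bound the $\Delta B$ and $\mathcal Z$ couplings by $C\sqrt q\,\|\Delta B\|_F$ and $Cq\|\mathcal Z\|_{\mathrm{op}}$, and you control the noise rowwise by conditional Gaussianity plus a union bound over $i$. The paper starts from \eqref{eq:gradcondAexpressed}, where the $M_x$ score has already eliminated $\widehat M_x$; plugging in $\|\widehat M_x-M_x^*\|=O_p(T^{-1/2})$ from Theorem~\ref{thmforcpconvrate}, as you do, is equivalent.

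The genuine difference is the noise terms. You center the data-dependent weights ($\widehat w_{ij}D_{\widehat b_j}$, $\widehat P\widehat Z$) at their true values so that Gaussian or Hanson--Wright bounds apply with deterministic weights. You also treat the plug-in/noise dependence as the main obstacle. The paper never meets that obstacle. Because $T^{-1/2}$ is already in the target, it uses only the triangle inequality, e.g.\ $\bigl\|\sum_j\widehat w_{ij}D_{\widehat b_j}M_T^\top(\mathcal T_TC_T)_{ij}\bigr\|\le C\sum_j\|(\mathcal T_TC_T)_{ij}\|$ together with $\|(E_TC_T)_r\|\le \frac CN\sum_s|(E_T)_{rs}|$. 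After this the weights are gone, and it only needs $\max_i\frac1q\sum_j\|g_{ij}\|=O_p(T^{-1/2}+\sqrt{\log p/(qT)})$ and $\max_r\frac1N\sum_s|(E_T)_{rs}|=O_p(T^{-1/2})$. Your sharper $\sqrt{\log p/(qT)}$ bound for the centered Gaussian sum buys nothing, since the coupling terms are themselves only $O_p(T^{-1/2})$. It also costs you replacement errors, and no leave-one-out argument is needed.

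Two points to tighten.

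First, the replacement of $\widehat\sigma_{ij}$ by $\sigma^*_{ij}$ cannot be bounded as ``previous step times $\max_{ij}|\mathcal G^m_{ij,T}|$''.
\begin{itemize}
\item \eqref{G.10} controls $\widehat\Sigma_e-\Sigma_e^*$ only on average over all $N$ entries. So at this stage $\max_i\frac1q\sum_j|\widehat\sigma_{ij}^{-2}-\sigma_{ij}^{*-2}|$ is merely $O_p(1)$.
\item The sup-norm bound, Lemma~\ref{lem:variance-max}, comes later and uses $\rho_T$, i.e.\ this very lemma.
\item $O_p(1)\cdot\sqrt{\log N/T}$ exceeds the target rate.
\end{itemize}
Instead, use Cauchy--Schwarz against $\max_i(\frac1q\sum_j\|g_{ij}\|^2)^{1/2}=O_p(T^{-1/2}+\sqrt{\log p/(qT)})$, or simply the crude bound above. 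The same caveat applies to your claim that the diagonal $\widehat\Sigma_e$ terms are uniformly $O_p(T^{-1/2})$ ``after Cauchy--Schwarz over $j$''. They are in fact uniformly $O_p(N^{-1})$, but that is because the rows of $\widehat P\widehat Z\widehat M_x=\widehat\Sigma_e^{-1}\widehat Z\widehat G_H$ are $O_p(N^{-1})$, not because of a per-row average.

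Second, the lower bound on $\Gamma_i$ should not rest on singular values of $B$, which may be rank-deficient. What is actually used is $v^\top\Gamma_iv\ge c\sum_\ell v_\ell^2\,\frac1q\sum_j \widehat b_{j\ell}^2=c\|v\|^2$. This needs only the column normalization, exactly as in the paper.
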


\begin{proof}
We prove \eqref{G.12} for $A$; the second result follows by symmetry.

Put
\begin{equation}
M_T
=
M_x^*(I_d-\mathcal Z)
\left(
I_d-\widehat M_x^{-1}\widehat{G}_H
\right).
\label{G.13}
\end{equation}
Since
\(\|\mathcal Z\|_{\mathrm{op}}=O_p(T^{-1/2})\), \(\|\widehat G_H\|_{\mathrm{op}}=O_p(N^{-1})\) and \(\widehat M_x^{-1}\) is uniformly bounded, \(\|M_T-M_x^*\|_{\mathrm{op}}=o_p(1).\)

Recall the exact loading score equation \eqref{eq:gradcondAexpressed}:
\begin{equation}
\sum_{j=1}^q
\widehat\Sigma_{e,j}^{-1}
R_{b_j}M_TD_{\widehat b_j}
=
\mathcal E_A,
\label{G.15}
\end{equation}
where \(\mathcal E_A\) denotes the sum of the four terms on the
right-hand side of \eqref{eq:gradcondAexpressed}.

Fix \(i\in[p]\).  Let \(r_{ij}\in\mathbb R^d\) be the transpose
of the \(i\)-th row of \(R_{b_j}\), and let
\(\varepsilon_{A,i}\) be the transpose of the \(i\)-th row of
\(\mathcal E_A\).  Since
\(\widehat\Sigma_e\) is diagonal, taking the \(i\)-th row of
\eqref{G.15} gives
\begin{equation}
    \sum_{j=1}^q
\widehat w_{ij}
D_{\widehat b_j}M_T^\top r_{ij}
=
\varepsilon_{A,i},
\qquad
\widehat w_{ij}=\widehat\sigma_{ij}^{-2}.
\label{G.16}
\end{equation}

By the definition of \(R\), the row vector \(r_{ij}\) satisfies the exact identity \(r_{ij}=(I-\mathcal{Z})D_{\widehat b_j}\widehat a_i-D_{b_j^*}a_i^*\).

Hence
\begin{equation}
    r_{ij}
=
D_{\widehat b_j}\Delta a_i
+
D_{a_i^*}\Delta b_j
-
\mathcal Z\widehat z_{ij},
\qquad
\widehat z_{ij}
=
D_{\widehat b_j}\widehat a_i.
\label{G.17}
\end{equation}

Substitution of \eqref{G.17} into \eqref{G.16} yields \(\mathcal H_{A,i}\Delta a_i
+
\mathcal C_{AB,i}
-
\mathcal C_{AZ,i}
=
\varepsilon_{A,i}\), where \(\mathcal H_{A,i}
=
\sum_{j=1}^q
\widehat w_{ij}
D_{\widehat b_j}
M_T^\top
D_{\widehat b_j}\), \(\mathcal C_{AB,i}
=
\sum_{j=1}^q
\widehat w_{ij}
D_{\widehat b_j}
M_T^\top
D_{a_i^*}\Delta b_j\), and \(\mathcal C_{AZ,i}
=
\sum_{j=1}^q
\widehat w_{ij}
D_{\widehat b_j}
M_T^\top
\mathcal Z\widehat z_{ij}\).

We prove that the coefficient matrix
\(\mathcal H_{A,i}\) cannot become singular: 
Define its principal part
\begin{equation}
\mathcal H_{A,i}^{(0)}
=
\sum_{j=1}^q
\widehat w_{ij}
D_{\widehat b_j}
M_x^*
D_{\widehat b_j}.
\label{G.22}
\end{equation}
For arbitrary \(v\in\mathbb R^d\),
\begin{align}
v^\top\mathcal H_{A,i}^{(0)}v
=
\sum_{j=1}^q
\widehat w_{ij}
(D_{\widehat b_j}v)^\top
M_x^*
(D_{\widehat b_j}v)                                     
\ge
c
\sum_{j=1}^q
\|D_{\widehat b_j}v\|^2                                  
=
c
\sum_{\ell=1}^d
v_\ell^2
\sum_{j=1}^q\widehat b_{j\ell}^2.
\end{align}
Therefore
\begin{equation}
\lambda_{\min}
(\mathcal H_{A,i}^{(0)})
\ge cq
\label{G.23}
\end{equation}
uniformly in \(i\).

On the other hand, row boundedness and \eqref{G.15} imply
\begin{align}
\|
\mathcal H_{A,i}
-
\mathcal H_{A,i}^{(0)}
\|_{\mathrm{op}}
\le
C
\|M_T^\top-M_x^*\|_{\mathrm{op}}
\sum_{j=1}^q
\|D_{\widehat b_j}\|_{\mathrm{op}}^2                    
=
o_p(q)\label{G.24}
\end{align}
uniformly in \(i\).  Hence, \(\inf_{i\le p}
s_{\min}(\mathcal H_{A,i})
\ge c_0q\) with probability tending to one.

The coupling term involving \(B\) satisfies, by \eqref{G.10},
\begin{align}
\|\mathcal C_{AB,i}\|
\le
C\sum_{j=1}^q\|\Delta b_j\|\le C\sqrt q\,\|\Delta B\|_F=O_p(qT^{-1/2})\label{G.26}
\end{align}
uniformly in \(i\).  Similarly, \(\max_i\|\mathcal C_{AZ,i}\|=O_p(qT^{-1/2})\).

It remains to bound the right-hand side of \eqref{G.16}. Recall \(\mathcal T_T=-E_T-G_T^\top Z^{*\top}-Z^*G_T,\, C_T=\widehat\Sigma_e^{-1}\widehat Z\widehat G_H\). 
Easy to see
\(\|C_T\|_{\mathrm{op}}
=
O_p(N^{-1/2}),
\,
\max_{r\le N}\|(C_T)_r\|
=
O_p(N^{-1}),
\) and \(\|Z^{*\top}C_T\|_{\mathrm{op}}=O_p(1)\).

For \(g_{ij}
=
\frac1T\sum_{t=1}^Tx_te_{ij,t}\), conditional on the factor path,
\begin{equation}
g_{ij}\mid\{x_t\}_{t=1}^T
\sim
N\left(
0,
\frac{\sigma_{ij}^{*2}}{T^2}
\sum_{t=1}^Tx_tx_t^\top
\right).
\label{G.34}
\end{equation}
Hence, conditional on the factor path, the variables
\(\sqrt T\,\|g_{ij}\|\) possess uniformly bounded Gaussian tails.
Since different \((i,j)\) correspond to independent idiosyncratic
error coordinates, we have
\[
\max_{i\le p}
\frac1q\sum_{j=1}^q\|g_{ij}\|
=
O_p\left(
T^{-1/2}+\sqrt{\frac{\log p}{qT}}
\right).
\]

For the pure-noise covariance term, 
\begin{equation}
    \|(E_TC_T)_r\|
\le
\frac{C}{N}
\sum_{s=1}^N|(E_T)_{rs}|.
\label{G.37}
\end{equation}
For fixed \(r\), conditional on \(\{e_{r,t}\}_{t=1}^T\), the
off-diagonal entries \((E_T)_{rs}\), \(s\ne r\), are independent
centered Gaussian variables whose conditional variances are
\(O_p(T^{-1})\).  Since \(\log N=o(T)\), Gaussian concentration
and a union bound over \(r\) imply
\[
\max_{r\le N}
\frac1N\sum_{s=1}^N|(E_T)_{rs}|
=
O_p(T^{-1/2}).
\]
Consequently,
\[
\max_r\|(E_TC_T)_r\|
=
O_p(T^{-1/2}).
\]

Regarding the term \(\mathcal{T}_{T}C_T\), \(\|(\mathcal T_TC_T)_{ij}\|
\le
\|(E_TC_T)_{ij}\|
+
C\|g_{ij}\|
+
C\|G_TC_T\|_{\mathrm{op}}\). Moreover
\(\|G_TC_T\|_{\mathrm{op}}
\le
\|G_T\|_F\|C_T\|_{\mathrm{op}}
=
O_p(T^{-1/2})\), because \(\|G_T\|_F=O_p(\sqrt{N/T})\).
Therefore,
\[
\max_i
\frac1q
\sum_{j=1}^q
\|(\mathcal T_TC_T)_{ij}\|
=
O_p\left(
T^{-1/2}
+
\sqrt{\frac{\log p}{qT}}
\right).
\]

The second stochastic term contains
\(\widehat H\widehat Z^\top
\widehat\Sigma_e^{-1}
\mathcal T_T
\widehat\Sigma_e^{-1}
\widehat Z\widehat G_H\), where \(\|\mathcal T_T\|_F
=
O_p(NT^{-1/2})\). 
Using
\(\|\widehat H\|_{\mathrm{op}}
+
\|\widehat G_H\|_{\mathrm{op}}
=
O_p(N^{-1})\) and \(\|\widehat Z\|_{\mathrm{op}}
=
O_p(N^{1/2})\), we obtain
\[
\left\|
\widehat H\widehat Z^\top
\widehat\Sigma_e^{-1}
\mathcal T_T
\widehat\Sigma_e^{-1}
\widehat Z\widehat G_H
\right\|_{\mathrm{op}}
=
O_p(T^{-1/2}).
\]
Therefore its contribution to the \(i\)-th row of \eqref{eq:gradcondAexpressed} is
\(O_p(qT^{-1/2})\), uniformly in \(i\).

Finally consider the two terms involving
\(\mathcal{E}_e=\widehat\Sigma_e-\Sigma_e^*\). Since $\sum_j\frac{1}{N}\hat\sigma_{ij}^{-4}(\mathcal{E}_e)_{ij}\hat{z}_{ij}\leq \sum_{ij}\frac{C_e^4}{N}\|\mathcal{E}_e\|_F\|\hat{Z}\|_F=O_p(\frac{1}{\sqrt{T}})$, the last two terms are at most $O_p(q(NT)^{-\frac{1}{2}}+T^{-\frac{1}{2}})$ to each row of \eqref{eq:gradcondAexpressed}
Therefore, we have 
\[
\max_{i\le p}
\frac1q\|\varepsilon_{A,i}\|
=
O_p\left(
T^{-1/2}
+
\sqrt{\frac{\log p}{qT}}
\right).
\]

In summary, we have 
\[
\max_{i\le p}
\|\Delta a_i\|
=
O_p\left(
T^{-1/2}
+
\sqrt{\frac{\log p}{qT}}
\right).
\]
Interchanging \(p,A\) with \(q,B\) proves \eqref{G.12}.
\end{proof}

Define \(\rho_T=\max_{i,j}\|r_{ij}\|\). Lemma~\ref{lem:gaussian-rowwise} and \eqref{G.8,9} imply
\begin{equation}
\rho_T
=
O_p\left(
T^{-1/2}
+
\sqrt{\frac{\log p}{qT}}
+
\sqrt{\frac{\log q}{pT}}
\right)
=o_p(1).
\label{G.46}
\end{equation}

\subsubsection{The quadratic loading term in the diagonal profile}

\begin{lemma}
\label{lem:gaussian-QR-diagonal}
Recall that \(Q_R=R\widehat M_xR^\top\). Then
\begin{equation}
\|D_{\widehat P}(Q_R)\|_{\max}
=
O_p(\rho_T^2).
\label{G.47}
\end{equation}
\end{lemma}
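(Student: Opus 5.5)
Since $D_{\widehat P}(Q_R)=\Diag\{(\widehat P\circ\widehat P)^{-1}\diag(\widehat PQ_R\widehat P)\}$ by \eqref{eq:diagonal-projection}, we need two bounds: an entrywise (max-norm) bound on the vector $\diag(\widehat PQ_R\widehat P)$, and an $\ell_\infty\to\ell_\infty$ bound on $(\widehat P\circ\widehat P)^{-1}$. The operator-norm bound \eqref{eq:Hadamard-inverse} is not enough for the second, because we want a max-norm conclusion. So we will work with row sums instead of spectra throughout.

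\emph{Step 1: rows of $\widehat PR$.} The Woodbury formula gives $\widehat P=\widehat\Sigma_e^{-1}-\widehat\Sigma_e^{-1}\widehat Z\widehat G_H\widehat Z^\top\widehat\Sigma_e^{-1}$. By the orthogonality $R^\top\widehat\Sigma_e^{-1}\widehat Z=0$ from \eqref{eq:R-orthogonality}, the second term annihilates $R$, which is exactly \eqref{eq:P-R-orthogonality}: $\widehat PR=\widehat\Sigma_e^{-1}R$. Hence
\[
\{\diag(\widehat PQ_R\widehat P)\}_r=\widehat\sigma_r^{-4}\,r_r^\top\widehat M_x r_r ,
\]
where $r_r$ is the $r$-th row of $R$. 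Since $\widehat\sigma_r^{-2}\le C_e^2$ and $\|\widehat M_x\|_{\op}=O_p(1)$ by consistency (Theorem~\ref{thmforcpconvrate}), this gives $\max_r|\{\diag(\widehat PQ_R\widehat P)\}_r|\le C\rho_T^2$ with probability tending to one, with $\rho_T$ as in \eqref{G.46}.

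\emph{Step 2: the inverse Hadamard square in $\ell_\infty$ norm.} From the proof of Lemma~\ref{lem:projection-stability} we have $\widehat P_{rr}=\widehat\sigma_r^{-2}+O_p(N^{-1})$ and $\widehat P_{rs}=(\widehat\Sigma_e^{-1}\widehat Z\widehat G_H\widehat Z^\top\widehat\Sigma_e^{-1})_{rs}$. Using the row bound $\max_r\|(\widehat\Sigma_e^{-1}\widehat Z\widehat G_H)_r\|=O_p(N^{-1})$ and row boundedness of $\widehat Z$, we get $|\widehat P_{rs}|=O_p(N^{-1})$ uniformly in $r,s$. Therefore $\max_r\sum_{s\neq r}\widehat P_{rs}^2=O_p(N^{-1})$.

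Write $\widehat P\circ\widehat P=\Lambda(I+E)$ with $\Lambda=\Diag(\widehat P_{rr}^2)$. The diagonal of $\Lambda$ is bounded below by $c_e'>0$, and $\|E\|_{\infty\to\infty}=O_p(N^{-1})$. A Neumann series then yields $\|(\widehat P\circ\widehat P)^{-1}\|_{\infty\to\infty}\le C$ with probability tending to one. Combining with Step~1,
\[
\|D_{\widehat P}(Q_R)\|_{\max}\le\|(\widehat P\circ\widehat P)^{-1}\|_{\infty\to\infty}\max_r|\{\diag(\widehat PQ_R\widehat P)\}_r|=O_p(\rho_T^2).
\]

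\emph{Main obstacle.} The delicate point is the uniform entrywise bound $|\widehat P_{rs}|=O_p(N^{-1})$, which has to hold simultaneously over all $N^2$ pairs. I would obtain it deterministically on the event where $\|\widehat G_H\|_{\op}=O_p(N^{-1})$ and the rows of $\widehat Z$ are bounded: on that event, $|\widehat P_{rs}|\le C_e^4\|\hat z_r\|\|\hat z_s\|\|\widehat G_H\|_{\op}$. Row boundedness of $\widehat Z$ follows from $\Theta$ being bounded together with Lemma~\ref{lem:gaussian-rowwise}. No union bound over stochastic terms is therefore needed.
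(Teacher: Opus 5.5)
Your proposal is correct and follows essentially the same route as the paper: use $\widehat PR=\widehat\Sigma_e^{-1}R$ to get $\{\operatorname{diag}(\widehat PQ_R\widehat P)\}_r=\widehat\sigma_r^{-4}\,r_r^\top\widehat M_x r_r$, then bound $(\widehat P\circ\widehat P)^{-1}$ in the $\ell_\infty$-operator norm, using the Woodbury entry bounds $|\widehat P_{rs}|=O_p(N^{-1})$ and a Neumann series (valid because $N\to\infty$ in this section). The paper writes the second step as $(\widehat P\circ\widehat P)^{-1}D_w=I_N+O_p(N^{-1})$, which gives the slightly finer expansion $D_{\widehat P}(Q_R)=\Diag\{r_r^\top\widehat M_x r_r\}+O_p(N^{-1}\rho_T^2)$; for the stated max-norm bound, your uniform estimate $\|(\widehat P\circ\widehat P)^{-1}\|_{\infty\to\infty}\le C$ is enough.
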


\begin{proof}
Since \(\widehat M_x\) is uniformly bounded,
\begin{equation}
\max_{ij}
r_{ij}^\top\widehat M_xr_{ij}
=
O_p(\rho_T^2).
\label{G.48}
\end{equation}
The alignment relation gives
\(\widehat PR
=
\widehat\Sigma_e^{-1}R\), and therefore
\begin{equation}
    \left\{
\operatorname{diag}
(\widehat P Q_R\widehat P)
\right\}_r
=
\widehat w_r^2
r_r^\top\widehat M_xr_r.
\label{G.49}
\end{equation}
By Woodbury's identity,
\(\widehat P_{rr}
=
\widehat w_r+O_p(N^{-1}),
\,
\widehat P_{rs}
=
O_p(N^{-1}),
\quad r\ne s.\)
Hence
\(\widehat P\circ\widehat P=D_w+E_N,\,D_w=\operatorname{Diag}(\widehat w_1^2,\ldots,\widehat w_N^2)\), where
\(\|E_N\|_{\infty\to\infty}=O_p(N^{-1}).\)
Since \(D_w\) is uniformly nonsingular, the Neumann expansion gives \((\widehat P\circ\widehat P)^{-1}D_w=I_N+O_p(N^{-1}\) in the \(\ell_\infty\)-operator norm. By the definition of
\(D_{\widehat P}\),
\[
D_{\widehat P}(Q_R)
=
\operatorname{Diag}
\{r_r^\top\widehat M_xr_r\}_{r=1}^N
+
O_p(N^{-1}\rho_T^2),
\]
which proves the assertion.
\end{proof}

\begin{lemma}[Uniform consistency of the diagonal covariance estimator]
\label{lem:variance-max}
Under Assumptions A1--A4 and Assumption~\ref{ass:gaussian-short-memory},
\[
\|\widehat\Sigma_e-\Sigma_e^*\|_{\max}
=
O_p\left(
\sqrt{\frac{\log N}{T}}
+\rho_T^2+\frac{\rho_T}{N}
\right)
=o_p(1).
\]
\end{lemma}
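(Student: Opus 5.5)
The plan is to work directly from the diagonal score equation \eqref{gradcondforCP-e}, equivalently \eqref{eq:sigma_e-errorexpression}, read coordinatewise. For each $r\le N$ it expresses $\sigma_r^{*2}-\widehat\sigma_r^2$ as a sum of diagonal entries of explicit matrices. I will bound each such entry uniformly in $r$. The terms fall into three groups: pure noise, loading-error terms built from $R$, and $\mathcal E_e$-feedback terms. Throughout I will use Woodbury in the form $\widehat P_{rr}=\widehat w_r+O_p(N^{-1})$ and $\widehat P_{rs}=O_p(N^{-1})$. I will also use the row bounds $\max_r\|(\widehat P\widehat Z)_r\|=O_p(N^{-1})$ from Lemma~\ref{lem:projection-stability} and $\max_r\|(C_T)_r\|=O_p(N^{-1})$.

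\textbf{Noise terms.} The leading stochastic piece is $\operatorname{diag}(\mathcal T)_r$. Its dominant part is $\frac1T\sum_t(e_{r,t}^2-\sigma_r^{*2})$. Under \eqref{G.2} this is a centred $\chi^2$ average, so Bernstein's inequality plus a union bound over $N$ coordinates gives $O_p(\sqrt{\log N/T})$; here $\log N=o(T)$ keeps us in the sub-Gaussian regime. The cross term $2z_r^{*\top}g_r$ is conditionally Gaussian given the factor path, as in \eqref{G.34}, with $z_r^*$ bounded, so it is also $O_p(\sqrt{\log N/T})$ uniformly. The sandwiched terms $\widehat Z\widehat H\widehat Z^\top\widehat\Sigma_e^{-1}\mathcal T\cdots$ and $\widehat Z\widehat G_H\widehat Z^\top\widehat\Sigma_e^{-1}\mathcal T(\cdots)$ contribute at most $\|\widehat z_r\|$ times row norms of $\mathcal T C_T$. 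From the proof of Lemma~\ref{lem:gaussian-rowwise}, $\max_r\|(E_TC_T)_r\|=O_p(T^{-1/2})$, and the $G_T$ parts are handled similarly, so these terms are $O_p(\sqrt{\log N/T})$ as well.

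\textbf{Loading terms.} The term $\operatorname{diag}(RM_x^*R^\top)_r=r_r^\top M_x^*r_r$ is $O_p(\rho_T^2)$ by the definition \eqref{G.46} of $\rho_T$, in the same way as Lemma~\ref{lem:gaussian-QR-diagonal}. In the cross terms $RM_x^*(I_d-\mathcal Z)(I_d-\widehat H^{-1}\widehat G_H)\widehat Z^\top$, the factor $I_d-\widehat H^{-1}\widehat G_H$ equals $\widehat M_x^{-1}\widehat G_H$ up to sign and is therefore $O_p(N^{-1})$. Its diagonal entry is then $O_p(\rho_T/N)$. The terms $\widehat Z\widehat G_H\widehat Z^\top\widehat\Sigma_e^{-1}RM_x^*R^\top$ involve $(\widehat P\widehat Z)$-type rows of size $N^{-1}$ times a column sum of $R$. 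By Cauchy--Schwarz and $\|R\|_F^2=O_p(N/T)$, these are $O_p(\rho_T/N)$ or smaller.

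\textbf{Feedback terms (main obstacle).} The hardest part is the $\mathcal E_e$-feedback terms, e.g.\ $\operatorname{diag}(\widehat Z\widehat G_H\widehat Z^\top\widehat\Sigma_e^{-1}\mathcal E_e)_r=\widehat w_r(\mathcal E_e)_r\,\widehat z_r^\top\widehat G_H\widehat z_r$. I will handle these by a self-bounding argument rather than assume a rate. Each such term is either $O_p(N^{-1})\,\|\mathcal E_e\|_{\max}$ or of the form $\frac1N\sum_s|(\mathcal E_e)_s|\le\|\mathcal E_e\|_{\max}\cdot O_p(N^{-1})$; for the latter I use \eqref{G.10} together with $\|\widehat H\|,\|\widehat G_H\|=O_p(N^{-1})$. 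Collecting all three groups gives
\[
\|\mathcal E_e\|_{\max}\le O_p\Bigl(\sqrt{\tfrac{\log N}{T}}+\rho_T^2+\tfrac{\rho_T}{N}\Bigr)+O_p(N^{-1})\|\mathcal E_e\|_{\max}.
\]
I then absorb the last term into the left-hand side. Boundedness of $\widehat\Sigma_e$ on $\Theta$ makes $\|\mathcal E_e\|_{\max}$ finite, which justifies the absorption. If $N$ stays bounded, the fixed-dimension consistency and rate from Step~6 already give the claim directly. Finally, the bound is $o_p(1)$ because $\rho_T=o_p(1)$ by \eqref{G.46} and $\log N=o(T)$.
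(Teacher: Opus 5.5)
Your proposal is correct and follows the paper's argument. You split the diagonal score identity into three groups and absorb the last one: a primitive-noise part of order $O_p(\sqrt{\log N/T})$, loading parts of order $O_p(\rho_T^2+\rho_T/N)$ (via $I_d-\widehat H^{-1}\widehat G_H=\widehat M_x^{-1}\widehat G_H$), and $\mathcal E_e$-feedback parts bounded by $\frac{C}{N}\|\mathcal E_e\|_{\max}$.

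Two small corrections to your write-up:

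\begin{enumerate}
\item The terms $\widehat Z\widehat G_H\widehat Z^\top\widehat\Sigma_e^{-1}RM_x^*R^\top$ and their transposes vanish identically by the alignment identity $R^\top\widehat\Sigma_e^{-1}\widehat Z=0$. Your Cauchy--Schwarz bound on them would give only $O_p(\rho_T T^{-1/2})$, not $O_p(\rho_T/N)$, although that is still within the stated rate.
\item In the doubly sandwiched feedback terms, the factor $N^{-1}$ comes from the second $\widehat H$ or $\widehat G_H$. These terms are $O_p(N^{-1})\cdot\frac1N\sum_s|(\mathcal E_e)_s|$, because the average $\frac1N\sum_s|(\mathcal E_e)_s|$ is bounded only by $\|\mathcal E_e\|_{\max}$, not by $N^{-1}\|\mathcal E_e\|_{\max}$.
\end{enumerate}
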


\begin{proof}
Let \(\mathcal{E}_e=\widehat\Sigma_e-\Sigma_e^*\). The diagonal covariance
score equation yields
\begin{equation}
\mathcal{E}_e=\mathcal T_T+\mathcal B_T(R)+\mathcal K_T(\mathcal{E}_e),
\label{eq:variance-expansion}
\end{equation}
where \(\mathcal T_T\), \(\mathcal B_T(R)\), and
\(\mathcal K_T(\mathcal{E}_e)\) collect respectively the primitive covariance
fluctuation, loading-error terms, and the terms linear in \(\mathcal{E}_e\).

The Gaussian short-memory assumption implies
\[
\|\mathcal T_T\|_{\max}
=
O_p(\sqrt{\log N/T}).
\]

The terms in \(\mathcal B_T(R)\) that are linear in \(R\) contain
either \(\widehat M_x^{-1}\widehat G_H\) or
\(\widehat G_H\widehat M_x^{-1}\), and are therefore
\(O_p(\rho_T/N)\) entrywise. The remaining dominant loading term is
\(R M_x^*R^\top\), whose entrywise maximum is
\(O_p(\rho_T^2)\) by the rowwise localization result. Hence
\[
\|\mathcal B_T(R)\|_{\max}
=
O_p(\rho_T^2+\rho_T/N).
\]

Finally, every term in \(\mathcal K_T(\mathcal{E}_e)\) contains at least one
low-rank factor
\(\widehat Z\widehat G_H\widehat Z^\top\widehat\Sigma_e^{-1}\)
or
\(\widehat Z\widehat H\widehat Z^\top\widehat\Sigma_e^{-1}\).
By row boundedness and
\(\|\widehat G_H\|_{\mathrm{op}}+
\|\widehat H\|_{\mathrm{op}}=O_p(N^{-1})\),
we have
\[
\|\mathcal K_T(D)\|_{\max}
\le
\frac{C}{N}\|D\|_{\max}
\]
for any diagonal matrix \(D\).

Combining the above bounds gives
\[
\|\mathcal{E}_e\|_{\max}
\le
O_p\left(
\sqrt{\frac{\log N}{T}}
+\rho_T^2+\frac{\rho_T}{N}
\right)
+
\frac{C}{N}\|\mathcal{E}_e\|_{\max}.
\]
Since \(N\to\infty\), the last term can be absorbed into the left
hand side, which proves the result.
\end{proof}

\subsubsection{Explicit control of the diagonal-score remainder}

We next use the explicit diagonal covariance score equation to
control the terms generated by estimating \(\Sigma_e\).  Throughout
this subsection, write
\[
\widehat W=\widehat\Sigma_e^{-1},
\qquad
\mathcal{E}_e=\widehat\Sigma_e-\Sigma_e^*,
\qquad
d_e=\operatorname{diag}(\mathcal{E}_e),
\]
and let \(\widehat z_r^\top\) and \(r_r^\top\) denote the \(r\)-th
rows of \(\widehat Z\) and \(R\), respectively.  Recall that
\(\|R\|_F^2\asymp Nx_T,\,\max_{r\le N}\|r_r\|=O_p(\rho_T)\) and \(\|\widehat H\|_{\mathrm{op}}+\|\widehat G_H\|_{\mathrm{op}}=O_p(N^{-1})\).

Recall equation \eqref{eq:sigma_e-errorexpression}:
\begin{equation}
        \begin{aligned}
            \Sigma_{e}^*-\widehat{\Sigma}_{e}=&\operatorname{diag}\big\{RM_x^*(I_d-\mathcal{Z})(I_d-\widehat{H}^{-1}\widehat{G}_H)\widehat{Z}^{\top}\\&+\widehat
            {Z}(I_d-\widehat{G}_H\widehat{H}^{-1})(I_d-\mathcal{Z})^{\top}M_x^*R^{\top}\\&-RM_x^*R^{\top}+\widehat{Z}\widehat{G}_H\widehat{Z}^{\top}\widehat{\Sigma}_e^{-1}RM_x^*R^{\top}\\&+RM_x^*R^{\top}\widehat{\Sigma}_e^{-1}\widehat{Z}\widehat{G}_H\widehat{Z}^{\top}\\&+[\mathcal{T}-\widehat{Z}\widehat{H}\widehat{Z}^{\top}\widehat{\Sigma}_e^{-1}\mathcal{T}\widehat{\Sigma}_{e}^{-1}\widehat{Z}\widehat{H}\widehat{Z}^{\top}\\&-\widehat{Z}\widehat{G}_H\widehat{Z}^{\top}\widehat{\Sigma}_{e}^{-1}\mathcal{T}(I_{pq}-\widehat{\Sigma}_e^{-1}\widehat{Z}\widehat{H}\widehat{Z}^{\top})\\&-(I_{pq}-\widehat{Z}\widehat{H}\widehat{Z}^{\top}\widehat{\Sigma}_e^{-1})\mathcal{T}\widehat{\Sigma}_{e}^{-1}\widehat{Z}\widehat{G}_H\widehat{Z}^{\top}]\\&-\widehat{Z}\widehat{H}\widehat{Z}^{\top}\widehat{\Sigma}_{e}^{-1}\mathcal{E}_e\widehat{\Sigma}_{e}^{-1}\widehat{Z}\widehat{H}\widehat{Z}^{\top}\\&-\widehat{Z}\widehat{G}_H\widehat{Z}^{\top}\widehat{\Sigma}_e^{-1}\mathcal{E}_e+\widehat{Z}\widehat{G}_H\widehat{Z}^{\top}\widehat{\Sigma}_e^{-1}\mathcal{E}_e\widehat{\Sigma}_e^{-1}\widehat{Z}\widehat{H}\widehat{Z}^{\top}\\&-\mathcal{E}_e\widehat{\Sigma}_e^{-1}\widehat{Z}\widehat{G}_H\widehat{Z}^{\top}+\widehat{Z}\widehat{H}\hat{Z}^{\top}\widehat{\Sigma}_e^{-1}\mathcal{E}_e\widehat{\Sigma}_e^{-1}\widehat{Z}\widehat{G}_H\widehat{Z}^{\top}\big\}.
        \end{aligned}
    \end{equation} collect all terms
that are linear in the primitive covariance fluctuation
\(\mathcal T\) as
\[
\begin{aligned}
\mathfrak T_T
={}&
\mathcal T
-
\widehat Z\widehat H\widehat Z^\top
\widehat W\mathcal T\widehat W
\widehat Z\widehat H\widehat Z^\top
\\
&-
\widehat Z\widehat G_H\widehat Z^\top
\widehat W\mathcal T
\left(
I_N-\widehat W\widehat Z
\widehat H\widehat Z^\top
\right)
\\
&-
\left(
I_N-\widehat Z\widehat H
\widehat Z^\top\widehat W
\right)
\mathcal T\widehat W
\widehat Z\widehat G_H\widehat Z^\top .
\end{aligned}
\]

This is the leading stochastic component of the diagonal score and
will be treated by the Gaussian profile-score lemma.  We now control
all remaining terms in \eqref{eq:sigma_e-errorexpression} directly.

Define
\[
\ell_{R,T}
=
\operatorname{diag}
\left\{
RM_x^*(I_d-\mathcal Z)
(I_d-\widehat H^{-1}\widehat G_H)\widehat Z^\top
+
\widehat Z
(I_d-\widehat G_H\widehat H^{-1})
(I_d-\mathcal Z)^\top M_x^*R^\top
\right\},
\]
\[
q_{R,T}
=
-\operatorname{diag}(RM_x^*R^\top),
\]
and let \(r_{E,T}=\diag \mathfrak{E}_T\) denote the sum of all terms in \eqref{eq:sigma_e-errorexpression} containing
\(\mathcal{E}_e\).  Then, using the alignment relation
\(R^\top\widehat W\widehat Z=0\), the two remaining terms in
\eqref{eq:sigma_e-errorexpression} containing \(RM_x^*R^\top\) vanish exactly, and hence
it becomes
\begin{equation}
-\mathcal{E}_e
=
\operatorname{diag}(\mathfrak T_T)
+
\ell_{R,T}
+
q_{R,T}
+
r_{E,T}.
\label{eq:explicit-variance-decomposition}
\end{equation}

Recall that, 
\(|(\ell_{R,T})_r|
\le
C
\|r_r\|
\|\widehat G_H\|_{\mathrm{op}}
\|\widehat z_r\|
\le
\frac{C}{N}\|r_r\|\), where row boundedness of \(\widehat Z\) and boundedness of
\(M_x^*,\widehat M_x^{-1}\), and \(I_d-\mathcal Z\) have been used.
Consequently,
\begin{equation}
\|\ell_{R,T}\|_2
\le
\frac{C}{N}\|R\|_F
=
O_p\left(
\sqrt{\frac{x_T}{N}}
\right).
\label{eq:linear-loading-diagonal-bound}
\end{equation}

For the quadratic loading term, \(|(q_{R,T})_r|
=
|r_r^\top M_x^*r_r|
\le
C\|r_r\|^2\).
It follows that
\[
\begin{aligned}
\|q_{R,T}\|_2^2
\le
C\sum_{r=1}^N\|r_r\|^4
\le
C
\left(
\max_{r\le N}\|r_r\|^2
\right)
\sum_{r=1}^N\|r_r\|^2.
\end{aligned}
\]
Hence
\begin{equation}
\|q_{R,T}\|_2
=
O_p\left(
\rho_T\sqrt{Nx_T}
\right).
\label{eq:quadratic-loading-diagonal-bound}
\end{equation}

We next turn to the terms containing \(\mathcal{E}_e\). Previous discussion in the proof of Theorem \ref{thmforcpconvrate} showed: 
\begin{equation}
\|r_{E,T}\|_2
\le
\frac{C}{N}\|\mathcal{E}_e\|_F
=
O_p\left(
\frac1{\sqrt{NT}}
\right).
\label{eq:variance-feedback-l2-bound}
\end{equation}

Combining
\eqref{eq:explicit-variance-decomposition},
\eqref{eq:linear-loading-diagonal-bound},
\eqref{eq:quadratic-loading-diagonal-bound}, and
\eqref{eq:variance-feedback-l2-bound}, we obtain the explicit
decomposition
\begin{equation}
-\mathcal{E}_e
=
\operatorname{diag}(\mathfrak T_T)
+
q_{R,T}
+
r_{e,T}^{\mathrm{rem}},
\label{eq:variance-leading-remainder}
\end{equation}
where
\begin{equation}
\|r_{e,T}^{\mathrm{rem}}\|_2
=
O_p\left(
\sqrt{\frac{x_T}{N}}
+
\frac1{\sqrt{NT}}
\right),
\label{eq:variance-remainder-l2}
\end{equation}
and
\[
\|q_{R,T}\|_2
=
O_p\left(
\rho_T\sqrt{Nx_T}
\right).
\]

For the term $\mathfrak{T}_T$, we rewrite as 
\begin{equation}
    \begin{aligned}
        \mathfrak T_T
=&(I-\Zh\Gh \Zh^{\top}\widehat W)\mathcal{T}(I-\Zh\Gh\Zh^{\top}\widehat W)^{\top}-\Zh(\widehat H-\Gh) \Zh^{\top}\widehat W\mathcal{T}\widehat W\Zh(\widehat H-\Gh)\Zh^{\top}. 
    \end{aligned}
\end{equation}
Take $B_G=\Zh\Gh \Zh^{\top}\widehat W$, $B_H=\Zh\widehat H \Zh^{\top}\widehat W$. Denoting $L=I-B_G,\, D=B_H-B_G$, and \(J_T
=
\widehat Z\widehat G_H
\widehat M_x^{-1}
(I-\mathcal Z)^\top\), we have 
\begin{align*}
\diag(\mathfrak T_T)
={}&
-\diag(E_T)
+
2\diag(B_GE_T)
-
\diag(B_GE_TB_G^\top)
\\
&+
2\diag(RG_TL^\top)
-
2\diag(J_TG_TL^\top)-\diag(D\mathcal TD^\top).
\end{align*}
Therefore, \begin{equation}\label{eq:delta estimation}
    \begin{aligned}
        &\delta=\mathcal{E}_e-\diag(E_T)\\=&-2\diag(B_GE_T)
+\diag(B_GE_TB_G^\top)
\\
&-2\diag(RG_TL^\top)
+2\diag(J_TG_TL^\top)
\\
&+\diag(D\mathcal TD^\top)
-\ell_{R,T}-q_{R,T}-r_{E,T}.
    \end{aligned}
\end{equation}

Take a decomposition $T_G=\tr(R^{\top}\widetilde{W}G_T^{\top}B_T)+\tr[R^{\top}(\widehat W-\widetilde W)G_T^{\top}B_T]$, where $\widetilde W=(\diag(E_T)+\Sigma_e^*)^{-1}$. Easy to see that, $\widehat W-\widetilde W=-\widetilde W\delta\widehat W$. Then we have the following lemmas: 

\begin{lemma}\label{lemma:BGET}
\[
\|\diag(B_GE_T)\|_2
\le O_p(T^{-1/2})
+
O_p(T^{-1})+
O_p\left(
\sqrt{\frac{Nx_T}{T}}
\right)+O_p(T^{-1/2})\|\delta\|_F.
\]
\end{lemma}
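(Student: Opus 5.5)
The plan is to expand $\diag(B_GE_T)$ row by row and separate the part of $\widehat Z$, $\widehat G_H$ and $\widehat W$ that is independent of the noise from the estimation errors. The $r$-th entry is
\[
\{\diag(B_GE_T)\}_r=\widehat z_r^\top\widehat G_H\sum_{s}\widehat z_s\widehat w_s(E_T)_{sr},
\]
that is, $\widehat z_r^\top\widehat G_H$ times the $r$-th column of $\widehat Z^\top\widehat W E_T$.

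We substitute $\widehat Z=(Z^*+R)(I-\mathcal Z)^{-\top}$, which follows from the definition $R=\widehat Z(I-\mathcal Z)^\top-Z^*$, and $\widehat W=\widetilde W-\widetilde W\delta\widehat W$. Here $\widetilde W=(\diag(E_T)+\Sigma_e^*)^{-1}$ depends on the noise only through its diagonal. This splits $\widehat Z^\top\widehat WE_T$ into three pieces:
\begin{enumerate}
\item[(i)] the oracle term $(I-\mathcal Z)^{-1}Z^{*\top}\widetilde W E_T$;
\item[(ii)] the loading-error term $(I-\mathcal Z)^{-1}R^\top\widetilde WE_T$;
\item[(iii)] the variance-error term $-\widehat Z^\top\widetilde W\delta\widehat WE_T$.
\end{enumerate}
Throughout, the prefactor is controlled by $\max_r\|\widehat z_r^\top\widehat G_H\|=O_p(N^{-1})$, using row boundedness and $\|\widehat G_H\|_{\op}=O_p(N^{-1})$ from Lemma~\ref{lem:projection-stability}.

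For term (i), I would split $E_T$ into its diagonal and off-diagonal parts. The diagonal part gives entries $\widehat z_r^\top\widehat G_H z_r^*\tilde w_r(E_T)_{rr}$, each $O_p(N^{-1}T^{-1/2})$, so their $\ell_2$ norm is $O_p(N^{-1/2}T^{-1/2})\le O_p(T^{-1/2})$. For the off-diagonal part, $\tilde w_s$ is fixed given the diagonal sums. Conditional on $\{e_{r,t}\}_t$, the vector $\sum_{s\neq r}z_s^*\tilde w_s(E_T)_{sr}$ is a sum of independent centered Gaussians with variance $O(N/T)$, exactly as in the $E_TC_T$ argument in the proof of Lemma~\ref{lem:gaussian-rowwise}. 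Hence each entry is $O_p(N^{-1}\sqrt{N/T})$, and summing squares over $r$ gives $\ell_2$ norm $O_p(T^{-1/2})$. The factor $(I-\mathcal Z)^{-1}=I+O_p(T^{-1/2})$ only adds the $O_p(T^{-1})$ term.

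For term (ii), I would use Cauchy--Schwarz with $\|R\|_F\asymp\sqrt{Nx_T}$ from \eqref{G.6,7}. The $r$-th column of $R^\top\widetilde WE_T$ has norm at most $C\|R\|_F\|(E_T)_{\cdot r}\|$. Since $\max_r\|(E_T)_{\cdot r}\|=O_p(\sqrt{N/T})$ by the Gaussian union bound with $\log N=o(T)$, this gives entries of order $N^{-1}\sqrt{Nx_T}\sqrt{N/T}$. Summing over $r$ naively costs a factor $\sqrt N$. To avoid this, I would instead bound the $\ell_2$ norm through the Frobenius norm: $\|\widehat G_H\widehat Z^\top\|_{\op}\|R^\top\widetilde WE_T\|_F$, with $\|R^\top \widetilde W E_T\|_F\le\|R\|_F\|E_T\|_{\op}$ and $\|E_T\|_{\op}=O_p(\sqrt{N/T}+N/T)$ for Gaussian sample covariances. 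This yields $O_p(\sqrt{Nx_T/T})$ plus a lower-order piece absorbed into $O_p(T^{-1})$.

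Term (iii) is handled in the same way. We bound $\|\widehat Z^\top\widetilde W\delta\widehat WE_T\|_F\le C\|\widehat Z\|_{\op}\|\delta\|_F\|E_T\|_{\op}$, and combined with the $\widehat G_H$ prefactor this gives $O_p(T^{-1/2})\|\delta\|_F$.

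The main obstacle is making the Frobenius and operator-norm route for (ii) and (iii) sharp enough without losing $\sqrt N$. This depends on $\|E_T\|_{\op}=O_p(\sqrt{N/T})$, which requires $N\lesssim T$, or else on an additional $N/T$ term that must be shown to be harmless. If that fails, I would fall back on entrywise conditioning, exploiting independence of $E_T$'s off-diagonal entries from the diagonal-only $\widetilde W$, and would treat the dependence of $R$ on $E_T$ by a uniform bound over the localization set of Lemma~\ref{lem:gaussian-rowwise}.
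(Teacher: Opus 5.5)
Your decomposition is a legitimate and somewhat leaner alternative to the paper's telescoping $B_G^*\to B_1\to B_2\to B_3\to B_G$. You keep the estimated prefactor $\widehat z_r^\top\widehat G_H$, control it only through $\max_r\|\widehat G_H\widehat z_r\|=O_p(N^{-1})$, and split $\widehat Z^\top\widehat W E_T$ into an oracle piece, a loading-error piece and a variance-error piece. Your treatment of term (i) is sound. The summands $\tilde w_s(E_T)_{sr}=e_s^\top e_r/\|e_s\|_2^2$ are not Gaussian, but they are centered and conditionally uncorrelated given $e_r$, which is all the second-moment bound uses.

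\textbf{The gap: (ii) and (iii) need $N\lesssim T$.} Both terms are routed through $\|E_T\|_{\op}$, which is of order $\sqrt{N/T}+N/T$. When $N\gg T$ it is genuinely of order $N/T$, since $\frac1T\sum_te_te_t^\top$ then has top eigenvalue $\asymp N/T$. The sharper-rate setting only assumes $\log N=o(T)$, so $N\gg T$ is allowed, and in that regime:
\begin{itemize}
\item your bound $\|\widehat G_H\widehat Z^\top\|_{\op}\|R\|_F\|E_T\|_{\op}$ for (ii) carries an extra $\sqrt{x_T}\,N/T=O_p(NT^{-3/2})$, which is neither $O_p(T^{-1})$, as you assert, nor dominated by $\sqrt{Nx_T/T}$;
\item your bound $N^{-1}\|\widehat Z\|_{\op}\|\delta\|_F\|E_T\|_{\op}$ for (iii) becomes $O_p(\sqrt N/T)\|\delta\|_F$, not $O_p(T^{-1/2})\|\delta\|_F$.
\end{itemize}
Your fallback (entrywise conditioning, uniform localization) is not worked out, so as written the proposal proves the lemma only for $N\lesssim T$.

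\textbf{The repair uses a tool you already have.} A diagonal entry of $XE_T$ pairs only row $r$ of $X$ with column $r$ of $E_T$, so
$\|\diag(XE_T)\|_2\le\max_s\|(E_T)_{\cdot s}\|_2\,\|X\|_F$.
Moreover, $\max_s\|(E_T)_{\cdot s}\|_2^2=O_p(N/T)$ holds under $\log N=o(T)$ with no restriction on $N/T$. This is exactly how the paper handles its $(B_2-B_1)E_T$ and $(B_G-B_3)E_T$ pieces.
\begin{itemize}
\item For (ii), your ``naive'' computation already finishes the job. Entries of size $N^{-1}\sqrt{Nx_T}\sqrt{N/T}=\sqrt{x_T/T}$ over $N$ coordinates give $\sqrt{Nx_T/T}$, which is precisely the term the statement allows. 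The factor $\sqrt N$ you tried to avoid is not a loss.
\item For (iii), take $X=\widehat Z\widehat G_H\widehat Z^\top\widetilde W\delta\widehat W$ and use row boundedness instead of $\|\widehat Z\|_{\op}$:
$\|X\|_F\le\|\widehat Z\widehat G_H\|_{\op}\|\widehat Z^\top\widetilde W\delta\widehat W\|_F\le CN^{-1/2}\|\delta\|_F$.
This is the paper's $\|B_G-B_3\|_F\le CN^{-1/2}\|\delta\|_F$, and it gives $O_p(T^{-1/2})\|\delta\|_F$ for all $N$.
\end{itemize}

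Alternatively, you may keep $\|E_T\|_{\op}$ if you pair it with the $N^{-1}$ row prefactor in both terms. Use $\|R^\top\widetilde WE_T\|_F\le C\|R\|_F\|E_T\|_{\op}$ for (ii) and the row-bounded estimate $\|\widehat Z^\top\widetilde W\delta\widehat W\|_F\le C\|\delta\|_F$ for (iii). The $N/T$ part of $\|E_T\|_{\op}$ is then absorbed.

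All of these are deterministic inequalities, so the dependence of $R$ and $\delta$ on the noise causes no difficulty.
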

\begin{proof}
    We decompose the $B_GE_T$ as following: 
    \begin{align*}
        &B_GE_T=B_G^*E_T+(B_1-B_G^*)E_T+(B_2-B_1)E_T+(B_3-B_2)E_T+(B_G-B_3)E_T,\\
        &B_G^*=Z^*G_H^*Z^{*\top}W^*,\, B_1=B(Z^*,M^*_x, \widetilde W),\, B_2=B(\Zh, M^*_x, \widetilde W), \,B_3=B(\Zh, \Mh, \widetilde W). 
    \end{align*}
    Firstly, we consider the $B_G^*E_T$. Since each entry in $B_G^*$ is bounded by $C/N$, 
    \begin{equation}
        \|\diag(B_G^*E_T)\|_2^2=\sum_s\big[\frac{1}{T}\sum_t[B_G^*e_t ]_se_{s,t}^{\top}-[B_G^*\Sigma_e^*]_s\big]^2=O_p(\frac{1}{{T}}). 
    \end{equation}

    For the second term, we write $B_1-B_G^*=Z^*G^*_HZ^{*\top}(\widetilde W-W^*)(I_N-Z^*G_1Z^{*\top}\widetilde W)$. Therefore, $\|B_1-B_G^*\|_F\leq \sum_r\|(Z^*G_H^*Z^{*\top})_{\cdot,r}\|\cdot |\widetilde w_r-w_r^*|=O_p(\frac{1}{\sqrt{T}})$. Denote $(B_1-B_G^*)$ as $\Delta B_1$. Then $(\Delta B_1E_T)_{ss}=(\Delta B_1)_{ss}(E_T)_{ss}+\sum_{r\neq s} (\Delta B_1)_{sr}(E_T)_{rs}$. 
    
    To understand the expectation $\mathbb{E}[(\Delta B_1E_T)_{ss}-(\Delta B_1)_{ss}(E_T)_{ss}]^2$, we take $e_s=(e_{s,1},\cdots, e_{s,T})$, $s=(i,j)$ and write $e_s=R_su_s$, with $R_s=\|e_s\|_2$ and $u_s=e_s/R_s$. Conditional in a fixed $\mathcal{R}=(R_1,\cdots, R_N)$, we have $\mathbb{E}\{E_{T,rs}^2\big|\mathcal{R}\}=\frac{R_r^2R_s^2}{T^3}$ and $\mathbb{E}\{(E_T)_{rs}(E_T)_{us}\big|\mathcal{R}\}=0,r\neq u,$ for every $r\neq s$. Under the Gaussian assumption, we have \(\Pr\left(\max_{s\le N}\frac{R_s^2}{T\sigma_s^2}\le 2\right)\to1\). Hence $\max_s R_s/\sqrt{T}\leq 2C_e$. On the other hand, the $\chi^2$ concentration inequality also suggests that, $\Pr \left(\min_s\frac{R_s^2}{T\sigma_s^{*2}}\leq 1-\frac{2\sqrt{x}}{\sqrt{T}}\right)\leq Ne^{-x}$. By taking $x={T}/16$, we know that $\min_sR_s\geq \frac{1}{2}C_e^{-1}$ in high probability. Then we can control the off-diagonal part: \begin{equation*}
        \sum_s\mathbb{E}[(\Delta B_1E_T)_{ss}-(\Delta B_1)_{ss}(E_T)_{ss}]^2\leq \frac{C}{T}\|\Delta B_1\|_F^2. 
    \end{equation*}
    Therefore, by Markov's inequality, we know that $\|\diag(\Delta B_1E_T)-\diag(\Delta B_1)\diag(E_T)\|_F=O_p(\frac{1}{T})$. For the diagonal part, we have $\|\diag(\Delta B_1)\diag(E_T)\|_F=O_p(\frac{1}{\sqrt{N}T})$ by directly calculation of $\Delta B_1$ entrywise. In summary, we have $\|\diag[(B_1-B_G^*)E_T]\|_F=O_p(\frac{1}{T})$. 

    Then we look at the third term. We write $B_2-B_1=(\Zh-Z^*)G_1Z^{*\top}\widetilde W+\Zh G_1(Z^{*\top}\widetilde WZ^{*}-\Zh \widetilde W \Zh)G_2Z^{*\top}\widetilde W+\Zh G_2(\Zh-Z^*)^{\top}\widetilde W$ and estimate the third term via \(\|\diag[(B_2-B_1)E_T]\|_F^2\leq \big(\max_s\|(E_T)_{\cdot,s}\|_2^2\big)\|B_2-B_1\|_F^2\). 
    
    Under Gaussian assumption, \(e_t=(e_{1t},\ldots,e_{Nt})^\top\stackrel{\mathrm{iid}}{\sim} N(0,\Sigma_e^*)\), \( e_s = R_su_s\in\mathbb R^T\). For a fixed \(s\), conditional on \(e_s\), we have \(e_r^\top e_s\big| e_s\sim N\left(0,\sigma_r^2\|e_s\|_2^2\right)\) and \((E_T)_{rs}=\frac{\sigma_r\| e_s\|_2}{T}Z_{rs},\,r\neq s\). Therefore, the off-diagonal part of column norm is \(\sum_{r\neq s}(E_T)_{rs}^2=\frac{\| e_s\|_2^2}{T^2}\sum_{r\neq s}\sigma_r^2Z_{rs}^2\). 

For the $\|e_s\|_2^2$ part, since \(\frac{\| e_s\|_2^2}{\sigma_s^2}
\sim \chi_T^2\), normal \(\chi^2\) concentration gives, \[\forall x>0,\,P\left(\frac{\| e_s\|_2^2}{\sigma_s^2}>T+2\sqrt{Tx}+2x\right)\leq e^{-x}.\] By taking \(x=3\log N\), we have: 
\[
P\left(\max_{s\le N}\frac{\| e_s\|_2^2}{\sigma_s^2}>T+2\sqrt{3T\log N}+6\log N\right)\le Ne^{-3\log N}=N^{-2}.\]
By \(\log N=o(T)\), we can find a constant $C_1$ independent of $N,T$ such that \(\max_{s\le N}
\frac{\| e_s\|_2^2}{T}
\le C_1\) with probability tending to one. 

Then regarding the $Z_{rs}$ part, conditional on \(e_s\), we have
\(\sum_{r\neq s}\sigma_r^2Z_{rs}^2\leq C_e\sum_{r\neq s}Z_{rs}^2\), with \(\sum_{r\neq s}Z_{rs}^2
\sim\chi_{N-1}^2\). Similarly, by \(\chi^2\) concentration, 
\[P\left(\sum_{r\ne s}Z_{rs}^2>(N-1)+2\sqrt{(N-1)x}+2x\;\middle|\; e_s\right)\le e^{-x}.\] Again, we take \(x=3\log N\). As \(N\) go to infinity, \((N-1)+2\sqrt{3(N-1)\log N}+6\log N\le C_2N\) for some $C_2$, hence, \(P(\sum_{r\ne s}\sigma_r^2Z_{rs}^2>C_2N\;\big|\;e_s)\le N^{-3}\). Then we have 
\[
P\left(\max_{s\le N}\sum_{r\ne s}\sigma_r^2Z_{rs}^2>C_2N\right)\le N\cdot N^{-3}=N^{-2}.
\]This give the upper bound of the off-diagonal part: 
\[
\max_{s\le N}
\sum_{r\ne s}(E_T)_{rs}^2
=
\max_{s\le N}
\frac{\| e_s\|_2^2}{T^2}
\sum_{r\ne s}\sigma_r^2Z_{rs}^2\le
\frac{C_1T}{T^2}\,C_2N=C\frac NT.
\]
For diagonal entries, 
\((E_T)_{ss}=\frac1T\sum_{t=1}^T (e_{st}^2-\sigma_s^2)\). Similarly we have \(\max_{s\le N}|E_{T,ss}|=O_p(\sqrt{\frac{\log N}{T}}+\frac{\log N}{T})\). In summary, 
\[
\max_s\|(E_T)_{\cdot s}\|_2^2=\max_s\left\{\sum_{r\ne s}(E_T)_{rs}^2+\eta_s^2\right\}=O_p\left(\frac NT\right).
\]

For the Frobenius norm of $B_2-B_1$, we only need to collect the square norms of each component and get $\|B_2-B_1\|_F^2=O_p(x_T)$. In summary, we have $\|\diag[(B_2-B_1)E_T]\|_F=O_p(\sqrt{\frac{Nx_T}{T}})$. 

The fourth term is easy, and we only present the result: \[\|\diag[(B_3-B_2)E_T]\|_F=O_p\left((\sqrt{N}T)^{-1}\right).\] 

Finally, we deal with the term $\|\diag[(B_G-B_3)E_T]\|_F^2\leq(\max_s\|(E_T)_{\cdot,s}\|_2^2)\|B_G-B_3\|_F^2$. Similar to $B_1-B_G^*$, we write $\|B_G-B_3\|_F=\|\Zh G_3\Zh^{\top}(\widehat W-\widetilde W)\|_F $. Since each row of $\Zh$ and each entry of $NG_3$ are uniformly bounded (with high probability), we can find a sufficiently large constant $C$ such that $\|B_G-B_3\|_F\leq \frac{C}{\sqrt{N}}\|\delta\|_F$. Therefore, the fifth term is at most $\|\diag[(B_G-B_3)E_T]\|_F=O_p(\frac{1}{\sqrt{T}}\|\delta\|_F)$. 
\end{proof}

The above lemma controls the first term $B_GE_T$. Following a similar route, we have: 
\begin{lemma}\label{lemma:BEB}
    \[\|\diag\left(B_GE_TB_G^{\top}\right)\|_F=O_p\left(\frac{1}{\sqrt{NT}}\right)+O_p(\sqrt{Nx_T/T})+O_p(T^{-\frac{1}{2}})\|\delta\|_F. \]
\end{lemma}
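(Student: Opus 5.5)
We follow the route of Lemma~\ref{lemma:BGET}: telescope $B_G$ through the same intermediate matrices and treat each piece separately. With $B_G^*,B_1,B_2,B_3$ as in that proof, write $B_G=B_G^*+\Delta_1+\Delta_2+\Delta_3+\Delta_4$, where $\Delta_1=B_1-B_G^*$, $\Delta_2=B_2-B_1$, $\Delta_3=B_3-B_2$ and $\Delta_4=B_G-B_3$. Expanding $B_GE_TB_G^\top$ bilinearly gives the leading term $B_G^*E_TB_G^{*\top}$ plus cross terms $\Delta_kE_TB_G^{\top}$ and $B_G^*E_T\Delta_k^\top$. We bound the diagonal of each in Frobenius norm.

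For the leading term, we use the low-rank structure $B_G^*=Z^*G_H^*Z^{*\top}W^*$ with $\|G_H^*\|_{\op}=O(N^{-1})$ and bounded rows of $Z^*$. The $s$-th diagonal entry equals $z_s^{*\top}G_H^*\,(Z^{*\top}W^*E_TW^*Z^*)\,G_H^*z_s^*$. Since $Z^{*\top}W^*E_TW^*Z^*$ is a $d\times d$ matrix of order $O_p(N/\sqrt T)$ (sum of $N$ independent centered terms with bounded fourth moments, so $O_p(\sqrt{N^2/T})$ after accounting for the off-diagonal Gaussian cross products), each diagonal entry is $O_p(N^{-1}\cdot N^{-1}\cdot N/\sqrt T)=O_p(1/(N\sqrt T))$. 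Summing squares over $N$ coordinates gives $O_p(1/\sqrt{NT})$, which is the first term of the claimed bound.

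For the cross terms, we reuse the pieces already developed for Lemma~\ref{lemma:BGET}: the Frobenius bounds $\|\Delta_1\|_F=O_p(T^{-1/2})$, $\|\Delta_2\|_F^2=O_p(x_T)$, $\|\Delta_3\|_F=O_p(1/(\sqrt N\,T))$-level, and $\|\Delta_4\|_F\le CN^{-1/2}\|\delta\|_F$, together with the column bound $\max_s\|(E_T)_{\cdot s}\|_2^2=O_p(N/T)$. The key extra ingredient is that the right factor $B_G^\top$ (or $B_G^{*\top}$) has operator norm $O_p(1)$ and columns of norm $O_p(N^{-1/2})$, since its rows are $W\hat z_r^\top G_H\hat Z^\top$-type objects. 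Then
\[
|(\Delta_kE_TB_G^\top)_{ss}|\le \|(\Delta_k)_{s\cdot}\|_2\,\|E_T(B_G^\top)_{\cdot s}\|_2 .
\]
This is controlled by $\|E_TB_G^{\top}\|_{2\to\infty}$-type bounds, which are no worse than $\max_s\|(E_T)_{\cdot s}\|_2$. Summing over $s$ gives $\|\operatorname{diag}(\Delta_kE_TB_G^\top)\|_F\le \|\Delta_k\|_F\cdot O_p(\sqrt{N/T})$. For $k=2$ this yields $O_p(\sqrt{Nx_T/T})$, and for $k=4$ it yields $O_p(T^{-1/2})\|\delta\|_F$. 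For $k=1,3$ it yields terms of order $O_p(\sqrt{N}/T)$ or smaller.

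The main obstacle is the $\Delta_1$ term. The crude bound $O_p(\sqrt N/T)$ is not obviously dominated by $1/\sqrt{NT}$, so we refine it exactly as in the off-diagonal conditioning argument of Lemma~\ref{lemma:BGET}. Writing $e_s=R_su_s$ and conditioning on $\mathcal R$, cross products $(E_T)_{rs}(E_T)_{us}$ are conditionally uncorrelated. Moreover, the right factor $B_G^{*\top}$ carries an extra $N^{-1}$ in every entry, which gains a factor $N^{-1/2}$ relative to the single-sided computation. This brings the $\Delta_1$ contribution to $O_p(1/(\sqrt N\,T))$, which is absorbed into the stated terms. The remaining symmetric terms $B_G^*E_T\Delta_k^\top$ are handled identically by transposition, and second-order terms $\Delta_kE_T\Delta_l^\top$ are smaller by the same Cauchy--Schwarz argument. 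Collecting all bounds gives the lemma.
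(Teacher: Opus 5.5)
Your proposal is correct in substance, but it arranges the telescoping differently from the paper.

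\textbf{The paper's route.} The paper pivots on $B_1=B(Z^*,M_x^*,\widetilde W)$ rather than on $B_G^*$. It writes $B_GE_TB_G^\top=B_1E_TB_1^\top+(B_G-B_1)E_TB_1^\top+B_1E_T(B_G-B_1)^\top+(B_G-B_1)E_T(B_G-B_1)^\top$. It then bounds the leading term by conditioning on $\mathcal R=(R_1,\ldots,R_N)$, which is legitimate because $\widetilde W$, hence $B_1$, is $\mathcal R$-measurable. The off-diagonal conditional variance is at most $CT^{-1}\|b_{1,s}\|_2^4$ and the diagonal part at most $CN^{-2}\sum_r|(E_T)_{rr}|$. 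The cross terms are controlled by $\max_s\|E_Tb_{1,s}\|_2^2\,\|B_G-B_1\|_F^2$, essentially as you do for $k=2,3,4$.

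\textbf{What your route buys and costs.} Your leading-term argument through $(B_G^*E_TB_G^{*\top})_{ss}=z_s^{*\top}G_H^*(Z^{*\top}W^*E_TW^*Z^*)G_H^*z_s^*$ is cleaner. A single $d\times d$ sample-covariance error of order $O_p(N/\sqrt T)$ controls all diagonal entries at once, and no spherical decomposition $e_s=R_su_s$ is needed. The price is that $\Delta_1=B_1-B_G^*$ now lies outside the leading term, so the conditioning-on-$\mathcal R$ computation has to be brought back for the $\Delta_1$ cross terms. You identify this correctly, and your $O_p(1/(\sqrt N\,T))$ for $\diag(\Delta_1E_TB_G^{*\top})$ (and its transpose) checks out. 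In the paper's arrangement $\Delta_1$ never enters a cross term, so that step does not arise.

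\textbf{The $\Delta_1E_T\Delta_1^\top$ term.} In your exact expansion the $\Delta_1$ term is $\Delta_1E_TB_G^\top$, and $B_G$ is not $\mathcal R$-measurable. So before conditioning you must split $B_G=B_G^*+\sum_l\Delta_l$, which creates $\Delta_1E_T\Delta_1^\top$. Your Cauchy--Schwarz step does not cover this term. Since $\max_s\|(\Delta_1)_{s\cdot}\|_1=O_p(T^{-1/2})$, that step yields only $\|\Delta_1\|_F\cdot T^{-1/2}\sqrt{N/T}=O_p(\sqrt N/T^{3/2})$. This exceeds $1/\sqrt{NT}$ once $N\gg T$, a regime permitted here because only $\log N=o(T)$ is assumed. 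Because both factors are $\mathcal R$-measurable, the conditional-variance computation gives $O_p(\sqrt{\log N}/(\sqrt N\,T^{3/2}))$ for this term. Equivalently, place $B_1$ instead of $B_G^*$ on the right of the $\Delta_1$ term, which is exactly the paper's choice. The mixed terms $\Delta_1E_T\Delta_l^\top$ and $\Delta_lE_T\Delta_1^\top$ with $l\ge 2$ are fine by Cauchy--Schwarz.

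\textbf{The rate for $\Delta_3$.} $\|\Delta_3\|_F$ is $O_p(1/(N\sqrt T))$, not $O_p(1/(\sqrt N\,T))$. Write $B_3-B_2=\widehat Z(\bar G_3-\bar G_2)\widehat Z^\top\widetilde W$, where $\bar G_2,\bar G_3$ are the $G$-factors built from $M_x^*$ and $\widehat M_x$; then $\|\bar G_3-\bar G_2\|_{\op}=O_p(N^{-2}T^{-1/2})$. The rate you quote is the paper's bound for $\|\diag\{(B_3-B_2)E_T\}\|_F$. With the correct value, the $k=3$ contribution is $O_p(1/(\sqrt N\,T))$ and is absorbed into the first term.
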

\begin{proof}
    We decompose it as follows: 
    \begin{align*}
        B_GE_TB_G^{\top}=B_1E_TB_1^{\top}+(B_G-B_1)E_TB_1^{\top}+B_1E_T(B_G-B_1)^{\top}+(B_G-B_1)E_T(B_G-B_1)^{\top}.
    \end{align*}
    For the first term, we consider
    \begin{align*}
        \|\diag(B_1E_TB_1^{\top})\|_F^2=\sum_s\left[\sum_{u<r}(B_1)_{sr}(B_1)_{su}(R_{u}R_r/T)u_ru_u^{\top}+\sum_{r}(B_1)_{sr}^2(R_r^2/T-\sigma_{e,r}^{*2})\right]^2. 
    \end{align*}
    Conditional on $\mathcal{R}$, we consider the expectation of each component on the right-hand side (off-diagonal part): 
    \begin{align*}
\operatorname{Var}
\left(\left.b_{1,s}^\top E_Tb_{1,s}\right|\mathcal R\right)\le\frac CT\|b_{1,s}\|_2^4=O_p\left(\frac1{N^2T}\right), B_1=(b_{1,1},\cdots,b_{1,N})^{\top}; 
    \end{align*}
    and the diagonal part: 
    \begin{align*}
        \sum_r(B_1)_{sr}^2|R_r^2/T-\sigma_{e,r}^{*2}|\leq \frac{C}{N^2}\sum_r|(E_T)_{rr}|\leq_p \frac{2C}{N\sqrt{T}}. 
    \end{align*}
    Therefore, we have $\|\diag(B_1E_TB_1^{\top})\|_F=O_p[(NT)^{-\frac{1}{2}}]$. 

    For the second and third terms, we consider 
    \(\|\diag((B_G-B_1)E_TB_1^\top)\|_2^2=\sum_s\left(\delta_{b,s}^\top E_Tb_{1,s}\right)^2\le\left(\max_s\|E_Tb_{1,s}\|_2^2\right)\|B_G-B_1\|_F^2\). Following the calculation in the proof of Lemma \ref{lemma:BGET}, we have \(\|\diag((B_G-B_1)E_TB_1^\top)\|_2^2\leq \max_r\|(E_T)_{\cdot r}\|^2\cdot (\|B_2-B_1\|_F^2+\|B_3-B_2\|_F^2+\|B_G-B_3\|^2_F)\). 

    The last item is clearly smaller than the preceding items. 
\end{proof}

\begin{lemma}
\[
\|\delta\|_F=
O_p(T^{-1/2})
+
O_p\left(
\sqrt{\frac{Nx_T\log N}{T}}
\right)
+
O_p(\rho_T\sqrt{Nx_T}).
\]
\[
|T_{G,\delta}:= \tr[R^{\top}\widetilde W\delta \widehat WG_T^{\top}B_T]|
\le
C
\|\delta\|_F
O_p\left(
\sqrt{\frac{Nx_T\log N}{T}}
\right).
\]
\end{lemma}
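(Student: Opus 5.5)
We bound $\|\delta\|_F$ term by term from the decomposition \eqref{eq:delta estimation}, and then use that bound in the trace $T_{G,\delta}$.

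\textbf{Bound on $\|\delta\|_F$.} Recall
\[
\delta=-2\diag(B_GE_T)+\diag(B_GE_TB_G^\top)-2\diag(RG_TL^\top)+2\diag(J_TG_TL^\top)+\diag(D\mathcal TD^\top)-\ell_{R,T}-q_{R,T}-r_{E,T}.
\]
\begin{itemize}
\item The first two terms are controlled by Lemmas~\ref{lemma:BGET} and \ref{lemma:BEB}. Together they give $O_p(T^{-1/2})+O_p(\sqrt{Nx_T/T})+O_p(T^{-1/2})\|\delta\|_F$.
\item The last three terms are already bounded. From \eqref{eq:linear-loading-diagonal-bound}, $\|\ell_{R,T}\|_2=O_p(\sqrt{x_T/N})$. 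From \eqref{eq:quadratic-loading-diagonal-bound}, $\|q_{R,T}\|_2=O_p(\rho_T\sqrt{Nx_T})$. From \eqref{eq:variance-feedback-l2-bound}, $\|r_{E,T}\|_2=O_p((NT)^{-1/2})$.
\item The term $\diag(J_TG_TL^\top)$ is small because $J_T=\Zh\Gh\Mh^{-1}(I-\mathcal Z)^\top$ has rows of size $O_p(N^{-1})$ and $\|G_T\|_F=O_p(\sqrt{N/T})$. Its $\ell_2$ norm is therefore $O_p(T^{-1/2})$, after absorbing $L=I-B_G$, whose off-identity part has bounded operator norm.
\item The term $\diag(D\mathcal TD^\top)$ is treated the same way. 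Here $D=\Zh(\widehat H-\Gh)\Zh^\top\widehat W$ has entries $O_p(N^{-2})$, since $\widehat H-\Gh=\Gh\Mh^{-1}\widehat H=O_p(N^{-2})$. With $\|\mathcal T\|_F=O_p(N/\sqrt T)$, this term is negligible.
\end{itemize}

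The key term is $\diag(RG_TL^\top)$. Its $r$-th entry is $r_r^\top g_r$ plus a correction coming from $B_G$. We bound it by Cauchy--Schwarz row by row:
\[
\sum_r (r_r^\top g_r)^2\le \max_r\|g_r\|^2\,\|R\|_F^2 .
\]
By the conditional Gaussian representation \eqref{G.34} and a union bound over $N$ coordinates, $\max_r\|g_r\|^2=O_p(\log N/T)$, using $\log N=o(T)$. This gives $O_p(\sqrt{Nx_T\log N/T})$. The $B_G$-correction carries an extra factor $N^{-1/2}$ and is smaller.

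Collecting the pieces, the self-referential term $O_p(T^{-1/2})\|\delta\|_F$ is $o_p(\|\delta\|_F)$ and is absorbed into the left side. This yields the first display of the lemma.

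\textbf{Bound on $T_{G,\delta}$.} Write
\[
T_{G,\delta}=\sum_r \widetilde w_r\delta_r\widehat w_r\,r_r^\top(G_T^\top B_T)_r,
\]
which uses that $\widetilde W$, $\delta$ and $\widehat W$ are all diagonal. The weights $\widetilde w_r,\widehat w_r$ are uniformly bounded: for $\widehat w_r$ by Lemma~\ref{lem:variance-max}, and for $\widetilde w_r$ by the $\chi^2$ bounds already used in the proof of Lemma~\ref{lemma:BGET}. Cauchy--Schwarz then gives
\[
|T_{G,\delta}|\le C\|\delta\|_F\Big(\sum_r\big(r_r^\top (G_T^\top B_T)_r\big)^2\Big)^{1/2}.
\]
Since $B_T=(I-\mathcal Z)\widehat H^{-1}\Gh$ is $O_p(1)$, each row satisfies $\|(G_T^\top B_T)_r\|\le C\|g_r\|$. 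The same max-row Gaussian bound as above then controls the last factor by $O_p(\sqrt{Nx_T\log N/T})$.

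\textbf{Main obstacle.} The main difficulty is the uniform bound $\max_r\|g_r\|^2=O_p(\log N/T)$. The issue is that $R$ depends on the same errors that define $g_r$, so a direct independence argument is not available. We avoid this by never conditioning on $R$: we use only the deterministic inequality $\sum_r(r_r^\top g_r)^2\le \max_r\|g_r\|^2\|R\|_F^2$, in which $R$ enters only through $\|R\|_F^2$. The $\max_r\|g_r\|$ bound then needs only the factor path, via conditional Gaussian tails and $T^{-1}\sum_t x_tx_t^\top\to M_x$. The price of this crude bound is the extra $\log N$ factor in the result.
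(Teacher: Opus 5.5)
Your proposal is correct and follows essentially the same route as the paper. You bound each term of the decomposition of $\delta$ using Lemmas~\ref{lemma:BGET}--\ref{lemma:BEB} and the earlier $\ell_{R,T},q_{R,T},r_{E,T}$ bounds, control $\diag(RG_TL^\top)$ by a maximal column bound times $\|R\|_F$, and bound $T_{G,\delta}$ by Cauchy--Schwarz with the same $\max_r\|g_r\|$ factor; you also make explicit the absorption of the $O_p(T^{-1/2})\|\delta\|_F$ term and the Gaussian union bound, which the paper leaves implicit.

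One cosmetic slip: with the available bounds, the $B_G$-correction in $\diag(RG_TL^\top)$ is $O_p(T^{-1/2})$ per column (via $\|G_T\|_F\,\|\widehat W\widehat Z\widehat G_H\|_{\operatorname{op}}$). It is therefore smaller than the main part by a factor $\sqrt{\log N}$, not $N^{-1/2}$, which is still enough for the stated bound.
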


\begin{proof}
    For the third term in equation \eqref{eq:delta estimation}, we have \[\|\diag(RG_TL^\top)\|_2\le\max_s\|(G_TL^\top)_{\cdot s}\|\|R\|_F\leq \sqrt{\log N/T}\cdot \sqrt{Nx_T}.\] Similarly, the forth term has its Frobenius norm $o_p(T^{-\frac{1}{2}})$. Since $B_H-B_G=\Gh(\Mh^{-1})\widehat H=O_p(N^{-2})$, the fifth term has its Frobenius norm $O_p(\frac{1}{N\sqrt{T}})$. Therefore, the estimation of $\|\delta\|_F$ is given by equations \eqref{eq:linear-loading-diagonal-bound}--\eqref{eq:variance-feedback-l2-bound}, Lemmas \ref{lemma:BGET}--\ref{lemma:BEB} and the above discussions. 

    For $T_{G,\delta}$, we estimate it via: $|T_{G,\delta}|\leq C\|\delta\|_F|\max_{s\in[N]}(G_T)_{s}|\|R\|_F$ and prove the result. 
\end{proof}

Regarding the term $T_E=\tr[R^{\top}\widehat WE_TC^*]+\tr[R^{\top}\widehat WE_T(C_T-C^*)]$, we consider decompositions: $\tr[R^{\top}\widehat WE_TC^*]\leq C_e\|R\|_F\|E_TC^*\|_F$ and $C_T-C^*=C_T-C_3+C_3-C_2+C_2-C_1+C_1-C^*$, where $C_1=C(Z^*,M^*_x,\widetilde W)$, $C_2=C(Z^*,M_x^*,\widehat W)$ and $C_3=C(\Zh,M_x^*,\widehat W)$.

\begin{lemma}
    \[\|E_TC^*\|_F=O_p(T^{-\frac{1}{2}}),\quad \|C_1-C^*\|=O_p(\sqrt{Nx_T}/T). \]
\end{lemma}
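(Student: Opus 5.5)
The plan is to treat both bounds with the same device. $C^*=W^*Z^*G_H^*$ and $C_1=\widetilde W Z^* G_1$, where $G_1=(M_x^{*-1}+Z^{*\top}\widetilde WZ^*)^{-1}$, depend on the data only through $\diag(E_T)$. The off-diagonal part of $E_T$ then enters the first bound through a deterministic matrix, and the second bound reduces to a perturbation expansion in the diagonal matrix $\widetilde W-W^*$.

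\textbf{First bound.} $C^*$ is deterministic. By row boundedness of $Z^*$ and $\|G_H^*\|_{\op}=O(N^{-1})$, every row satisfies $\|(C^*)_s\|\le C/N$, hence $\|C^*\|_F^2=O(N^{-1})$. I would compute the second moment directly:
\[
E\|E_TC^*\|_F^2=\sum_{r,k}E\Bigl(\sum_s (E_T)_{rs}C^*_{sk}\Bigr)^2 .
\]
Under the Gaussian assumption, $E[(E_T)_{rs}(E_T)_{rs'}]=0$ for $s\ne s'$. The case $s=r\ne s'$ also vanishes, by symmetry of $e_{s'}$. Since $\operatorname{Var}((E_T)_{rs})\le C/T$ uniformly, this gives
\[
E\|E_TC^*\|_F^2\le \frac{C}{T}\,N\,\|C^*\|_F^2=O(T^{-1}).
\]
Markov's inequality then yields $\|E_TC^*\|_F=O_p(T^{-1/2})$.

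\textbf{Second bound.} Write $\Delta_W=\widetilde W-W^*$. This is diagonal, with entries $-(E_T)_{rr}/\{\sigma_r^{*2}(\sigma_r^{*2}+(E_T)_{rr})\}$. The resolvent identity \eqref{eq:initialinverse} gives $G_1-G_H^*=-G_1Z^{*\top}\Delta_WZ^*G_H^*$, and therefore
\[
C_1-C^*=\Delta_WZ^*G_1-W^*Z^*G_1Z^{*\top}\Delta_WZ^*G_H^* .
\]
For the first piece, I would use
\[
\|\Delta_WZ^*G_1\|\le \|\diag(E_T)\|_2\,\max_r\|(Z^*G_1)_r\|\cdot C .
\]
The $\chi^2$ concentration already established in the proof of Lemma~\ref{lemma:BGET} bounds $\max_r|(E_T)_{rr}|$ and $\|\diag(E_T)\|_2$. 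For the second piece, $Z^{*\top}\Delta_WZ^*=\sum_r z_r^*z_r^{*\top}(\Delta_W)_{rr}$ is a sum of independent centered terms, up to a quadratic remainder. It is therefore of order $\sqrt{N/T}$ rather than $N/\sqrt T$, and together with $\|G_1\|_{\op},\|G_H^*\|_{\op}=O_p(N^{-1})$ this piece is of lower order.

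\textbf{Matching the stated rate.} To reach the stated form $\sqrt{Nx_T}/T$, I would then convert the $T$-powers using \eqref{G.6,7}, namely $Nx_T\asymp\|R\|_F^2$ with $x_T=O_p(T^{-1})$. I would keep one factor $\sqrt{Nx_T}$ explicit in the way the bound is used in $\tr[R^\top\widehat WE_T(C_1-C^*)]$, and absorb the remaining factor $T^{-1}$ from the $\chi^2$ fluctuation of $\diag(E_T)$ combined with the $N^{-1}$ row size of $Z^*G_1$.

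\textbf{Main obstacle.} The delicate step is this last reconciliation. The crude entrywise estimate gives $\|C_1-C^*\|_F=O_p((NT)^{-1/2})$, and this is dominated by $\sqrt{Nx_T}/T$ only when $T\lesssim N$. If the crude rate is not enough, I would first try to exploit the independence between the centered diagonal fluctuations $(E_T)_{rr}$ and the fixed rows of $Z^*G_1$, which should gain a factor $T^{-1/2}$. Failing that, I would prove the bound in the pairing form in which it is later used, rather than as a bare norm bound, since that is where $\sqrt{Nx_T}$ naturally appears.
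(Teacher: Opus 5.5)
Your first bound is correct and is the paper's computation written entrywise. The paper uses $E[(e_te_t^\top-\Sigma_e^*)^2]=\tr(\Sigma_e^*)\Sigma_e^*+\Sigma_e^{*2}$ together with $\|C^*\|_F^2=O(N^{-1})$. The second bound, however, is not established by your proposal.

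You correctly see two things: $C_1-C^*$ depends only on $\diag(E_T)$, and $\|C_1-C^*\|_F$ is genuinely of order $(NT)^{-1/2}$. So the stated rate cannot come from a bare norm estimate. But none of your repairs closes this gap:
\begin{itemize}
\item The ``conversion of $T$-powers'' is logically invalid. $x_T=O_p(T^{-1})$ is an \emph{upper} bound on $x_T$. Showing that a quantity not involving $R$ is $O_p(\sqrt{Nx_T}/T)$ would require a \emph{lower} bound on $x_T$.
\item Your first fallback cannot gain $T^{-1/2}$. The quantity $\|\Delta_WZ^*G_1\|_F^2=\sum_r(\Delta_W)_{rr}^2\|(Z^*G_1)_r\|^2$ is a sum of nonnegative terms with expectation of order $(NT)^{-1}$, so no independence argument produces cancellation.
\item The $\chi^2$ fluctuation of $\diag(E_T)$ supplies only a factor $T^{-1/2}$, not the $T^{-1}$ you propose to absorb.
\end{itemize}

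The missing idea is that $C_1-C^*$ must be controlled \emph{jointly with} the $E_T$ that multiplies it in $\tr[R^\top\widehat WE_T(C_1-C^*)]$. The paper's argument actually shows $\|E_T(C_1-C^*)\|_F=O_p(T^{-1})$. The pairing is then at most $C\|R\|_F\|E_T(C_1-C^*)\|_F=O_p(\sqrt{Nx_T}/T)$, which is what is used downstream.

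Your second fallback points this way, but the naive version $\|E_T\|_F\|C_1-C^*\|_{\op}=O_p(\sqrt N/T)$ loses a factor $\sqrt N$. A second-moment computation is essential. The paper writes
$C_1-C^*=(I-C^*Z^{*\top})\Delta_WZ^*G_H^*(I-Z^{*\top}\Delta_WZ^*G_1)$
and bounds $E\|E_T\Delta_WZ^*G_H^*\|_F^2$ directly:
\begin{itemize}
\item For $r\neq s$, conditionally on $e_s$, the products $(E_T)_{sr}(\Delta_W)_{rr}$ are independent over $r$ and centred (odd times even in $e_r$). Hence all cross terms vanish.
\item Each such term satisfies $E[(E_T)_{sr}^2(\Delta_W)_{rr}^2]=\sigma_s^{*2}\sigma_r^{*-2}T^{-2}E[(T-U_r)^2/U_r]=O(T^{-2})$, with $U_r\sim\chi^2_T$.
\item The diagonal terms satisfy $E[(E_T)_{rr}^2(\Delta_W)_{rr}^2]=O(T^{-2})$ by inverse-$\chi^2$ moments.
\item Since the rows of $Z^*G_H^*$ have size $O(N^{-1})$, summing over entries gives $O(T^{-2})$.
\end{itemize}
The remaining piece is bounded by your first estimate times $\|Z^{*\top}\Delta_WZ^*G_H^*\|=O_p(T^{-1}+(NT)^{-1/2})$.

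One further correction: $(\Delta_W)_{rr}$ has mean of order $T^{-1}$. So $Z^{*\top}\Delta_WZ^*$ carries a bias of order $N/T$, and contrary to what you state it is not $O_p(\sqrt{N/T})$ when $N>T$.
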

\begin{proof}
 \begin{align*}
     \mathbb E\|E_TC^*\|_F^2&=\frac{1}{T}\tr\left[C^{*\top}\mathbb E\{(e_te_t^\top-\Sigma_e^*)^2\}C^*\right]\\&=\frac{1}{T}\tr \{C^{*\top}[\tr(\Sigma_e^*)\Sigma_e^*+(\Sigma_e^*)^2]C^*\}\\&\leq\frac{1}{T}[NC_e^2+C_e^4]\|C^*\|_F^2=O_p(1/T).  
 \end{align*}
This proves the first equation. 

Regarding the term $C_1-C^*$, we write $C_1-C^*=(I-C^*Z^{*\top})(\widetilde W- W^*)Z^*G^*[I-Z^{*\top}(\widetilde W-W^*)Z^*G_1]$. Consider $\|E_T(\widetilde W-W^*)Z^*G^*\|_F$: \begin{align*}
    \|E_T(\widetilde W-W^*)Z^*G^*\|_F=C\cdot \sum_{s,l}\frac{1}{N^{2}}\|\sum_r(E_T)_{s,r}\delta_{w,r}Z^*_{r,l}\|_F^2
\end{align*}

Take \(\eta_r:=(E_T)_{rr}=R_r^2/T-\sigma^{*2}_r\), then \(\delta_{w,r}:=(\widetilde W-W^*)_{rr}=\frac{T}{R_r^2}-\frac1{\sigma_r^{*2}}=-\frac{T\eta_r}{\sigma^{*2}_rR_r^2}\). Suppose \(U_r=\frac1{\sigma^{*2}_r}\sum_{t=1}^Te_{r,t}^2=\frac{R^2_r}{\sigma^{*2}_r}\). Under the Gaussian assumption, \(U_r\sim\chi_T^2\) and \(\delta_{w,r}=\frac{T-U_r}{\sigma_r^{*2}U_r}\). 
Hence for \(T>4\), 
\[
\begin{aligned}
\mathbb E (\delta_{w,r}^2)
&=
\frac1{\sigma^{*4}_r}
\mathbb E\left[
\frac{(T-U_r)^2}{U_r^2}
\right]
\\
&=
\frac{2(T+4)}
{\sigma_r^{*4}2(T-2)(T-4)}
\\
&=
O(T^{-1}).
\end{aligned}
\]Therefore, \(\mathbb E \delta_{w,r}^2=O(T^{-1})\). For \(r\neq s\), conditional on \(e_r=(e_{r,1},\ldots,e_{r,T})\) and by independency, \(\mathbb E\left[(E_T)_{sr}^2\mid e_r\right]=\frac{\sigma_s^{*2}}{T^2}\sum_{t=1}^Te_{r,t}^2\). Hence
\[
\begin{aligned}
\mathbb E\left[
(E_T)_{sr}^2\delta_{w,r}^2\mid e_{r}
\right]
&=
\frac{\sigma_s^{*2}}{T^2}
\mathbb E\left[
\delta_{w,r}^2
\sum_{t=1}^Te_{r,t}^2\mid e_r
\right]
\\
&=
\frac{\sigma_s^{*2}}{\sigma_r^{*2}T^2}
\mathbb E\left[
\frac{(T-U_r)^2}{U_r}
\right].
\end{aligned}
\]Then, by $\chi^2$-distribution, 
\[
\mathbb E\left[(E_T)_{sr}^2\delta_{w,r}^2\right]=O(T^{-2}), \quad r\neq s.
\]
For the cross term $\mathbb{E}[(E_T)_{sr}(E_T)_{su}a_ra_u]$, by $\mathbb{E}[(E_T)_{sr}a_r\mid e_s]=0$ and independency of $e_r,e_u$, we have the cross terms vanish. 

When \(r=s\), represent the diagonal part via \(U_r\): 
\[
(E_T)_{rr}
=
\sigma_r^{*2}\frac{U_r-T}{T},
\quad
\delta_{w,r}
=
-\frac{U_r-T}{\sigma_r^{*2}U_r}.
\]Hence \((E_T)_{rr}^2\delta_{w,r}^2=\frac{(U_r-T)^4}{T^2U_r^2}\). Directly compute gives
\[
\begin{aligned}
\mathbb E
\left[
\frac{(U_r-T)^4}{U_r^2}
\right]
&=
\mathbb E U_r^2
-4T\mathbb EU_r
+6T^2
\\
&\quad
-4T^3\mathbb EU_r^{-1}
+T^4\mathbb EU_r^{-2}
\\
&=
\frac{4T(3T+4)}
{(T-2)(T-4)}.
\end{aligned}
\]Hence
\[
\mathbb E
\left[
(E_T)_{rr}^2\delta_{w,r}^2
\right]
=
\frac{4(3T+4)}
{T(T-2)(T-4)}
=
O(T^{-2}).
\]
Therefore, by summing over $s,l,r$, we have $\|E_T\delta_wZ^*G^*\|_F=O_p(T^{-1})$. 

Similarly, for $\|Z^{*\top}\delta_wZ^*G^*\|_F$, we have $\mathbb{E}|\sum_sz^*_{s}z^{*\top}_sG^* (\frac{U_s-T}{\sigma_s^{*2}U_r})|= O_p(\frac{1}{T})$ and \\$\operatorname{Var}\left(\sum_sz^*_{s}z^{*\top}_sG^* (\frac{U_s-T}{\sigma_s^{*2}U_r})\right)=O_p(\frac{1}{TN})$. 

Finally, because $\|I-Z^{*\top}\delta Z^*G_1\|_2=1$, we proved $\|C_1-C^*\|=O_p(\sqrt{Nx_T}/T)$. 
\end{proof}

Regarding the term $C_2-C_1$, by exact diagonal resolvent: 
\(\widehat W-\widetilde W=-\widetilde W\delta\widehat W\), we have \(C_2-C_1=-(I- C_1 Z^{*\top})\widetilde W\delta C_2\). Easy to see that, the rows of $C_2$ satisfy \(\max_{s\in[N]}\|(C_2)_{s,\cdot}\|=O_p(N^{-1})\). Hence
\[
\begin{aligned}
\|
\delta C_2
\|_F^2
&=
\sum_s
\delta_{s}^2
\|(C_2)_{s,\cdot}\|^2
\\
&\le
\frac C{N^2}
\|\delta\|_2^2.
\end{aligned}
\]Then we have
\[
\| C_2-C_1\|_{\op}
\le
\|C_2-C_1\|_F
\le
C\frac{\|\delta\|_F}{N}.
\]By coarse bound of \(E_T\): 
\[
\begin{aligned}
\|E_T(C_2-C_1)\|_F
&\le
\|E_T\|_F
\|C_2-C_1\|_{\op}
\\
&=
O_p\left(\frac N{\sqrt T}\right)
O_p\left(\frac{\|\delta\|_2}{N}\right)
\\
&=
O_p\left(
\frac{\|\delta\|_2}{\sqrt T}
\right).
\end{aligned}
\]

For the terms $C_T-C_3$ and $C_3-C_2$, direct computation gives $\|C_3-C_2\|_F=O_p(\sqrt{x_T/N})$ and $\|C_T-C_3\|_F=O_p(\frac{1}{N\sqrt{NT}})$. 

In summary, we proved 
\begin{lemma}
    \[T_E=O_p\left(\sqrt{\frac{{Nx_T}}{{T}}}\right)+o_p(Nx_T)\]
\end{lemma}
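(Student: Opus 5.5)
The plan is to assemble the bound for $T_E$ from the decomposition introduced just before the statement,
\[
T_E=\tr[R^{\top}\widehat WE_TC^*]+\tr[R^{\top}\widehat WE_T(C_T-C^*)],
\qquad
C_T-C^*=(C_T-C_3)+(C_3-C_2)+(C_2-C_1)+(C_1-C^*).
\]
Each piece is bounded by Cauchy--Schwarz in the form $|\tr[R^\top\widehat W X]|\le C_e\|R\|_F\|X\|_F$. Throughout I use $\|R\|_F\asymp\sqrt{Nx_T}$ from \eqref{G.6,7} and the uniform boundedness of $\widehat W$.

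For the leading term, the preceding lemma gives $\|E_TC^*\|_F=O_p(T^{-1/2})$. Hence $|\tr[R^\top\widehat WE_TC^*]|\le C\sqrt{Nx_T}\,T^{-1/2}=O_p(\sqrt{Nx_T/T})$, which is exactly the first term claimed.

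For the correction $C_1-C^*$, the same lemma controls $\|E_T(\widetilde W-W^*)Z^*G^*\|_F=O_p(T^{-1})$. Together with the factorisation $C_1-C^*=(I-C^*Z^{*\top})(\widetilde W-W^*)Z^*G^*[\cdots]$, this gives $\|E_T(C_1-C^*)\|_F=O_p(T^{-1})$, up to the factor $\|I-C^*Z^{*\top}\|$. Applying $E_T$ to $C^*Z^{*\top}$ reuses the $\|E_TC^*\|_F$ bound. The resulting contribution is $O_p(\sqrt{Nx_T}/T)=o_p(\sqrt{Nx_T/T})$.

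For $C_2-C_1$, the display above gives $\|E_T(C_2-C_1)\|_F=O_p(\|\delta\|_2/\sqrt T)$, so this contribution is $O_p(\sqrt{Nx_T}\,\|\delta\|_F/\sqrt T)$. Substituting the bound on $\|\delta\|_F$ from the preceding lemma, every summand carries either an extra factor $T^{-1/2}$, $\sqrt{Nx_T\log N/T}$, or $\rho_T\sqrt{Nx_T}$. The first gives $o_p(\sqrt{Nx_T/T})$. The other two give terms of the form $Nx_T\cdot(\sqrt{\log N}/T+\rho_T/\sqrt T)$, and these are $o_p(Nx_T)$ because $\log N=o(T)$ and $\rho_T=o_p(1)$ by \eqref{G.46}.

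For $C_3-C_2$ and $C_T-C_3$, I would use the coarse bound $\|E_T\|_F=O_p(N/\sqrt T)$ from Lemma~\ref{lem:sample-moment-bounds} together with the stated Frobenius bounds. The term $C_T-C_3$ gives $\|R\|_F\cdot (N/\sqrt T)\cdot (N\sqrt{NT})^{-1}=o_p(\sqrt{Nx_T/T})$.

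The main obstacle is $C_3-C_2$. Here the coarse bound produces $\sqrt{Nx_T}\cdot(N/\sqrt T)\sqrt{x_T/N}=Nx_T/\sqrt T$, which is $o_p(Nx_T)$ only because $T\to\infty$; it is borderline relative to the target. If this route turns out too lossy, my fallback is to exploit the structure $C_3-C_2\approx(\widehat Z-Z^*)$-linear terms. Writing $\widehat Z-Z^*=-R-\widehat Z\mathcal Z^\top$, I would bound $E_T R$ through the column-norm estimate $\max_s\|(E_T)_{\cdot s}\|_2^2=O_p(N/T)$ established in the proof of Lemma~\ref{lemma:BGET}, and absorb the $\mathcal Z$ part via $\|\mathcal Z\|_{\op}=O_p(x_T^{1/2})$. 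Summing all five contributions then yields $T_E=O_p(\sqrt{Nx_T/T})+o_p(Nx_T)$.
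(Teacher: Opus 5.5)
Your proposal is correct and follows the paper's own argument. You use the same split $T_E=\tr[R^\top\widehat WE_TC^*]+\tr[R^\top\widehat WE_T(C_T-C^*)]$, telescope through $C_1,C_2,C_3$, and close each piece by Cauchy--Schwarz against $\|R\|_F\asymp\sqrt{Nx_T}$ together with the bounds on $\|E_TC^*\|_F$, $E_T(\widetilde W-W^*)Z^*G^*$, $\|\delta\|_F$, $\|C_3-C_2\|_F$ and $\|C_T-C_3\|_F$. The $C_3-C_2$ term is not actually an obstacle: the coarse bound $Nx_T/\sqrt{T}=o_p(Nx_T)$ is exactly what the $o_p(Nx_T)$ slot in the statement absorbs, and that slot is later moved to the left-hand side in the final rate argument, so your fallback is unnecessary.
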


Finally, we go back to the leading term in $T_G$: 
\begin{lemma}\label{lemma:estimation of TG leading term}
    \[\tr(R^{\top}\widetilde{W}G_T^{\top}B_T)=
O_p\left(
\sqrt{\frac{k_N}{T}}
\sqrt{Nx_T}
\right)
+
o_p(Nx_T).\]
\end{lemma}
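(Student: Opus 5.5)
We bound the leading term $\tr(R^{\top}\widetilde W G_T^{\top}B_T)$ by expanding $R$ into its Khatri--Rao components and isolating the $d$ random matrices $\mathcal G^m_T=(\frac1T\sum_t e_{ij,t}x_{m,t})_{i\le p,j\le q}\in\R^{p\times q}$. Here $k_N$ is read as $p+q$ (up to logarithmic factors).

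\textbf{Step 1: expand $R$.} Use the exact identity
\[
R=(\widehat B-B^*)\odot\widehat A+\widehat B\odot(\widehat A-A^*)-\Zh\mathcal Z^{\top}-(\widehat B-B^*)\odot(\widehat A-A^*).
\]
Since $\widetilde W$ is diagonal and $B_T=(I-\mathcal Z)\widehat H^{-1}\widehat G_H$ is a $d\times d$ matrix with $\|B_T\|_{\op}=O_p(1)$, write $B_T=(\beta_{m\ell})$. Each piece of the trace then becomes a finite sum over $m,\ell\le d$ of bilinear forms of the type
\[
\beta_{m\ell}\sum_{i,j}\widetilde w_{ij}\,u_i\,(\mathcal G^\ell_T)_{ij}\,v_j=\beta_{m\ell}\,u^{\top}\big(\widetilde W_{\rm mat}\circ\mathcal G^\ell_T\big)v .
\]
For the first piece, $u=\hat a^m$ and $v=\hat b^m-b^{*m}$. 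For the second, $u=\hat a^m-a^{*m}$ and $v=\hat b^m$. For the $\mathcal Z$ piece, $u=\hat a^\ell$, $v=\hat b^\ell$, with coefficient $\mathcal Z_{m\ell}$.

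\textbf{Step 2: operator norm of the weighted noise matrices.} Conditional on the factor path, $\mathcal G^\ell_T$ has independent centered Gaussian entries with variance $\sigma_{ij}^{*2}T^{-2}\sum_t x_{\ell,t}^2\asymp T^{-1}$. The weights $\widetilde w_{ij}=(\sigma^{*2}_{ij}+(E_T)_{ij,ij})^{-1}$ depend only on $\|e_{ij}\|_2$, and Gaussian $e_{ij}$ splits into its norm and an independent direction, so they decouple from the entries.

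To make the decoupling explicit, condition on $\mathcal R=(R_s)$ as in Lemma~\ref{lemma:BGET}. Conditional on $\mathcal R$ and $\{x_t\}$, $e_s=R_su_s$ with $u_s$ uniform on the sphere, so $(\mathcal G^\ell_T)_{s}=R_s\,u_s^{\top}x_{\ell,\cdot}/T$. These are independent across $s$, symmetric, and sub-Gaussian with parameter $O(T^{-1/2})$.

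Apply a standard operator-norm bound for random matrices with independent sub-Gaussian entries. Together with $\max_s\widetilde w_s\le C$ w.p.\ $\to1$ (from $\min_s R_s^2/T$ bounded below, already shown), this gives
\[
\|\widetilde W_{\rm mat}\circ\mathcal G^\ell_T\|_{\op}=O_p\big(\sqrt{(p+q)/T}\big).
\]

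\textbf{Step 3: bound the first-order pieces and collect.} Each bilinear form is bounded by $\|u\|\,\|v\|\,O_p(\sqrt{k_N/T})$.

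For the first two pieces, Assumption~\ref{assump:boundedsetting} gives $\|\hat a^m\|\le C\sqrt p$ and $\|\hat b^m\|\le C\sqrt q$. The bounds \eqref{G.8,9} give $\|\widehat B-B^*\|_F\lesssim\sqrt{q\,x_T}$ and $\|\widehat A-A^*\|_F\lesssim\sqrt{p\,x_T}$, via $\|\mathcal Z\|^2+N^{-1}\|R\|_F^2$ domination and the coarse bound. Each of these pieces is therefore $O_p(\sqrt{k_N/T}\sqrt{N x_T})$.

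The $\mathcal Z$ piece is $\|\mathcal Z\|_{\op}\sqrt N\,O_p(\sqrt{k_N/T})=O_p(\sqrt{k_N/T}\sqrt{Nx_T})$, using $\|\mathcal Z\|_{\op}=O_p(x_T^{1/2})$.

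\textbf{Step 4: the cross term (main obstacle).} The last piece, $(\widehat B-B^*)\odot(\widehat A-A^*)$, is bounded the same way by
\[
\|\Delta a^m\|\,\|\Delta b^m\|\sqrt{k_N/T}\lesssim \sqrt N\,x_T\sqrt{k_N/T}.
\]
This is $o_p(Nx_T)$ as soon as $k_N=o(NT)$, which holds.

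The main obstacle is Step 2. The estimated vectors $u,v$ are data-dependent, so a fixed-vector concentration argument is unavailable. That is why the bound must go through the operator norm uniformly over all $u,v$, and why the random weights $\widetilde W$ must be decoupled from $G_T$ via the radial–spherical split. If that decoupling turns out delicate, the fallback is to replace $\widetilde W$ by $W^*$ and treat $\tr[R^\top(\widetilde W-W^*)G_T^\top B_T]$ separately. This uses $\max_s|\widetilde w_s-w_s^*|=O_p(\sqrt{\log N/T})$, so that difference is a factor $\sqrt{\log N/T}$ smaller and is absorbed into $o_p(Nx_T)$.

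Combining Steps 3 and 4 gives
\[
\tr(R^{\top}\widetilde W G_T^{\top}B_T)=O_p\Big(\sqrt{k_N/T}\,\sqrt{Nx_T}\Big)+o_p(Nx_T),
\]
which is Lemma~\ref{lemma:estimation of TG leading term}.
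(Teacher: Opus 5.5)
Your proposal is correct and follows the paper's proof. It uses the same exact Khatri--Rao expansion of $R$ and the same reduction to bilinear forms $u^{\top}\widetilde\Gamma_{m,T}v$, controlled by $\max_m\|\widetilde\Gamma_{m,T}\|_{\mathrm{op}}=O_p(\sqrt{(p+q)/T})$ (Lemma~\ref{lem:cross-noise-op}) together with $\|\widehat A-A^*\|_F\lesssim\sqrt{p x_T}$, $\|\widehat B-B^*\|_F\lesssim\sqrt{q x_T}$ and $\|\mathcal Z\|_{\mathrm{op}}=O_p(x_T^{1/2})$. The only difference is how the random weights are decoupled from $G_T$. You condition on the radii $\mathcal R$, so that $\widetilde w_s=T/R_s^2$ is deterministic and the entries are driven by the independent uniform directions. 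The paper instead truncates on the events $\mathcal V_s$ and uses that, given the factor path, $\widetilde w_s\mathbf 1_{\mathcal V_s}$ is even in $e_s$ while $g_{m,s}$ is odd. Both routes give the same bound.
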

Before proving this Lemma, we estimate the operator norm: 
\begin{lemma}
\label{lem:cross-noise-op}

For $m\in[d]$, define
\(g_{m,ij}=\frac{1}{T}
\sum_{t=1}^T x_{t,m}e_{ij,t}\), and \(\widetilde\Gamma_{m,T}=\left[
\widetilde w_{ij} g_{m,ij}\right]_{i\in[p],\,j\in[q]}\).
Then
\[
\max_{m\in[d]}
\left\|
\widetilde\Gamma_{m,T}
\right\|_{\mathrm{op}}
=
O_p\left(
\frac{\sqrt p+\sqrt q}{\sqrt T}
\right).
\]
\end{lemma}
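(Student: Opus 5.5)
We condition on the entire factor path $\{x_t\}_{t\le T}$ and on the squared norms $\mathcal{R}=(R_1,\dots,R_N)$, $R_r=\|e_r\|_2$, exactly as in the proof of Lemma~\ref{lemma:BGET}. Write $e_{ij}=R_{ij}u_{ij}$ with $u_{ij}$ uniform on the sphere $S^{T-1}$. The $u_{ij}$ are independent of each other and of $\mathcal{R}$ and the factors. Since $\widetilde w_{ij}=T/R_{ij}^2$ is $\mathcal{R}$-measurable, we get
\[
\widetilde w_{ij}g_{m,ij}=\frac{1}{R_{ij}}\,\xi_m^\top u_{ij},\qquad \xi_m=(x_{1,m},\dots,x_{T,m})^\top .
\]
Conditionally, the entries of $\widetilde\Gamma_{m,T}$ are therefore independent and centered. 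Each is a scalar multiple of a linear functional of a uniform spherical vector, so it is sub-Gaussian with $\psi_2$-norm at most
\[
C\,\frac{\|\xi_m\|_2}{R_{ij}\sqrt T}.
\]
This is where the weight $\widetilde W$ pays off: the random weight becomes a harmless deterministic scale after conditioning, and we avoid working with $\widehat W$, which depends on all entries.

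The next step bounds these scales uniformly. Use the $\chi^2$ concentration already derived in the proof of Lemma~\ref{lemma:BGET}: with $\log N=o(T)$, we have $\min_{ij}R_{ij}^2/T\ge c$ with probability tending to one. By Assumption~\ref{assump:full-ranklatentfactor}, more precisely Lemma~\ref{lem:alpha-implies-quadratic} (for Gaussian stationary factors, $\|\xi_m\|_2^2/T\to (M_x)_{mm}$ in probability), we have $\|\xi_m\|_2^2/T=O_p(1)$. On the intersection of these events, every entry of $\widetilde\Gamma_{m,T}$ has conditional $\psi_2$-norm at most $K/\sqrt T$ for a constant $K$ not depending on $(i,j,m,T)$.

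Now apply the standard operator-norm bound for matrices with independent centered sub-Gaussian entries (Vershynin, Theorem~4.4.5), conditionally:
\[
P\Big(\|\widetilde\Gamma_{m,T}\|_{\op}>CK(\sqrt p+\sqrt q+s)/\sqrt T \,\Big|\, \mathcal{R},\{x_t\}\Big)\le 2e^{-s^2}.
\]
Take $s=\sqrt{\log d}+1$, union over the $d$ (fixed) values of $m$, and then integrate out the conditioning on the high-probability event. This gives the claimed $O_p\big((\sqrt p+\sqrt q)/\sqrt T\big)$ uniformly in $m$.

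The main obstacle is the conditioning step: we must justify that the uniform spherical directions $u_{ij}$ are independent of the radii $R_{ij}$ and of the factor process. For the former we use the Gaussian assumption on $e_t$, through the rotational invariance of $e_{ij}/\sigma_{ij}\sim N(0,I_T)$. For the latter we use the independence of $\{e_t\}$ and $\{x_t\}$ stated in the assumptions of this subsection. Without that joint independence, the weights $\widetilde w_{ij}$ would correlate with the numerators and the entries would no longer be centered. If the spherical sub-Gaussian bound is not taken as given, we would instead use the Gaussian representation directly: conditionally on the factors, $\xi_m^\top e_{ij}\sim N(0,\sigma_{ij}^2\|\xi_m\|^2)$. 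We would then handle the dependence between this quantity and $R_{ij}$ by projecting $e_{ij}$ onto $\xi_m$ and onto its orthogonal complement; the complement part makes $R_{ij}^2$ equal, up to a rank-one correction, to a quantity independent of the projection.
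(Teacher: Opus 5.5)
Your proof is correct. Its skeleton matches the paper's: condition on the factor path; show the entries of $\widetilde\Gamma_{m,T}$ are conditionally independent and centered with $\psi_2$-norm $O(T^{-1/2})$; apply the rectangular sub-Gaussian operator-norm bound; take a union over the $d$ values of $m$.

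Where you differ is in how you handle the random weight $\widetilde w_{ij}=T/R_{ij}^2$.

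\begin{itemize}
\item \textbf{The paper.} It conditions only on $\sigma(x_1,\ldots,x_T)$ and truncates on $\mathcal V_s=\{\sigma_s^{*2}/2\le R_s^2/T\le 2\sigma_s^{*2}\}$. Centering follows from a parity argument: $\widetilde w_s$ and $\mathbf 1_{\mathcal V_s}$ are even in $e_s$, $g_{m,s}$ is odd, and $e_s\stackrel{d}{=}-e_s$. Sub-Gaussianity follows from the pointwise domination $|\widetilde w_s g_{m,s}\mathbf 1_{\mathcal V_s}|\le C|g_{m,s}|$, with $g_{m,s}$ conditionally Gaussian. The truncation is then removed because $P(\bigcap_s\mathcal V_s)\to1$.
\item \textbf{Your route.} You also condition on the radii through $e_{ij}=R_{ij}u_{ij}$. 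This turns the weight into a deterministic scale, and you get both centering and the $\psi_2$ bound from the sub-Gaussianity of the uniform law on $S^{T-1}$. The lower bound on $\min_{ij}R_{ij}^2/T$ enters as a good event measurable with respect to the conditioning, not as a truncation.
\end{itemize}

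Your route makes the independence structure very transparent, but it genuinely needs rotational invariance of $e_{ij}$, i.e.\ the Gaussian assumption. The paper's route only uses sign-symmetry of $e_s$, conditional sub-Gaussianity of $g_{m,s}$, and concentration of $R_s^2/T$, so it would extend to symmetric sub-Gaussian noise. The fallback you sketch, projecting onto $\xi_m$, is not needed.

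One small slip: a fixed $s=\sqrt{\log d}+1$ only gives the tail bound $2de^{-s^2}=2e^{-1-2\sqrt{\log d}}$. This is a fixed constant (e.g.\ $2/e$ for $d=1$), so it does not yield an $O_p$ statement. Instead, take $s=s_\epsilon$ with $2de^{-s_\epsilon^2}\le\epsilon$, and absorb it into the constant using $\sqrt p+\sqrt q\ge 2$. The paper takes $u=\sqrt p+\sqrt q$, which works there because $p,q\to\infty$.
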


\begin{proof}

Let
\[
\mathcal F_x
=
\sigma(x_1,\ldots,x_T).
\]
For each spatial coordinate $s=(i,j)$, define the event
\[
\mathcal V_s
=
\left\{
\frac{\sigma_s^{*2}}{2}
\le
\frac{1}{T}\sum_{t=1}^T e_{s,t}^2
\le
2\sigma_s^{*2}
\right\}.
\]
Since \(\frac{1}{\sigma_s^{*2}}\sum_{t=1}^T e_{s,t}^2\sim\chi_T^2\), standard concentration for $\chi^2$-random variables yields \(\mathbb P(\mathcal V_s^c)\le2\exp(-cT)\) for some constant $c>0$. Hence, with \(\mathcal V=\bigcap_{s=1}^N\mathcal V_s\), the union bound gives
\[
\mathbb P(\mathcal V^c)
\le
2N\exp(-cT)
=
o(1),
\]
where the last equality follows from $\log N=o(T)$.

Define the truncated entries \(\bar\gamma_{m,s}=\widetilde w_s g_{m,s}\mathbf 1_{\mathcal V_s}\), and let
\[
\bar\Gamma_{m,T}
=
\left[
\bar\gamma_{m,ij}
\right]_{i\in[p],\,j\in[q]}.
\]

Conditional on $\mathcal F_x$, the variables
$\{\bar\gamma_{m,s}:s\in[N]\}$ are independent, since each
$\bar\gamma_{m,s}$ depends only on the error sequence
$\{e_{s,t}\}_{t=1}^T$, and these error sequences are independent
across $s$.

Moreover,
\[
\mathbb E
\left(
\bar\gamma_{m,s}
\mid
\mathcal F_x
\right)
=
0.
\]
Indeed, both $\widetilde w_s$ and $\mathbf 1_{\mathcal V_s}$ are even functions of $(e_{s,1},\ldots,e_{s,T})$, whereas \(g_{m,s}=\frac1T\sum_{t=1}^T x_{t,m}e_{s,t}\) is an odd function. Since the Gaussian vector $(e_{s,1},\ldots,e_{s,T})$ is symmetric about the origin, the
conditional expectation vanishes.

Next, define the factor event
\[
\mathcal X
=
\left\{
\max_{m\in[d]}
\frac1T\sum_{t=1}^T x_{t,m}^2
\le C_x
\right\}.
\]
Under Assumptions \ref{assump:full-ranklatentfactor}, 
\(\mathbb P(\mathcal X)\to1\). Conditional on $\mathcal F_x$, $g_{m,s}$ is Gaussian with variance \(\operatorname{Var}\left(g_{m,s}\mid\mathcal F_x\right)=\frac{\sigma_s^{*2}}{T^2}\sum_{t=1}^T x_{t,m}^2\). Hence, on $\mathcal X$, \(\left\|g_{m,s}\right\|_{\psi_2\mid\mathcal F_x}\le\frac{C}{\sqrt T}\). On $\mathcal V_s$ we also have \(|\widetilde w_s|\le\frac{2}{\sigma_s^{*2}}\le C\). Consequently, \(|\bar\gamma_{m,s}|
\le
C|g_{m,s}|\), and therefore
\[
\left\|
\bar\gamma_{m,s}
\right\|_{\psi_2\mid\mathcal F_x}
\le
\frac{C}{\sqrt T}.
\]

We now apply the standard operator-norm inequality for rectangular
matrices with independent, mean-zero, sub-Gaussian entries. For any
$u>0$, conditional on $\mathcal F_x$ and on $\mathcal X$,
\[
\mathbb P
\left(
\left.
\|\bar\Gamma_{m,T}\|_{\mathrm{op}}
>
\frac{C}{\sqrt T}
\left(
\sqrt p+\sqrt q+u
\right)
\right|
\mathcal F_x
\right)
\le
2e^{-u^2}.
\]
Taking, for example,
\[
u=\sqrt p+\sqrt q
\]
shows that
\[
\|\bar\Gamma_{m,T}\|_{\mathrm{op}}
=
O_p\left(
\frac{\sqrt p+\sqrt q}{\sqrt T}
\right).
\]
Since $d$ is fixed, a union bound over $m\in[d]$ gives
\[
\max_{m\in[d]}
\|\bar\Gamma_{m,T}\|_{\mathrm{op}}
=
O_p\left(
\frac{\sqrt p+\sqrt q}{\sqrt T}
\right).
\]
Finally, since
$\mathbb P(\mathcal V)\to1$ and
$\bar\Gamma_{m,T}=\widetilde\Gamma_{m,T}$ on $\mathcal V$, the same
bound holds for $\widetilde\Gamma_{m,T}$.
\end{proof}

Then we go back to the proof of Lemma \ref{lemma:estimation of TG leading term}: 
\begin{proof}[proof of Lemma \ref{lemma:estimation of TG leading term}]
    Recall that the error term $R=\Zh(I-\mathcal{Z})^{\top}-Z^*=(\widehat B-B^*)\odot \widehat A+\widehat B\odot (\widehat A-A^*)-\Zh \mathcal{Z}^{\top}-(\widehat B-B^*)\odot (\widehat A-A^*)$. Then, $\tr(R^{\top} \widetilde WG_T^{\top}B_T)=\sum_{m,m'}\{\sum_{i,j}\hat{a}_i^m\tilde\gamma_{ij,T}^{m'}(\hat{b}_j^m-b_j^{*m})+(\hat{a}_i^m-a_i^{*m})\tilde\gamma_{ij,T}^{m'}\hat{b}_j^m+\sum_\ell\mathcal{Z}_{m,\ell}\hat{a}_i^\ell\tilde\gamma_{ij,T}^{m'}\hat{b}_j^\ell+\mathrm{h.o.t}\}\cdot (B_T)_{m',m}$. By the above lemma on the operator norm of $\widetilde \Gamma$, we have $\tr(R^{\top} \widetilde WG_T^{\top}B_T)\leq C\sqrt{\frac{p+q}{T}}\sqrt{p}\sqrt{q}\sqrt{x_T}=O_p(\sqrt{\frac{k_NNx_T}{T}})$. 
\end{proof}

Finally, we prove the sharper rate: 
\begin{theorem}
    Under Gaussian assumption, we have $x_T=O_p(\frac{p+q}{pqT})$. 
\end{theorem}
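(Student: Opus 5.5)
The plan is to re-run the basic inequality of Proposition~\ref{prop:R-rate}, now tracking the stochastic pairing more carefully. By Lemmas~\ref{lem:exact-identity}, \ref{lem:remainder-bound} and \ref{lem:coercivity},
\[
Nx_T\asymp\norm{V_T}_{\Ph}^2\le\left|\ip{\Ph\cW_T\Ph}{V_T}_F\right|+o_p(Nx_T),
\]
so everything reduces to a sharper bound on the stochastic pairing. Following the proof strategy, I would split $\ip{\Ph\cW_T\Ph}{V_T}_F$ into the raw $K_R$ part and the nuisance-projection part $[I-\cR_{D,M}^{\Ph}]K_R-\cR_{D,M}^{\Ph}(K_Q)$. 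The latter is already $o_p(Nx_T)$ plus terms of order at most $O_p(\sqrt{Nx_T}/(N\sqrt T))$ by Lemma~\ref{lem:stochastic-score}. Via \eqref{eq:VTsquare}, the raw part decomposes into four traces. The two traces involving $F_T$ and $RG_T$ are $o_p(Nx_T)$, so what remains is $T_G$ and $T_E$.

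For $T_E$ I would invoke the lemma just proved, which gives $T_E=O_p(\sqrt{Nx_T/T})+o_p(Nx_T)$. For $T_G$ I would write $T_G=\tr(R^\top\widetilde WG_T^\top B_T)+T_{G,\delta}$. The leading part is bounded by Lemma~\ref{lemma:estimation of TG leading term} as $O_p(\sqrt{(p+q)Nx_T/T})+o_p(Nx_T)$, using Lemma~\ref{lem:cross-noise-op}. The correction satisfies $|T_{G,\delta}|\le C\norm{\delta}_F\,O_p(\sqrt{Nx_T\log N/T})$. Here I would use the lemma's bound on $\norm\delta_F$ together with $x_T=O_p(T^{-1})$ and $\rho_T=o_p(1)$ from \eqref{G.46}. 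This gives $\norm{\delta}_F\sqrt{\log N/T}=o_p(\sqrt{Nx_T})$, provided the $\log N/T$ factors combine into $o(1)$, which \eqref{G.4} supplies. Hence $T_{G,\delta}=o_p(Nx_T)$ plus a term of order $O_p(\sqrt{Nx_T\log N}/T)$, which is dominated by the main term whenever $\log N\lesssim (p+q)T$.

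Collecting the pieces gives
\[
Nx_T\le O_p\!\left(\sqrt{\tfrac{(p+q)Nx_T}{T}}\right)+O_p\!\left(\sqrt{\tfrac{Nx_T}{T}}\right)+o_p(Nx_T).
\]
Absorbing the $o_p$ term and dividing by $\sqrt{Nx_T}$ on $\{x_T>0\}$ yields $\sqrt{Nx_T}=O_p(\sqrt{(p+q)/T})$, that is, $x_T=O_p\big(\frac{p+q}{pqT}\big)$. Since $\frac1N\norm{R}_F^2\asymp x_T$, and \eqref{errorforB} together with the Step~5 bound $\norm{\mathcal Z}^2\lesssim x_T$ (with $m=1$) transfer the rate, this gives the bound on $\frac1{pq}\norm{\widehat B\odot\widehat A-B^*\odot A^*}_F^2$ in Theorem~\ref{thm:sharper rate}.

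I expect the main obstacle to be the uniformity of the term that is not $o_p(Nx_T)$ in each of the cited bounds, especially inside $T_G$. There the $\mathrm{h.o.t.}$ terms in the expansion of $R$ (the product $(\widehat B-B^*)\odot(\widehat A-A^*)$) must be shown to pair with $\widetilde\Gamma_{m,T}$ at order $o_p(Nx_T)$. My first attempt would bound that pairing by $\norm{\widetilde\Gamma_{m,T}}_{\op}\norm{\Delta A}_F\norm{\Delta B}_F$. This equals $O_p\big(\sqrt{(p+q)/T}\,\sqrt{pq}\,x_T\big)$, which is $o_p(Nx_T)$ because $(p+q)/(pqT)\to0$.
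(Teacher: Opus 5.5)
Your proposal follows the paper's proof essentially step for step. It uses the basic inequality from Lemmas~\ref{lem:exact-identity}, \ref{lem:remainder-bound} and \ref{lem:coercivity}, the four-trace split \eqref{eq:VTsquare}, the $\widetilde W$/$\delta$ decomposition of $T_G$ with the operator-norm bound on $\widetilde\Gamma_{m,T}$, and the $T_E$ lemma; your explicit bound on the $(\widehat B-B^*)\odot(\widehat A-A^*)$ pairing also makes the paper's ``h.o.t.'' precise.

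One bookkeeping correction: the diagonal nuisance piece $D_{\Ph}(K_R)$ pairs with $\Ph\cW_T\Ph$ at order $O_p(\sqrt{Nx_T/T})$, not $O_p(\sqrt{Nx_T}/(N\sqrt T))$. This follows from $\norm{\cW_T}_F=O_p(N/\sqrt T)$ and $\norm{D_{\Ph}(K_R)}_{\Ph}\lesssim\sqrt{x_T/N}$. It is harmless, because it has the same order as your $T_E$ term and is absorbed by the $\sqrt{(p+q)Nx_T/T}$ main term.
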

\begin{proof}
    Collecting the above lemmas and equations, via equation \ref{eq:VTsquare}, we have 
    \[Nx_T=O_p(\sqrt{\frac{k_NNx_T}{T}})+o_p(Nx_T). \]Therefore we get $x_T=O_p(\frac{k_N}{NT})$. 
\end{proof}

One can see that we only used the sub-Gaussian assumption on $x_t$; therefore, the sharper rate holds for all sub-Gaussian distributed processes $x_t$.